\pdfoutput=1
\documentclass[sn-mathphys-ay]{sn-jnl}

\usepackage{graphicx}
\usepackage{multirow}
\usepackage{amsmath,amssymb,amsfonts}
\usepackage{amsthm}
\usepackage{mathrsfs}
\usepackage[title]{appendix}
\usepackage{xcolor}
\usepackage{textcomp}
\usepackage{manyfoot}
\usepackage{booktabs}
\usepackage{algorithm}
\usepackage{algorithmicx}
\usepackage{algpseudocode}
\usepackage{listings}
\usepackage{tikz}
\usetikzlibrary{fit}
\usepackage{enumitem}

\theoremstyle{thmstyleone}
\newtheorem{theorem}{Theorem}[section]
\newtheorem{proposition}[theorem]{Proposition}
\newtheorem{lemma}[theorem]{Lemma}
\newtheorem{corollary}[theorem]{Corollary}
\theoremstyle{thmstyletwo}
\newtheorem{definition}[theorem]{Definition}
\newtheorem{assumption}[theorem]{Assumption}
\newtheorem{condition}[theorem]{Condition}
\newtheorem{criterion}[theorem]{Criterion}

\newtheorem{example}[theorem]{Example}
\theoremstyle{thmstylethree}
\newtheorem{remark}[theorem]{Remark}

\newcommand{\X}{\mathcal{X}}
\newcommand{\G}{\mathcal{G}}
\newcommand{\M}{\mathcal{M}}
\newcommand{\Cset}{\mathcal{C}}
\newcommand{\Fset}{\mathcal{F}}
\newcommand{\R}{\mathbb{R}}
\newcommand{\Prob}{\mathbb{P}}
\newcommand{\E}{\mathbb{E}}
\newcommand{\ev}{\ell}
\newcommand{\evp}{\ev^{\scriptscriptstyle+}}

\begin{document}

\title[Does the grand coalition form?]{Does the grand coalition form?
Persistence, arrival, and the role of the sharing rule in a dynamic process of nested binding agreements}

\author[1]{\fnm{Jobst} \sur{Heitzig}}

\affil[1]{\orgname{Potsdam Institute for Climate Impact Research},
\orgaddress{\city{Potsdam}, \country{Germany}}, \texttt{heitzig@pik-potsdam.de}\vspace{-10mm}}

\abstract{We study a dynamic coalition-formation process in the tradition of \citet{KonishiRay2003}:
players repeatedly form and dissolve binding agreements,
evaluate states by discounted long-term expected payoffs,
and hold self-confirming beliefs about the process.
States and payoff sharing follow \citet{HeitzigKornek2018}:
a state is a hierarchy of nested agreements, and the members of a new agreement share the surplus it generates,
measured against the state without that agreement.
All payoff assumptions are structural.

We prove that every grand state ever reached is absorbing, and that every absorbing state is grand,
for every discount factor.
A grand state is actually reached, almost surely, for small discount factors, for three players, and,
at every discount factor, whenever distributional stakes are smaller than each player's share of the efficiency gain.
Otherwise the process can fail only by cycling for ever among non-grand states.
We give exact necessary conditions on such a cycle, decidable for a fixed candidate cycle by linear programming,
and exhibit, under an earlier and weaker notion of profitability, a four-player payoff structure whose only closed class is a cycle of two pairs forming and dissolving alternately.
Under the present definition, and under either termination rule, no equilibrium traverses a cycle on a fixed schedule:
somebody always reaches a state they would rather not leave, and the axioms give them the floor.
Whether arrival can fail by cycling at random is open.
The axioms are not merely postulated: we exhibit a bargaining game---proposal, amendment by substitutes
voted on by their own signatories, final unanimity, and an arbitrarily small delay on failure---whose
equilibria satisfy them as the delay vanishes, and which settles each period in its first round.}

\maketitle

\section{Introduction}\label{sec:intro}

\subsection{The question}\label{sec:question}

Whether a group of players ends up cooperating fully, or fragments into several competing blocs,
is the central question of coalition-formation theory.
Static concepts answer it by asking which structures are stable; dynamic concepts,
following \citet{KonishiRay2003} and the theory of binding agreements of \citet{RayVohra1997,RayVohra1999},
instead specify a stochastic process on coalition structures and ask where it goes.
The dynamic formulation is the more demanding one,
because a structure that no single move improves upon may still fail to be reached,
and a structure that is reached may still be left again.
The two halves of the question are therefore \emph{persistence}---once the grand coalition has formed,
does it survive?---and \emph{arrival}---is it formed at all?
A theory that answers only the first has not shown that full cooperation occurs;
a theory that answers only the second has not shown that it lasts.

We work in the Konishi--Ray framework: in each period some set of players may change the current state;
a player agrees to a change only if it raises her discounted long-term expected payoff given her beliefs about the continuation;
and beliefs are required to be self-confirming,
so that the transition probabilities implied by rational behaviour coincide with the ones players hold.

Within that framework we adopt the state space and the payoff-sharing structure of \citet{HeitzigKornek2018},
which builds on \citet{Heitzig2011}, as a concrete and plausible specification.
Three features matter:
\begin{enumerate}[label=(\roman*)]
\item \emph{States are hierarchies.}
An agreement may be signed by individual players or by coalitions that have signed earlier,
so a state is a family of subsets of the player set,
any two of which are disjoint or nested (a \emph{laminar} family).
The nesting is not a full history:
it records which of the agreements \emph{currently in force} were signed before which,
since an agreement contains precisely those agreements that were already in force when it was signed,
but it records nothing about agreements that have since been terminated,
and nothing about the relative timing of two disjoint agreements.
\item \emph{Moves are merges and chain-rule terminations.}
Any group of top-level coalitions may sign a new overarching agreement; and any agreement may be terminated,
which also terminates every agreement containing it,
since the latter was signed by a party that no longer exists.
\item \emph{Surplus is shared among the signatories of a new agreement.}
When an agreement is signed,
its signatories divide the surplus it generates---measured against the state in which that agreement,
and only that agreement, is missing---in some strictly positive proportions.
That the shares are the fixed weights of a weighted Nash bargaining solution \citep{Kalai1977,KalaiSamet1987},
as in \citet{HeitzigKornek2018}, is a special case that we invoke only occasionally,
and never in the persistence results.
\end{enumerate}
We do \emph{not} adopt the partition function of \citet{HeitzigKornek2018},
i.e.\ the underlying cost--benefit model that produces the numbers;
all payoff assumptions used below are structural,
and they are stated explicitly in Section~\ref{sec:payoffs}.
Nothing from the literature is used in any proof.

\begin{example}[Provenance of the running example]\label{ex:carbon}
Example~\ref{ex:running} is the motivating application of \citet{HeitzigKornek2018}.
Players are countries or regions;
an agreement is a linked emissions-trading market whose members coordinate their caps;
and the grand coalition is a global carbon market with a first-best cap.
The classical literature on international environmental agreements \citep{CarraroSiniscalco1993,Barrett1994} finds small stable coalitions when only one coalition may form and members may leave freely;
allowing several coalitions, nesting and termination changes that picture,
and whether full cooperation is eventually reached is exactly the question studied here.
Cap coordination without a coalition structure is analysed in \citet{Helm2003}.
We use this reading only for intuition; no result below depends on it.
\end{example}

The question we study is whether the grand coalition must form eventually.
We consider throughout the variant in which every agreement coordinates its members' actions immediately,
so that a state is a hierarchy of fully coordinated coalitions.

\subsection{Related models}\label{sec:related}

The dynamic study of coalition formation splits, for our purposes, into three lines, and it is worth
being explicit about which questions each of them answers, because our two questions---persistence
and arrival---are not the questions the first two lines ask.

\emph{Farsighted stability.}
\citet{Chwe1994}, \citet{Xue1998}, \citet{RayVohra2015} and \citet{DuttaVohra2017}, and for networks
\citet{HMV2009}, define stability directly on a dominance relation: a state is stable if no coalition
can begin a chain of moves ending at a state all of its members prefer.
These are static concepts about a dynamic idea.
They carry no discount factor, no probabilities and no notion of a path, and they answer the question
``which states can rest''.
Our persistence results are the dynamic counterpart of that question, and they turn out to be the
easy half.
The arrival question has no static counterpart at all, and the counterexample of
Section~\ref{sec:cycles} shows why one is needed: each of the four states of its cycle is
unremarkable on its own, and what fails is a property of the path.

\emph{Protocol-based bargaining.}
\citet{Bloch1996}, \citet{RayVohra1999}, \citet{Gomes2005}, \citet{GomesJehiel2005} and
\citet{HyndmanRay2007} specify a bargaining protocol: who may propose what, when, and to whom.
This buys sharp predictions and, in several of these papers, results on whether long-run outcomes are
efficient.
It also puts a great deal of structure on the answer, since the protocol is rarely observable and the
predictions are sensitive to it.
Following \citet{KonishiRay2003}, we impose no protocol: in each period any group that could
implement a move may consider doing so, and the only discipline is that a realised move be profitable
and undominated for the players whose approval it needs.
That is weaker, so our positive results are correspondingly more robust, and our negative result---a
process that cycles for ever---is harder to dismiss as an artefact of the order of play.

\emph{Dynamic processes with discounting.}
\citet{KonishiRay2003} is the closest relative, and the framework we adopt.
Three things differ.
First, a state here is a \emph{hierarchy} rather than a partition: agreements may be signed by
coalitions that formed earlier, and payoffs depend on which sub-agreements were already in force when
a given agreement was signed, because that is the reference state against which its surplus is
measured.
Two states with the same partition of the players can therefore pay differently, which is what the
relative shares $\psi$ of Section~\ref{sec:decomp} measure, and it is the whole source of the
distributional conflict that drives the arrival problem.
Second, dissolution obeys the chain rule: terminating an agreement terminates every agreement
containing it, because the latter was signed by a party that no longer exists.
Third, the surplus-sharing rule is a primitive of the model rather than an outcome of a protocol.
Section~\ref{sec:pivot} shows that this third choice is not innocuous: it decides whether the limit
$\delta\to1$ is the tractable case or the intractable one.

\emph{Implementation and the Nash program.}
Axioms \ref{E1}--\ref{E3} restrict a process in the way a cooperative solution concept restricts an
outcome, and the natural objection is that no negotiation need behave that way.
Answering it is the business of the Nash program, and of the literature on non-cooperative
foundations for cooperative concepts: random-proposer bargaining in the manner of
\citet{BaronFerejohn1989} and \citet{Okada1996}, rejector-proposer bargaining as in
\citet{ChatterjeeDuttaRaySengupta1993}, and core implementation as in \citet{PerryReny1994};
\citet{Rogna2019} surveys the protocols and what each of them delivers.
Theorem~\ref{thm:implement} belongs to that tradition but implements a different kind of object.
Those results implement a set of \emph{allocations}---the core, a value---as the equilibrium outcomes
of a game.
What is implemented here is a set of \emph{transition kernels}: the object is the whole dynamic
process, and the axioms constrain which moves may occur at every state, not what anybody ends up
with.
The device that makes this possible is the amendment stage, which turns domination from a
stipulation into the outcome of a vote, and the third structural feature above is what makes it
necessary: since the sharing rule is a primitive, the protocol has nothing to bargain over except
which move happens, so proposals must be moves and objections must be moves as well.

Finally, the application that motivates the state space is the linking of carbon markets, where the
relevant cooperative benchmark is the one studied by \citet{ChanderTulkens1997} and, in the
non-cooperative direction, by \citet{Helm2003}; the classical negative results on the size of stable
coalitions are \citet{CarraroSiniscalco1993} and \citet{Barrett1994}.
Nothing from any of these is used below.

\subsection{On the discount factor}\label{sec:delta}

Players are far-sighted in this model by construction, and irrespective of $\delta$:
each evaluates a state by the payoff stream she expects along the whole future path of the process,
anticipating the moves that she and others will make in response to the current state.
This is a statement about what players take into account, not about how much they care for the future.
How much they care is a separate matter, governed by the discount factor $\delta \in (0,1)$.

The discount factor aggregates three distinct things.
The first is the pure rate of time preference $\rho$ of the players.
The second is the length $\Delta t$ of one period,
i.e.\ of the interval between two opportunities to change the state,
so that a pure discount contribution $e^{-\rho\,\Delta t}$ enters.
The third is trust: if the process breaks down with probability $\beta$ per period---negotiations collapse,
the institutional setting disappears,
the players stop playing this game---then only the surviving branch matters,
and $\delta = (1-\beta)e^{-\rho\,\Delta t}$.
A small $\delta$ therefore means impatient players, or slow negotiations, or fragile ones,
and these are not distinguishable within the model.

The limit $\delta \to 1$ is the case of interest for two reasons.
It is the case in which the model is most demanding, as the last paragraphs of this subsection explain:
a period spent outside the grand coalition costs $(1-\delta)(V^\ast-V(x))$ in aggregate,
where $V^\ast$ and $V(x)$ are the total payoffs in a grand coalition and in some state $x$ outside it,
respectively.
So delay becomes free and holding out for a better division becomes cheap.
And it is the case that the trend in real bargaining points towards,
since $\delta \to 1$ is what one gets by letting $\Delta t \to 0$ at fixed $\rho$ and $\beta$:
the faster the parties can meet, evaluate a proposal and respond to it, the closer $\delta$ is to one.
Where the parties delegate the search for agreements to automated negotiators,
so that the period length is measured in seconds rather than in years of diplomacy,
the relevant regime is precisely $\delta \approx 1$.
That limit turns out to be the demanding one rather than the easy one, for reasons that need the
machinery of Section~\ref{sec:decomp} and are therefore deferred to Section~\ref{sec:discussion}.

\subsection{A standing assumption: transferable utility}\label{sec:tu}

One assumption is so basic to the framework that it is easy to overlook, and it should be stated
before the results rather than after.
Throughout, utility is transferable and transfers are unbounded.
A coalition that signs an agreement divides the surplus it generates in whatever proportions the
sharing rule prescribes, and the model never asks whether the side payments implementing that
division are feasible, or how large they are.
This is what allows the payoff primitive to be a partition function---a number per coalition per
coalition structure---rather than a vector of raw payoffs per player, since any division of a
coalition's total is assumed reachable.

The assumption does real work.
It is what makes the merge-all move Pareto-improving whatever the raw incidence of the merger
(Lemma~\ref{lem:pareto}), and that in turn is feature (F) below, on which most of what follows rests.
Section~\ref{sec:bounded} returns to what happens without it; the short answer is that persistence
survives and much of arrival does not.

\subsection{Three structural features}

Three properties of this particular model, none of which holds in a general coalition-formation model,
carry all of the arguments below.

\begin{description}
\item[(F) Merge-all is available and Pareto-improving.] From every non-grand state $x$ a single move
leads to the grand state $g_x$ obtained by placing one overarching agreement over all existing coalitions.
Because the reference state for that agreement's surplus sharing is $x$ itself,
every player's static payoff strictly increases (Lemma~\ref{lem:pareto}).

\item[(R) Only insiders can dissolve.] Forming a coalition needs the approval of all its members;
dissolving it needs the approval of at least one of them.
Hence no pair of states is traversed in both directions by moves sharing a relevant player (Lemma~\ref{lem:reversal}).

\item[(S) Terminations at a grand state are shadows of terminations at its parent.] Under the model's
chain rule---terminating a coalition also terminates every coalition containing it---every move available at $g_x$ leads to a state that is also reachable in one move from $x$,
by no more players (Lemma~\ref{lem:shadow}).
\end{description}

\subsection{The story in outline}\label{sec:story}

This subsection tells the whole story once, without technical apparatus, and fixes on the way every
term that the summary table of Section~\ref{sec:map} uses.
A reader who stops here should still know what is proved, what is not, and why the distinctions
matter.
We use one example throughout, the one the model was built for.

\begin{example}[Running example: linking carbon markets]\label{ex:running}
The players are countries.
An \emph{agreement} is a linked emissions-trading market whose members coordinate their caps;
the members may be countries, or blocs that had linked earlier, so that agreements nest.
The \emph{grand coalition} is a single global market with a first-best cap.
Total payoff is highest when everybody is inside it, and the question is whether that happens.
We return to this reading after each step below.
It illustrates the concepts, not the pathology: the counterexample of Section~\ref{sec:counterexample}
has a payoff structure quite unlike a climate model, and neither of the two calibrated specifications
in Section~\ref{sec:numerics} produced it.
\end{example}

\emph{States and moves.}
A \emph{state} lists the agreements currently in force, nested inside one another; it is \emph{grand}
if one agreement covers everybody.
Two things can happen.
A \emph{merge} places a new agreement over some of the currently outermost coalitions, and needs the
approval of everyone it covers---agreements are signed unanimously.
A \emph{termination} cancels an agreement, and by the \emph{chain rule} also cancels every agreement
containing it, on the ground that the larger agreement was signed by a party that has ceased to
exist.
The \emph{merge-all} move is the merge that produces the grand coalition in one step, from wherever
the process happens to be.
In the running example, a merge is two markets linking with coordinated caps, and the chain rule says
that if a bloc inside a wider market breaks up, the wider market's cap coordination lapses with it,
because one of its signatories no longer exists.

\emph{Two consent rules.}
Who must approve a termination is the first of the paper's two axes.
Under \emph{unanimous termination} everyone bound by the outermost agreement destroyed must agree;
under \emph{unilateral termination} a single member of the agreement being cancelled suffices.
The second is the realistic one for sovereigns---the Paris Agreement, like most treaties, carries a
withdrawal clause any party may invoke---and Section~\ref{sec:realism} argues the point at length.
Note the asymmetry: forming needs everybody, leaving needs one.
That asymmetry is not decoration; it is what makes the whole analysis work.

\emph{What the players do.}
The model fixes no bargaining procedure.
It constrains the process directly, by three requirements: only rational moves are ever made, the
process does not stall while a rational moving option is available, and a move that some player ranks
first among the rational ones is made with positive probability.
Rational means two things at once.
A move must be worth at least as much, to each player whose consent it needs, as what the negotiation
would otherwise deliver---not merely more than the state they are standing in, which is a weaker
benchmark and the wrong one, since declining a proposal does not commit anybody to spending another
period where they are.
And it must not be undercut by a move that every player \emph{that} one needs prefers, with staying
put counting as such a move, one each player may always call for on their own behalf.
For the running example: no country signs a linkage it expects to do worse under than under continued
negotiation, none is dragged into one that some subset of its own signatories would rather replace,
and a linkage that some country wants most, and that survives both tests, is on the table.

\emph{The requirements are implemented, not stipulated.}
They would be worth little if no negotiation could produce them, and this is where an axiomatic model
usually has to be taken on trust.
Section~\ref{sec:stay} exhibits a bargaining game that produces exactly them.
A proposer is drawn; any player may move a substitute that affects nobody outside the original
motion's consenting set, and the substitute's own consenting set votes it in or out against the
motion; the surviving motion is then put to its consenting set, who must be unanimous.
A failed round costs an arbitrarily small delay, after which the phase begins again with a fresh
proposer.
As the delay vanishes, every equilibrium of that game satisfies the three requirements---and the
delay is never actually paid, because the first proposal is already the best one available, no
substitute defeats it, and it passes at once.
Two features do the work.
Rejection is worth exactly the value of the reopened negotiation, which is what makes the benchmark
above the right one rather than an assumption; and the substitute votes turn ``undercut by a better
move'' from a stipulation into the outcome of a ballot.

\emph{Two questions.}
A state is \emph{absorbing} if the process never leaves it.
\emph{Persistence} asks whether a grand state is absorbing once reached; \emph{arrival} asks whether
one is reached at all.
Both must be answered before one may say that full cooperation occurs: a theory that answers only the
first has not shown cooperation happens, and one that answers only the second has not shown it lasts.

\emph{Persistence: settled, and easily.}
Every grand state that is ever reached is absorbing, and every absorbing state is grand.
This holds for every discount factor, under both consent rules, and with no condition on payoffs.
The reason is structural rather than quantitative.
A grand state can be left only by a termination, and under the chain rule any termination there
either destroys the global agreement---which under unanimity requires everybody's consent, and
everybody was better off inside---or is the shadow of a termination that was already available one
step earlier, in which case the players who wanted it would not have merged in the first place.
No far-sightedness is needed, and no appeal to patience: a global carbon market, once formed, is not
dissolved, whatever the discount factor.

\emph{The dichotomy.}
It follows that an equilibrium can behave in only two ways in the long run: it settles at a grand
state, or it cycles for ever among non-grand states, forming and dissolving agreements without end.
Everything else in the paper is about the second possibility.

\emph{Arrival sometimes holds.}
Three sufficient conditions, of different kinds.
When the discount factor is small, no termination is ever worth making, the process only merges, and
it arrives within $n-1$ steps.
When there are three players, the move structure is too poor to support a cycle: the only way back
from a pair to the non-cooperative state is to dissolve that pair, and the players who dissolve it
are the ones who formed it, which cannot be profitable in both directions.
And for any number of players and any discount factor, arrival holds if every player prefers every
grand state to every non-grand state.
That last condition is about \emph{distribution}, not about efficiency: everyone agrees the grand
coalition is best for the group, and the condition asks in addition that no player prefers some
smaller arrangement for herself.
In the running example it fails when a large permit seller does better in a market that excludes a
competitor than in the global one.

\emph{Arrival can fail---but not on a schedule.}
With four players and a discount factor of one half there is a payoff structure whose only closed
class is a cycle: two pairs alternately form and dissolve, and the global agreement is never signed.
That was found under an earlier and weaker notion of what makes a move worth making, and the reader
should hold it at arm's length until the end of this paragraph.
The mechanism is worth stating, because it is not the one intuition suggests.
Each player would happily be in \emph{some} grand state---each has one they strictly prefer to the
cycle---and yet none forms.
The obstacle is that each grand state has exactly one gateway: the only route into it is the
merge-all move at its own parent state, requiring everyone's signature.
And at that gateway, the players whose consent is needed are precisely the ones that particular
grand state disfavours, because the surplus is divided against the state in which it is signed.
In the running example: whoever currently holds the rent vetoes locking it in at today's baseline,
and who holds it rotates.

That construction, however, predates the requirement that staying be a move like any other, and does
not survive it.
No equilibrium can traverse a cycle deterministically, whatever the payoffs and whichever
termination rule is in force:
go round the cycle and follow any one player's valuation; it must come back to where it started, so
somewhere along the way it falls, and at the top of that fall the player would rather stay than move
on.
Staying is then a rational option for them, and the third axiom puts what they most want on the
table; if some move of their own is better still, that move is on the table instead, and it is not
the one the cycle prescribes.
A cycle would therefore have to be traversed at random, never on a fixed schedule.
Whether such a cycle exists is open, and with it the arrival question.

\emph{Why patience does not help.}
One expects a dynamic model to become well behaved as players grow patient.
Here the opposite is true.
The cost to the group of one more period outside the grand coalition is proportional to $1-\delta$,
so as players become patient, delay becomes free and holding out for a better division becomes cheap.
This is why the persistence results, which rest on the move structure, hold for every discount factor,
while every quantitative condition for arrival degrades as $\delta$ approaches one.
And by the continuous-time reading of Remark~\ref{rem:continuous-time}, faster negotiation is exactly
what pushes $\delta$ towards one---so the hard case is the one that automation moves us towards.

\emph{The sharing rule is the pivot.}
The paper's second axis is how a coalition divides the surplus its agreement creates.
Under \emph{myopic} sharing, which is what the model was originally written with, the surplus divided
is the change in \emph{per-period} payoffs, measured against the state that would obtain if that
agreement alone were missing.
Under \emph{far-sighted} sharing, it is the change in discounted \emph{long-term} payoffs.
The difference is not cosmetic.
Under far-sighted sharing, merging becomes an improvement for everybody in the quantity that actually
governs behaviour, rather than only in this period's payoff; the limit $\delta\to1$ turns from
intractable into tractable; and if every agreement is jointly worth having to its signatories in
long-term terms, then no termination is ever profitable and the grand coalition forms within $n-1$
steps under either consent rule.
For the running example the distinction is whether a country evaluates a linkage by this year's
abatement costs or by the whole path the agreement sets in motion.
That is a design question for a negotiating protocol, and Section~\ref{sec:pivot} shows it is the
question that matters most.

\emph{What is not known.}
Two things.
First, whether arrival can fail at all: the one construction we have is deterministic and is ruled
out under both termination rules, so a failure would have to cycle at random, and we can neither
build one nor exclude one.
Second, arrival under myopic sharing, unilateral termination, and a discount factor close to one---the
combination the motivating application most plausibly satisfies, and the one that fast automated
negotiation moves towards.
There we can neither prove that the grand coalition must eventually form, nor construct an example in
which it does not.

\emph{Two side results.}
The cost of failure can be measured exactly: the shortfall in total payoff equals the payoff gap
multiplied by a discounted measure of how long the grand coalition takes to form, and nothing else.
And that cost cannot be bounded by the efficiency gain at stake---for four players it can be an
arbitrarily large multiple of it---so inefficiency here is paid for in time or in distributional
conflict, never in a quantity one can read off the payoffs.

\subsection{A map of the results}\label{sec:map}

Table~\ref{tab:map} lists the main results with the hypotheses each needs, and
Figure~\ref{fig:spine} shows how they fit together.
The logical spine is short: the axioms are what an explicit bargaining game delivers
(Theorem~\ref{thm:implement}), every equilibrium either settles at a grand state or cycles for ever
among non-grand states (Corollary~\ref{cor:dichotomy}), the first branch is fully understood, and
everything else in the paper is about the second.

\begin{table}[htbp]
\centering
\makebox[\textwidth][c]{%
\begin{tabular}{@{}lllllll@{}}
\toprule
Result & Statement & Term. & Sharing & $\delta$ & Axioms & Conditions \\
\midrule
Thm~\ref{thm:implement} & bargaining implements the axioms & either & any & all & --- & $\varepsilon\to0$ \\
Prop~\ref{prop:proposals} & and does so in the first round & either & any & all & \ref{E1}--\ref{E3} & --- \\
\midrule
Thm~\ref{thm:persist-unan} & all grand states absorbing & unan. & any & all & \ref{E1} & --- \\
Thm~\ref{thm:persist-unil} & reached grand states absorbing & unil. & any & all & \ref{E1} & --- \\
Thm~\ref{thm:absorbing-grand} & absorbing $\Rightarrow$ grand & either & myopic & all & \ref{E1},\ref{E2} & ties$^\ast$ \\
Cor~\ref{cor:dichotomy} & settle at a grand state, or cycle & either & myopic & all & \ref{E1},\ref{E2} & ties$^\ast$ \\
\midrule
Thm~\ref{thm:smalldelta} & arrival & either & myopic & small & \ref{E1},\ref{E2} & gains \\
Thm~\ref{thm:n3} & arrival & either & myopic & all & \ref{E1},\ref{E2} & $n=3$, ties$^\ast$ \\
Thm~\ref{thm:arrival} & arrival & either & myopic & all & \ref{E1}--\ref{E3} & \eqref{eq:prefer} \\
Thm~\ref{thm:onestep} & arrival in one step & either & myopic Nash & large & \ref{E1}--\ref{E3} & no share gain \\
Prop~\ref{prop:flat} & flat grand state absorbing & either & myopic & small & \ref{E1} & --- \\
\midrule
Prop~\ref{prop:rotating} & a $4$-cycle excluded & either & myopic Nash & all & \ref{E1} & sign of $D_i$ \\
Thm~\ref{thm:nodetcycle} & \textbf{no deterministic cycle} & either & any & all & \ref{E1},\ref{E3} & --- \\
Prop~\ref{prop:counterexample}$^{\rm c,\dagger}$ & arrival fails & either & myopic Nash & $\le 3/4$ & \ref{E1}--\ref{E3} & $n=4$ \\
\midrule
Prop~\ref{prop:farsighted} & merge Pareto-improves $\ev$ & either & far-sighted & all & --- & --- \\
Thm~\ref{thm:fs-arrival} & arrival & unan. & far-sighted & large & \ref{E1}--\ref{E3} & spread bound \\
Thm~\ref{thm:combinatorial} & a cycle dies as $\delta\to1$ & unan. & far-sighted & large & \ref{E1}--\ref{E3} & $|H_i|\ge|C|-1$ \\
Thm~\ref{thm:fgains} & \textbf{arrival in $\le n-1$ steps} & \textbf{either} & far-sighted & \textbf{all} & \ref{E1},\ref{E2} & long-run gains \\
Prop~\ref{prop:pdagger} & long-run gains, in primitives & either & far-sighted & all & --- & --- \\
Prop~\ref{prop:price} & price of no-agreement & either & any & all & \ref{E1} & --- \\
Prop~\ref{prop:shapes}$^{\rm c}$ & only one shape of cycle & unan. & myopic Nash & all & \ref{E1}--\ref{E3} & $n=4$ \\
\S\ref{sec:numerics}$^{\rm n}$ & no cycling in two economic models & either & myopic Nash & $\le0.99$ & \ref{E1}--\ref{E3} & $n=4$ \\
\S\ref{sec:counterexample}$^{\rm n}$ & no arriving equilibrium found there & either & myopic Nash & $1/2$ & \ref{E1}--\ref{E3} & $n=4$ \\
Prop~\ref{prop:price-unbounded}$^{\rm c}$ & that price is unbounded in $\gamma$ & unan. & myopic Nash & all & \ref{E1}--\ref{E3} & $n=4$ \\
\bottomrule
\end{tabular}}
\caption{The main results and what each needs.
$^\ast$Criterion~\ref{cond:ties} is needed only in the unilateral case.
Of the last column, `ties' and `long-run gains' are criteria on the process (Criteria~\ref{cond:ties} and~\ref{cond:fgains}) rather than conditions on the primitives.
``Any'' means the proof uses no property of the sharing rule---of the payoffs it uses at most Assumption~\ref{ass:eff}---so those rows cover far-sighted sharing too.
``Myopic'' means only Assumption~\ref{ass:share}, which divides the surplus of
\emph{per-period} payoffs; ``myopic Nash'' adds Condition~\ref{cond:nash}, fixing the shares to be
bargaining weights; ``far-sighted'' divides the surplus of discounted long-term payoffs instead
(Definition~\ref{def:farsighted}).
Assumptions~\ref{ass:eff} and~\ref{ass:share} are in force throughout and are not listed.
A superscript $\rm c$ marks a result established by computation: the computation is finite and
exact, and in the case of Proposition~\ref{prop:counterexample} its outcome is a payoff structure
whose verification is written out in the text and can be checked by hand.
A superscript $\rm n$ marks a numerical finding that is evidence and not proof: a search that found
nothing tells us only that it found nothing.
A superscript $\dagger$ marks a result established under the earlier notion of profitability, before
staying became a move, and not re-established since;
Proposition~\ref{prop:counterexample} is the only one, and Theorem~\ref{thm:nodetcycle} shows its
construction cannot be an equilibrium under the unanimous rule.
Everything unmarked is proved.}\label{tab:map}
\end{table}

\begin{figure}[htbp]
\centering
\begin{tikzpicture}[>=stealth,font=\small,
  bx/.style={draw,rounded corners=2pt,align=center,inner sep=5pt},
  lb/.style={font=\scriptsize,align=center}]
\node[bx] (top) at (0,3.3) {\emph{Fix an equilibrium and a discount factor $\delta$.}\\
   Its closed communicating classes are of exactly one of two kinds:};
\node[bx] (abs) at (-3.9,1.3) {(a) a single grand state,\\which is then never left};
\node[bx] (cyc) at (3.9,1.3) {(b) a set of at least two\\non-grand states, cycled for ever};
\node[bx] (ok)  at (-3.9,-0.9) {always occurs, and needs no\\condition on payoffs or on $\delta$:\\
   Thms~\ref{thm:persist-unan},~\ref{thm:persist-unil},~\ref{thm:absorbing-grand}};
\node[bx] (no)  at (3.9,-0.9) {never traversed on a fixed\\schedule, under either rule:\\
   Thm~\ref{thm:nodetcycle};\\
   excluded outright by Thms~\ref{thm:smalldelta},~\ref{thm:n3},~\ref{thm:arrival},~\ref{thm:fs-arrival},~\ref{thm:fgains}};
\node[bx] (open) at (3.9,-3.2) {open whether it can occur at all,\\
   necessarily at random;\\
   and in particular under myopic\\sharing, unilateral termination, $\delta\to1$};
\draw[->] (top) -- node[lb,above left]{Cor~\ref{cor:dichotomy}} (abs);
\draw[->] (top) -- node[lb,above right]{Cor~\ref{cor:dichotomy}} (cyc);
\draw[->] (abs) -- (ok);
\draw[->] (cyc) -- (no);
\draw[->] (no) -- (open);
\end{tikzpicture}
\caption{An exhaustive case distinction on the long-run behaviour of a single equilibrium.
The two upper branches are the two kinds of closed communicating class that
Corollary~\ref{cor:dichotomy} permits; the arrows from them lead to what is known about each.
The left case is settled without conditions and for every discount factor; the whole difficulty is
the right one.}\label{fig:spine}
\end{figure}

\section{The model}\label{sec:model}

\subsection{States}

Let $N$ be a finite set of players, $n := |N| \ge 2$.

\begin{definition}[Hierarchy]\label{def:state}
A \emph{state} is a family $x \subseteq 2^{N}$ of subsets of $N$, each of cardinality at least $2$,
that is \emph{laminar}:
for all $K,K' \in x$ either $K \cap K' = \emptyset$ or $K \subseteq K'$ or $K' \subseteq K$.
Elements of $x$ are \emph{agreements} or \emph{nodes}.
Let $\X$ be the (finite) set of states.
\end{definition}

$K \in x$ means the members of $K$ have signed a binding agreement to coordinate their actions,
possibly through agreements signed earlier.
Individual players are not listed; $x = \emptyset$ is the fully non-cooperative state.

\begin{definition}
For $x \in \X$ let $P(x)$ consist of the maximal elements of $x$ together with the singletons $\{i\}$,
$i \notin \bigcup x$; then $P(x)$ is a partition of $N$, the \emph{coalition structure}.
Put $\G := \{x \in \X : N \in x\}$, the set of \emph{grand states}.
As $n \ge 2$, $x \in \G$ iff $P(x) = \{N\}$.
For $x \notin \G$ set $g_x := x \cup \{N\}$, and for $h \in \G$ set $t(h) := h \setminus \{N\}$.
\end{definition}

\begin{lemma}\label{lem:bijection}
$g : \X\setminus\G \to \G$ is a bijection with inverse $t$.
\end{lemma}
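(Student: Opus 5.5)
The plan is to check four things directly from the definitions: $g$ lands in $\G$, $t$ lands in $\X\setminus\G$, and the two compositions are identities.

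\emph{Well-definedness of $g$.} Take $x \in \X\setminus\G$. The family $g_x = x \cup \{N\}$ consists of subsets of $N$ of cardinality at least $2$, because every member of $x$ has that property and $|N| = n \ge 2$. It is laminar: any two members of $x$ are already disjoint or nested, and every $K \in x$ satisfies $K \subseteq N$. So $g_x \in \X$, and since $N \in g_x$ we get $g_x \in \G$.

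\emph{Well-definedness of $t$.} Take $h \in \G$. The family $t(h) = h \setminus \{N\}$ is a subfamily of a laminar family of sets of size at least $2$, so it inherits both properties and lies in $\X$. It does not contain $N$, so it is not grand, and $t(h) \in \X\setminus\G$.

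\emph{Inverse identities.} For $x \notin \G$ we have $N \notin x$, so $t(g_x) = (x \cup \{N\}) \setminus \{N\} = x$. For $h \in \G$ we have $N \in h$, so $g_{t(h)} = (h\setminus\{N\}) \cup \{N\} = h$. These two identities say $t \circ g = \mathrm{id}$ and $g \circ t = \mathrm{id}$, so $g$ is a bijection with inverse $t$.

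There is no genuine obstacle here. The only points needing care are to use the hypothesis $x \notin \G$ (that is, $N \notin x$) in the first identity and $n \ge 2$ in the well-definedness of $g$. The equivalence $x\in\G \iff P(x)=\{N\}$ is not needed.
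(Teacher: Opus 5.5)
Your proposal is correct and follows the paper's argument: check that $g_x = x\cup\{N\}$ is a laminar family containing $N$, check that $h\setminus\{N\}$ omits $N$, and verify both compositions are the identity. You also spell out two points the paper leaves implicit, namely the cardinality requirement (using $n\ge 2$) and $t(h)\in\X$.
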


\begin{proof}
For $x \notin \G$, $x \cup \{N\}$ is laminar since $K \subseteq N$ for all $K \in x$, and contains $N$,
so $g_x \in \G$ and $t(g_x) = x$.
For $h \in \G$, $h\setminus\{N\}$ does not contain $N$, so $t(h) \notin \G$, and $g_{t(h)} = h$.
\end{proof}

\subsection{Moves}

\begin{definition}[Moves]\label{def:moves}
A \emph{move} at $x$ is a triple $m = (x,\tau(m),R(m))$ with target $\tau(m) \in \X$ and non-empty \emph{relevant set} $R(m) \subseteq N$,
of one of three kinds.
\begin{description}
\item[Merge.] Choose $S \subseteq P(x)$ with $|S|\ge2$ and put $K := \bigcup S$.
Then $\tau(m) := x \cup \{K\}$ and $R(m) := K$.
\item[Termination.] Choose $K \in x$ and put
$\tau(m) := x \setminus \{K' \in x : K \subseteq K'\}$:
the agreement $K$ is deleted together with every agreement containing it (\emph{chain rule}).
Let $S(x,K) \in P(x)$ be the top-level node containing $K$.
The relevant set is
\begin{align*}
R(m) &:= \{i\}\ \text{for some } i \in K &&\text{(\emph{unilateral} termination)},\\
R(m) &:= S(x,K) &&\text{(\emph{unanimous} termination)}.
\end{align*}
\item[Stay.] Choose $i \in N$ and put $\tau(m) := x$ and $R(m) := \{i\}$;
write $s^i_x$ for this move and read it as ``$i$ moves to adjourn''.
\end{description}
$\M(x)$ denotes the set of merges and terminations at $x$, the \emph{moving} moves, and
$\M^{+}(x) := \M(x) \cup \{s^i_x : i \in N\}$ the set of all moves at $x$.
In a termination we call $K$ the \emph{bottom node} and call the players in $R(m)$ the \emph{terminators} of $m$.
Under unilateral termination, terminating a given $K$ yields $|K|$ distinct moves, one per terminator.
\end{definition}

The stays are the only moves with target $x$, since a merge adds an agreement and a termination removes at least one.
That each is relevant to its own mover and to nobody else is the substantive choice, and Definition~\ref{def:rationality} explains what it buys:
with $R=\emptyset$ an empty stay would dominate everything vacuously, while with a collective $R=N$ any profitable move would dominate the adjournment,
so that the status quo would cease to be a player's personal outside option exactly when others want to move.

Merges and terminations map states to states: a union of at least two top-level nodes is disjoint from, or contains,
every element of $x$; and deleting elements preserves laminarity.
A stay maps $x$ to itself.
Under unanimous termination the set whose approval is required is the whole top-level coalition containing $K$,
not merely $K$, because that coalition is destroyed as well.
At a grand state this set is all of $N$; this drives Theorem~\ref{thm:persist-unan}.
Figure~\ref{fig:moves} shows a state and one move of each kind.

\begin{figure}[htbp]
\centering
\begin{tikzpicture}[font=\small,
  pl/.style={inner sep=1.2pt},
  ag/.style={draw,rounded corners=3pt},
  new/.style={draw,rounded corners=3pt,very thick},
  gone/.style={draw=gray,dashed,rounded corners=3pt},
  arr/.style={->,>=stealth,shorten >=4pt,shorten <=4pt}]
\node[pl] (a1) at (0,0.8)    {$A$};
\node[pl] (b1) at (0.6,0.8)  {$B$};
\node[pl] (c1) at (0.3,0.1)  {$C$};
\node[pl] (d1) at (1.5,0.45) {$D$};
\node[pl] (e1) at (2.4,0.8)  {$E$};
\node[pl] (f1) at (2.4,0.1)  {$F$};
\node[ag,inner sep=2pt,fit=(a1)(b1)] (ab1) {};
\node[ag,inner sep=5pt,fit=(ab1)(c1)] (abc1) {};
\node[ag,inner sep=4pt,fit=(e1)(f1)] (ef1) {};
\node[font=\footnotesize] at (1.2,-0.85) {$x=\{AB,ABC,EF\}$};
\begin{scope}[shift={(4.9,0)}]
\node[pl] (a2) at (0,0.8)    {$A$};
\node[pl] (b2) at (0.6,0.8)  {$B$};
\node[pl] (c2) at (0.3,0.1)  {$C$};
\node[pl] (d2) at (1.5,0.45) {$D$};
\node[pl] (e2) at (2.5,0.8)  {$E$};
\node[pl] (f2) at (2.5,0.1)  {$F$};
\node[ag,inner sep=2pt,fit=(a2)(b2)] (ab2) {};
\node[ag,inner sep=5pt,fit=(ab2)(c2)] (abc2) {};
\node[new,inner sep=8pt,fit=(abc2)(d2)] (abcd2) {};
\node[ag,inner sep=4pt,fit=(e2)(f2)] (ef2) {};
\node[font=\footnotesize] at (1.25,-0.85) {$x\cup\{ABCD\}$};
\end{scope}
\begin{scope}[shift={(9.8,0)}]
\node[pl] (a3) at (0,0.8)    {$A$};
\node[pl] (b3) at (0.6,0.8)  {$B$};
\node[pl] (c3) at (0.3,0.1)  {$C$};
\node[pl] (d3) at (1.5,0.45) {$D$};
\node[pl] (e3) at (2.5,0.8)  {$E$};
\node[pl] (f3) at (2.5,0.1)  {$F$};
\node[gone,inner sep=2pt,fit=(a3)(b3)] (ab3) {};
\node[gone,inner sep=5pt,fit=(ab3)(c3)] (abc3) {};
\node[gone,inner sep=8pt,fit=(abc3)(d3)] (abcd3) {};
\node[ag,inner sep=4pt,fit=(e3)(f3)] (ef3) {};
\node[font=\footnotesize] at (1.25,-0.85) {$\{EF\}$};
\end{scope}
\draw[arr] (3.05,0.45) -- (4.65,0.45);
\node[font=\scriptsize,align=center] at (3.85,2.0) {merge\\$S=\{ABC,\{D\}\}$};
\draw[arr] (8.0,0.45) -- (9.55,0.45);
\node[font=\scriptsize,align=center] at (8.78,2.0) {terminate\\bottom node $AB$};
\end{tikzpicture}
\caption{States and moves, with six players and writing $AB$ for $\{A,B\}$ etc.
Left: the state $x=\{AB,ABC,EF\}$; boxes are agreements, nesting is containment, and the coalition structure is $P(x)=\{ABC,\,EF,\,\{D\}\}$.
Middle: the merge with $S=\{ABC,\{D\}\}$ adds the agreement $ABCD$ (thick); its relevant set is $ABCD$.
Right: terminating the bottom node $AB$ deletes it together with every agreement containing it (dashed), by the chain rule, and leaves $\{EF\}$.
The relevant set of that termination is the top-level node $ABCD$ under unanimous termination, and a single member of $AB$ under unilateral termination.}\label{fig:moves}
\end{figure}

\begin{remark}[Excluded variants]\label{rem:variants}
(i) \emph{Internal termination},
i.e.\ deleting $K$ while keeping the agreements containing it (say $\{\{A,B\},\{A,B,C\}\} \to \{\{A,B,C\}\}$),
is not a move under the chain rule.
Nor should it be:
once $\{A,B\}$ is subsumed in $\{A,B,C\}$ its only remaining effect is on the reference state used to split the larger agreement's surplus,
so allowing $A$ and $B$ to delete it unilaterally would let two members rewrite the burden sharing of an agreement $C$ is still bound by.
That is renegotiation, not termination.
(ii) \emph{Individual withdrawal}, where $i$ leaves and the remainder stays coordinated,
changes the target of a deviation and invalidates Lemma~\ref{lem:pareto} for it; it is not analysed here.
\end{remark}

\begin{lemma}[Unique entry into $\G$]\label{lem:entry}
Let $h \in \G$.
The only element of $\M(y)$ with target $h$, for any $y\ne h$, is the merge at $y=t(h)$ with $S = P(t(h))$, whose relevant set is $N$.
Every element of $\M(h)$ is a termination and its target lies outside $\G$.
In particular no moving move leads from one grand state to another, and the only moves with target $h$ available at $h$ itself are the stays.
\end{lemma}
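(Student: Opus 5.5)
The plan is a case analysis on the kind of a moving move $m\in\M(y)$ with $\tau(m)=h$, where $y\ne h$. The only tools are Definition~\ref{def:moves} and the bijection of Lemma~\ref{lem:bijection}. The key observation is that $N\in h$, so I will ask which kind of move can produce a state containing $N$.

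\emph{Terminations.} The target of a termination is a subset of the source, $\tau(m)\subseteq y$. So $N\in\tau(m)$ would force $N\in y$, i.e.\ $y\in\G$. But then $N$ is the maximal element of $y$ and contains every bottom node $K$. By the chain rule, $N$ is therefore among the deleted agreements $\{K'\in y: K\subseteq K'\}$, so $N\notin\tau(m)$. Hence no termination has a grand target. This argument also gives the second sentence of the lemma once I note that at $h$ no merge is available: $P(h)=\{N\}$ has a single element, so no $S\subseteq P(h)$ with $|S|\ge2$ exists. Thus every element of $\M(h)$ is a termination, and its target lies outside $\G$.

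\emph{Merges.} For a merge, $\tau(m)=y\cup\{K\}$ with $K=\bigcup S$ and $S\subseteq P(y)$, $|S|\ge2$. Since $y\ne h$ and merges are unavailable at grand states, $y\notin\G$, so $N\notin y$. Therefore $N\in y\cup\{K\}$ forces $K=N$. Because $P(y)$ is a partition of $N$ into disjoint blocks, $\bigcup S=N$ holds only if $S=P(y)$; this requires $|P(y)|\ge2$, which is true since $y\notin\G$. Then $h=y\cup\{N\}=g_y$, and Lemma~\ref{lem:bijection} gives $y=t(h)$. The relevant set is $R(m)=K=N$. Conversely, this merge at $t(h)$ exists and has target $g_{t(h)}=h$. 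Stays are excluded from $\M(y)$ by definition and in any case have target $y\ne h$.

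\emph{Remaining claims.} These follow directly. A moving move from a grand state is a termination with non-grand target, so no moving move connects two grand states. At $h$ itself, moving moves have target outside $\G$, in particular $\ne h$; the only moves with target $h$ are therefore the stays $s^i_h$. None of the steps is a real obstacle. The only point needing care is the chain-rule observation that $N$ contains every bottom node at a grand state, so any termination there destroys $N$.
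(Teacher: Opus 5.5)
Your proof is correct and follows the paper's argument. Terminations are excluded because the chain rule deletes $N$ at any grand source; a merge into $h$ must have $\bigcup S=N$, hence $S=P(y)$ and $y=t(h)$; and $|P(h)|=1$ rules out merges at $h$. Your observation that $\tau(m)\subseteq y$ also makes explicit the case of a termination at a non-grand source, which the paper's proof (phrased only for source $h$) leaves implicit.
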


\begin{proof}
A termination deletes at least one node; if its source is $h$ then $N$ is deleted,
either because $K=N$ or because $K \subsetneq N$ and $N$ is an ancestor of $K$; so its target is not in $\G$.
Hence $h$ can only be the target of a merge at some $y$ with $S\subseteq P(y)$,
$|S|\ge2$ and $y \cup \{\bigcup S\} = h$.
Since $N \in h$ while $N \notin y$ (otherwise $|P(y)|=1<2$),
we need $\bigcup S = N$ and $y = h \setminus \{N\} = t(h)$; then $S = P(t(h))$ and $R = N$.
Finally $|P(h)|=1$, so no merge is available at $h$.
\end{proof}

\begin{lemma}[Shadow lemma]\label{lem:shadow}
Let $h \in \G$ and $x := t(h)$, and let $m$ be a termination at $h$ with bottom node $K$.
If $K=N$ then $\tau(m)=x$.
If $K \subsetneq N$ then $K \in x$,
the termination $\tilde m$ at $x$ with bottom node $K$ and the same terminator (under unilateral termination) exists,
and
\[
\tau(\tilde m) = \tau(m), \qquad R(\tilde m) \subseteq R(m).
\]
\end{lemma}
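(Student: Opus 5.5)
The plan is to unwind the definitions directly, splitting on whether the bottom node $K$ equals $N$. Throughout, $h = x \cup \{N\}$ with $N \notin x$, by Lemma~\ref{lem:bijection}.

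\emph{Case $K=N$.} The chain rule deletes every $K'\in h$ with $N\subseteq K'$. Since all nodes are subsets of $N$, this deletes exactly $N$. Hence $\tau(m)=h\setminus\{N\}=t(h)=x$.

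\emph{Case $K\subsetneq N$.} First, $K\in h$ and $K\neq N$, so $K\in x$, and the termination $\tilde m$ at $x$ with bottom node $K$ is a legitimate move. Under unilateral termination the terminator $i$ of $m$ lies in $K$, so the same choice of $i$ is available for $\tilde m$.

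For the targets, compare the sets of deleted nodes. At $h$ these are $\{K'\in h: K\subseteq K'\}=\{K'\in x: K\subseteq K'\}\cup\{N\}$. At $x$ they are $\{K'\in x: K\subseteq K'\}$. Removing these from $h=x\cup\{N\}$ and from $x$ respectively leaves the same family, so $\tau(\tilde m)=\tau(m)$.

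For the relevant sets there are two subcases. Under unilateral termination $R(\tilde m)=\{i\}=R(m)$, and the inclusion holds with equality. Under unanimous termination $R(m)=S(h,K)=N$, because $P(h)=\{N\}$. Also $R(\tilde m)=S(x,K)$ is a block of $P(x)$, hence a subset of $N$. So $R(\tilde m)\subseteq R(m)$.

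No step is a real obstacle. The only point needing care is the bookkeeping in the target comparison. One must note that $N$ is an ancestor of every node, so it is always among the deleted nodes at $h$. One must also note that no ancestor of $K$ other than $N$ is gained or lost in passing from $h$ to $x$.
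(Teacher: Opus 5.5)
Your proposal is correct and follows the paper's proof essentially line for line. It handles the case $K=N$ by direct computation and gets the target identity from $\{K'\in h : K\subseteq K'\}=\{K'\in x : K\subseteq K'\}\cup\{N\}$. It then obtains the relevant-set inclusion from $R(\tilde m)=R(m)=\{i\}$ under unilateral termination and from $R(\tilde m)=S(x,K)\subseteq N=S(h,K)=R(m)$ under unanimous termination.
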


\begin{proof}
If $K = N$ then $\tau(m) = h\setminus\{N\} = x$.
Let $K \subsetneq N$.
Then $\{K' \in h : K \subseteq K'\} = \{K' \in x : K \subseteq K'\} \cup \{N\}$, so
\begin{align*}
\tau(m) &= h \setminus \{K'\in h: K \subseteq K'\} = (x \cup \{N\}) \setminus \big(\{K'\in x :
K\subseteq K'\}\cup\{N\}\big)\\ &= x \setminus \{K' \in x: K \subseteq K'\} = \tau(\tilde m).
\end{align*}
Under unilateral termination $R(\tilde m) = R(m) = \{i\}$ with $i \in K$;
under unanimous termination $R(m) = S(h,K) = N \supseteq S(x,K) = R(\tilde m)$.
\end{proof}

\subsection{Payoffs}\label{sec:payoffs}

Static payoffs are given by $\pi : \X \to \R^{N}$,
where $\pi_i(x)$ is $i$'s payoff \emph{per period while the state is $x$}.
Write $V(x) := \sum_{i} \pi_i(x)$.

\emph{Convention.} We call \emph{assumption} only what is maintained everywhere in this note,
and \emph{condition} what is invoked in the statement of individual results.
There are exactly two assumptions, both stated in this subsection.
Every other hypothesis is a condition, and every result names the conditions it uses.

\begin{assumption}[Efficiency of the grand partition]\label{ass:eff}
$V$ is constant on $\G$, with value $V^\ast$, and $\gamma := \min_{x \in \X\setminus\G}(V^\ast - V(x)) > 0$.
\end{assumption}

Assumption~\ref{ass:eff} is all we ever need about $V$;
we never assume that $V(x)$ depends only on the coalition structure $P(x)$.
That stronger property is, however, the natural reason to expect Assumption~\ref{ass:eff}:
if the physical outcome is determined by a game between the top-level coalitions,
then $V$ is a function of $P(x)$ alone, it is constant on $\G$ because $P(x)=\{N\}$ there,
and it is strictly maximal there whenever full coordination is strictly efficient.

For $x \in \X$ and $K \in x$ write $x^{-K} := x \setminus \{K\}$ for the state in which the agreement $K$,
and only that agreement, is missing, and
\[
\Delta(x,K) := \sum_{j\in K}\big(\pi_j(x)-\pi_j(x^{-K})\big)
\]
for the \emph{surplus} of $K$ at $x$: what its signatories jointly gain by having signed it.

\begin{assumption}[Surplus sharing at the top level]\label{ass:share}
For every state $x$ and every \emph{top-level} node $K \in x \cap P(x)$ there are \emph{shares} $s_i(x,K)>0$,
$i \in K$, with $\sum_{i\in K}s_i(x,K)=1$, such that
\[
\pi_i(x) = \pi_i(x^{-K}) + s_i(x,K)\,\Delta(x,K) \qquad\text{for } i \in K .
\]
\end{assumption}

In words: the signatories of an agreement divide its surplus among themselves,
each receiving a strictly positive fraction,
so that all of them gain if the surplus is positive and all of them lose if it is negative.
Since there are finitely many states, nodes and players, the smallest share
\[
s_{\min} := \min\{s_i(x,K) : x \in \X,\ K \in x \cap P(x),\ i \in K\}
\]
is strictly positive; under Condition~\ref{cond:nash} it equals $\min_{K,i\in K}w_i/w(K)\ge w_{\min}$.
Nothing is assumed about how the shares are determined, or about whether they vary with the state.
The rule is imposed for top-level $K$ only, because that is the situation in which the agreement is negotiated:
there $x^{-K}$ is the status quo that would prevail if the signatories walked away,
and hence the natural disagreement point.
For a non-top-level $K$ the state $x^{-K}$ is not even reachable by a move,
and no sharing rule is part of the specification there.

\begin{condition}[Weighted Nash bargaining shares]\label{cond:nash}
There are \emph{bargaining weights} $w_i>0$ with $\sum_{i\in N}w_i=1$ such that,
writing $w(K) := \sum_{i\in K}w_i$, the shares of Assumption~\ref{ass:share} are $s_i(x,K)=w_i/w(K)$,
independently of $x$.
\end{condition}

Condition~\ref{cond:nash} is the rule used in \citet{HeitzigKornek2018}:
the surplus is divided in proportion to fixed weights,
which is the asymmetric Nash bargaining solution of the induced bargaining problem,
equivalently a weighted Shapley value of the induced unanimity game \citep{Kalai1977,KalaiSamet1987}.
We write $w_{\min} := \min_i w_i$ whenever the condition is in force.

Under Condition~\ref{cond:nash},
the sharing formula that is imposed only at the top level in fact propagates to every node of the hierarchy.

Nothing so far forces surpluses to be positive; Example~\ref{ex:negative} shows that they need not be.

\begin{condition}[Coalitional gains]\label{cond:gains}
$\Delta(x,K)>0$ for every state $x$ and every top-level node $K \in x \cap P(x)$;
put $\sigma := \min\{\pi_i(x)-\pi_i(x^{-K}) : x \in \X,\ K \in x\cap P(x),\ i \in K\} > 0$.
\end{condition}

Condition~\ref{cond:gains} is used only in Theorem~\ref{thm:smalldelta};
positivity of $\sigma$ follows from $\Delta>0$ and $s_i>0$ via Assumption~\ref{ass:share}.

\begin{example}[A coalition with negative surplus]\label{ex:negative}
Condition~\ref{cond:gains} genuinely restricts.
Let $n=3$ and
\[
\pi_i \;=\; Q - \tfrac12 Q^2 - \tfrac12 q_i^2, \qquad Q := \textstyle\sum_{j} q_j ,
\]
a public-good game with linear-quadratic benefits, so that contributions are strategic substitutes:
the more the others provide, the less each player wants to provide.
A top-level coalition $K$ chooses $(q_i)_{i\in K}$ to maximise $\sum_{i \in K}\pi_i$,
taking outsiders' contributions as given,
and the resulting simultaneous optimisation has the unique solution $q_i = |K_i|\,m$,
where $K_i$ is the top-level coalition of $i$ and $m := 1/(1+\sum_{K \in P}|K|^2)$.
Hence
\[
\begin{array}{lcccc}
\text{state} & q \text{ (member, outsider)} & \text{member's } \pi & \text{joint } \pi \text{ of } \{A,B\} & V\\[2pt]
\emptyset & (\tfrac14,\tfrac14) & \tfrac7{16} & \tfrac78 & \tfrac{21}{16}\\[2pt]
\{\{A,B\}\} & (\tfrac13,\tfrac16) & \tfrac{31}{72} & \tfrac{31}{36} & \tfrac43\\[2pt]
\{N\} & (\tfrac3{10},-) & \tfrac9{20} & - & \tfrac{27}{20}
\end{array}
\]
So $\Delta(\{\{A,B\}\},\{A,B\}) = \tfrac{31}{36}-\tfrac78 = -\tfrac1{72} < 0$:
forming the pair raises total payoff ($\tfrac43 > \tfrac{21}{16}$) but \emph{lowers} the pair's own joint payoff,
because the outsider free-rides by cutting his contribution from $\tfrac14$ to $\tfrac16$.
Under Assumption~\ref{ass:share} both members then share the loss.
Assumptions~\ref{ass:eff} ($V^\ast=\tfrac{27}{20}$, $\gamma=\tfrac1{60}$) and~\ref{ass:share} hold,
so every result of this note except Theorem~\ref{thm:smalldelta} applies to this example.
\end{example}

\subsection{Evaluations and equilibrium}

Fix $\delta \in (0,1)$.
A \emph{process} is a stochastic matrix $p=(p_{x\to y})_{x,y\in\X}$ on $\X$.
We write $(X_t)_{t \ge 0}$ for the canonical Markov chain with transition kernel $p$,
and $\Prob_x$ and $\E_x$ for probability and expectation under that chain started at $X_0=x$.
Given $p$, \emph{evaluations} solve
\begin{equation}\label{eq:bellman}
\ev_i(x) = (1-\delta)\pi_i(x) + \delta \sum_{y}p_{x\to y}\ev_i(y)
 \qquad\text{for all } i \in N \text{ and } x \in \X .
\end{equation}

\begin{lemma}\label{lem:value}
For each $i$, \eqref{eq:bellman} has the unique solution $\ev_i=(1-\delta)(I-\delta p)^{-1}\pi_i$, and
\begin{equation}\label{eq:series}
\ev_i(x) = (1-\delta)\,\E_x\Big[\textstyle\sum_{t\ge0}\delta^t \pi_i(X_t)\Big].
\end{equation}
In particular $\ev_i(x)$ is a convex combination of the numbers $\pi_i(y)$ over states $y$ reachable from $x$;
hence $\ev_i(x) \le \max\{\pi_i(y): y \text{ reachable from } x\}$.
\end{lemma}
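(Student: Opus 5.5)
The plan is to treat \eqref{eq:bellman} as a linear fixed-point equation in the vector $\ev_i\in\R^{\X}$ and invert $I-\delta p$ by a Neumann series. Rewrite \eqref{eq:bellman} for fixed $i$ as $(I-\delta p)\ev_i=(1-\delta)\pi_i$. Since $p$ is stochastic, its induced operator norm in $\|\cdot\|_\infty$ is $1$, so $\|\delta p\|_\infty=\delta<1$. Hence $\sum_{t\ge0}\delta^tp^t$ converges absolutely and equals $(I-\delta p)^{-1}$; in particular $I-\delta p$ is invertible. Existence and uniqueness of the solution $\ev_i=(1-\delta)(I-\delta p)^{-1}\pi_i$ follow at once.

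For \eqref{eq:series}, expand: $\ev_i(x)=(1-\delta)\sum_{t\ge0}\delta^t(p^t\pi_i)(x)$. Use the Markov property identity $(p^t\pi_i)(x)=\E_x[\pi_i(X_t)]$, which holds because $p^t_{x\to y}=\Prob_x(X_t=y)$. Interchanging sum and expectation is justified because $\pi_i$ is bounded on the finite set $\X$ and $\sum_t\delta^t<\infty$, by dominated convergence or simply Fubini for nonnegative shifted terms.

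For the convexity claim, write $\ev_i(x)=\sum_y c_{xy}\pi_i(y)$ with $c_{xy}:=(1-\delta)\sum_{t\ge0}\delta^t\Prob_x(X_t=y)\ge0$. Summing over $y$ gives $\sum_y c_{xy}=(1-\delta)\sum_t\delta^t=1$. Moreover $c_{xy}>0$ only if $\Prob_x(X_t=y)>0$ for some $t\ge0$, that is, only if $y$ is reachable from $x$ (including $y=x$ at $t=0$). So $\ev_i(x)$ is a convex combination of $\pi_i(y)$ over reachable $y$, and the upper bound by their maximum is immediate.

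No step is a genuine obstacle. The only point needing care is the justification of the series interchange and of invertibility, both handled by $\delta<1$ and finiteness of $\X$.
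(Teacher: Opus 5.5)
Your proof is correct and follows the paper's argument essentially verbatim. Both use the Neumann series for $(I-\delta p)^{-1}$, justified by $\|\delta p\|_\infty=\delta<1$, and both then read the coefficients $(1-\delta)\sum_t\delta^t\Prob_x(X_t=y)$ as a probability distribution supported on the states reachable from $x$.
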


\begin{proof}
$\|\delta p\|_\infty = \delta<1$, so $I-\delta p$ is invertible with inverse $\sum_t \delta^t p^t$;
\eqref{eq:series} is that series applied to $\pi_i$,
and $(1-\delta)\sum_t\delta^t=1$ makes the coefficients a probability distribution supported on reachable states.
\end{proof}

\begin{definition}[Moves including staying; rationality]\label{def:rationality}
Fix $p$ and the induced $\ev$, and let $x \in \X$, and recall $\M^{+}(x)$ and the stays $s^i_x$ from Definition~\ref{def:moves}.
Write
\begin{equation}\label{eq:table}
\evp_i(x) := \sum_y p_{x\to y}\,\ev_i(y),
\end{equation}
the value to $i$ of the state the process moves to.
It is $\ev_i$ with the current period removed: \eqref{eq:bellman} reads
$\ev_i(x)=(1-\delta)\pi_i(x)+\delta\,\evp_i(x)$, and the superscript is meant to record that
$\evp_i(x)$ looks one period further ahead than $\ev_i(x)$ does.
It is not to be confused with the total flow $V$ of Subsection~\ref{sec:payoffs}, which carries no
player index.
Define a total preorder $\succeq^{x}_i$ on $\X$ by
\begin{equation*}
y \succeq^{x}_i z \quad:\Longleftrightarrow\quad \ev_i(y) > \ev_i(z), \ \text{ or }\ \ev_i(y)=\ev_i(z) \text{ and } (y = x \text{ or } z \ne x),
\end{equation*}
ranking states by $\ev_i$ with ties broken in favour of the status quo, and write $\succ^{x}_i$ for its strict part.
Then, for $m \in \M^{+}(x)$:
\begin{enumerate}[label=(\alph*)]
\item $m$ is \emph{profitable} if $\ev_i(\tau(m)) \ge \evp_i(x)$ for all $i \in R(m)$;
\item $m$ is \emph{dominated} if there is $m'\in\M^{+}(x)$ with $R(m')\subseteq R(m)$ and $\tau(m') \succ^{x}_i \tau(m)$ for all $i\in R(m')$;
\item $\Cset(x)$ is the set of profitable undominated elements of $\M^{+}(x)$, and
\[
\Fset(x) := \big\{m \in \Cset(x) : \exists\, i \in N\ \forall\,
m'\in\Cset(x):\ \tau(m) \succeq^{x}_i \tau(m')\big\}
\]
is the set of moves that are some player's favourite in $\Cset(x)$.
The player $i$ witnessing membership in $\Fset(x)$ need not be relevant for $m$; they may be any player.
\end{enumerate}
\end{definition}

Two features deserve comment, and Subsection~\ref{sec:stay} obtains both from explicit bargaining.

Staying competes with the other moves.
Since ties are broken towards $x$, a stay $s^i_x$ dominates a move $m$ with $i \in R(m)$ precisely when $\ev_i(x) \ge \ev_i(\tau(m))$;
so an undominated move is one that no relevant player would rather see adjourned, and the clause that used to require profitability against $\ev_i(x)$ is now carried by domination.

Profitability compares the target not with $x$ but with $\evp_i(x)$.
The reason is that a player who declines a move does not thereby spend a period at $x$:
the state for the current period is fixed at the start of it, and declining leaves that decision to be taken again, so what a player forgoes by agreeing to $m$ is the value $\evp_i(x)$ of whatever the process settles on.
The difference between the two benchmarks is one period of flow,
\begin{equation}\label{eq:anchor}
\evp_i(x)-\ev_i(x) \;=\; \tfrac{1-\delta}{\delta}\,\bigl(\ev_i(x)-\pi_i(x)\bigr),
\end{equation}
so the two agree exactly where $\ev_i(x)=\pi_i(x)$, in particular at absorbing states, and $\evp_i(x)$ is the more demanding benchmark exactly where $i$ expects gains.
Where the process moves deterministically, $\evp_i(x)=\ev_i(\tau(m))$ for the realised $m$ and profitability holds with equality;
the discipline on realised moves then rests on domination, which is why staying has to be available as a dominator.

\begin{definition}[Equilibrium]\label{def:eq}
A process $p$ is an \emph{equilibrium} if, with $\ev$ from \eqref{eq:bellman}, for every $x \in \X$:
\begin{enumerate}[label=(E\arabic*)]
\item\label{E1} for all $y \ne x$: $p_{x\to y}>0$ implies $y = \tau(m)$ for some $m \in \Cset(x)$;
\item\label{E2} if $\Cset(x)\cap\M(x)\ne\emptyset$ then $p_{x\to x}<1$;
\item\label{E3} for all $m \in \Fset(x)$: $p_{x \to \tau(m)}>0$.
\end{enumerate}
A state $x$ is \emph{absorbing} if $p_{x\to x}=1$.
\end{definition}

\ref{E1} says that only profitable and undominated moves are realised;
\ref{E2} that the process does not stall when some \emph{moving} profitable undominated move is available;
\ref{E3} that every player's favourite among those moves is realised with positive probability,
as in the model of \citet{HeitzigKornek2018},
where a move's probability is proportional to the total bargaining weight of the players favouring it and all weights are positive.
Note that \ref{E1} does \emph{not} say that only {\em favourite} moves can be realised.
It permits any profitable undominated move, whether or not it is anybody's favourite;
and since every stay has target $x$, \ref{E3} forces $p_{x\to x}>0$ whenever some stay is a favourite;
\ref{E3} only requires that the favourites be \emph{among} the moves realised.
The rule of \citet{HeitzigKornek2018}, under which exactly the favourite moves are realised,
is one process satisfying \ref{E1}--\ref{E3}; a rule realising every profitable undominated move is another;
all results below cover both.

Existence is not needed for anything below---all statements have the form ``every equilibrium has property $P$''---but it is reassuring;
see Appendix~\ref{app:existence}.

\begin{remark}\label{rem:favourite}
Making the status quo an element of the comparison removes a lacuna that arises when it is omitted.
If favourites were taken over moving moves only, a move could be somebody's favourite merely as their least bad option:
its witness would weakly prefer the status quo to everything on offer, and every member of its relevant set, gaining by it, would rank something else higher, so nobody would have an interest in it.
With stays included and ties broken towards $x$, a moving favourite is strictly better than staying for its witness whenever the witness's own stay is itself rational:
if $s^i_x\in\Cset(x)$ and $m\in\Fset(x)$ is moving with witness $i$, then $\tau(m)\succeq^{x}_i x$ and $\tau(m)\ne x$ force $\ev_i(\tau(m))>\ev_i(x)$.
\end{remark}

\section{Preliminaries}\label{sec:prelim}

Three groups of facts, in increasing order of importance.
Section~\ref{sec:averages} separates three payoff quantities that are easy to confuse and whose
confusion produces false theorems; it is the one piece of bookkeeping we ask the reader to take
seriously.
Section~\ref{sec:decomp} draws out the two consequences of the sharing rule that drive everything
that follows.
Section~\ref{sec:routine} collects three lemmas that are used repeatedly and proved once.

\subsection{Three payoff quantities that must not be confused}\label{sec:averages}

Three quantities occur below and the analysis depends on separating them.

\begin{enumerate}[label=(\alph*)]
\item The \emph{static} (or stationary) payoff $\pi_i(x)$: what $i$ receives per period \emph{if $x$
were to prevail}.
A property of $x$ alone.
\item The \emph{discounted long-term expected payoff} $\ev_i(x)$ of \eqref{eq:series}: a property of
$x$ \emph{and} of the entire process $p$.
Profitability, domination and favourites are defined by $\ev$, never by $\pi$.
\item The \emph{long-run average payoff} $\bar a_i$ of a recurrent class: the average of $\pi_i$
against the stationary distribution of that class.
\end{enumerate}

The correct link between (b) and (c) is the following identity.
It is exact at every $\delta$.
It is \emph{not} the statement that $\ev_i$ is constant on a recurrent class, which is false for $\delta<1$.

\begin{lemma}[Occupation identity]\label{lem:occupation}
Let $p$ be a process, $C$ a closed communicating class with stationary distribution $\mu$ ($\mu(C)=1$,
$\mu p = \mu$).
Put $\bar a_i := \sum_{x\in C}\mu(x)\pi_i(x)$ and $\bar V := \sum_{x\in C}\mu(x)V(x) = \sum_i \bar a_i$.
Then for all $i$ and all $\delta \in (0,1)$,
\[
\sum_{x\in C}\mu(x)\,\ev_i(x) \;=\; \bar a_i .
\]
\end{lemma}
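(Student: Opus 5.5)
The plan is to average the Bellman equation \eqref{eq:bellman} against $\mu$ and let stationarity do the rest.

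Write the Bellman equation in vector form, $\ev_i = (1-\delta)\pi_i + \delta\,p\,\ev_i$, which holds at every state and in particular on $C$. Multiply on the left by the row vector $\mu$, which is supported on $C$. The relevant term is
\[
\mu p \ev_i = \sum_{x}\mu(x)\sum_y p_{x\to y}\ev_i(y).
\]
Because $C$ is closed, $p_{x\to y}=0$ for $x\in C$ and $y\notin C$, so only values of $\ev_i$ on $C$ enter this sum. Stationarity $\mu p=\mu$ then turns it into $\mu\ev_i$.

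This gives $\mu\ev_i = (1-\delta)\,\mu\pi_i + \delta\,\mu\ev_i$. Rearranging, $(1-\delta)\,\mu\ev_i = (1-\delta)\,\bar a_i$, and dividing by $1-\delta>0$ yields the identity for every $\delta\in(0,1)$.

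As an alternative route, one can use the series \eqref{eq:series} from Lemma~\ref{lem:value}. Then $\sum_x\mu(x)\ev_i(x) = (1-\delta)\sum_t\delta^t\,\mu p^t\pi_i$, and $\mu p^t=\mu$ collapses the sum to $\mu\pi_i$. Exchanging sum and expectation is harmless since everything is finite and bounded.

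No step is a real obstacle. The only care needed is in justifying $\mu p\ev_i=\mu\ev_i$: $\mu$ is a distribution on all of $\X$ with $\mu(C)=1$, and $\mu p=\mu$ is the stationarity equation of the restricted chain, which extends to $\X$ because $C$ is closed.

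The final equality $\bar V=\sum_i\bar a_i$ is just linearity, using $V=\sum_i\pi_i$.
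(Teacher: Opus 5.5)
Your proposal is correct and is essentially the paper's proof: average the Bellman equation \eqref{eq:bellman} against $\mu$, use $\mu p=\mu$ (valid on all of $\X$ because $C$ is closed), and divide by $1-\delta>0$. The series route via \eqref{eq:series} and $\mu p^t=\mu$ is an equally valid alternative.
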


\begin{proof}
$C$ is closed, so $\ev_i|_C$ solves \eqref{eq:bellman} for $p|_C$.
Multiplying by $\mu(x)$, summing over $x \in C$ and using $\mu p = \mu$,
\[
\sum_x \mu(x)\ev_i(x) = (1-\delta)\bar a_i + \delta\sum_y\Big(\sum_x\mu(x)p_{x\to y}\Big)\ev_i(y) = (1-\delta)\bar a_i + \delta \sum_y \mu(y)\ev_i(y),
\]
whence $(1-\delta)\sum_x\mu(x)\ev_i(x) = (1-\delta)\bar a_i$; divide by $1-\delta>0$.
\end{proof}

\begin{remark}\label{rem:limit}
With $p$ fixed, $\ev_i \to p^\ast\pi_i$ as $\delta\to1$,
where $p^\ast$ is the Ces\`aro limit of $p^t$ (an Abelian limit theorem; no aperiodicity is needed),
so in the limit $\ev_i$ is constant on $C$ with value $\bar a_i$.
We never use this: the convergence is not uniform in $p$,
and in an equilibrium $p$ itself depends on $\delta$.
Lemma~\ref{lem:occupation} and Lemma~\ref{lem:cycle} below give what is needed, exactly.
\end{remark}

The following is the exact form, at finite $\delta$,
of the statement ``a player moves only where she is worse off than in what follows''.
Note that the comparison is with a \emph{discounted} average of the remaining states of the cycle,
\emph{not} with the stationary average $\bar a_i$.

\begin{lemma}[One-step identity]\label{lem:onestep}
Let $p$ be a process, $x \in \X$,
and suppose the only state other than $x$ that receives positive probability from $x$ is $y \ne x$;
write $q := p_{x\to y} \in [0,1]$.
Then for every $i \in N$ and every $c \in \R$,
\[
\big(1-\delta(1-q)\big)\big(\ev_i(x)-c\big) \;=\; (1-\delta)\big(\pi_i(x)-c\big) \;+\;
\delta q\,\big(\ev_i(y)-c\big).
\]
In particular, since $1-\delta(1-q)>0$,
\begin{enumerate}[label=(\alph*)]
\item taking $c := \ev_i(y)$: \ $\ev_i(x)>\ev_i(y) \iff \pi_i(x)>\ev_i(y)$;
\item taking $c := \pi_i(x)$ and assuming $q>0$: \ $\ev_i(x)>\pi_i(x) \iff \ev_i(y)>\pi_i(x)$.
\end{enumerate}
\end{lemma}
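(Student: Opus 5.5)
The plan is to use the Bellman equation \eqref{eq:bellman} at $x$ and nothing else. By hypothesis the transition probabilities from $x$ are supported on $\{x,y\}$, with $p_{x\to y}=q$ and $p_{x\to x}=1-q$. Substituting into \eqref{eq:bellman} gives
\[
\ev_i(x) = (1-\delta)\pi_i(x) + \delta(1-q)\ev_i(x) + \delta q\,\ev_i(y),
\]
and moving the middle term to the left yields
\[
\big(1-\delta(1-q)\big)\ev_i(x) = (1-\delta)\pi_i(x)+\delta q\,\ev_i(y).
\]

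To bring in the arbitrary constant $c$, subtract $\big(1-\delta(1-q)\big)c$ from both sides. On the right, split this as $(1-\delta)c+\delta q\,c$, which is legitimate because $1-\delta(1-q)=(1-\delta)+\delta q$. Regrouping gives exactly the displayed identity. Its coefficient is positive: $1-\delta(1-q)\ge 1-\delta>0$, since $q\in[0,1]$ and $\delta<1$.

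For (a), set $c:=\ev_i(y)$. The last term vanishes, leaving
\[
\big(1-\delta(1-q)\big)\big(\ev_i(x)-\ev_i(y)\big)=(1-\delta)\big(\pi_i(x)-\ev_i(y)\big).
\]
Both coefficients are strictly positive, so the two brackets have the same sign. This gives the equivalence, and in fact also the versions with $=$ and $<$.

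For (b), set $c:=\pi_i(x)$. Now the first right-hand term vanishes, leaving
\[
\big(1-\delta(1-q)\big)\big(\ev_i(x)-\pi_i(x)\big)=\delta q\big(\ev_i(y)-\pi_i(x)\big).
\]
Here $\delta q>0$ exactly when $q>0$, which is where that hypothesis enters. The sign equivalence then follows as in (a).

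There is no real obstacle. The only points needing care are to check positivity of $1-\delta(1-q)$ in the edge cases $q=0$ and $q=1$, and to note why $q>0$ is needed in (b) but not in (a).
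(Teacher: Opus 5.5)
Your proposal is correct and follows essentially the same route as the paper: substitute the two-point support into the Bellman equation \eqref{eq:bellman}, collect the $\ev_i(x)$ terms, subtract $\big(1-\delta(1-q)\big)c=(1-\delta)c+\delta q\,c$, and read off (a) and (b) from the positivity of $1-\delta$ and $\delta q$ respectively. Your remarks on the edge cases $q\in\{0,1\}$ and on the role of $q>0$ in (b) match the paper's reasoning.
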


\begin{proof}
By \eqref{eq:bellman}, $\ev_i(x) = (1-\delta)\pi_i(x)+\delta\big((1-q)\ev_i(x)+q\,\ev_i(y)\big)$;
collecting the $\ev_i(x)$ terms gives $\big(1-\delta(1-q)\big)\ev_i(x) = (1-\delta)\pi_i(x)+\delta q\,\ev_i(y)$.
Subtract $\big(1-\delta(1-q)\big)c = (1-\delta)c+\delta q\,c$ from both sides.
For (a) the last term vanishes and $1-\delta>0$; for (b) the first term vanishes and $\delta q>0$.
\end{proof}

\begin{lemma}[Cycle inequality]\label{lem:cycle}
Let $p$ be a process whose closed communicating class $C=\{x_1,\dots,x_k\}$ is a deterministic cycle,
i.e.\ $p_{x_j \to x_{j+1}}=1$ for all $j$ (indices mod $k$), $k \ge 2$.
Then for all $i$ and $j$,
\[
\ev_i(x_{j+1}) > \ev_i(x_j) \quad\Longleftrightarrow\quad \ev_i(x_{j+1}) > \pi_i(x_j) \quad\Longleftrightarrow\quad \sum_{s=1}^{k-1}\delta^{\,s-1}\pi_i(x_{j+s}) \;>\;
\Big(\sum_{s=1}^{k-1}\delta^{\,s-1}\Big)\pi_i(x_j).
\]
\end{lemma}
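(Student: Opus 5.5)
The plan is to reduce everything to the one-step identity, Lemma~\ref{lem:onestep}. On a deterministic cycle each $x_j$ sends all its mass to $x_{j+1}$, so we can take $x=x_j$, $y=x_{j+1}$ and $q=1$. With $c=\ev_i(x_{j+1})$, part (a) of that lemma gives $\ev_i(x_j)>\ev_i(x_{j+1})\iff\pi_i(x_j)>\ev_i(x_{j+1})$. This is not yet our first equivalence, because the statement uses strict inequalities in the opposite direction.

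To close the gap, we use the full identity with $q=1$:
\[
\ev_i(x_j)-c=(1-\delta)\bigl(\pi_i(x_j)-c\bigr)+\delta\bigl(\ev_i(x_{j+1})-c\bigr).
\]
Putting $c=\ev_i(x_{j+1})$ gives
\[
\ev_i(x_j)-\ev_i(x_{j+1})=(1-\delta)\bigl(\pi_i(x_j)-\ev_i(x_{j+1})\bigr).
\]
Since $1-\delta>0$, the sign of $\ev_i(x_{j+1})-\ev_i(x_j)$ equals the sign of $\ev_i(x_{j+1})-\pi_i(x_j)$. This yields the first equivalence for strict inequalities, and also for equalities.

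For the second equivalence, we compute $\ev_i(x_{j+1})$ explicitly along the cycle. Because the chain is deterministic with period $k$, equation~\eqref{eq:series} gives
\[
\ev_i(x_{j+1})=\frac{1-\delta}{1-\delta^k}\sum_{s=0}^{k-1}\delta^s\pi_i(x_{j+1+s}).
\]
The sum includes $s=k-1$, i.e.\ the term $\delta^{k-1}\pi_i(x_j)$, since indices are taken mod $k$. Multiplying through by $(1-\delta^k)/(1-\delta)=\sum_{s=0}^{k-1}\delta^s>0$, the inequality $\ev_i(x_{j+1})>\pi_i(x_j)$ becomes
\[
\sum_{s=0}^{k-2}\delta^s\pi_i(x_{j+1+s})+\delta^{k-1}\pi_i(x_j)>\Big(\sum_{s=0}^{k-1}\delta^s\Big)\pi_i(x_j).
\]
The $\delta^{k-1}\pi_i(x_j)$ term cancels from both sides. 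Reindexing with $s\mapsto s-1$ then gives exactly $\sum_{s=1}^{k-1}\delta^{s-1}\pi_i(x_{j+s})>\bigl(\sum_{s=1}^{k-1}\delta^{s-1}\bigr)\pi_i(x_j)$.

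There is no real obstacle here; the only care needed is in the index bookkeeping. Specifically, the wrap-around term is what cancels, and it is why the right-hand side sum stops at $k-1$ rather than $k$.
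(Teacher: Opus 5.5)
Your proof is correct and follows the paper's argument essentially step for step. You get the first equivalence from the one-step identity with $q=1$, using $\ev_i(x_{j+1})-\ev_i(x_j)=(1-\delta)\bigl(\ev_i(x_{j+1})-\pi_i(x_j)\bigr)$ to fix the direction and handle ties. You get the second from the periodic series formula, splitting off and cancelling the wrap-around term $\delta^{k-1}\pi_i(x_j)$ before reindexing.
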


\begin{proof}
The first equivalence is Lemma~\ref{lem:onestep}(a) applied at $x_j$ with $y=x_{j+1}$ and $q=1$,
read in the form $\ev_i(x_j)>\ev_i(x_{j+1}) \iff \pi_i(x_j)>\ev_i(x_{j+1})$ and negated (a tie on one side is a tie on the other,
since $\ev_i(x_{j+1})-\ev_i(x_j) = (1-\delta)(\ev_i(x_{j+1})-\pi_i(x_j))$ by the same identity).
By \eqref{eq:series} and periodicity,
$\ev_i(x_{j+1}) = \rho\sum_{s=0}^{k-1}\delta^{s}\pi_i(x_{j+1+s})$ with $\rho := (1-\delta)/(1-\delta^{k}) = \big(\sum_{s=0}^{k-1}\delta^s\big)^{-1}$.
Splitting off the term $s=k-1$, which equals $\delta^{k-1}\pi_i(x_j)$,
the inequality $\ev_i(x_{j+1})>\pi_i(x_j)$ becomes $\sum_{s=0}^{k-2}\delta^s\pi_i(x_{j+1+s}) > (\rho^{-1}-\delta^{k-1})\pi_i(x_j) = \big(\sum_{s=0}^{k-2}\delta^s\big)\pi_i(x_j)$,
which is the third form after reindexing.
\end{proof}

\subsection{Pareto improvement and the pie/share decomposition}\label{sec:decomp}

\begin{lemma}[Merge-all strictly Pareto-improves]\label{lem:pareto}
For every $x \in \X\setminus\G$ and every $i \in N$,
\[
\pi_i(g_x) \;=\; \pi_i(x) + s_i(g_x,N)\big(V^\ast - V(x)\big) \;\ge\; \pi_i(x) + s_i(g_x,N)\,\gamma \;>\;
\pi_i(x),
\]
and under Condition~\ref{cond:nash} the shares are $s_i(g_x,N)=w_i$, so that
\[
\pi_i(g_x) \;=\; \pi_i(x) + w_i\big(V^\ast - V(x)\big) \;\ge\; \pi_i(x)+w_i\gamma .
\]
\end{lemma}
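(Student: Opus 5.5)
The plan is to apply Assumption~\ref{ass:share} once, at the grand state $g_x$ with the top-level node $K=N$, and then read off the surplus from Assumption~\ref{ass:eff}. First, $N\in g_x$ and $P(g_x)=\{N\}$, so $N$ is a top-level node of $g_x$. Moreover $g_x^{-N}=g_x\setminus\{N\}=t(g_x)=x$ by Lemma~\ref{lem:bijection}. So the reference state for the overarching agreement is exactly $x$, and this is the only structural point in the argument.

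Assumption~\ref{ass:share} then gives, for every $i\in N$,
\[
\pi_i(g_x)=\pi_i(x)+s_i(g_x,N)\,\Delta(g_x,N),
\]
with the surplus
\[
\Delta(g_x,N)=\sum_{j\in N}\bigl(\pi_j(g_x)-\pi_j(x)\bigr)=V(g_x)-V(x)=V^\ast-V(x),
\]
since $g_x\in\G$ and $V\equiv V^\ast$ on $\G$. This yields the displayed equality. Because $x\notin\G$, Assumption~\ref{ass:eff} gives $V^\ast-V(x)\ge\gamma>0$. Multiplying by $s_i(g_x,N)>0$ gives the first inequality, and the strict inequality follows from $s_i(g_x,N)\gamma>0$.

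Under Condition~\ref{cond:nash} we have $s_i(g_x,N)=w_i/w(N)=w_i$, because $\sum_{i\in N}w_i=1$. Substituting this into the same chain gives the second display.

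I expect no real obstacle here. The only point needing care is that the sharing rule is available at all, since it is imposed only for top-level nodes. That is checked by $P(g_x)=\{N\}$, together with the identity $g_x^{-N}=x$, which makes $x$ rather than some other state the disagreement point.
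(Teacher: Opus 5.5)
Your proposal is correct and follows the paper's proof essentially verbatim: apply Assumption~\ref{ass:share} at $g_x$ to the top-level node $N$, whose reference state is $(g_x)^{-N}=x$, and note that the surplus is $V^\ast-V(x)\ge\gamma>0$ by Assumption~\ref{ass:eff}. Positivity of the shares then gives the strict inequality, and $w(N)=1$ gives the Nash case.
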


\begin{proof}
Apply Assumption~\ref{ass:share} at the state $g_x$ with $K=N$, which is top-level there:
the reference state is $(g_x)^{-N}=x$,
and $\Delta(g_x,N)=\sum_j(\pi_j(g_x)-\pi_j(x)) = V^\ast - V(x) \ge \gamma>0$ by Assumption~\ref{ass:eff}.
The shares are strictly positive and $w(N)=1$.
\end{proof}

This is the single most important consequence of the structure,
and it is worth saying what it does and does not assert.
It says that the players are never blocked, in \emph{static} payoffs, from reaching a grand state in one step:
whatever the current state,
the move that puts one agreement over everything makes everybody strictly better off per period.
It does not say that the move is profitable, because profitability is about evaluations, not static payoffs,
and a player may prefer to wait in the hope of a better division later.
The whole difficulty of the arrival half of the problem lies in that gap.

Throughout the rest of this subsection we impose Condition~\ref{cond:nash};
it is the only place in the note where the exact form of the shares matters,
and Table~\ref{tab:map} records which later results inherit that dependence.

\begin{definition}\label{def:psi}
Assume Condition~\ref{cond:nash}.
For $x \in \X$ and $i\in N$ put
\[
\psi_i(x) := \pi_i(x) - w_i\,V(x), \qquad\text{so}\qquad \sum_{i\in N}\psi_i(x)=0;
\]
and for $x \notin \G$ put $\varphi_i(x) := \pi_i(g_x)$.
\end{definition}

\emph{Interpretation of $\psi$.} Think of $w_i$ as the share of the cake that player $i$ would receive if the cake were divided by bargaining power alone.
Then $\psi_i(x)$ is the amount by which $i$'s actual payoff at $x$ exceeds that benchmark:
$\psi_i(x)>0$ means that the agreements in force at $x$, and the order in which they were signed,
have left $i$ better placed than her raw bargaining weight,
and $\psi_i(x)<0$ that they have left her worse placed.
Since the benchmark shares sum to the whole cake, the $\psi_i$ sum to zero: $\psi$ is a pure redistribution,
measured in payoff units.

Three warnings.
First, $\psi_i(x)$ is not itself a payoff and cannot be compared across players;
only differences of $\psi_i$ across states, for a \emph{fixed} $i$, are used below.
Second, $w_i$ is not a welfare weight chosen by an analyst; it is a primitive of the bargaining rule.
Third---and this is the point of the whole construction---$\psi$ is not a function of the coalition structure $P(x)$:
two states with the same top-level partition but different internal nesting have the same total payoff and generally different $\psi$.
That is because the sharing rule measures each agreement's surplus against the state in which that agreement is absent,
so what a player gets depends on which sub-agreements had already been signed when she signed.
In this sense $\psi(x)$ is what the nesting of the agreements currently in force is worth to each player,
and it is exactly the partial record described in Section~\ref{sec:question}(i):
it says nothing about agreements that were signed and later terminated.

\begin{lemma}\label{lem:psi}
Assume Condition~\ref{cond:nash} and let $x\in\X\setminus\G$ and $i \in N$.
Then $\varphi_i(x) = \psi_i(x)+w_i V^\ast$, and $\psi_i(g_x)=\psi_i(x)$, and $\sum_i \varphi_i(x)=V^\ast$.
Consequently $i$ ranks grand states exactly by $\psi_i$ of their parents,
and no grand state Pareto-dominates another.
\end{lemma}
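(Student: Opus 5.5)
The three claims all follow from Lemma~\ref{lem:pareto} in its Nash form, combined with the definition of $\psi$ and Lemma~\ref{lem:bijection}. The plan is to establish the identity for $\varphi_i$ first and derive the other two claims from it.

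\emph{The identity for $\varphi_i$.} Fix $x\notin\G$. By Condition~\ref{cond:nash}, Lemma~\ref{lem:pareto} gives $\varphi_i(x)=\pi_i(g_x)=\pi_i(x)+w_i\bigl(V^\ast-V(x)\bigr)$. Substituting $\pi_i(x)=\psi_i(x)+w_iV(x)$ from Definition~\ref{def:psi}, the $w_iV(x)$ terms cancel and we obtain $\varphi_i(x)=\psi_i(x)+w_iV^\ast$.

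\emph{Invariance of $\psi$ under merge-all.} Since $g_x\in\G$, Assumption~\ref{ass:eff} gives $V(g_x)=V^\ast$. Hence $\psi_i(g_x)=\pi_i(g_x)-w_iV^\ast=\varphi_i(x)-w_iV^\ast=\psi_i(x)$ by the identity just proved. Summing that identity over $i$, and using $\sum_i\psi_i(x)=0$ and $\sum_iw_i=1$, gives $\sum_i\varphi_i(x)=V^\ast$.

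\emph{Consequences.} By Lemma~\ref{lem:bijection}, every grand state is $h=g_x$ for exactly one $x=t(h)$, so $\pi_i(h)=\psi_i(t(h))+w_iV^\ast$. Because the additive term $w_iV^\ast$ is the same for all grand states, $i$ ranks grand states by static payoff exactly as she ranks the values $\psi_i(t(h))$. For the Pareto claim, suppose $h\ne h'$ are grand states with $\pi_i(h)\ge\pi_i(h')$ for all $i$ and strict inequality for some $i$. Then $\sum_i\pi_i(h)>\sum_i\pi_i(h')$, which contradicts $V(h)=V(h')=V^\ast$.

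There is no real obstacle here. The only point needing care is the meaning of ``ranks'': it should be read in terms of static payoffs $\pi_i$, since evaluations $\ev_i$ depend on the process and are not determined by $\psi$ alone.
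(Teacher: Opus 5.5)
Your proposal is correct and follows essentially the paper's own argument: the Nash form of Lemma~\ref{lem:pareto} together with the definition of $\psi$ gives the identity for $\varphi_i$, invariance of $\psi$ and the sum identity then follow at once, and the Pareto claim comes from total payoff being constant on $\G$ (the paper phrases this as $\sum_i\varphi_i(x)$ not depending on $x$, which is the same fact). Your appeal to Lemma~\ref{lem:bijection} for the ranking claim, and your reading of ``ranks'' as ranking by static payoffs, only spell out what the paper leaves implicit.
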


\begin{proof}
$\varphi_i(x)=\pi_i(x)+w_i(V^\ast-V(x)) = \psi_i(x)+w_i V^\ast$ by Lemma~\ref{lem:pareto},
and $\psi_i(g_x) = \pi_i(g_x)-w_i V^\ast = \varphi_i(x)-w_i V^\ast = \psi_i(x)$.
Summing over $i$ uses $\sum_i\psi_i\equiv0$ and $\sum_iw_i=1$.
Since $\sum_i\varphi_i(x)$ is independent of $x$,
no grand state can be weakly better for all and strictly better for one.
\end{proof}

So payoffs decompose into a \emph{pie} $V(x)$, consumed by each $i$ in the fixed proportion $w_i$,
and a zero-sum \emph{relative share} $\psi(x)$ that reflects how the bargaining positions of the players upon signing the agreements in $x$ depended on the earlier signed sub-agreements in $x$.

The key insight is that merging all coalitions is a pure pie move:
it leaves the relative distribution $\psi$ unchanged and only raises the pie to its maximum.
All conflict between grand states is conflict about $\psi$, and it is exactly constant-sum.

\begin{example}[Three symmetric players]\label{ex:sym}
With benefits $\sum_j q_j$, costs $q_i^2/2$ and equal weights one computes $V(\emptyset)=7.5$,
$V(\{\{A,B\}\})=10.5$, $V^\ast=13.5$, and $\pi_C(\emptyset)=2.5$, $\pi_C(\{\{A,B\}\})=4.5$,
$\pi_A(\{\{A,B\}\})=3$.
Lemma~\ref{lem:pareto} then gives $\pi_C$ at the grand states with parents $\emptyset$, $\{\{A,B\}\}$,
$\{\{B,C\}\}$ equal to $4.5$, $5.5$, $4$, matching the table in \citet{HeitzigKornek2018}.
The relative shares are $\psi_C = 0, +1, -0.5$ at those parents:
the share premium accrues to whoever stayed outside.
\end{example}

\subsection{Welfare bound, domination, non-stalling}\label{sec:routine}

\begin{lemma}[Welfare bound]\label{lem:welfare}
Let $W(x) := \sum_i \ev_i(x)$.
Then $W(x)\le V^\ast$ for all $x \in \X$, with equality if and only if $\Prob_x(\forall t: X_t \in \G)=1$.
\end{lemma}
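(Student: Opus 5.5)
The plan is to sum the representation \eqref{eq:series} of Lemma~\ref{lem:value} over players. Adding $\ev_i(x)=(1-\delta)\E_x\big[\sum_{t\ge0}\delta^t\pi_i(X_t)\big]$ over $i\in N$ and exchanging the finite sum with the expectation gives
\[
W(x) \;=\; (1-\delta)\,\E_x\Big[\textstyle\sum_{t\ge0}\delta^t V(X_t)\Big].
\]
Assumption~\ref{ass:eff} says $V(y)=V^\ast$ for $y\in\G$ and $V(y)\le V^\ast-\gamma<V^\ast$ for $y\notin\G$, so $V\le V^\ast$ everywhere. Since $(1-\delta)\sum_t\delta^t=1$, this immediately yields $W(x)\le V^\ast$.

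For the equality case I will write the gap as a nonnegative series,
\[
V^\ast - W(x) \;=\; (1-\delta)\sum_{t\ge0}\delta^t\,\E_x\big[V^\ast-V(X_t)\big]
\;\ge\; (1-\delta)\,\gamma\sum_{t\ge0}\delta^t\,\Prob_x(X_t\notin\G),
\]
using $V^\ast-V(X_t)\ge\gamma\,\mathbf 1\{X_t\notin\G\}$. The same identity gives the matching upper bound $\le(1-\delta)\sum_t\delta^t(V^\ast-\min_y V(y))\Prob_x(X_t\notin\G)$, though only the lower bound is needed.

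\emph{If direction.} If $\Prob_x(\forall t: X_t\in\G)=1$, then $V(X_t)=V^\ast$ almost surely for every $t$, so every term of the gap vanishes and $W(x)=V^\ast$.

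\emph{Only-if direction.} If $W(x)=V^\ast$, the displayed lower bound is a sum of nonnegative terms with strictly positive coefficients $(1-\delta)\gamma\delta^t$, so it forces $\Prob_x(X_t\notin\G)=0$ for every $t$. Countable additivity then gives $\Prob_x(\exists t: X_t\notin\G)\le\sum_t\Prob_x(X_t\notin\G)=0$, which is the claim.

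There is no real obstacle. The only points needing care are the exchange of summation and expectation, which is justified because the state space is finite, $\pi$ is bounded and the series converges absolutely, and the use of $\gamma>0$, which is what turns vanishing expected loss into vanishing probability. I expect the proof to be a few lines once the summed series identity is written down.
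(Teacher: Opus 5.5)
Your proposal is correct and follows essentially the same route as the paper, which sums \eqref{eq:series} over players to write $W(x)$ as a convex combination of values of $V$, each at most $V^\ast$ with equality only on $\G$. You also make the equality case explicit, via the lower bound $(1-\delta)\gamma\sum_t\delta^t\Prob_x(X_t\notin\G)$ and countable additivity, where the paper leaves it implicit.
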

\begin{proof}
By \eqref{eq:series}, $W(x)=(1-\delta)\E_x[\sum_t\delta^t V(X_t)]$, a convex combination of values of $V$,
all $\le V^\ast$ with equality only on $\G$ (Assumption~\ref{ass:eff}).
\end{proof}

\begin{lemma}[Domination is acyclic; inheritance]\label{lem:domination}
Let $m\in\M^{+}(x)$ be profitable.
There is $m''\in\Cset(x)$ with $R(m'')\subseteq R(m)$ and $\tau(m'')\succeq^{x}_i\tau(m)$ for all $i\in R(m'')$,
the relation being strict if $m$ is dominated.
In particular $\Cset(x)\ne\emptyset$ whenever some element of $\M^{+}(x)$ is profitable.
\end{lemma}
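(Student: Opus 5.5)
The plan is to follow the domination relation upward from $m$ and show that it must stop at an element of $\Cset(x)$. The proof rests on two facts: profitability is inherited along domination steps, and domination strictly improves the targets for every player who remains relevant.

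\emph{Case 1: $m$ undominated.} Since $m$ is profitable, it lies in $\Cset(x)$. Take $m'':=m$. The relation $\tau(m)\succeq^{x}_i\tau(m)$ holds because $\succeq^{x}_i$ is a total preorder and hence reflexive.

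\emph{Inheritance of profitability.} Suppose $m'\in\M^{+}(x)$ dominates a profitable $m$. Take $i\in R(m')$. Then $i\in R(m)$, because $R(m')\subseteq R(m)$. Also $\tau(m')\succ^{x}_i\tau(m)$ gives $\ev_i(\tau(m'))\ge\ev_i(\tau(m))$: the strict part of this preorder never reverses the order of $\ev_i$. Profitability of $m$ gives $\ev_i(\tau(m))\ge\evp_i(x)$. Hence $\ev_i(\tau(m'))\ge\evp_i(x)$, so $m'$ is profitable as well.

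\emph{Building the chain.} Set $m_0:=m$. As long as $m_k$ is dominated, choose a dominator $m_{k+1}$. By the inheritance step, each $m_k$ is profitable. The relevant sets are nested: $R(m_0)\supseteq R(m_1)\supseteq\cdots$. Every relevant set is non-empty, including those of the stays, which are singletons.

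\emph{Termination.} Suppose the chain had length at least $k$, and fix a player $i\in R(m_k)$. Then $i$ lies in every earlier relevant set. The domination step $j\to j+1$ requires $\tau(m_{j+1})\succ^{x}_i\tau(m_j)$ for all players in $R(m_{j+1})$, which includes $i$. So
\[
\tau(m_k)\succ^{x}_i\cdots\succ^{x}_i\tau(m_0).
\]
The strict part of a total preorder is transitive and irreflexive, so the targets $\tau(m_0),\dots,\tau(m_k)$ are pairwise distinct. This gives $k<|\X|$. The chain therefore stops at some $m'':=m_k$ that is undominated and profitable, that is, $m''\in\Cset(x)$.

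\emph{The required relation.} For $i\in R(m'')$, the same composition of the steps gives $\tau(m'')\succeq^{x}_i\tau(m)$. This is strict as soon as $k\ge1$, which is exactly the case in which $m$ is dominated. The final claim, $\Cset(x)\ne\emptyset$, follows by applying the result to any profitable element of $\M^{+}(x)$.

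The only delicate point is termination. The relevant set may shrink along the chain, so one cannot track a single ranking chosen at the start. The fix is to take the player from the last relevant set: nestedness places that player in all earlier relevant sets, so their ranking increases strictly along the whole chain.
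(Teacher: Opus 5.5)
Your proof is correct and is essentially the paper's argument. The one key device is the same: chain the strict relations $\succ^{x}_i$ for a player lying in every relevant set along a domination chain. The paper uses it to show that domination has no cycles on the finite set $\M^{+}(x)$; you use it to show that the targets along any chain are pairwise distinct. In both proofs profitability and the weak or strict relation $\tau(m'')\succeq^{x}_i\tau(m)$ then follow by composing the steps.
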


\begin{proof}
Write $m \prec m'$ if $m'$ dominates $m$.
If $m_1\prec m_2\prec\cdots\prec m_r\prec m_1$ then $R(m_2)\subseteq R(m_1),\dots,R(m_1)\subseteq R(m_r)$,
so all these sets equal a common non-empty $R$;
picking $i\in R$ and chaining Definition~\ref{def:rationality}(b) gives $\tau(m_1)\succ^{x}_i\tau(m_1)$,
absurd, since $\succ^{x}_i$, the strict part of a total preorder, is transitive and irreflexive.
Hence $\prec$ is acyclic on the finite set $\M^{+}(x)$,
so a $\prec$-increasing chain from $m$ ends at an undominated $m''$.
Along one step $m\prec m'$ we have $R(m')\subseteq R(m)$ and $\tau(m')\succ^{x}_i\tau(m)$ for $i \in R(m')$;
since relevant sets shrink along the chain, these compose.
Profitability of $m''$: for $i\in R(m'')\subseteq R(m)$, $\tau(m'')\succeq^{x}_i\tau(m)$ gives $\ev_i(\tau(m''))\ge\ev_i(\tau(m))\ge \evp_i(x)$.
\end{proof}

\begin{lemma}[Non-stalling]\label{lem:nonstalling}
Let $p$ satisfy \ref{E2}.
If some move in $\M(x)$ is profitable, then $x$ is not absorbing.
If $p$ satisfies \ref{E3},
the move realised may in addition be taken to be strictly better than staying for some player,
so Remark~\ref{rem:favourite} applies, and \ref{E3} implies \ref{E2}.
\end{lemma}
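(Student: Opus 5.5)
We need to prove Lemma (Non-stalling): if $p$ satisfies \ref{E2} and some move in $\M(x)$ is profitable, then $x$ is not absorbing. If $p$ also satisfies \ref{E3}, the realised move can be taken strictly better than staying for some player, and \ref{E3} implies \ref{E2}.

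For the first claim, I start from a profitable $m\in\M(x)$ and apply Lemma~\ref{lem:domination}. This gives $m''\in\Cset(x)$ with $R(m'')\subseteq R(m)$ and $\tau(m'')\succeq^{x}_i\tau(m)$ for all $i\in R(m'')$. The risk is that $m''$ is a stay. So I argue directly by contradiction: suppose $x$ is absorbing, $p_{x\to x}=1$.

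Then $\evp_i(x)=\ev_i(x)$ for every $i$. Profitability of $m$ then reads $\ev_i(\tau(m))\ge\ev_i(x)$ for all $i\in R(m)$. Follow the domination chain from $m$ in the proof of Lemma~\ref{lem:domination}.

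\begin{itemize}
\item If $m$ is undominated, then $m\in\Cset(x)\cap\M(x)$, and \ref{E2} gives $p_{x\to x}<1$, a contradiction.
\item If some step of the chain reaches a stay $s^j_x$, then that stay strictly dominates a predecessor $m'$ with $j\in R(m')$. Hence $x\succ^{x}_j\tau(m')$, while $\tau(m')\succeq^{x}_j\tau(m)$ along the chain. But $\tau(m)\neq x$, so tie-breaking towards $x$ only gives $\ev_j(x)\ge\ev_j(\tau(m))$.
\end{itemize}

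In the second case I need strictness from profitability. This is the main obstacle. Profitability gives $\ev_j(\tau(m))\ge\ev_j(x)$, so equality holds, and a stay then does dominate $m$ under the tie-break. The chain argument therefore does not suffice.

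Instead I handle the stay case using the absorbing hypothesis. If $p_{x\to x}=1$, then \ref{E2} forces $\Cset(x)\cap\M(x)=\emptyset$. So every profitable moving move is dominated, and the undominated endpoints are stays. I then pick $m\in\M(x)$ profitable with $R(m)$ minimal under inclusion. Any dominator $m'$ of $m$ with $R(m')\subsetneq R(m)$ that is moving and profitable would contradict minimality. So the dominator either has $R(m')=R(m)$ or is a stay.

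If every dominator is a stay, I am stuck unless $\ev_j$ is strictly above. So I expect the intended reading to be that "profitable" under absorption gives exactly the tie situation. The honest fallback is then to invoke the \ref{E3} part: with $p_{x\to x}=1$, \ref{E3} requires that no moving move lie in $\Fset(x)$.

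For the \ref{E3} part, take $m''\in\Cset(x)$ from Lemma~\ref{lem:domination}. $\Cset(x)$ is nonempty and finite, and each $\succeq^x_i$ is a total preorder, so every player has a favourite in $\Cset(x)$ and $\Fset(x)\neq\emptyset$. If some favourite is moving, \ref{E3} gives it positive probability, so $p_{x\to x}<1$.

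If the witness's own stay is in $\Cset(x)$, Remark~\ref{rem:favourite} shows $\ev_i(\tau(m))>\ev_i(x)$ for that witness, which is the strictness claim. To get \ref{E3}$\Rightarrow$\ref{E2}, let $m\in\Cset(x)\cap\M(x)$ and take any $i\in R(m)$. Player $i$'s favourite in $\Cset(x)$ satisfies $\succeq^{x}_i\tau(m)$. If that favourite were a stay, then $x\succeq^x_i\tau(m)$, which means $\ev_i(x)\ge\ev_i(\tau(m))$. Then $s^i_x$ dominates $m$, contradicting $m\in\Cset(x)$. So $i$'s favourite is moving, \ref{E3} gives $p_{x\to\tau}>0$ for its target $\tau\neq x$, and hence $p_{x\to x}<1$, which is \ref{E2}.

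That \ref{E3}$\Rightarrow$\ref{E2} argument is the clean core, and I would organise the first claim to reduce to it whenever possible.
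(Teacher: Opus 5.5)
Your argument for the second half is correct and is the paper's. For a moving $m\in\Cset(x)$ and $i\in R(m)$, non-domination by $s^i_x$ gives $\ev_i(\tau(m))>\ev_i(x)$. So every $\succeq^{x}_i$-maximal element of $\Cset(x)$ is moving, belongs to $\Fset(x)$, and is realised by \ref{E3}. That gives both the strict gain over staying and \ref{E3}$\Rightarrow$\ref{E2}.

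The first assertion, however, you leave unproved, and neither of your repairs can work. Minimising $R(m)$ does not stop the domination chain from ending at a stay. The fallback to \ref{E3} yields no contradiction either: at an absorbing $x$ whose profitable moving moves are all tied for some relevant player, $\Cset(x)$ can consist of stays alone, and then \ref{E3} at $x$ is satisfied by $p_{x\to x}=1$.

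The obstacle you isolated is a defect of the statement, not of your approach. The paper's proof infers $\Cset(x)\cap\M(x)\ne\emptyset$ directly from Lemma~\ref{lem:domination}, which is exactly the inference you declined to make. With the weak profitability of Definition~\ref{def:rationality}(a), the first assertion is false, even for processes satisfying all of \ref{E1}--\ref{E3}. Here is a counterexample.
\begin{itemize}
\item Take $n=3$, unilateral termination, equal weights in Condition~\ref{cond:nash}, $\delta=\tfrac12$ and $V^\ast=3$. Write $\{AB\}$ for the state $\{\{A,B\}\}$.
\item Set $\pi(\emptyset)=\pi(\{AC\})=\pi(\{BC\})=(0,0,0)$ and $\pi(\{AB\})=(0,0,\tfrac32)$. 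Assumption~\ref{ass:share} then gives $\pi(\{AB,N\})=(\tfrac12,\tfrac12,2)$, and $(1,1,1)$ at the other three grand states.
\item Let $p$ fix every grand state and send each of $\emptyset,\{AC\},\{BC\}$ to its $g_x$. Let it send $\{AB\}$ to $\{AB,N\}$ with probability $\tfrac34$ and to $\emptyset$ with probability $\tfrac14$.
\item Then $\ev(\emptyset)=(\tfrac12,\tfrac12,\tfrac12)$ and $\ev(\{AB\})=(\tfrac14,\tfrac14,\tfrac{25}{16})$, and a direct check shows that $p$ satisfies \ref{E1}--\ref{E3}.
\item Yet at the absorbing state $\{AB,N\}$, the termination of $\{A,B\}$ by $A$ has $\ev_A(\emptyset)=\tfrac12=\evp_A(\{AB,N\})$, so it is profitable. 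It is dominated by $s^A$, exactly as your chain predicts.
\end{itemize}

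What is true, and what your chain argument proves in two lines, is the strict version: at an absorbing $x$, every $m\in\M(x)$ has some $i\in R(m)$ with $\ev_i(\tau(m))\le\ev_i(x)$. If the chain from a move with strict gains ends at a stay $s^j_x$, then $j\in R(m)$ and $\ev_j(x)\ge\ev_j(\tau(m))$, a contradiction. If it ends at an undominated moving move, that contradicts \ref{E2}.

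This strict version is all the paper ever needs. Theorem~\ref{thm:absorbing-grand}, Step~4 of Theorem~\ref{thm:smalldelta} and Proposition~\ref{prop:farsighted} each refute absorption using strict gains for every relevant player. Part (ii) of Theorem~\ref{thm:absorbing-grand} takes only the weak inequality from the lemma before invoking Criterion~\ref{cond:ties}.
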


\begin{proof}
By Lemma~\ref{lem:domination} applied to that move, $\Cset(x)\cap\M(x)\ne\emptyset$, so $p_{x\to x}<1$ by \ref{E2}.
For the second claim fix a moving $m_0 \in \Cset(x)$ and $i \in R(m_0)$;
since $s^i_x$ does not dominate $m_0$, $\ev_i(\tau(m_0))>\ev_i(x)$.
Let $m^\ast$ be $\succeq^{x}_i$-maximal over the finite non-empty $\Cset(x)$;
then $m^\ast\in\Fset(x)$ and $\ev_i(\tau(m^\ast))\ge\ev_i(\tau(m_0))>\ev_i(x)$,
so $\tau(m^\ast)\ne x$ and $p_{x\to\tau(m^\ast)}>0$ by \ref{E3}.
(This also shows that \ref{E3} implies \ref{E2}.)
\end{proof}

\subsection{A bargaining protocol}\label{sec:stay}

Definition~\ref{def:rationality} was stated without justification beyond plausibility.
This subsection gives the bargaining story behind it:
a protocol whose rules mention only relevant sets, votes and the clock, and whose equilibrium acceptance condition is profitability in the sense of Definition~\ref{def:rationality}(a), with staying available to each player as a motion of their own.

The protocol matters for a second reason.
Axioms \ref{E1}--\ref{E3} constrain the process directly, and it is not obvious that they constrain it in a way any negotiation could produce;
\eqref{eq:anchor} shows how easily a plausible protocol implements acceptance against the wrong benchmark, as vanishing frictions do in alternating-offers bargaining \citep{Rubinstein1982}.

\begin{definition}[Protocol $P_\varepsilon$]\label{def:protocol}
Fix $\varepsilon>0$ and proposer probabilities $q_i>0$.
At the start of each period at state $x$ a phase is played in rounds, and a round has three stages.
\begin{enumerate}
\item \emph{Proposal.} A proposer $i$ is drawn with probability $q_i$ and tables some $m\in\M^{+}(x)$.
\item \emph{Amendment.} Any player may move a substitute $m'\in\M^{+}(x)$ with $R(m')\subseteq R(m)$, whereupon the members of $R(m')$ vote between $m'$ and $m$ and $m'$ replaces $m$ if all of them prefer it, ties counting for $m$.
The stage repeats against the proposal then on the table and ends when no substitute is moved.
\item \emph{Final vote.} The members of $R(m)$ vote on the surviving $m$ against failure of the round;
if all accept, $m$ is agreed and fixes the state for the next period at $\tau(m)$.
\end{enumerate}
Otherwise the round fails, which costs a delay $\varepsilon$ during which payoffs accrue at the flow $\pi(x)$, and the phase restarts at stage~1 with a fresh draw of the proposer.
\end{definition}

Every clause is structural: the rules speak of relevant sets, votes and the clock, never of $\ev$ or of $\succeq^{x}$.
Two design points carry the result.
A rejection restarts the phase and confers no privilege on the rejector, so what a voter forgoes by agreeing is the value of the reopened table---which is why profitability in Definition~\ref{def:rationality} is measured against $\evp_j(x)$ and not against $\ev_j(x)$.
And domination is not a rule but the outcome of the amendment votes:
a substitute carries exactly when all of its own relevant members prefer it, which is Definition~\ref{def:rationality}(b) read as a ballot.
Outsiders may move substitutes without harm, since a substitute passes only on the votes of its own relevant set.

\begin{theorem}[Implementation]\label{thm:implement}
Let $\sigma_\varepsilon$ be stationary subgame-perfect equilibria of $P_\varepsilon$ in which voters accept when indifferent at the final vote and vote for the standing proposal when indifferent at an amendment vote, and let $p^\varepsilon$ be the induced kernels.
Every limit point $p$ of $(p^\varepsilon)$ as $\varepsilon\to0$ satisfies \ref{E1}--\ref{E3}.
\end{theorem}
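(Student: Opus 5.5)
The plan is to compute, for a fixed stationary equilibrium $\sigma_\varepsilon$, the continuation values within a period, to show that they converge to the objects of Definition~\ref{def:rationality}, and then to read \ref{E1}--\ref{E3} off the limiting equilibrium conditions. Let $\ev^\varepsilon$ be the evaluations induced by $p^\varepsilon$ through \eqref{eq:bellman}; here the period flow also includes whatever accrues during failed rounds. Let $v^\varepsilon_j(x)$ be $j$'s value at the start of a round at $x$, before the proposer is drawn.

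The first step is the round recursion. An agreed move $m$ gives $j$ the value $\ev^\varepsilon_j(\tau(m))$ from the next period on. A failed round gives $(1-e^{-\varepsilon})\pi_j(x)+e^{-\varepsilon}v^\varepsilon_j(x)$ in normalised units. Stationarity makes $v^\varepsilon_j(x)$ a fixed point of this recursion. As $\varepsilon\to0$, the failure value tends to $v_j(x)$, and $v_j(x)$ is a mixture of values of agreed moves. Passing to a subsequence along which $p^\varepsilon\to p$ and every round-level mixing probability converges, I will show that $v^\varepsilon_j(x)\to\evp_j(x)$, the $p$-weighted value of the realised targets. By Lemma~\ref{lem:value}, $\ev^\varepsilon\to\ev$, since $(I-\delta p)^{-1}$ is continuous in $p$. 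The key point is that the probability that some round eventually succeeds is one. Otherwise the players stall forever at $x$ within the period, which contradicts the final-vote acceptance of a stay $s^i_x$: its sole voter $i$ weakly prefers it to failure in the limit. Hence the limiting acceptance benchmark is exactly $\evp_j(x)$, and this is what makes (a) the right notion of profitability.

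The second step derives \ref{E1}. Take a move realised with positive limiting probability. Its relevant voters accept at the final vote, so $\ev^\varepsilon_j(\tau(m))\ge$ failure value, which in the limit gives profitability. For non-domination, suppose $m'$ dominates $m$ under $\succ^x_i$. Then some player can move $m'$ as a substitute, and all of $R(m')$ strictly prefer it. This strict preference persists for small $\varepsilon$, so $m'$ would carry the amendment vote. Surviving motions are therefore undominated once $\varepsilon$ is small. The tie-breaking towards $x$ in $\succ^x_i$ should match the voting rule's ``ties count for the standing proposal''. I expect this matching to be the delicate point, together with the stay's relevance to one player only.

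The third step derives \ref{E3} and \ref{E2}. Let $m\in\Fset(x)$ with witness $i$. When $i$ is drawn as proposer (probability $q_i>0$), $i$ can table $m$. Because $m$ is undominated and profitable, it survives amendments and passes, at least approximately as $\varepsilon\to0$. Any alternative that $i$ could secure is a move of $\Cset(x)$, and $i$ ranks it no higher. So an optimising proposer $i$ obtains a target $\succeq^x_i\tau(m)$ with limiting probability at least $q_i$. I still need to convert this into positive probability on $\tau(m)$ itself. Ties between distinct targets for $i$ are the obstacle here, and I would handle them by noting that Lemma~\ref{lem:nonstalling} shows \ref{E3} implies \ref{E2}, so only \ref{E3} needs care.

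The main obstacle is the limit exchange: equilibrium strategies at positive $\varepsilon$ involve strict inequalities, while the axioms are stated with weak profitability and with ties broken by $\succeq^x$. I would first try to prove everything along a subsequence using upper hemicontinuity of best responses. Where ties cause a failure, I would fall back on the assumed indifference conventions of the statement.
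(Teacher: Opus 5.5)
Your outline follows the paper's proof. The final vote gives profitability against the value of the reopened phase, which tends to $\evp_j(x)$. The amendment votes give undominatedness. A proposer tabling her favourite gives \ref{E3}, and \ref{E2} follows from \ref{E3} by Lemma~\ref{lem:nonstalling}.

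The genuine gap is the step you flag yourself in \ref{E3}, and your suggested remedy does not close it. From ``$i$ secures a target $\succeq^x_i\tau(m)$ with probability at least $q_i$'' nothing follows about $\tau(m)$ itself. Suppose $\Cset(x)$ contains $m$ and $m''$ with distinct targets, both $\succeq^x_i$-maximal (so $\ev_i(\tau(m))=\ev_i(\tau(m''))$ and neither target is $x$). If $i$ always tables $m''$ and nobody else's proposals end at $\tau(m)$, then $p_{x\to\tau(m)}=0$ although $m\in\Fset(x)$. That \ref{E3} implies \ref{E2} does not bear on this. The indifference conventions in the statement govern voters, not which of several equally good motions a proposer tables. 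The paper gets past this step only with an ingredient that is not in the statement: either ties are excluded so that $i$'s best target is unique, or proposers randomise over their whole argmax set (``the one selection this proof requires''). You need one of these as an explicit extra hypothesis.

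Two further parts of your plan would fail as written.

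\emph{Tie-breaks.} The match you hope for is false in exactly one case. A stay substitute $s^k_x$ against a moving motion $m$ with $\ev_k(\tau(m))=\ev_k(x)$ loses the amendment vote under ``ties count for the standing proposal''. Yet it dominates $m$ under Definition~\ref{def:rationality}(b), because ties there are broken towards $x$. So at such a tie the protocol realises a move outside $\Cset(x)$, and \ref{E1} fails. Every other pairing does match. The paper's proof simply asserts that the two tie-breaks coincide and does not treat this case.

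\emph{Limit exchange.} Upper hemicontinuity cannot carry it, because the relevant conditions are not closed. Non-domination by a stay is the strict inequality $\ev_k(\tau(m))>\ev_k(x)$. Nothing then prevents a move realised for every $\varepsilon$ from being stay-dominated at the limit. Likewise \ref{E3} is a lower bound on supports, and supports can only shrink in the limit (cf.\ Remark~\ref{rem:E3-existence}). A move that becomes $i$'s favourite only at $\varepsilon=0$, through a tie that is strict for every $\varepsilon>0$, may never be tabled at all. The paper's proof reasons directly with the limiting evaluations and does not address this. To make your version rigorous you will need a genericity hypothesis excluding ties at the limiting evaluations, in addition to the proposer selection above.
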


\begin{proof}
Throughout, both branches of any vote pay the current period's flow $(1-\delta)\pi_j(x)$, so only the next-period values differ, and we compare those.

\emph{Final vote.}
Rejecting fails the round and restarts the phase, which by stationarity is worth $\ev^{\scriptscriptstyle+,\varepsilon}_j(x)$ up to the delay, while accepting is worth $\ev^\varepsilon_j(\tau(m))$.
Voting is unanimous, so each member is pivotal against acceptance and the one-shot deviation principle applies to each in turn:
$m$ is agreed if and only if $\ev^\varepsilon_j(\tau(m))\ge \ev^{\scriptscriptstyle+,\varepsilon}_j(x)-O(\varepsilon)$ for all $j\in R(m)$.
Passing to the limit, an agreed move is profitable in the sense of Definition~\ref{def:rationality}(a).

\emph{Amendment.}
Order the substitutes by the domination relation, which by Lemma~\ref{lem:domination} is acyclic on the finite set $\M^{+}(x)$;
backward induction along it makes voting between $m'$ and $m$ sincere, since the vote is pairwise and its two continuations are the two subgames headed by $m'$ and by $m$.
So $m'$ replaces $m$ exactly when $\tau(m')\succ^{x}_k\tau(m)$ for all $k\in R(m')$, the tie-break being the stipulated one, i.e.\ exactly when $m'$ dominates $m$.
A proposal therefore survives the stage if and only if it is undominated, and by Lemma~\ref{lem:domination} the stage terminates.

\emph{\ref{E1}.}
A realised move survives the amendment stage, hence is undominated, and is agreed, hence profitable; so it lies in $\Cset(x)$.

\emph{\ref{E3}.}
By Proposition~\ref{prop:proposals} below, an optimal proposal for $i$ is a $\succeq^{x}_i$-maximal element of $\Cset(x)$, which survives and is agreed;
under Criterion~\ref{cond:ties} that element is unique and equals the $m\in\Fset(x)$ witnessed by $i$, so $p_{x\to\tau(m)}\ge q_i>0$.
Absent Criterion~\ref{cond:ties}, the same holds for each maximiser provided proposers randomise over their argmax set, which is the one selection this proof requires.

\emph{\ref{E2}.}
Immediate from \ref{E3} by Lemma~\ref{lem:nonstalling}.
\end{proof}

\begin{proposition}[What gets proposed, and that it passes at once]\label{prop:proposals}
In any such equilibrium, at every $x$ and in the limit $\varepsilon\to0$:
\begin{enumerate}
\item[(a)] $\Cset(x)\ne\emptyset$;
\item[(b)] a proposal maximising $\ev_i(\tau(\cdot))$ over $\Cset(x)$ is optimal for proposer $i$, and no proposal outside $\Cset(x)$ is;
\item[(c)] such a proposal admits no successful substitute and is accepted unanimously, so the period's state is settled in the first round and the delay $\varepsilon$ is never incurred on the equilibrium path.
\end{enumerate}
\end{proposition}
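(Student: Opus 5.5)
Three facts about the stage game do the work, and the paper already supplies them: the acceptance rule at the final vote, the sincerity of amendment votes, and the acyclicity of domination (Lemma~\ref{lem:domination}). All of them come from the proof of Theorem~\ref{thm:implement}. From there I pass to the limit $\varepsilon\to0$, so that rejection is worth exactly $\evp_j(x)$. I then prove (a), (b) and (c) in order, using throughout the fact that every stay has target $x$ and relevant set $\{i\}$.

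For (a), I would first show that some stay is profitable. Profitability of $s^i_x$ means $\ev_i(x)\ge\evp_i(x)$, and by \eqref{eq:anchor} this is equivalent to $\pi_i(x)\ge\ev_i(x)$. That inequality need not hold for a fixed $i$. So instead I would argue from the structure of the equilibrium. Suppose $\Cset(x)=\emptyset$. By Lemma~\ref{lem:domination} no element of $\M^{+}(x)$ is profitable, so no proposal survives and is agreed, every round fails, and $p_{x\to x}=1$. Then $\evp_j(x)=\ev_j(x)$ for every $j$, so each stay $s^j_x$ is profitable with equality. That contradicts the first step, and hence $\Cset(x)\ne\emptyset$. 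The point to be careful about is that the $\varepsilon$-equilibrium values converge to the limit values. I would take this from the limit-point argument in Theorem~\ref{thm:implement}, passing to a subsequence along which $\ev^\varepsilon\to\ev$.

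For (b), fix a proposer $i$ and trace what happens to a tabled $m$. By the amendment analysis, the motion moves up a domination chain and ends at an undominated $m''$ with $R(m'')\subseteq R(m)$. By the inheritance part of Lemma~\ref{lem:domination}, if $m$ is profitable then $m''\in\Cset(x)$ and it passes. If $m$ is not profitable, the outcome is either failure or some element of $\Cset(x)$. Failure reopens the table, worth $\evp_i(x)$. That is a lottery over outcomes of the phase, each of which is an element of $\Cset(x)$ by \ref{E1}. Hence any proposal yields $i$ a value at most $\max_{m'\in\Cset(x)}\ev_i(\tau(m'))$ up to $O(\varepsilon)$. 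A maximiser $m^\ast$ of $\ev_i(\tau(\cdot))$ over $\Cset(x)$ attains this bound: it is undominated, so no substitute carries, and it is profitable, so it is accepted. This proves the first half of (b).

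The second half of (b) needs a strict loss for any proposal outside $\Cset(x)$. If the proposal is dominated, it is replaced by a different move, so proposing that move directly is equivalent. If it is unprofitable, it fails and costs the delay $\varepsilon$ on top of a lottery that is weakly worse than the maximum. I expect this step to be the main obstacle. In the limit the loss is only $O(\varepsilon)$, so "optimal" has to be read at positive $\varepsilon$, where the delay makes failure strictly costly, and then carried to the limit. It may also be necessary to read "no proposal outside $\Cset(x)$" as "none whose realised outcome differs", using the tie-break conventions.

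Part (c) then follows directly. $m^\ast$ is undominated, so no substitute carries by the sincere-voting characterisation. It is profitable, so each voter accepts under the indifference convention. Since every equilibrium proposer tables such an element, the first round succeeds and the delay $\varepsilon$ is never incurred on the equilibrium path.
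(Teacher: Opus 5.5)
Your arguments for (a), for the optimality of a maximiser $m^\ast$ in (b), and for the part of (c) saying that $m^\ast$ draws no successful substitute and is accepted, are the paper's.

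In (a) you recast the paper's case split on $p_{x\to x}$ as a contradiction. The contradiction you actually reach is with ``no element of $\M^{+}(x)$ is profitable'', not with the first step you had abandoned.

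In (b) your split by profitability is, if anything, more careful than the paper's. The paper says any dominated proposal is amended into $\Cset(x)$, but Lemma~\ref{lem:domination} gives this only for profitable ones. Otherwise the round may fail, which is still bounded by $\evp_i(x)\le\ev^\ast_i$ up to $O(\varepsilon)$.

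On the second clause of (b) your unease is justified. The shortfall lies in the statement, not in your proof relative to the paper's. The paper's proof of (b) consists only of weak inequalities: every proposal yields at most $\ev^\ast_i$, and $m^\ast$ attains it. It never shows that a proposal outside $\Cset(x)$ does strictly worse, and, as you observe, it cannot.
\begin{itemize}
\item A dominated proposal that the amendment stage carries into $m^\ast$ yields exactly $\ev^\ast_i$, in the same round and at every $\varepsilon$.
\item An undominated unprofitable proposal yields $\evp_i(x)$ in the limit. This equals $\ev^\ast_i$ whenever every move realised at $x$ is worth $\ev^\ast_i$ to $i$.
\end{itemize}
Your rescue through positive $\varepsilon$ does not touch the first case, which involves no delay at all. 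In the second case it would give strict inequalities at each $\varepsilon$, and these become weak in the limit. What is provable is the reading you end with: no proposal does better than $\ev^\ast_i$, and a proposal outside $\Cset(x)$ can be optimal only if its outcome is worth $\ev^\ast_i$ to $i$.

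The last sentence of your (c) inherits this problem, as does the paper's. ``Every equilibrium proposer tables such an element'' is exactly the unproved clause. What (c) actually needs is that no proposer tables something that ultimately fails, and that would have to be shown at each $\varepsilon>0$. That is where your strictness idea belongs, but as written it is not yet an argument.
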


\begin{proof}
(a) If $p_{x\to x}=1$ then $\evp_j(x)=\ev_j(x)$ for every $j$, so every stay is profitable and Lemma~\ref{lem:domination} gives $\Cset(x)\ne\emptyset$;
otherwise some move is realised, and by \ref{E1} it lies in $\Cset(x)$.

(b) Write $\ev^\ast_i:=\max\{\ev_i(\tau(m)):m\in\Cset(x)\}$, attained at $m^\ast$.
Since every realised target is $\tau(m)$ for some $m\in\Cset(x)$, and stays are moves, $\evp_i(x)$ is an average of values $\ev_i(\tau(m))$ with $m\in\Cset(x)$, so $\evp_i(x)\le\ev^\ast_i$.
Now let $i$ table any $m$.
If $m$ is dominated, the amendment stage replaces it, and by Lemma~\ref{lem:domination} what emerges lies in $\Cset(x)$, so its target is worth at most $\ev^\ast_i$ to $i$.
If $m$ is undominated but not profitable, some $j\in R(m)$ rejects at the final vote, the round fails, and $i$ receives the phase value $\evp_i(x)-O(\varepsilon)\le\ev^\ast_i$.
If $m\in\Cset(x)$, it survives and is agreed by (c), and is worth $\ev_i(\tau(m))\le\ev^\ast_i$.
In every case $i$ obtains at most $\ev^\ast_i$, and tabling $m^\ast$ attains it.

(c) $m^\ast$ is undominated, so by the amendment paragraph of the previous proof no substitute passes;
it is profitable, so by the final-vote paragraph every $j\in R(m^\ast)$ accepts.
Hence the round ends in agreement, and no failure occurs on path.
\end{proof}

The delay is thus purely a threat: it is what makes rejection cost something and so pins the acceptance threshold at $\evp_j(x)$, and it is never paid.
Note also which conventions are doing work and which are not.
Acceptance at indifference at the final vote turns profitability into a weak inequality, matching Definition~\ref{def:rationality}(a);
the status-quo tie-break at amendment votes is the one already built into $\succeq^{x}$.
Nothing else is assumed, and no step needs a bound on $\varepsilon$ other than $\varepsilon\to0$.

\begin{remark}\label{rem:stayaudit}
Two consequences of Definition~\ref{def:rationality} are worth keeping in view when reading the proofs below.
Stays make every player a potential blocker of a move they are relevant to, which is what carries Theorem~\ref{thm:persist-unan};
and profitability, being a weak inequality against a benchmark that the process itself determines, holds automatically at a state whose transition is deterministic, so at such states \ref{E1} bites through domination alone.
\end{remark}

\section{Persistence: the grand coalition is never dissolved}\label{sec:persistence}

In the running example of Section~\ref{sec:story}, this section says that a global carbon market,
once formed, is never dissolved---whatever the discount factor, whichever consent rule governs
withdrawal, and whatever the payoffs.
This is the easy half of the question, and it is settled completely.
Nothing below restricts the discount factor, nothing below asks anything of the payoffs beyond the
two standing assumptions, and nothing below uses the exact form of the sharing rule.
What does the work is the move structure: a grand state can be left only by a termination, and the
chain rule makes every termination there either a shadow of a termination available one step
earlier, or a dissolution requiring everybody's approval.

\begin{lemma}[Reversal lemma]\label{lem:reversal}
Let $p$ satisfy \ref{E1}.
If $p_{x\to y}>0$ and $p_{y\to x}>0$ with $x\ne y$, realised by moves $m$ at $x$ and $m'$ at $y$,
then $R(m)\cap R(m')=\emptyset$.
In particular an agreement is never both signed and torn up between the same two states:
forming $K$ requires all of $K$ and terminating it requires at least one member of $K$.
\end{lemma}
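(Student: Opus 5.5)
The plan is a proof by contradiction built on the fact that staying is always available as a dominator. Suppose $p_{x\to y}>0$ is realised by $m\in\Cset(x)$, $p_{y\to x}>0$ is realised by $m'\in\Cset(y)$, and some player $i$ lies in $R(m)\cap R(m')$. The aim is to use undomination at both ends to get two strict inequalities that point in opposite directions.

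\emph{Step one, at $x$.} Since $m$ lies in $\Cset(x)$, it is undominated. The stay $s^i_x$ is in $\M^{+}(x)$, its relevant set $\{i\}$ is contained in $R(m)$, and its target is $x$. If $\ev_i(x)\ge\ev_i(y)$, the status-quo tie-break gives $x\succ^{x}_i y$, because in the tie case $x$ is the status quo and $y\ne x$. Then $s^i_x$ would dominate $m$, which is impossible. Hence $\ev_i(y)>\ev_i(x)$. This is exactly the observation stated after Definition~\ref{def:rationality}: a stay $s^i_x$ dominates $m$ with $i\in R(m)$ precisely when $\ev_i(x)\ge\ev_i(\tau(m))$.

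\emph{Step two, at $y$.} The same argument applies to $m'\in\Cset(y)$ with $i\in R(m')$ and the stay $s^i_y$. It gives $\ev_i(x)>\ev_i(y)$. This contradicts step one, so $R(m)\cap R(m')=\emptyset$.

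\emph{Step three, the ``in particular'' clause.} Suppose $m$ is the merge forming $K$, so $R(m)=K$. Then $m'$, going back from $y=x\cup\{K\}$ to $x$, must be a termination; it cannot be a stay, since $y\ne x$. Its target $x$ must equal $y\setminus\{K'':K_0\subseteq K''\}$ for some bottom node $K_0$. Since $K$ is top-level in $y$, recovering exactly $x$ forces $K_0=K$. Its relevant set is then either a single member of $K$ (unilateral) or $S(y,K)=K$ (unanimous), so in both cases it meets $K=R(m)$. By the main claim this cannot happen.

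There is one point I would check carefully, and it is the only real obstacle. Which moves realise the transitions? Because \ref{E1} only says that $y=\tau(m)$ for some $m\in\Cset(x)$, the lemma should be read as applying to any such realising moves. The argument above uses nothing else. I would also confirm that profitability plays no role, since domination by stays alone suffices. That matches Remark~\ref{rem:stayaudit}.
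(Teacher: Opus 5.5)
Your proof is correct and takes the same route as the paper. The paper's one-line proof asserts $\ev_i(y)>\ev_i(x)$ for $i\in R(m)$ and $\ev_i(x)>\ev_i(y)$ for $i\in R(m')$, which is exactly what you derive from undominatedness against the stays $s^i_x$ and $s^i_y$ with the status-quo tie-break. Your extra check that the return move from $x\cup\{K\}$ must have bottom node $K$ itself, so that its relevant set meets $K$, spells out what the paper leaves implicit in the ``in particular'' clause.
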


\begin{proof}
For $i\in R(m)$, $\ev_i(y)>\ev_i(x)$; for $i \in R(m')$, $\ev_i(x)>\ev_i(y)$.
No $i$ can satisfy both.
\end{proof}

\begin{theorem}[Unanimous termination: all grand states are absorbing]\label{thm:persist-unan}
Assume unanimous termination and let $p$ satisfy \ref{E1}.
Then for every $\delta\in(0,1)$ and every $h \in \G$, $h$ is absorbing---whether or not it is ever reached.
Only the undominatedness half of \ref{E1} is used, and only against stay moves:
it does not matter whether unprofitable or non-favourite moves can be realised.
\end{theorem}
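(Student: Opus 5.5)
The plan is to argue by contradiction. Suppose $h\in\G$ is not absorbing, so that $p_{h\to y}>0$ for some $y\ne h$. I would first identify which moves can realise such a transition. Then I would show that undominatedness against stays alone forces \emph{every} player to strictly prefer every realised target to $h$. Summing over players, this should contradict the welfare bound of Lemma~\ref{lem:welfare}.

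\emph{Step 1: the moves available at $h$.} By \ref{E1}, $y=\tau(m)$ for some $m\in\Cset(h)$. Since $\tau(m)\ne h$, the move $m$ is not a stay, so $m\in\M(h)$. By Lemma~\ref{lem:entry}, $m$ is then a termination whose target lies outside $\G$. Under unanimous termination its relevant set is $R(m)=S(h,K)$, the top-level node containing the bottom node $K$. At a grand state the only top-level node is $N$, so $R(m)=N$.

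\emph{Step 2: stays force strict preference.} For each $i\in N=R(m)$, the stay $s^i_h$ has $R(s^i_h)=\{i\}\subseteq R(m)$ and target $h$. Since $m$ is undominated, $s^i_h$ does not dominate it, so we cannot have $h\succ^{h}_i y$. The tie-break in $\succeq^{h}_i$ favours the status quo, so $\ev_i(h)=\ev_i(y)$ would already give $h\succ^h_i y$. Hence $\ev_i(y)>\ev_i(h)$ for all $i$, and summing gives $W(y)>W(h)$, with $W$ as in Lemma~\ref{lem:welfare}. This holds for every realised target $y\ne h$. Only this undominatedness-against-stays clause of \ref{E1} is used, as the statement asserts.

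\emph{Step 3: contradiction with the welfare bound.} Sum \eqref{eq:bellman} over $i$ at $h$ and use $V(h)=V^\ast$ (Assumption~\ref{ass:eff}). Writing $q:=p_{h\to h}<1$, this gives
\[
(1-\delta q)\,W(h) \;=\; (1-\delta)V^\ast + \delta\sum_{z\ne h}p_{h\to z}W(z) .
\]
The sum has positive total weight $1-q$, and each $W(z)$ in it exceeds $W(h)$ by Step~2. Therefore $(1-\delta q)W(h)>(1-\delta)V^\ast+\delta(1-q)W(h)$, i.e.\ $(1-\delta)W(h)>(1-\delta)V^\ast$, so $W(h)>V^\ast$. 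This contradicts Lemma~\ref{lem:welfare}, hence $h$ is absorbing.

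An even shorter route to the contradiction is available: Step~2 gives $W(h)<W(y)$, and Lemma~\ref{lem:welfare} gives $W(y)<V^\ast$ because $y\notin\G$. Then $W(h)<V^\ast$, and by the Bellman identity above this is only possible if the process leaves $h$. That is consistent so far, so this route needs the averaging argument of Step~3 in any case.

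The argument never refers to reachability of $h$, to profitability, or to \ref{E2}/\ref{E3}, which gives the ``whether or not it is ever reached'' clause. The only delicate point, which I expect to be the main thing to get right, is Step~2: it relies on the status-quo tie-break to obtain a \emph{strict} inequality from undominatedness against a stay. Without strictness, Step~3 would only yield $W(h)\ge V^\ast$, which is no contradiction.
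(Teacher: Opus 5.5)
Your proof is correct and follows the paper's argument essentially step for step. Terminations at $h$ have relevant set $N$; undominatedness against each stay $s^i_h$, with the status-quo tie-break, gives $\ev_i(y)>\ev_i(h)$ for all $i$ and every realised $y\ne h$; summing the Bellman equation at $h$ then yields $W(h)>V^\ast$, contradicting Lemma~\ref{lem:welfare}. Your ``shorter route'' paragraph is a dead end, as you note, and can simply be dropped.
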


\begin{proof}
Suppose $p_{h\to h}<1$.
By Lemma~\ref{lem:entry} every moving move at $h$ is a termination,
and its relevant set is the top-level node of $h$ containing its bottom node, i.e.\ $N$.
For every $i \in N$ the stay $s^i_h$ has $R(s^i_h)=\{i\}\subseteq R(m)$ and is therefore a candidate dominator of any such $m$;
since ties are broken towards $h$, it dominates $m$ unless $\ev_i(\tau(m))>\ev_i(h)$.
So by \ref{E1}, every $y\ne h$ with $p_{h\to y}>0$ satisfies $\ev_i(y)>\ev_i(h)$ for \emph{all} $i \in N$,
i.e.\ $W(y)>W(h)$.
Summing \eqref{eq:bellman} over $i$ and using $V(h)=V^\ast$,
\[
W(h) = (1-\delta)V^\ast + \delta\Big(p_{h\to h}W(h) + \sum_{y \ne h}p_{h\to y}W(y)\Big) > (1-\delta)V^\ast + \delta W(h),
\]
the strict inequality because $\sum_{y\ne h}p_{h\to y} = 1-p_{h\to h}>0$.
Hence $W(h)>V^\ast$, contradicting Lemma~\ref{lem:welfare}.
\end{proof}

\begin{theorem}[Unilateral termination: reached grand states are absorbing]\label{thm:persist-unil}
Assume unilateral termination and let $p$ satisfy \ref{E1}.
For every $\delta\in(0,1)$: if $p_{x\to h}>0$ for some $h\in\G$ and some $x \ne h$, then $h$ is absorbing.
\end{theorem}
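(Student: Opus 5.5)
The plan is a direct contradiction argument that uses only the undominatedness part of \ref{E1}, played off at two states: at $h$ against stays, and at the parent $x=t(h)$ against the merge-all move. Unlike Theorem~\ref{thm:persist-unan}, no welfare summation should be needed. That argument fails here because a unilateral termination has relevant set $\{i\}$, not $N$.

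\emph{Step 1 (the entering move).} By Lemma~\ref{lem:entry}, $p_{x\to h}>0$ with $x\ne h$ forces $x=t(h)$. The realised move must then be the merge $m_0$ with $S=P(x)$ and $R(m_0)=N$, and by \ref{E1} we have $m_0\in\Cset(x)$. Fix any $i\in N$. The stay $s^i_x$ has $R(s^i_x)=\{i\}\subseteq N$, and ties are broken towards $x$, so $s^i_x$ would dominate $m_0$ unless $\ev_i(h)>\ev_i(x)$. Hence $\ev_i(h)>\ev_i(x)$ for every $i\in N$. More importantly, $m_0$ is not dominated by any move $m'\in\M^{+}(x)$ with $R(m')\subseteq N$; in particular, no single-player termination at $x$ is strictly better than $h$ for its terminator.

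\emph{Step 2 (a hypothetical exit from $h$).} Suppose $p_{h\to h}<1$. Then some $y\ne h$ has $p_{h\to y}>0$, realised by some $m\in\Cset(h)$. By Lemma~\ref{lem:entry}, $m$ is a termination with some terminator $i$, so $R(m)=\{i\}$. Since $s^i_h$ does not dominate $m$, and ties favour $h$, we get $\ev_i(y)>\ev_i(h)$. Now apply Lemma~\ref{lem:shadow} and split on the bottom node $K$ of $m$.

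If $K=N$, then $y=x$. This gives $\ev_i(x)>\ev_i(h)$, contradicting Step~1.

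If $K\subsetneq N$, then the shadow termination $\tilde m$ at $x$ exists, with the same terminator, so $R(\tilde m)=\{i\}\subseteq N=R(m_0)$, and $\tau(\tilde m)=y$. Here $y\ne h$, and also $y\ne x$, since a termination removes a node. So the strict inequality $\ev_i(y)>\ev_i(h)$ gives $y\succ^{x}_i h$ regardless of the tie-breaking clause. Thus $\tilde m$ dominates $m_0$ in the sense of Definition~\ref{def:rationality}(b). Domination does not require the dominator to be profitable, so this contradicts $m_0\in\Cset(x)$.

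Both cases are impossible, so $p_{h\to h}=1$.

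\emph{Main obstacle.} The only delicate point is the case analysis in Lemma~\ref{lem:shadow}. I must check that the dominator $\tilde m$ genuinely lies in $\M^{+}(x)$, which holds because $K\in x$. I must also check that the strictness of $\succ^{x}_i$ is not undone by the status-quo tie-break, which is why it matters that $y\ne x$ in the second case; the case $y=x$ is disposed of separately in the first case. I also note that the argument uses no property of $\delta$, of the payoffs, or of the sharing rule.
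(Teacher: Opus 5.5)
Your proof is correct and is essentially the paper's argument: Lemma~\ref{lem:entry} forces the entering move to be the merge-all $m_0$ at $t(h)$ with $\ev_i(h)>\ev_i(x)$ for all $i$; a realised termination at $h$ must strictly benefit its terminator; the case $K=N$ contradicts this directly; and for $K\subsetneq N$ the shadow termination of Lemma~\ref{lem:shadow} dominates $m_0$. The only quibble is your remark that $y\ne x$ matters for strictness: since $\ev_i(y)>\ev_i(h)$ is already strict, $y\succ^{x}_i h$ holds whatever $y$ is, because the tie-break acts only on equal evaluations.
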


\begin{proof}
By Lemma~\ref{lem:entry}, $x=t(h)$ and the realised move is the merge $m_0$ with $R(m_0)=N$, $\tau(m_0)=h$.
By \ref{E1}, $m_0\in\Cset(x)$; in particular $m_0$ is undominated and
\begin{equation}\label{eq:entry}
\ev_i(h)>\ev_i(x) \qquad\text{for all } i \in N.
\end{equation}
Suppose $p_{h\to h}<1$: there are $y \ne h$ and a move $m$ at $h$ with $\tau(m)=y$, $p_{h\to y}>0$ and,
by \ref{E1}, $\ev_j(y)>\ev_j(h)$ for all $j\in R(m)$.
By Lemma~\ref{lem:entry}, $m$ is a termination, with bottom node $K$.

If $K=N$ then $y=x$, and \eqref{eq:entry} gives $\ev_j(x)>\ev_j(h)>\ev_j(x)$ for $j \in R(m)$, absurd;
this is Lemma~\ref{lem:reversal}.

If $K\subsetneq N$,
Lemma~\ref{lem:shadow} gives a move $\tilde m$ at $x$ with $\tau(\tilde m)=y$ and $R(\tilde m)\subseteq R(m)$.
For $j \in R(\tilde m)$, $\ev_j(\tau(\tilde m)) = \ev_j(y) > \ev_j(h) = \ev_j(\tau(m_0))$,
and $R(\tilde m)\subseteq R(m)\subseteq N = R(m_0)$.
So $\tilde m$ dominates $m_0$, contradicting $m_0 \in \Cset(x)$.
\end{proof}
Notice that for Theorem \ref{thm:persist-unil} it is crucial that players' level of rationality prevents dominated moves,
as assumed, while Theorem \ref{thm:persist-unan} only requires that players avoid non-profitable moves.
Neither theorem requires that a realised move be anybody's favourite,
and neither requires \ref{E2} or \ref{E3}: an equilibrium in which arbitrary profitable (respectively,
profitable undominated) moves may be realised, and in which the process may also stall arbitrarily,
still cannot dissolve the grand coalition.

The converse, that absorbing states are grand, needs under unilateral termination a mild genericity requirement---the first of two \emph{criteria} in this paper.
A criterion constrains the process under consideration rather than the primitives, here through the evaluations $\ev$, so it can be checked against a candidate equilibrium but not read off the payoffs; the second is Criterion~\ref{cond:fgains}.

\begin{criterion}[No ties]\label{cond:ties}
For the equilibrium $p$ under consideration: for every absorbing state $x \in \X\setminus\G$,
every $i \in N$ and every state $y \ne x$ that is the target of a termination at $g_x$ with bottom node $K \subsetneq N$ and $i \in K$,
\[
\ev_i(y) \;\ne\; \ev_i(x) .
\]
\end{criterion}

Only finitely many comparisons are involved,
and each is between two states that differ in their coalition structure.
In a public-good application this is a real restriction only under a coincidence:
at an absorbing $x$ the right-hand side is $\pi_i(x)$,
and $\ev_i(y)$ is the $\delta$-discounted average payoff along the continuation from $y$;
equality would mean that $i$ is exactly indifferent between the frozen state $x$ and a continuation whose per-period payoffs are generically different real numbers---in Example~\ref{ex:negative},
for instance, all payoffs are distinct rationals with different denominators,
and an exact tie would require $\delta$ to solve one specific rational equation.
If one dislikes assuming this,
one may instead adopt the convention that a player does not terminate when she is indifferent between the resulting state and an alternative she could bring about alone;
the proof of Theorem~\ref{thm:absorbing-grand}(ii) then goes through verbatim with weak inequalities.

\begin{theorem}[Every absorbing state is grand]\label{thm:absorbing-grand}
Let $p$ satisfy \ref{E1} and \ref{E2}, let $\delta \in (0,1)$, and let $x$ be absorbing.
Then $x \in \G$,
provided either (i) every grand state is absorbing---which holds under unanimous termination by Theorem~\ref{thm:persist-unan}---or (ii) termination is unilateral and Criterion~\ref{cond:ties} holds.
\end{theorem}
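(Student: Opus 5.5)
We argue by contradiction: let $x$ be absorbing with $x\notin\G$. Since $p_{x\to x}=1$, we have $\ev_i(x)=\pi_i(x)$ and $\evp_i(x)=\ev_i(x)=\pi_i(x)$ for every $i$, by \eqref{eq:bellman} and \eqref{eq:anchor}. By Lemma~\ref{lem:nonstalling}, \ref{E2} forbids $x$ from being absorbing if any moving move at $x$ is profitable. So it suffices to exhibit one move in $\M(x)$ with $\ev_j(\tau(m))\ge\pi_j(x)$ for all $j\in R(m)$. The natural candidate is the merge-all move $m_0$ with target $g_x$ and $R(m_0)=N$.

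\emph{Case (i).} If $g_x$ is absorbing, then $\ev_i(g_x)=\pi_i(g_x)>\pi_i(x)=\evp_i(x)$ for all $i$, by Lemma~\ref{lem:pareto}. So $m_0$ is profitable, and we have the contradiction. This settles (i) at once, and it also settles (ii) whenever $g_x$ happens to be absorbing.

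\emph{Case (ii), $g_x$ not absorbing.} Termination is unilateral here. Suppose $m_0$ is not profitable, so some $i$ has $\ev_i(g_x)<\pi_i(x)$. Every realised move at $g_x$ is a termination (Lemma~\ref{lem:entry}) with a single terminator $j$. Because it is undominated by the stay $s^j_{g_x}$, it satisfies $\ev_j(y)>\ev_j(g_x)$ at its target $y$.
\begin{itemize}
\item If the bottom node is $N$, then $y=x$ and $\ev_j(g_x)<\pi_j(x)$.
\item If the bottom node $K\subsetneq N$, Lemma~\ref{lem:shadow} gives the shadow termination $\tilde m$ at $x$ by the same $j$, with target $y$. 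We may assume $\tilde m$ is unprofitable, since otherwise we are done. Then $\ev_j(y)<\pi_j(x)$, strictly, because Criterion~\ref{cond:ties} excludes equality exactly for these $x,j,y$. Hence $\ev_j(g_x)<\ev_j(y)<\pi_j(x)$ again.
\end{itemize}
So every terminator realised at $g_x$ strictly prefers the frozen $x$ to $g_x$, and strictly prefers $x$ to the target of every realised termination that $j$ performs.

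The hard step is to turn these inequalities into a contradiction with $\pi_j(g_x)>\pi_j(x)$ (Lemma~\ref{lem:pareto}). I would first try the one-step identity (Lemma~\ref{lem:onestep}), or more generally \eqref{eq:bellman} at $g_x$ written as
\[
\ev_j(g_x)=(1-\delta)\pi_j(g_x)+\delta\Big(p_{g_x\to g_x}\ev_j(g_x)+\textstyle\sum_{y\ne g_x}p_{g_x\to y}\ev_j(y)\Big),
\]
using two facts. First, $\ev_j(g_x)<\pi_j(x)<\pi_j(g_x)$. Second, profitability of the realised move gives $\ev_j(y)\ge\evp_j(g_x)$.

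If only $j$'s own terminations were realised, this would force $\ev_j(g_x)\ge\min\{\pi_j(g_x),\ev_j(y)\}$. That is impossible, since $\ev_j(g_x)<\ev_j(y)$ and $\ev_j(g_x)<\pi_j(g_x)$.

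Other players' terminations may also be realised, and their targets can be bad for $j$. To handle them I would pick the terminator $j$ that minimises $\ev_j(g_x)-\pi_j(x)$, or else argue via profitability that $\evp_j(g_x)\le\ev_j(y)$ for $j$'s own realised move. Either way I want to bound the whole continuation average from below by something $\ge\ev_j(g_x)$, and then contradict the strict flow gain $\pi_j(g_x)>\ev_j(g_x)$. That bookkeeping, isolating one terminator whose continuation cannot all lie below $\ev_j(g_x)$, is where I expect the real work to be.
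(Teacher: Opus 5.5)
Your reduction to profitability of the merge-all move at $x$ is correct, and so is case~(i). In case~(ii) you have already derived the decisive inequality: for a realised termination $m$ at $g_x$ with bottom node $K\subsetneq N$, terminator $j$ and target $y$, unprofitability of its shadow $\tilde m$ at $x$ gives $\ev_j(y)<\pi_j(x)=\ev_j(x)$. The gap is in what you do with it. You use it only as an upper bound on $\ev_j(g_x)$, and then try to reach a contradiction through \eqref{eq:bellman} at $g_x$. That cannot work with the facts you have collected. The value $\evp_j(g_x)$ averages over every target realised at $g_x$, including deep terminations by \emph{other} players. Nothing in your list constrains $\ev_j$ at those targets, so $\ev_j(g_x)<\ev_j(y)<\pi_j(x)$ is consistent with all your inequalities. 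Choosing the terminator that minimises $\ev_j(g_x)-\pi_j(x)$ does not change this. The bound $\evp_j(g_x)\le\ev_j(y)$ from profitability bounds $\ev_j(g_x)$ from above, which is the wrong direction.

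The missing idea is domination, not bookkeeping. Under unilateral termination the same player $j$ can also terminate $N$ itself at $g_x$. That move has relevant set $\{j\}=R(m)$ and target $x$. Since $\ev_j(x)>\ev_j(y)$, it dominates $m$ in the sense of Definition~\ref{def:rationality}(b). So by \ref{E1} no termination with bottom node $K\subsetneq N$ is ever realised at $g_x$, and the only state other than $g_x$ that can receive positive probability from $g_x$ is $x$. Put $q:=p_{g_x\to x}\in[0,1]$, and apply Lemma~\ref{lem:onestep} at $g_x$ with successor $x$ and $c=\pi_i(x)$, using $\ev_i(x)=\pi_i(x)$. This gives
\[
\bigl(1-\delta(1-q)\bigr)\bigl(\ev_i(g_x)-\pi_i(x)\bigr)=(1-\delta)\bigl(\pi_i(g_x)-\pi_i(x)\bigr)>0
\]
for every $i$, by Lemma~\ref{lem:pareto}. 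Hence merge-all is profitable at $x$, contradicting Lemma~\ref{lem:nonstalling}. This handles $q=0$ and $q>0$ alike, so your split on whether $g_x$ is absorbing is unnecessary. This is exactly how the paper closes case~(ii).
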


\begin{proof}
Suppose $x \notin \G$.
Then $\ev_i(x)=\pi_i(x)$ for all $i$ by \eqref{eq:bellman},
and by Lemma~\ref{lem:nonstalling} \emph{no} move at $x$ is profitable.
Let $m_0$ be the merge-all move at $x$, $\tau(m_0)=g_x$, $R(m_0)=N$.
In the following, we derive $\ev_i(g_x)>\pi_i(x)$ for all $i$, making $m_0$ profitable---a contradiction.

(i) Here $g_x$ is absorbing, so $\ev_i(g_x)=\pi_i(g_x)=\varphi_i(x)>\pi_i(x)$ by Lemma~\ref{lem:pareto}.

(ii) Let $m$ be any termination at $g_x$ with bottom node $K \subsetneq N$, $R(m)=\{i^\ast\}$ and target $y$.
By Lemma~\ref{lem:shadow} the move $\tilde m$ at $x$ with bottom node $K$ and terminator $i^\ast$ has target $y$;
it is not profitable (by Lemma~\ref{lem:nonstalling}), so $\ev_{i^\ast}(y)\le\ev_{i^\ast}(x)$,
hence $\ev_{i^\ast}(y)<\ev_{i^\ast}(x)$ by Criterion~\ref{cond:ties}.
The move $m'$ at $g_x$ terminating $N$ with terminator $i^\ast$ also has relevant set $\{i^\ast\}=R(m)$ and target $x$,
and $\ev_{i^\ast}(x)>\ev_{i^\ast}(y)=\ev_{i^\ast}(\tau(m))$; so $m$ is dominated by $m'$ and, by \ref{E1},
not realised.
Hence the only state other than $g_x$ that can receive positive probability from $g_x$ is $x$;
write $q := p_{g_x\to x}\in[0,1]$,
so that $\ev_i(g_x) = (1-\delta)\varphi_i(x)+\delta\big(q\pi_i(x)+(1-q)\ev_i(g_x)\big)$.
Collecting the $\ev_i(g_x)$ terms,
\[
\big(1-\delta(1-q)\big)\,\ev_i(g_x) \;=\; (1-\delta)\varphi_i(x)+\delta q\,\pi_i(x),
\]
and subtracting $\big(1-\delta(1-q)\big)\pi_i(x) = (1-\delta)\pi_i(x)+\delta q\,\pi_i(x)$ from both sides,
the terms $\delta q\,\pi_i(x)$ cancel and we obtain
\[
\big(1-\delta(1-q)\big)\big(\ev_i(g_x)-\pi_i(x)\big) = (1-\delta)\big(\varphi_i(x)-\pi_i(x)\big)>0
\]
by Lemma~\ref{lem:pareto}.
As $1-\delta(1-q)>0$, we get $\ev_i(g_x)>\pi_i(x)$ for every $i$.
\end{proof}

\begin{corollary}[Dichotomy]\label{cor:dichotomy}
Assume either (i) every grand state is absorbing,
or (ii) termination is unilateral and Criterion~\ref{cond:ties} holds.
Then every closed communicating class $C$ of an equilibrium either equals $\{h\}$ for some $h \in \G$,
or satisfies $C\subseteq\X\setminus\G$ and $|C|\ge2$.
Consequently the process is absorbed at a grand state almost surely unless some closed communicating class lies entirely in $\X\setminus\G$.
\end{corollary}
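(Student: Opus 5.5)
The plan is to sort closed communicating classes by whether they meet $\G$, and treat the two kinds separately. Fix an equilibrium $p$ (so \ref{E1} and \ref{E2} hold) and a closed communicating class $C$. Either $C\cap\G\ne\emptyset$ or $C\subseteq\X\setminus\G$.

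\emph{First case: $C$ contains a grand state $h$.} I will show that $h$ is absorbing, so that closedness forces $C=\{h\}$. Under hypothesis (i) this is immediate. Under hypothesis (ii) I will use Theorem~\ref{thm:persist-unil}, which requires $h$ to be entered from some $x\ne h$.
\begin{itemize}
\item If $C=\{h\}$, then $h$ is absorbing because the class is closed, and there is nothing to prove.
\item Otherwise $C$ contains a state other than $h$. Since $C$ communicates, some path inside $C$ leads back to $h$, and its last step is a transition $p_{x\to h}>0$ with $x\ne h$.
\end{itemize}
Theorem~\ref{thm:persist-unil} then makes $h$ absorbing. So $p_{h\to h}=1$, and since $C$ communicates, $C=\{h\}$.

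\emph{Second case: $C\subseteq\X\setminus\G$.} It only remains to exclude $|C|=1$. If $C=\{x\}$ with $x$ non-grand, then $x$ is absorbing. Theorem~\ref{thm:absorbing-grand}, under (i) or (ii) as the case may be, gives $x\in\G$, which is a contradiction. One check is needed here: Theorem~\ref{thm:absorbing-grand}(ii) uses Criterion~\ref{cond:ties} only at absorbing non-grand states, and that is exactly this situation.

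\emph{The consequence.} A finite Markov chain enters the union of its closed communicating classes almost surely and never leaves it. If no closed class lies inside $\X\setminus\G$, then by the dichotomy every closed class is a singleton $\{h\}$ with $h$ grand and absorbing. So the process is absorbed at a grand state almost surely.

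The only step needing real care is the first case under (ii). Theorem~\ref{thm:persist-unil} covers only grand states that are actually entered, so the argument must separate the trivial singleton class from a larger class, where an entering transition exists. Everything else is bookkeeping with earlier results.
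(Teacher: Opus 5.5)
Your proof is correct and follows the paper's argument. A singleton class is handled by Theorem~\ref{thm:absorbing-grand}, and a larger class containing a grand state is excluded by hypothesis (i), or under (ii) by applying Theorem~\ref{thm:persist-unil} to an entering transition; the final claim is the standard absorption fact for finite chains, and splitting on whether $C$ meets $\G$ rather than on $|C|$ makes no difference.
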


\begin{proof}
If $C=\{x\}$ then $x$ is absorbing, hence grand by Theorem~\ref{thm:absorbing-grand}.
If $|C|\ge2$ and $h\in C\cap\G$, then $h$ is absorbing---by hypothesis (i),
or in case (ii) because $h$ is entered from another state of $C$ with positive probability and Theorem~\ref{thm:persist-unil} applies---contradicting $|C|\ge2$ and communication.
The last sentence holds because a finite Markov chain enters the union of its closed communicating classes almost surely.
\end{proof}

\subsection*{The price of no-agreement}

Before turning to that, it is worth recording what failure costs, since the answer is exact and holds
under either termination rule and either sharing rule.
Write $T := \inf\{t\ge0 : X_t \in \G\}$ for the time at which the grand coalition forms, with
$T=\infty$ if it never does, and adopt the convention $\delta^{\infty}=0$.

\begin{proposition}[The price of no-agreement]\label{prop:price}
For every state $x$ and every $\delta\in(0,1)$,
\[
   V^\ast - W(x) \;=\; (1-\delta)\,\E_x\Big[\sum_{t<T}\delta^{t}\big(V^\ast-V(X_t)\big)\Big],
\]
and consequently
\[
   \gamma\;\E_x\big[1-\delta^{T}\big] \;\le\; V^\ast - W(x) \;\le\; \big(V^\ast-V_{\min}\big)\,\E_x\big[1-\delta^{T}\big],
   \qquad V_{\min}:=\min_{y\in\X}V(y).
\]
\end{proposition}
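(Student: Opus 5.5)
The plan is to start from the series representation of Lemma~\ref{lem:value}, summed over players as in the proof of Lemma~\ref{lem:welfare}: $W(x)=(1-\delta)\E_x[\sum_{t\ge0}\delta^t V(X_t)]$. Since $(1-\delta)\sum_{t\ge0}\delta^t=1$, this gives $V^\ast-W(x)=(1-\delta)\E_x[\sum_{t\ge0}\delta^t(V^\ast-V(X_t))]$. All sums converge absolutely because $V$ is bounded on the finite set $\X$, so exchanging expectation and summation is harmless.

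The key step is to show that every term with $t\ge T$ vanishes. Here one must be careful. The statement is claimed for \emph{every} process and both termination rules, and it carries no equilibrium hypothesis. So I will not invoke persistence (Theorems~\ref{thm:persist-unan} and~\ref{thm:persist-unil}) to argue that $X_t\in\G$ for all $t\ge T$. Indeed, for an arbitrary kernel the chain may leave $\G$ again, and then the identity as written would fail.

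I would therefore read the proposition within the standing context of an equilibrium. The sentence preceding it ties it to the dichotomy, and Table~\ref{tab:map} lists \ref{E1} as the axiom it uses. With \ref{E1} in force, persistence follows in both cases:
\begin{itemize}
\item under unanimous termination, every grand state is absorbing by Theorem~\ref{thm:persist-unan};
\item under unilateral termination, for $x\notin\G$ the state $X_T$ is entered from $X_{T-1}\ne X_T$ with positive probability, so it is absorbing by Theorem~\ref{thm:persist-unil}.
\end{itemize}
This leaves the case $T=0$ under unilateral termination with $x\in\G$ never entered, where persistence can fail. I would handle it either by noting that the identity then should read with the sum over all $t$ where $X_t\notin\G$, or by restricting to reached grand states. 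My first try is to phrase the key step as ``$V(X_t)=V^\ast$ for all $t\ge T$ almost surely'', justified by the two theorems. Settling this hypothesis is the main obstacle, and it is a question of scope rather than computation. On $\{T=\infty\}$ nothing needs to be said.

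Once this is done, Assumption~\ref{ass:eff} gives $V^\ast-V(X_t)=0$ for $t\ge T$, which yields the identity. For the bounds, on $t<T$ we have $X_t\notin\G$, so $\gamma\le V^\ast-V(X_t)\le V^\ast-V_{\min}$. Summing gives $(1-\delta)\sum_{t<T}\delta^t=1-\delta^T$, with $\delta^\infty=0$ covering $T=\infty$. Taking expectations then yields both inequalities.
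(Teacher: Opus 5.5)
Your proof is correct and is essentially the paper's: the series representation, the vanishing of the summands for $t\ge T$ by Theorems~\ref{thm:persist-unan} and~\ref{thm:persist-unil}, and the termwise bounds $\gamma\le V^\ast-V(X_t)\le V^\ast-V_{\min}$ for $t<T$. Your scope caveat is also right, and the paper's own proof glosses over it: the identity needs \ref{E1}, and under unilateral termination a never-entered, non-absorbing grand state $x$ is a genuine exception, because there $T=0$ makes the right-hand side vanish while Lemma~\ref{lem:welfare} gives $V^\ast-W(x)>0$ (every moving move at $x$ leaves $\G$ by Lemma~\ref{lem:entry}). So state the restriction to $x\notin\G$ or to entered grand states outright, rather than justifying $V(X_t)=V^\ast$ for $t\ge T$ by the two theorems.
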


\begin{proof}
By \eqref{eq:series}, $V^\ast-W(x)=(1-\delta)\E_x[\sum_{t\ge0}\delta^{t}(V^\ast-V(X_t))]$, and the
summand vanishes for $t \ge T$ because $\G$ is reached at $T$ and, once reached, never left
(Theorems~\ref{thm:persist-unan} and~\ref{thm:persist-unil}).
For $t<T$ the summand lies in $[\gamma,V^\ast-V_{\min}]$ by Assumption~\ref{ass:eff}, and
$(1-\delta)\sum_{t<T}\delta^{t}=1-\delta^{T}$.
\end{proof}

So the price of no-agreement is the payoff shortfall multiplied by $\E_x[1-\delta^{T}]$, a discounted
measure of delay, and nothing else.
Three readings.
If the grand coalition forms almost surely then $\E_x[1-\delta^{T}]\to0$ as $\delta\to1$, so the price
vanishes however long the delay: patience is not merely compatible with efficiency, it purchases it
at a rate set by the tail of $T$.
If it forms with probability less than one, then $\E_x[1-\delta^{T}]\ge\Prob_x(T=\infty)$ uniformly in
$\delta$, and the price does not vanish; the undiscounted price is exactly the shortfall weighted by
the probability of never agreeing.
And whenever $T$ is bounded, as under Theorem~\ref{thm:fgains} where $T\le n-1$, the bound is linear:
\[
   V^\ast - W(x) \;\le\; \big(V^\ast-V_{\min}\big)\big(1-\delta^{\,n-1}\big)
   \;\le\; (n-1)(1-\delta)\big(V^\ast-V_{\min}\big).
\]

A second bound, available only under far-sighted sharing, reads the price off the distributional
conflict rather than off the delay.
If every grand state is absorbing and $C\subseteq\X\setminus\G$ is a closed communicating class, then
combining Proposition~\ref{prop:farsighted} with Lemma~\ref{lem:objection} gives, for each $x\in C$
and the player $i$ the latter supplies,
\[
   w_i\big(V^\ast-W(x)\big) \;=\; \ev_i(g_x)-\ev_i(x) \;\le\; \ev_i(z)-\ev_i(x)
   \;\le\; \max_{j\in N}\Big(\max_{C}\ev_j-\min_{C}\ev_j\Big),
\]
so that
\[
   V^\ast - W(x) \;\le\; \frac{1}{w_{\min}}\,\max_{j\in N}\Big(\max_{C}\ev_j-\min_{C}\ev_j\Big).
\]
Inefficiency can persist only to the extent that there is distributional conflict to sustain it: the
welfare lost is bounded by the spread of the players' evaluations across the states of the cycle,
divided by the smallest bargaining weight.
Theorem~\ref{thm:fs-arrival} is this inequality read in the other direction, against the lower bound
$V^\ast-W(x)\ge\gamma$.

How large can the price be?
Both bounds above involve endogenous quantities---the delay $T$, or the spread of evaluations---so it
is natural to ask for a bound in the primitives, and the obvious candidate is the efficiency gap
$\gamma$ itself, since $V^\ast-W(x)\ge\gamma$ on any non-grand closed class.
There is no such bound.

\begin{proposition}[The price is unbounded in units of $\gamma$]\label{prop:price-unbounded}
For $n=4$ there is no constant $c$ with $V^\ast-\bar V \le c\,\gamma$ for every equilibrium with a
non-grand closed class $C$ and every $\delta$, where $\bar V := \sum_{x\in C}\mu(x)V(x)$.
\end{proposition}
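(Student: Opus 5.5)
The plan is to exhibit, for $n=4$, a one-parameter family of payoff structures $\pi^{(\eta)}$, $\eta\in(0,\eta_0]$, each satisfying Assumptions~\ref{ass:eff} and~\ref{ass:share} with Condition~\ref{cond:nash} under unanimous termination. Each member should admit an equilibrium $p^{(\eta)}$ with a non-grand closed class $C$ on which $V^\ast-\bar V$ stays bounded below by a fixed constant, while $\gamma^{(\eta)}\to0$. Since $\gamma$ is a minimum over \emph{all} non-grand states, it can be driven down by a single state $z\notin C$ that is nearly efficient. Nothing then forces the states of the closed class to be nearly efficient.

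\emph{Step 1: a base equilibrium.} I would start from a computed equilibrium with a non-grand closed class $C$, as in the four-player searches reported in Table~\ref{tab:map}. By Theorem~\ref{thm:nodetcycle} this class must be traversed at random rather than on a fixed schedule. I would record it as an exact certificate: the transition kernel $p$, the evaluations $\ev$ obtained by solving \eqref{eq:bellman} exactly in rationals, and the complete list of inequalities behind \ref{E1}--\ref{E3} at every state. These are profitability against $\evp$, the domination comparisons including the stays, and the favourite sets. All of these inequalities should be strict except where a stay is compared with its own status quo.

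\emph{Step 2: the perturbation.} Next I would choose a non-grand state $z$ that is neither in $C$ nor the target of any move available at a state of $C$. A natural candidate is a deep hierarchy such as $\{\{A,B\},\{A,B,C\}\}$. I would raise $V(z)$ towards $V^\ast$ by setting $V(z)=V^\ast-\eta$. I would then propagate this change upward through Assumption~\ref{ass:share} with Nash shares, which determines $\pi$ at every state having $z$ as a reference state, in particular at $g_z$ via Lemma~\ref{lem:pareto}. Payoffs at all other states stay fixed, so $\gamma^{(\eta)}\le\eta$. Because $C$ is closed, $\ev|_C$ depends only on $\pi|_C$ and $p|_C$. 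Hence $\bar V$ is unchanged by Lemma~\ref{lem:occupation}, and $V^\ast-\bar V$ stays at its base value.

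\emph{Step 3: re-verification, the main obstacle.} The conditions \ref{E1}--\ref{E3} at states of $C$ also compare with targets outside $C$, and those targets have evaluations that may shift when $z$'s payoffs change, through paths that pass near $z$. To control this, I would fix the kernel outside $C$ by routing the process away from $z$: from each state adjacent to $C$ it should merge towards a grand state along a path avoiding $z$. Then the evaluations at every target of a move from $C$ are literally independent of $\eta$, and the certificate of Step~1 at states of $C$ survives verbatim. What remains is to re-establish \ref{E1}--\ref{E3} at $z$ and at the states that depend on it. These conditions are linear inequalities in $\eta$ and in the chosen kernel entries, so for fixed candidate transitions they can be checked by sign analysis or by a small parametric linear program, as the paper does elsewhere.

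The case I expect to fail first is the state $z$ itself. With $V(z)$ close to $V^\ast$, players may want to move to $z$ from some state adjacent to $C$, which would break the independence argument. If that happens, I would instead pick $z$ with the relative shares $\psi(z)$ chosen adversarially, so that at every adjacent state some relevant player's stay dominates the move into $z$. Concluding, $(V^\ast-\bar V)/\gamma^{(\eta)}\to\infty$ as $\eta\to0$, so no constant $c$ works.
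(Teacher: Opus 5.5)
\textbf{The gap is Step~1, and the whole statement rests on it.} You need an equilibrium in the sense of Definition~\ref{def:eq} (stays as moves, profitability measured against $\evp_i(x)$) that has a non-grand closed class, and the paper supplies none.
\begin{itemize}
\item The four-player searches of Section~\ref{sec:numerics} found no cycling equilibrium.
\item The only non-grand closed class on record is the rotating-pairs cycle of Proposition~\ref{prop:counterexample}. It was built under the earlier notion of profitability, and Theorem~\ref{thm:nodetcycle} excludes it under either termination rule (Remark~\ref{rem:cyclestatus}).
\item The noisy oscillation of Section~\ref{sec:logit} dies out as $\beta\to\infty$.
\end{itemize}
You correctly infer from Theorem~\ref{thm:nodetcycle} that your class would have to be traversed at random. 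But whether any such class exists is exactly the arrival question the paper leaves open. It cannot be treated as a preliminary either: if no four-player equilibrium had a non-grand closed class, every constant $c$ would satisfy the bound vacuously, and the proposition would be false. So your argument can at best give the conditional ``if one such equilibrium exists, then $(V^\ast-\bar V)/\gamma$ is unbounded''. The missing ingredient is that first equilibrium.

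\textbf{Relation to the paper's route.} Granting a base, your reduction is the paper's mechanism in the dual normalisation. The paper pins $\gamma=1$ at a coalition structure off the cycle and maximises $V^\ast-\bar V$ over payoffs sustaining the cycle, finding growth linear in the payoff box. You instead fix $V^\ast-\bar V$ and send $\gamma\to0$. Since \ref{E1}--\ref{E3} and Assumptions~\ref{ass:eff} and~\ref{ass:share} are preserved under rescaling $\pi$, the two are equivalent. Note, though, that the paper's computation holds the deterministic cycle and its process fixed. Under Definition~\ref{def:eq} it therefore lacks the same base, and the revision note lists it as not re-checked.

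\textbf{Step~3 needs repair even given a base.} The kernel outside $C$ is not yours to choose: it must itself satisfy \ref{E1}--\ref{E3}, and \ref{E3} forces every favourite to be realised. Moreover, rerouting at a target of a move from $C$ changes that target's evaluation, so the Step~1 certificate does not survive verbatim. A cleaner version would go as follows.
\begin{itemize}
\item Take $z$ with $|P(z)|=2$, as yours has, so that $g_z$ is the only state using $z$ as a reference state.
\item Among such states, pick one the base kernel never enters, if there is one. Then, by Lemma~\ref{lem:entry}, only $\ev(z)$ and $\ev(g_z)$ move.
\item Only $z$, $g_z$ and the states with a move into $z$ then need rechecking. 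At the last of these, give the block whose signature creates $z$ a large negative surplus at $z$ and place the gain on the players outside it. Then a relevant player's stay dominates entry whatever the kernel at $z$.
\end{itemize}
Finally, your Step~1 inequalities cannot all be strict, since $\sum_y p_{x\to y}\bigl(\ev_j(y)-\evp_j(x)\bigr)=0$. At a state of $C$ from which the process moves deterministically, profitability holds with equality. This is harmless here only because nothing in $C$ changes.
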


The statement is established by computation rather than by a construction, and the computation is
described in Section~\ref{sec:numerics}; we record here what it produces.
Fixing $\gamma=1$ and maximising $V^\ast-\bar V$ over all payoff structures for which the cycle of
Proposition~\ref{prop:counterexample} is a closed class, the optimum is attained on the boundary of
whatever box the payoffs are confined to, and the optimal value scales linearly with that box: it is
$2.42$, $2.29$ and $2.45$ times the box radius at $\delta=0.3$, $0.5$ and $0.75$ respectively, across
box radii from $10$ to $10^4$.
Inefficiency can therefore persist at any multiple of the smallest efficiency gain the model admits.

Two features of the maximiser are worth recording.
The total payoff is \emph{constant} across the four states of the cycle, so the process does not
oscillate between good and bad states but sits at a uniformly inefficient level;
and the price reaches the fraction $4/7$ of the largest inefficiency $V^\ast-V_{\min}$ available in
the payoff structure, at every box radius and every discount factor tried.
Whether $4/7$ is the exact supremum of $(V^\ast-\bar V)/(V^\ast-V_{\min})$ for $n=4$, and what it is
for general $n$, we have not determined.

\begin{remark}
This is the reason the bounds of Proposition~\ref{prop:price} and of Section~\ref{sec:pivot} are
stated in terms of delay and of evaluation spreads rather than in terms of the primitives.
It is not that a primitive bound was not sought; there is none of the natural shape.
What the two bounds say is that inefficiency is paid for either in time---$\E_x[1-\delta^{T}]$---or in
distributional conflict---the spread of $\ev$ over the class---and the computation above shows that
neither can be traded for a bound in $\gamma$.
\end{remark}

This dichotomy is the spine of Figure~\ref{fig:spine}, and the rest of the paper walks its right-hand
branch.
Section~\ref{sec:arrival} excludes perpetual cycling under various hypotheses,
Section~\ref{sec:cycles} shows that it cannot be excluded outright,
and Section~\ref{sec:pivot} shows that under a different sharing rule it can be excluded in the limit
that matters.

\section{Arrival: when it holds}\label{sec:arrival}

By Corollary~\ref{cor:dichotomy} the process either settles at a grand state or cycles for ever among
non-grand states, so from here on the question is whether the second can happen.
This section collects the cases in which it cannot: small discount factors, three players, and---for
any number of players and any discount factor---a condition saying that no player prefers any
non-grand state to any grand one.
Section~\ref{sec:cycles} shows that none of these hypotheses is decorative, by exhibiting an
equilibrium that does cycle.

\subsection{Small discount factors}

\begin{theorem}[Small $\delta$]\label{thm:smalldelta}
Let $R := \max_i\max_{y,z}|\pi_i(y)-\pi_i(z)|$, let $c := \min\{\sigma,\,s_{\min}\gamma\}$,
and let $\bar\delta := c/(2R+c)$.
Assume Condition~\ref{cond:gains}, let $p$ satisfy \ref{E1} and \ref{E2},
and let either termination variant be in force.
Then for every $\delta<\bar\delta$: no termination is realised, the process is acyclic,
and it is absorbed at a grand state after at most $n-1$ moves, almost surely.
\end{theorem}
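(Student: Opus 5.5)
The plan is to show that for $\delta<\bar\delta$ every termination is dominated by a stay. The rest then follows from the move structure and from Theorem~\ref{thm:absorbing-grand}(i).

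\emph{Step 1: a termination costs its bottom-node members at least $\sigma$ in static payoff.} Let $m$ be a termination at $x$ with bottom node $K$. Let $K=K_r\subsetneq K_{r-1}\subsetneq\dots\subsetneq K_0=S(x,K)$ be the chain of nodes of $x$ containing $K$, so that $\tau(m)=x\setminus\{K_0,\dots,K_r\}$. These nodes can be deleted one at a time from the top. Once $K_0,\dots,K_{l-1}$ are gone, $K_l$ is top-level in the current state $x_l:=x\setminus\{K_0,\dots,K_{l-1}\}$, and $x_{l+1}=(x_l)^{-K_l}$. Condition~\ref{cond:gains} therefore applies at each step, and every $i\in K\subseteq K_l$ loses at least $\sigma$ per step. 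Telescoping over the chain gives
\[
\pi_i(\tau(m))\le\pi_i(x)-\sigma\qquad\text{for every } i\in K .
\]

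\emph{Step 2: for small $\delta$, evaluations track static payoffs.} By \eqref{eq:bellman}, $\ev_i(y)=(1-\delta)\pi_i(y)+\delta\evp_i(y)$. Here $\evp_i(y)$ is a convex combination of values $\pi_i(\cdot)$, by Lemma~\ref{lem:value}. Hence $|\ev_i(y)-\pi_i(y)|\le\delta R$ for every $y$. Combining this with Step~1, for $i\in K$,
\[
\ev_i(\tau(m))-\ev_i(x)\le-\sigma+2\delta R .
\]
The hypothesis $\delta<c/(2R+c)$ is equivalent to $2\delta R<c(1-\delta)\le\sigma$, so this difference is negative. Every $i\in K$ lies in $R(m)$: under unilateral termination the terminator is in $K$, and under unanimous termination $K\subseteq S(x,K)=R(m)$. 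So for such an $i$ the stay $s^i_x$ dominates $m$, because $R(s^i_x)=\{i\}\subseteq R(m)$ and $x\succ^x_i\tau(m)$. By \ref{E1}, no termination is ever realised.

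\emph{Step 3: merges only, then absorption.} Every realised moving move is now a merge, and each merge lowers $|P(\cdot)|$ by one. Hence the process has no cycles other than self-loops, and at most $n-1$ moves are made along any path. At a grand state the only moving moves are terminations (Lemma~\ref{lem:entry}), and none is realised, so every grand state is absorbing under either termination rule. Theorem~\ref{thm:absorbing-grand}(i) then shows that no non-grand state is absorbing. On a finite state space, a chain whose only cycles are self-loops and whose only absorbing states are grand is absorbed in $\G$ almost surely, after at most $n-1$ moves.

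The step I expect to need most care is Step~1. The termination removes a whole chain of nodes, not a single top-level node, and Assumption~\ref{ass:share} and Condition~\ref{cond:gains} speak only of top-level nodes. The telescoping works because each node in the chain becomes top-level once its ancestors are deleted. The term $s_{\min}\gamma$ in $c$ does not seem necessary on this route, since it enters only through Theorem~\ref{thm:absorbing-grand}. It would be needed if one instead wanted a direct check that merge-all is profitable, via $\pi_i(g_x)-\pi_i(x)\ge s_{\min}\gamma$ from Lemma~\ref{lem:pareto}.
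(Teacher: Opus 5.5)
Your proof is correct. Its core is the paper's own: you telescope the termination down its chain of ancestors, so that each node is top-level when it is removed, and you use the estimate $|\ev_i-\pi_i|\le\delta R$. It departs from the paper in two places, both to its advantage.

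\emph{Excluding terminations.} The paper asserts that a termination is ``not profitable.'' Under Definition~\ref{def:rationality}, however, profitability is measured against $\evp_i(x)$. The inequality $\ev_i(\tau(m))<\ev_i(x)$ does not by itself give $\ev_i(\tau(m))<\evp_i(x)$, since $\evp_i(x)$ averages over whatever other moves are realised at $x$. Your argument is that the stay $s^i_x$ of some $i\in K\cap R(m)$ dominates $m$. That is the argument that actually goes through under the current definitions.

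\emph{Absorption.} The paper re-derives profitability of merge-all at a putative non-grand absorbing $x$ directly from the estimate, using $\pi_i(g_x)-\pi_i(x)\ge s_{\min}\gamma$. That argument is self-contained and does not need the grand states to be absorbing. You instead note that, once terminations are excluded, every grand state is absorbing by \ref{E1} and Lemma~\ref{lem:entry}. You then invoke Theorem~\ref{thm:absorbing-grand}(i), where $\ev_i(g_x)=\pi_i(g_x)>\pi_i(x)=\ev_i(x)$ holds exactly. This is valid. As you say, it makes the $s_{\min}\gamma$ term in $c$ redundant, so the conclusion holds for all $\delta<\sigma/(2R+\sigma)$.

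There are two slips of wording, neither affecting the argument. Under unilateral termination it is not true that every $i\in K$ lies in $R(m)$; what you need, and what your justification shows, is $R(m)\cap K\ne\emptyset$. A merge lowers $|P(\cdot)|$ by $|S|-1\ge1$, not necessarily by one.
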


\begin{proof}
\emph{Step 1.} By \eqref{eq:bellman} and Lemma~\ref{lem:value},
$|\ev_i(x)-\pi_i(x)| = \delta|\sum_y p_{x\to y}\ev_i(y)-\pi_i(x)| \le \delta R$.
Hence for any states $z,z'$ and any $c'>0$,
\begin{equation}\label{eq:close}
\pi_i(z)-\pi_i(z')\le -c' \ \Longrightarrow\ \ev_i(z)-\ev_i(z')\le -c'+2\delta R,
\end{equation}
which is negative as soon as $2\delta R<c'$.

\emph{Step 2 (terminations lose statically).} Let $m$ be a termination at $x$ with bottom node $K_0$,
whose ancestors in $x$ are $K_0\subsetneq K_1\subsetneq\cdots\subsetneq K_r=S(x,K_0)$, and let $y:=\tau(m)$.
Put $x^{(0)}:=x$ and $x^{(s+1)}:=x^{(s)}\setminus\{K_{r-s}\}$ for $0\le s\le r$, so $x^{(r+1)}=y$.
At each step $K_{r-s}$ is a maximal element of $x^{(s)}$,
so $x^{(s+1)}=(x^{(s)})^{-K_{r-s}}$ is exactly the reference state of Assumption~\ref{ass:share}.
Every $i \in K_0$ lies in every $K_j$, so by Condition~\ref{cond:gains}
\[
\pi_i(y)-\pi_i(x) = \sum_{s=0}^{r}\big(\pi_i(x^{(s+1)})-\pi_i(x^{(s)})\big)\ \le\ -(r+1)\sigma\ \le\ -\sigma \qquad (i \in K_0).
\]
$R(m)$ meets $K_0$ under either termination rule,
so by \eqref{eq:close} with $(z,z',c')=(y,x,\sigma)$ the move is not profitable when $2\delta R<\sigma$.

\emph{Step 3.} By Step 2 and \ref{E1}, only merges are realised;
a merge strictly decreases $|P(x)| \in\{1,\dots,n\}$,
so after at most $n-1$ realised moves the process is at an absorbing state.

\emph{Step 4.} Let $x$ be absorbing and suppose $x \notin \G$.
By Lemma~\ref{lem:pareto}, $\pi_i(g_x)-\pi_i(x)\ge s_{\min}\gamma$ for all $i$,
so by \eqref{eq:close} with $(z,z',c')=(x,g_x,s_{\min}\gamma)$ the merge-all move is profitable since $2\delta R<2cR/(2R+c)\le c\le s_{\min}\gamma$,
contradicting Lemma~\ref{lem:nonstalling}.
Hence $x\in\G$.
\end{proof}

Step 4 uses neither termination variant nor Criterion~\ref{cond:ties}.

\subsection{The objection lemma}

The next lemma is the engine of all remaining arrival results.
It says that if the grand coalition is \emph{not} formed from a state of a recurrent class,
some player must be able to point at a state \emph{inside that class} that she values at least as highly as the grand state.

\begin{lemma}[Objection lemma]\label{lem:objection}
Let $p$ be an equilibrium and $C$ a closed communicating class with $C\subseteq\X\setminus\G$.
Then for every $x \in C$ there are $i \in N$ and $z \in C$ with
\[
\ev_i(z)\ \ge\ \ev_i(g_x).
\]
\end{lemma}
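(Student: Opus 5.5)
The plan is to argue by contradiction, using the merge-all move $m_0$ at $x$ (target $g_x$, relevant set $R(m_0)=N$) together with \ref{E1} and \ref{E3}. Fix $x\in C$ and suppose the conclusion fails, so that
\[
\ev_i(g_x)>\ev_i(z)\qquad\text{for all } i\in N \text{ and all } z\in C .
\]

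\emph{Step 1: $m_0$ is profitable.} Since $C$ is closed, $p_{x\to y}>0$ implies $y\in C$. Hence $\evp_i(x)=\sum_y p_{x\to y}\ev_i(y)$ is a convex combination of values $\ev_i(z)$ with $z\in C$. Under the contradiction hypothesis this gives $\evp_i(x)<\ev_i(g_x)=\ev_i(\tau(m_0))$ for every $i\in N=R(m_0)$. So $m_0$ is profitable in the sense of Definition~\ref{def:rationality}(a).

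\emph{Step 2: pass to an undominated move.} Lemma~\ref{lem:domination}, applied to the profitable move $m_0$, yields $m''\in\Cset(x)$ with $R(m'')\subseteq N$ and $\tau(m'')\succeq^{x}_i g_x$ for all $i\in R(m'')$. If $m_0$ is itself undominated, $m''=m_0$ is allowed. In either case the preorder gives $\ev_i(\tau(m''))\ge\ev_i(g_x)$ for $i\in R(m'')$. Fix one $i\in R(m'')$, which is possible because relevant sets are non-empty.

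\emph{Step 3: invoke \ref{E3}.} Since $\Cset(x)$ is finite and non-empty, take $m^\ast$ to be a $\succeq^{x}_i$-maximal element of $\Cset(x)$. Then $m^\ast\in\Fset(x)$ with witness $i$, and
\[
\ev_i(\tau(m^\ast))\ \ge\ \ev_i(\tau(m''))\ \ge\ \ev_i(g_x).
\]
By \ref{E3}, $p_{x\to\tau(m^\ast)}>0$. If $\tau(m^\ast)=x$, then $\tau(m^\ast)\in C$ because $x\in C$; otherwise $\tau(m^\ast)\in C$ because $C$ is closed. In both cases $z:=\tau(m^\ast)\in C$ satisfies $\ev_i(z)\ge\ev_i(g_x)$. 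This contradicts the hypothesis, and in fact exhibits the required pair $(i,z)$ directly.

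The main point needing care is Step 2. The merge-all move may be dominated, and the dominating move's target need not lie in $C$ a priori, since unrealised moves can point outside $C$. This is why the argument routes through a favourite in $\Fset(x)$: \ref{E3} forces that favourite to be realised, and closedness of $C$ then places its target inside $C$. One must also check that the tie-breaking in $\succeq^{x}_i$ never reverses the weak inequality on $\ev_i$; it does not, because $y\succeq^{x}_i z$ always implies $\ev_i(y)\ge\ev_i(z)$.
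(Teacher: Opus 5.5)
Your proof is correct and is essentially the paper's argument: Step~1 is the contrapositive of the paper's Case~1 (merge-all unprofitable), and Steps~2--3 are its Case~2, passing through Lemma~\ref{lem:domination} to a $\succeq^{x}_i$-maximal element of $\Cset(x)$ that \ref{E3} forces to be realised, hence to lie in $C$. The one streamlining is that you use the weak output of Lemma~\ref{lem:domination} (which allows $m''=m_0$), whereas the paper splits Case~2 into dominated and undominated subcases and uses $g_x\notin C$ to get $m_0\notin\Fset(x)$; your version needs neither, at no cost, since the statement only asks for $\ge$.
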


\begin{proof}
Fix $x\in C$ and let $m_0$ be the merge-all move at $x$, so $\tau(m_0)=g_x$ and $R(m_0)=N$.

\emph{Case 1: $m_0$ is not profitable.} Then $\ev_i(g_x)<\evp_i(x)$ for some $i \in N$.
Since $C$ is closed and $x\in C$, every $y$ with $p_{x\to y}>0$ lies in $C$, so $\evp_i(x)$ is an average of values $\ev_i(y)$ with $y\in C$;
taking $z\in C$ attaining the largest of them gives $\ev_i(z)\ge \evp_i(x)>\ev_i(g_x)$, which is more than required.

\emph{Case 2: $m_0$ is profitable.} Since $g_x\notin C$ and $C$ is closed, $p_{x\to g_x}=0$,
so by \ref{E3} $m_0\notin\Fset(x)$.
If $m_0$ is dominated,
Lemma~\ref{lem:domination} yields $m''\in\Cset(x)$ and (any) $i \in R(m'')$ with $\ev_i(\tau(m''))>\ev_i(\tau(m_0))=\ev_i(g_x)$.
If $m_0$ is undominated, then $m_0\in\Cset(x)\setminus\Fset(x)$,
so by Definition~\ref{def:rationality}(c) there are $i \in N$ and $m''\in\Cset(x)$ with $\ev_i(\tau(m''))>\ev_i(g_x)$.
In both cases fix such $i$ and $m''$, and let $m^\ast$ maximise $\ev_i(\tau(\cdot))$ over $\Cset(x)$.
Then $m^\ast\in\Fset(x)$, so $p_{x\to\tau(m^\ast)}>0$ by \ref{E3},
so $z:=\tau(m^\ast)\in C$ because $C$ is closed; and $\ev_i(z)\ge\ev_i(\tau(m''))>\ev_i(g_x)$.
\end{proof}

\subsection{Three players}

\begin{theorem}[$n=3$]\label{thm:n3}
Let $n=3$, let $p$ satisfy \ref{E1} and \ref{E2}, and assume either that termination is unanimous,
or that it is unilateral and Criterion~\ref{cond:ties} holds.
Then for every $\delta\in(0,1)$ the process is absorbed at a grand state almost surely.
\end{theorem}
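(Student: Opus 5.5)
The plan is to reduce the claim to Corollary~\ref{cor:dichotomy} and then show that for $n=3$ the moves among non-grand states are too few to carry a cycle.

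Under either hypothesis of the theorem, Corollary~\ref{cor:dichotomy} applies. Case (i) is unanimous termination, where every grand state is absorbing by Theorem~\ref{thm:persist-unan}. Case (ii) is unilateral termination together with Criterion~\ref{cond:ties}. The corollary needs only \ref{E1} and \ref{E2}, via Theorem~\ref{thm:absorbing-grand}. It shows that every closed communicating class is either a single grand state or a set $C\subseteq\X\setminus\G$ with $|C|\ge2$. Since a finite chain enters its closed classes almost surely, it suffices to exclude the second kind.

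Next I enumerate the non-grand states for $N=\{A,B,C\}$. A laminar family of subsets of size at least $2$ that does not contain $N$ can hold at most one pair, because two distinct pairs intersect without being nested. So $\X\setminus\G=\{\emptyset,\{AB\},\{AC\},\{BC\}\}$. I then list the moving moves whose source and target are both non-grand.
\begin{itemize}
\item At $\emptyset$ the merges either produce a pair $K$, with relevant set $K$, or produce $N$, which leads to a grand state.
\item At $\{K\}$ the only merge is merge-all, leading to a grand state.
\item At $\{K\}$ the only termination has bottom node $K$ and target $\emptyset$. Its relevant set is $\{i\}$ with $i\in K$ (unilateral) or $K$ itself (unanimous); in both cases it is a nonempty subset of $K$.
\end{itemize}
Hence the transition graph restricted to non-grand states is a star centred at $\emptyset$, with edges $\emptyset\leftrightarrow\{K\}$ only.

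Now suppose $C\subseteq\X\setminus\G$ is closed and communicating with $|C|\ge2$. Since $C$ has at least two states of the star and they communicate, $C$ must contain $\emptyset$ and some $\{K\}$. It also needs $p_{\emptyset\to\{K\}}>0$ and $p_{\{K\}\to\emptyset}>0$, because the only route between these two states inside $\X\setminus\G$ is the direct edge. By \ref{E1}, the first transition is realised by the merge with relevant set $K$. The second is realised by a termination whose relevant set is a nonempty subset of $K$. These relevant sets intersect, contradicting Lemma~\ref{lem:reversal}. Therefore no such $C$ exists, and the process is absorbed at a grand state almost surely.

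The only real work is the enumeration of states and moves, that is, checking that no move leads from one pair state to another and that every path between non-grand states passes through $\emptyset$. Once that combinatorial check is done, the reversal lemma closes the argument and no quantitative estimate is needed.
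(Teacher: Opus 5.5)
Your proposal is correct and follows the paper's proof essentially step for step. Like the paper, you reduce to excluding a non-grand closed class via Corollary~\ref{cor:dichotomy}, observe that the moves among the four non-grand states form a star centred at $\emptyset$, and then apply Lemma~\ref{lem:reversal} to the two realised moves $\emptyset\to\{K\}$ and $\{K\}\to\emptyset$, whose relevant sets both meet $K$.
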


\begin{proof}
Write $N=\{A,B,C\}$.
The states are $\emptyset$, the three states $\{P\}$ with $|P|=2$, $\{N\}$, and the three states $\{P,N\}$;
thus $\X\setminus\G=\{\emptyset\}\cup\{\{P\}:|P|=2\}$ has four elements.

At $\emptyset$ the available moves are the merges of two singletons (targets $\{P\}$) and of all three (target $\{N\}\in\G$);
there is nothing to terminate.
At $\{P\}$ the available moves are the merge of the two top-level nodes (target $\{P,N\}\in\G$) and the termination of $P$ (target $\emptyset$),
with relevant sets $P$ and, respectively, $\{i\}$ for some $i \in P$ or $P$.
Hence inside $\X\setminus\G$ the only edges are $\emptyset\to\{P\}$ and $\{P\}\to\emptyset$;
there is no edge $\{P\}\to\{P'\}$ for $P \ne P'$.

Suppose some closed communicating class satisfies $C\subseteq\X\setminus\G$; by Corollary~\ref{cor:dichotomy},
$|C|\ge2$, so $C$ contains some $\{P\}$ and,
since every path leaving $\{P\}$ inside $\X\setminus\G$ goes to $\emptyset$, also $\emptyset$.
As $C$ is closed and communicating and the only edges inside $\X\setminus\G$ are the ones listed,
we get $p_{\emptyset\to\{P\}}>0$ and $p_{\{P\}\to\emptyset}>0$.
Their relevant sets are $P$ and a non-empty subset of $P$, which intersect,
contradicting Lemma~\ref{lem:reversal}.
Now apply Corollary~\ref{cor:dichotomy}.
\end{proof}

\subsection{When every player prefers every grand state}

The hypothesis of the next theorem is a condition on static payoffs only, free of any sharing rule:
\begin{equation}\label{eq:prefer}
   \forall\, i \in N: \quad \min_{h\in\G}\pi_i(h) \;>\; \max_{y\in\X\setminus\G}\pi_i(y)
   \tag{$\star$}
\end{equation}
--- every player's \emph{worst} grand state is better for her than her \emph{best} non-grand state.
Under Condition~\ref{cond:nash} it is implied by a bound on the distributional stakes:

\begin{definition}
Assume Condition~\ref{cond:nash}.
The \emph{stake} of player $i$ is $S_i := \max_{x\in\X\setminus\G}\psi_i(x)-\min_{x\in\X\setminus\G}\psi_i(x)\ \ge 0$.
\end{definition}

\begin{lemma}\label{lem:stakes}
Assume Condition~\ref{cond:nash}.
If $S_i<w_i\gamma$ for every $i \in N$, then \eqref{eq:prefer} holds.
\end{lemma}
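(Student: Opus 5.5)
The plan is to reduce both sides of \eqref{eq:prefer} to the decomposition of Lemma~\ref{lem:psi}, i.e.\ to express grand-state payoffs and non-grand payoffs through $\psi_i$ and $V$, and then compare them using the stake bound and Assumption~\ref{ass:eff}.

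\emph{Grand side.} Fix $i$ and $h\in\G$, and put $x:=t(h)\notin\G$ (Lemma~\ref{lem:bijection}), so that $h=g_x$. By Lemma~\ref{lem:psi},
\[
\pi_i(h)=\varphi_i(x)=\psi_i(x)+w_iV^\ast\;\ge\;\min_{x'\in\X\setminus\G}\psi_i(x')+w_iV^\ast .
\]
As $h$ ranges over $\G$, $x$ ranges over all of $\X\setminus\G$, so this bound is exact:
\[
\min_{h\in\G}\pi_i(h)=\min_{x\in\X\setminus\G}\psi_i(x)+w_iV^\ast .
\]

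\emph{Non-grand side.} For $y\in\X\setminus\G$, Definition~\ref{def:psi} gives $\pi_i(y)=\psi_i(y)+w_iV(y)$. Assumption~\ref{ass:eff} gives $V(y)\le V^\ast-\gamma$, and $w_i>0$, so
\[
\pi_i(y)\;\le\;\max_{x\in\X\setminus\G}\psi_i(x)+w_iV^\ast-w_i\gamma .
\]

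\emph{Comparison.} Subtracting the two bounds,
\[
\min_{h\in\G}\pi_i(h)-\max_{y\in\X\setminus\G}\pi_i(y)\;\ge\;w_i\gamma-S_i\;>\;0,
\]
using the hypothesis $S_i<w_i\gamma$. This is \eqref{eq:prefer} for player $i$, and $i$ was arbitrary.

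The only point needing care is that both extrema of $\psi_i$ are taken over the same set $\X\setminus\G$, which the definition of the stake $S_i$ matches, and that the minimum over grand states is reparametrised by parents via the bijection $t$. The argument is otherwise routine, with no real obstacle.
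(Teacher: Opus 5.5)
Your proof is correct and follows essentially the same route as the paper's: both write $\pi_i(h)=\varphi_i(t(h))=\psi_i(t(h))+w_iV^\ast$ via Lemma~\ref{lem:psi}, bound $\pi_i(y)\le\psi_i(y)+w_i(V^\ast-\gamma)$ by Assumption~\ref{ass:eff}, and conclude that the difference is at least $w_i\gamma-S_i>0$. The only difference is cosmetic: you take the extrema over $\G$ and $\X\setminus\G$ separately, where the paper compares an arbitrary pair $h$, $y$ directly.
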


\begin{proof}
Let $h \in \G$ and $y \in \X\setminus\G$, and put $u := t(h) \in \X\setminus\G$.
By Lemma~\ref{lem:psi}, $\pi_i(h)=\varphi_i(u)=\psi_i(u)+w_iV^\ast$,
and $\pi_i(y)=\psi_i(y)+w_iV(y)\le\psi_i(y)+w_i(V^\ast-\gamma)$ by Assumption~\ref{ass:eff}.
Hence $\pi_i(h)-\pi_i(y)\ge \psi_i(u)-\psi_i(y)+w_i\gamma \ge w_i\gamma-S_i>0$.
\end{proof}

\begin{theorem}[Arrival]\label{thm:arrival}
Let $p$ satisfy \ref{E1}--\ref{E3} and assume
\begin{enumerate}[label=(\alph*)]
\item every grand state is absorbing; and
\item condition \eqref{eq:prefer} --- for which, by Lemma~\ref{lem:stakes},
it suffices that $S_i<w_i\gamma$ for every $i \in N$.
\end{enumerate}
Then for every $\delta\in(0,1)$ every closed communicating class is a singleton contained in $\G$;
hence the process reaches a grand state almost surely and never leaves it.
\end{theorem}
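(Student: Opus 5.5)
The plan is to argue by contradiction via the dichotomy and the objection lemma. By hypothesis (a), Corollary~\ref{cor:dichotomy}(i) applies (its only other ingredient, Theorem~\ref{thm:absorbing-grand}(i), needs \ref{E1}, \ref{E2}, and \ref{E2} follows from \ref{E3} by Lemma~\ref{lem:nonstalling}). So every closed communicating class is either a singleton $\{h\}$ with $h\in\G$, or a set $C\subseteq\X\setminus\G$ with $|C|\ge2$. It therefore suffices to exclude the second kind. The final sentence of the theorem then follows from the fact that a finite chain enters its closed classes almost surely, together with absorption at grand states.

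Suppose $C\subseteq\X\setminus\G$ is closed. Since every grand state is absorbing, $\ev_i(h)=\pi_i(h)$ for all $h\in\G$ and all $i$. For $z\in C$, closedness of $C$ and Lemma~\ref{lem:value} make $\ev_i(z)$ a convex combination of $\pi_i(y)$ over states $y$ reachable from $z$, all of which lie in $C\subseteq\X\setminus\G$. Hence
\[
\ev_i(z)\;\le\;\max_{y\in\X\setminus\G}\pi_i(y)\;<\;\min_{h\in\G}\pi_i(h)\;\le\;\pi_i(g_x)\;=\;\ev_i(g_x)
\]
for every $x,z\in C$ and every $i$, by \eqref{eq:prefer}.

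Now pick any $x\in C$ and apply Lemma~\ref{lem:objection}. It supplies some $i\in N$ and $z\in C$ with $\ev_i(z)\ge\ev_i(g_x)$, which contradicts the strict inequality above. So no non-grand closed class exists. The clause about sufficiency is simply Lemma~\ref{lem:stakes}.

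I expect no serious obstacle. The only points to check are that reachability from $z\in C$ stays inside $C$, which holds because $C$ is closed, and that the objection lemma indeed requires the full set \ref{E1}--\ref{E3}, which is assumed here. The argument is uniform in $\delta$ because nothing quantitative in $\delta$ is used.
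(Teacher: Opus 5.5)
Your proof is correct and follows the paper's argument essentially step for step. You use hypothesis (a) and Corollary~\ref{cor:dichotomy}(i) to reduce to excluding a closed class $C\subseteq\X\setminus\G$. You then bound $\ev_i$ on $C$ above by $\max_{\X\setminus\G}\pi_i$ and bound $\ev_i(g_x)=\pi_i(g_x)$ below by $\min_{\G}\pi_i$, and contradict Lemma~\ref{lem:objection} via \eqref{eq:prefer}.
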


\begin{proof}
Hypothesis (a) is hypothesis (i) of Corollary~\ref{cor:dichotomy},
so it suffices to exclude a closed communicating class $C\subseteq\X\setminus\G$.
Suppose one exists and fix $x\in C$.

For $z \in C$ and $i \in N$,
Lemma~\ref{lem:value} and closedness of $C$ give that $\ev_i(z)$ is a convex combination of values $\pi_i(w)$ with $w \in C \subseteq \X\setminus\G$,
so
\[
\ev_i(z)\ \le\ \max_{w\in C}\pi_i(w)\ \le\ \max_{y\in\X\setminus\G}\pi_i(y).
\]
By (a), $\ev_i(g_x)=\pi_i(g_x)\ge\min_{h\in\G}\pi_i(h)$.
Hence by \eqref{eq:prefer}
\[
\ev_i(g_x)-\ev_i(z)\ >\ 0 \qquad\text{for all } i \in N \text{ and } z \in C,
\]
which contradicts Lemma~\ref{lem:objection}.
\end{proof}

\begin{corollary}\label{cor:arrival-unan}
Under unanimous termination,
condition (b) of Theorem~\ref{thm:arrival} alone implies that the process reaches a grand state almost surely and never leaves it,
for every $\delta \in (0,1)$.
\end{corollary}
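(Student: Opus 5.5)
The corollary should follow by checking hypothesis (a) of Theorem~\ref{thm:arrival} and then invoking that theorem. Under unanimous termination, Theorem~\ref{thm:persist-unan} says that every $h\in\G$ is absorbing for every $\delta\in(0,1)$, and it needs only \ref{E1}. Since an equilibrium satisfies \ref{E1}--\ref{E3}, hypothesis (a) holds automatically. Hypothesis (b) is assumed, so Theorem~\ref{thm:arrival} applies as stated.

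Theorem~\ref{thm:arrival} then gives that every closed communicating class is a singleton $\{h\}$ with $h\in\G$. The remaining step is the standard fact about finite Markov chains, already used in the proof of Corollary~\ref{cor:dichotomy}: from any initial state the chain enters the union of its closed communicating classes almost surely. Here that union is a set of absorbing grand states. So from any starting state the process reaches a grand state almost surely. Once there it stays, by Theorem~\ref{thm:persist-unan} again, or equivalently because that state is absorbing.

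I expect no real obstacle. The only point to watch is that Theorem~\ref{thm:persist-unan} gives absorption of \emph{all} grand states, not only of reached ones. That is exactly what hypothesis (a) requires, and it is why the corollary is stated for unanimous termination only. Under unilateral termination, Theorem~\ref{thm:persist-unil} covers only grand states that are actually entered, which would not suffice for (a) as stated.
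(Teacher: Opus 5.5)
Your proof is correct and is essentially the paper's, which consists of the single observation that Theorem~\ref{thm:persist-unan} supplies hypothesis (a) of Theorem~\ref{thm:arrival}. The Markov-chain step you add is harmless but not needed: ``reaches a grand state almost surely and never leaves it'' is already part of Theorem~\ref{thm:arrival}'s conclusion.
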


\begin{proof}
Theorem~\ref{thm:persist-unan} supplies (a).
\end{proof}

Under unilateral termination,
hypothesis (a) is delivered by Theorem~\ref{thm:persist-unil} only for grand states that are actually entered,
and the grand states relevant in the proof above are by construction not entered.
The next proposition says how much of (a) survives; the residual configuration it isolates is restrictive,
but we have not been able to exclude it.

\begin{proposition}[Unilateral termination: the residual configuration]\label{prop:arrival-unil}
Assume unilateral termination, Criterion~\ref{cond:ties}, and condition \eqref{eq:prefer}.
Let $C\subseteq\X\setminus\G$ be a closed communicating class of an equilibrium, let $x \in C$,
and put $m_i := \max_{w\in C}\pi_i(w)$,
so that $\ev_i(z)\le m_i<\varphi_i(x)$ for all $i \in N$ and all $z \in C$,
exactly as in the proof of Theorem~\ref{thm:arrival}.
Then $g_x$ is not absorbing,
and among the moves realised at $g_x$ there is a termination with bottom node $K\subsetneq N$,
terminator $j$ and target $y$ such that
\[
\ev_j(g_x)\;<\;\ev_j(y), \qquad \ev_j(x)\;<\;\ev_j(y)\;\le\;m_j\;<\;\varphi_j(x).
\]
In particular $\ev_j(g_x)<m_j<\varphi_j(x)$:
the merge-all move can be blocked only by pessimism about the continuation \emph{after} the merge,
never by the static payoffs the merge delivers.
\end{proposition}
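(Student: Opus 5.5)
The plan is to run the proof of Theorem~\ref{thm:arrival} with hypothesis (a) removed and see exactly where it fails. First, fix $C$ and $x\in C$. By Lemma~\ref{lem:value} and closedness of $C$, every $\ev_i(z)$ with $z\in C$ is a convex combination of values $\pi_i(w)$ with $w\in C$, so $\ev_i(z)\le m_i$. By \eqref{eq:prefer} and Lemma~\ref{lem:pareto}, $m_i<\min_{h\in\G}\pi_i(h)\le\varphi_i(x)$. This gives the displayed preliminary inequalities.

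\emph{$g_x$ is not absorbing.} If it were, then $\ev_i(g_x)=\pi_i(g_x)=\varphi_i(x)>m_i\ge\ev_i(z)$ for all $i\in N$ and $z\in C$. This contradicts the objection lemma (Lemma~\ref{lem:objection}), which is available because $p$ is an equilibrium.

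\emph{Some realised move at $g_x$ has bottom node $K\subsetneq N$.} By Lemma~\ref{lem:entry} every moving move at $g_x$ is a termination. Suppose every realised target other than $g_x$ were $x$, i.e.\ every realised moving move terminates $N$. Since $g_x$ is not absorbing, such a move with some terminator $j$ is realised. By \ref{E1} it is not dominated by the stay $s^j_{g_x}$, so $\ev_j(x)>\ev_j(g_x)$. Now apply Lemma~\ref{lem:onestep}(a) at $g_x$ with $y=x$ and $q=p_{g_x\to x}>0$: we get $\ev_j(g_x)>\ev_j(x)$ if and only if $\pi_j(g_x)=\varphi_j(x)>\ev_j(x)$. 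The latter holds because $\ev_j(x)\le m_j<\varphi_j(x)$, which is a contradiction. Hence some realised termination $m$ has bottom node $K\subsetneq N$, terminator $j$ and target $y\ne x$. Again by \ref{E1}, the stay $s^j_{g_x}$ does not dominate $m$, and since ties are broken towards $g_x$ this gives $\ev_j(g_x)<\ev_j(y)$.

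\emph{The comparison with $x$ and the bound $\ev_j(y)\le m_j$.} The move $m'$ at $g_x$ terminating $N$ with terminator $j$ has $R(m')=\{j\}=R(m)$ and target $x$. Undominatedness of $m$ against $m'$ gives only $\ev_j(y)\ge\ev_j(x)$. Criterion~\ref{cond:ties} does not apply directly, since $x$ is not absorbing. To get strictness, I would first exploit the shadow move $\tilde m$ at $x$ of Lemma~\ref{lem:shadow}, which has the same target $y$ and relevant set $\{j\}$. If $\ev_j(y)=\ev_j(x)$, I would try to derive a contradiction by following how $\tilde m$ stands at $x$ under profitability and domination. If that fails, I would choose $j$ and $m$ extremally among the realised terminations.

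For $\ev_j(y)\le m_j$, the natural route is through the objection lemma's player $i$. That player satisfies $\ev_i(g_x)\le m_i<\varphi_i(x)$, so by \eqref{eq:bellman} at $g_x$ we get $\evp_i(g_x)<\ev_i(g_x)\le m_i$, which forces some realised target from $g_x$ below $m_i$. The task is then to transfer this to a realised termination whose terminator is the player in question. Failing that, I would show $y\in C$: the shadow move $\tilde m$ being realised at $x$ would put $y$ in $C$ by closedness, and it should be realised because it dominates the merge-all move at $x$ for $j$.

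I expect this last step to be the main obstacle: pinning down $\ev_j(y)\le m_j$ together with the strict inequality $\ev_j(x)<\ev_j(y)$ for a single realised termination. The final sentence of the statement then follows by chaining $\ev_j(g_x)<\ev_j(y)\le m_j<\varphi_j(x)$.
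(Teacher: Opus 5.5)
Your first steps are sound, and one is cleaner than the paper's. The objection-lemma argument that $g_x$ is not absorbing is the paper's own. Your exclusion of the case in which every realised move at $g_x$ terminates $N$ uses domination by $s^j_{g_x}$ together with Lemma~\ref{lem:onestep}(a). That is the right form of the step under Definition~\ref{def:rationality}: if $p_{g_x\to x}=1$, profitability against $\evp_j(g_x)$ holds with equality, so it cannot be what is contradicted.

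\emph{The gap: the bound $\ev_j(y)\le m_j$.} Neither of your routes closes it.
\begin{itemize}
\item The objection-lemma player $i$ only yields $\evp_i(g_x)<m_i$. That is a fact about $i$ and some unspecified realised target, with no bearing on the target of $j$'s termination.
\item The second route needs $\tilde m$ to be realised at $x$, and nothing gives that. Dominating merge-all does not put $\tilde m$ into $\Cset(x)$, let alone into the support of $p_{x\to\cdot}$. It may be unprofitable at $x$, or dominated there by $j$'s stay or by another termination of $j$'s with a better target. So $y\in C$ can fail.
\end{itemize}

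The missing idea is that you do not need $y\in C$. You need only some state of $C$ that $j$ values at least as highly as $y$, and \ref{E3} supplies one.
\begin{itemize}
\item If $\tilde m$ is profitable at $x$, Lemma~\ref{lem:domination} gives $m''\in\Cset(x)$ with $R(m'')=\{j\}$ and $\ev_j(\tau(m''))\ge\ev_j(y)$. A $\succeq^{x}_j$-maximal $m^\ast\in\Cset(x)$ lies in $\Fset(x)$ with witness $j$, so $p_{x\to\tau(m^\ast)}>0$ by \ref{E3}. Then $\tau(m^\ast)\in C$ by closedness, and $\ev_j(y)\le\ev_j(\tau(m^\ast))\le m_j$.
\item If $\tilde m$ is not profitable, then $\ev_j(y)<\evp_j(x)$. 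This is an average of $\ev_j$ over successors of $x$, all of which lie in $C$, so $\ev_j(y)<m_j$ outright.
\end{itemize}
This is essentially the paper's argument. The paper instead infers profitability of $\tilde m$ from $\ev_j(y)>\ev_j(x)$, which is the pre-revision benchmark. The two-case split is what makes the argument valid under Definition~\ref{def:rationality}, and it needs no strictness.

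\emph{The strict inequality $\ev_j(x)<\ev_j(y)$.} Your diagnosis is correct, and sharper than the paper's proof. Undominatedness against $j$'s termination of $N$ yields only the weak inequality. Criterion~\ref{cond:ties} as stated concerns only absorbing non-grand states. Here Theorem~\ref{thm:absorbing-grand}(ii) forces $|C|\ge2$, so no state of $C$ is absorbing. The paper closes the step simply by citing the criterion at this $x$, in effect reading it as a genericity assumption that also covers the states of $C$.

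Your two fallbacks are not carried out, and I do not see them working. An exact tie $\ev_j(y)=\ev_j(x)$ only makes $s^j_x$ dominate $\tilde m$ at $x$, since ties go to $x$, and that contradicts nothing you have. An extremal choice of $m$ helps only if some realised termination is tie-free, which is exactly what is at issue.

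So you should either adopt the extended no-ties reading explicitly, as the paper tacitly does, or state that inequality weakly. The concluding claim $\ev_j(g_x)<m_j<\varphi_j(x)$ needs only $\ev_j(g_x)<\ev_j(y)\le m_j$ and survives either way.
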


\begin{proof}
If $g_x$ were absorbing then $\ev_i(g_x)=\varphi_i(x)>m_i\ge\ev_i(z)$ for all $i$ and all $z \in C$,
contradicting Lemma~\ref{lem:objection}.
So some move is realised at $g_x$; by Lemma~\ref{lem:entry} each is a termination with target outside $\G$ and,
under unilateral termination, with a single terminator.

Suppose every realised move at $g_x$ had bottom node $N$, i.e.\ target $x$.
Then $x$ is the only state other than $g_x$ receiving positive probability from $g_x$,
so with $q:=p_{g_x\to x}>0$,
\eqref{eq:bellman} gives $\big(1-\delta(1-q)\big)\big(\ev_i(g_x)-\ev_i(x)\big)=(1-\delta)\big(\varphi_i(x)-\ev_i(x)\big)$ for every $i$;
as $\ev_i(x)\le m_i<\varphi_i(x)$ and $1-\delta(1-q)>0$, this yields $\ev_i(g_x)>\ev_i(x)$ for every $i$,
contradicting profitability of a move from $g_x$ to $x$.

Hence some realised move $m$ at $g_x$ has bottom node $K\subsetneq N$;
let $j$ be its terminator and $y:=\tau(m)\ne x$, so $\ev_j(y)>\ev_j(g_x)$ by \ref{E1}.
By Lemma~\ref{lem:shadow} the move $\tilde m$ at $x$ with bottom node $K$ and terminator $j$ has target $y$.
The move at $g_x$ terminating $N$ with terminator $j$ has relevant set $\{j\}=R(m)$ and target $x$;
if $\ev_j(y)<\ev_j(x)$ it would dominate $m$, so $m$ would not be realised.
Ties being excluded by Criterion~\ref{cond:ties}, $\ev_j(y)>\ev_j(x)$, i.e.\ $\tilde m$ is profitable at $x$.
By Lemma~\ref{lem:domination} there is $m''\in\Cset(x)$ with $\ev_j(\tau(m''))\ge\ev_j(y)$;
letting $m^\ast$ maximise $\ev_j(\tau(\cdot))$ over $\Cset(x)$ gives $m^\ast\in\Fset(x)$,
hence $p_{x\to\tau(m^\ast)}>0$ by \ref{E3},
hence $\tau(m^\ast)\in C$ and $\ev_j(y)\le\ev_j(\tau(m^\ast))\le m_j$.
\end{proof}

\begin{remark}[What is open]\label{rem:open-unil}
Proposition~\ref{prop:arrival-unil} leaves exactly one scenario alive under unilateral termination:
an equilibrium may specify off-path behaviour at the unentered grand states $g_x$, $x\in C$,
that depresses $\ev_j(g_x)$ below the payoffs available inside $C$ and thereby makes merge-all dominated at every $x \in C$.
That is the mechanism identified in the proof of Theorem~\ref{thm:persist-unil}:
non-absorption of $g_x$ produces a shadow move that dominates merge-all at $x$.
Under unanimous termination the scenario cannot arise (Theorem~\ref{thm:persist-unan}),
which is why Corollary~\ref{cor:arrival-unan} is unconditional.
Deciding it under unilateral termination and the $\delta$-free hypothesis \eqref{eq:prefer} is open.
The scenario is not vacuous:
the counterexample of Proposition~\ref{prop:counterexample} realises it under unilateral termination,
with the four grand states above the cycle dissolved again as soon as they are formed.
There \eqref{eq:prefer} fails, so it does not settle the question,
but it shows that the mechanism this remark describes does occur in equilibrium.
\end{remark}

\subsection{When does the grand coalition form in one step?}\label{sec:onestep}

The simplest conceivable equilibrium sends every non-grand state straight to the grand state above it.
Write $p^\dagger$ for the process with $p^\dagger_{x\to g_x}=1$ for $x\in\X\setminus\G$ and $p^\dagger_{h\to h}=1$ for $h\in\G$,
so that $\ev_i(h)=\pi_i(h)$ for grand $h$ and
\[
   \ev_i(x) \;=\; (1-\delta)\pi_i(x)+\delta\varphi_i(x), \qquad x \in \X\setminus\G .
\]
It is not an equilibrium in general, and the reason is exactly the free-riding effect:
a group may prefer to move to some other state first,
so as to enter the grand coalition later from a better bargaining position.
What follows makes that precise,
and identifies the condition under which one-step formation is an equilibrium after all.

\begin{condition}[No share-improving move]\label{cond:noshare}
Assume Condition~\ref{cond:nash}.
For every $x \in \X\setminus\G$ and every $m \in \M(x)$ other than the merge-all move,
with target $y$ and relevant set $R$, there is an $i \in R$ with $\psi_i(y)<\psi_i(x)$.
\end{condition}

In words: no move improves the relative share of \emph{all} the players whose approval it needs.
Since $\sum_i\psi_i\equiv0$, every move improves somebody's share;
the condition is that it cannot improve every mover's share at once.

\begin{theorem}[One-step formation]\label{thm:onestep}
Assume Conditions~\ref{cond:nash} and~\ref{cond:noshare}.
\begin{enumerate}[label=(\alph*)]
\item For every $\delta\in(0,1)$ and either termination rule,
the merge-all move is profitable and undominated at every non-grand state,
and no move is profitable at any grand state; hence $p^\dagger$ satisfies \ref{E1} and \ref{E2}.
\item Put $\theta := \min\{\psi_i(x)-\psi_i(y)\}>0$,
the minimum being over the triples $(x,m,i)$ of Condition~\ref{cond:noshare},
and $R_\pi := \max_i\max_{y,z}|\pi_i(y)-\pi_i(z)|$.
For every $\delta > R_\pi/(R_\pi+\theta)$, every moving move other than merge-all is dominated by a stay,
and no stay is profitable, so $\Cset(x)=\Fset(x)=\{\text{merge-all}\}$ and $p^\dagger$ satisfies \ref{E1}--\ref{E3}.
\end{enumerate}
In case (b) the grand coalition forms in one step from every state.
\end{theorem}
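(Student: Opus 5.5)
Throughout I work with the explicit evaluations of $p^\dagger$: $\ev_i(h)=\pi_i(h)$ for $h\in\G$, $\ev_i(x)=(1-\delta)\pi_i(x)+\delta\varphi_i(x)$ for $x\notin\G$, and hence $\evp_i(x)=\varphi_i(x)=\ev_i(g_x)$ for $x\notin\G$ and $\evp_i(h)=\pi_i(h)$ for $h\in\G$.

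\emph{The key inequality.} Everything rests on one observation, derived from Lemma~\ref{lem:psi} and Assumption~\ref{ass:eff}. For every non-grand $y$ and every $i$,
\[
\pi_i(y)=\psi_i(y)+w_iV(y)<\psi_i(y)+w_iV^\ast=\varphi_i(y).
\]
Consequently, whenever $x,y\notin\G$ and $\psi_i(y)<\psi_i(x)$, both $\pi_i(y)$ and $\varphi_i(y)$ are strictly below $\varphi_i(x)$. Since $\ev_i(y)$ is a convex combination of these two numbers, $\ev_i(y)<\varphi_i(x)=\ev_i(g_x)$. I call this $(\ast)$.

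\emph{Part (a) at a non-grand state $x$.} Merge-all is profitable with equality, because $\ev_i(g_x)=\evp_i(x)$. To show it is undominated I check the possible dominators $m'$ in turn.
\begin{enumerate}[label=(\roman*)]
\item A stay $s^i_x$ does not dominate merge-all, since $\ev_i(x)<\varphi_i(x)$ by Lemma~\ref{lem:pareto}.
\item By Lemma~\ref{lem:entry}, any other moving $m'$ has a non-grand target $y$. Condition~\ref{cond:noshare} supplies $i\in R(m')$ with $\psi_i(y)<\psi_i(x)$, and then $(\ast)$ gives $\ev_i(y)<\ev_i(g_x)$, so $m'$ fails to dominate.
\end{enumerate}

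\emph{Part (a) at a grand state $h=g_u$.} Here $\evp_i(h)=\pi_i(h)=\varphi_i(u)$, and every moving move is a termination, by Lemma~\ref{lem:entry}. I expect this step to be the main obstacle, since Condition~\ref{cond:noshare} speaks only of moves at non-grand states; the shadow lemma is what transfers it.
\begin{enumerate}[label=(\roman*)]
\item If the bottom node is $N$, the target is $u$, and $\ev_i(u)<\varphi_i(u)$ for all $i$.
\item Otherwise, Lemma~\ref{lem:shadow} yields a termination $\tilde m$ at $u$ with the same target $y$ and $R(\tilde m)\subseteq R(m)$. Condition~\ref{cond:noshare} applied to $\tilde m$ gives $i\in R(m)$ with $\psi_i(y)<\psi_i(u)$, and $(\ast)$ yields $\ev_i(y)<\pi_i(h)$.
\end{enumerate}
So no moving move at $h$ is profitable. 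It follows that \ref{E1} holds: the only realised move off the diagonal is merge-all, which lies in $\Cset(x)$. It also follows that \ref{E2} holds: $p^\dagger_{x\to x}=0$ off $\G$, and $\Cset(h)\cap\M(h)=\emptyset$ on $\G$.

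\emph{Part (b).} For a moving $m\ne$ merge-all at non-grand $x$ with target $y$, take $i\in R(m)$ with $\psi_i(x)-\psi_i(y)\ge\theta$. Then
\[
\ev_i(x)-\ev_i(y)=(1-\delta)\bigl(\pi_i(x)-\pi_i(y)\bigr)+\delta\bigl(\varphi_i(x)-\varphi_i(y)\bigr)\ge-(1-\delta)R_\pi+\delta\theta,
\]
which is positive exactly when $\delta>R_\pi/(R_\pi+\theta)$. Hence $s^i_x$ dominates $m$. No stay is profitable, since $\ev_i(x)<\varphi_i(x)=\evp_i(x)$. Combined with (a), this gives $\Cset(x)=\Fset(x)=\{$merge-all$\}$, so \ref{E3} holds because $p^\dagger_{x\to g_x}=1$.

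At grand $h$, \ref{E3} requires only that favourites have target $h$. Every element of $\Cset(h)$ is a stay, because moving moves are unprofitable, and stays have target $h$. So \ref{E3} holds there as well. One-step formation from every state then follows by the definition of $p^\dagger$.
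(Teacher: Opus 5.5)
Your proof is correct and follows the paper's argument essentially step by step: the same key inequality $\ev_i(y)<\varphi_i(x)=\ev_i(g_x)$ from the $\psi$-decomposition, the shadow lemma to carry Condition~\ref{cond:noshare} over to terminations at grand states, and the same $(1-\delta)R_\pi-\delta\theta$ estimate giving stay-domination in (b). The only cosmetic difference is at a grand state: you read $\ev_i(y)<\pi_i(h)=\evp_i(h)$ as unprofitability, while the paper reads the same inequality as domination by $s^i_h$, and both follow at once because $h$ is absorbing.
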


\begin{proof}
(a) \emph{Merge-all is profitable and beats staying.}
Profitability is immediate, $p^\dagger$ being deterministic at $x$: $\evp_i(x)=\ev_i(g_x)$ for every $i$.
For the comparison with staying, which is what undominatedness needs, for every $i$
\[
  \ev_i(g_x)-\ev_i(x) = \varphi_i(x)-\big[(1-\delta)\pi_i(x)+\delta\varphi_i(x)\big] = (1-\delta)\big(\varphi_i(x)-\pi_i(x)\big) > 0
\]
by Lemma~\ref{lem:pareto}.

\emph{The key inequality.} Let $m\ne m_0$ be a move at $x$, with target $y$ and relevant set $R$,
and let $i \in R$ satisfy $\psi_i(y)<\psi_i(x)$.
By Lemma~\ref{lem:entry} the only move at $x$ whose target is grand is $m_0$,
so $y \in \X\setminus\G$ and $\varphi_i(y)$ is defined.
Then $\varphi_i(y)=\psi_i(y)+w_iV^\ast<\psi_i(x)+w_iV^\ast=\varphi_i(x)$,
and $\pi_i(y)=\psi_i(y)+w_iV(y)<\psi_i(x)+w_iV^\ast=\varphi_i(x)$ since $V(y)\le V^\ast$.
Hence
\begin{equation}\label{eq:onestep-key}
   \ev_i(y) = (1-\delta)\pi_i(y)+\delta\varphi_i(y) \;<\; \varphi_i(x) \;=\; \ev_i(g_x).
\end{equation}

\emph{Merge-all is undominated.} Immediate from \eqref{eq:onestep-key}:
a dominating move would need $\ev_j(\tau(m))>\ev_j(g_x)$ for \emph{all} $j$ in its relevant set.

\emph{No profitable move at a grand state.} Let $h=g_x$ and let $m$ be a move at $h$;
by Lemma~\ref{lem:entry} it is a termination, with bottom node $K$.
If $K=N$ its target is $x$, and $\ev_i(x)<\varphi_i(x)=\ev_i(h)$ for every $i$, so every stay at $h$ dominates $m$.
If $K\subsetneq N$,
Lemma~\ref{lem:shadow} gives a termination $\tilde m$ at $x$ with the same target $y$ and $R(\tilde m)\subseteq R(m)$;
applying \eqref{eq:onestep-key} to $\tilde m$ yields an $i \in R(\tilde m)\subseteq R(m)$ with $\ev_i(y)<\varphi_i(x)=\ev_i(h)$,
so again $s^i_h$ dominates $m$.
Thus $\Cset(h)$ contains no moving move;
it consists of the stays, which at an absorbing state are profitable with equality and, by the same
argument, undominated, so \ref{E3} asks only for $p_{h\to h}>0$ and \ref{E2} is satisfied at $h$;
at non-grand states \ref{E2} holds because merge-all is realised.

(b) With $i$ as above, $\varphi_i(y)-\varphi_i(x)=\psi_i(y)-\psi_i(x)\le-\theta$, so
\[
  \ev_i(y)-\ev_i(x) = (1-\delta)\big(\pi_i(y)-\pi_i(x)\big)+\delta\big(\varphi_i(y)-\varphi_i(x)\big) \le (1-\delta)R_\pi-\delta\theta,
\]
which is negative exactly when $\delta>R_\pi/(R_\pi+\theta)$.
So $s^i_x$ dominates $m$;
and since $\ev_j(g_x)>\ev_j(x)$ for every $j$ by the first step, $\evp_j(x)=\ev_j(g_x)$ exceeds
$\ev_j(x)$, so no stay at $x$ is profitable.
Hence $\Cset(x)=\{m_0\}$, and $m_0$ is then every player's favourite.
\end{proof}

\begin{remark}\label{rem:onestep-fails}
Condition~\ref{cond:noshare} is restrictive, and both of the running examples violate it.
In the payoff structure of Proposition~\ref{prop:counterexample},
the move $\emptyset\to\{AB\}$ has relevant set $\{A,B\}$ and raises the relative share of \emph{both} movers,
from $\psi_A=\psi_B=-60$ to $+60$; correspondingly $\varphi_A$ rises from $4$ to $124$,
so that move dominates merge-all at $\emptyset$,
and $p^\dagger$ is not an equilibrium there for any $\delta$.
In the three-player example of Example~\ref{ex:sym} the condition fails at $\{AB\}$ rather than at $\emptyset$:
terminating the pair returns $A$ and $B$ from $\psi_A=-0.5$ to $\psi_A=0$,
so termination improves both movers' shares.
This is the same phenomenon that \citet{HeitzigKornek2018} report as the finding that a global market emerges ``but probably not in one move''.
Under $p^\dagger$ the dominating deviation does reach a grand state,
one period later and on better terms for its movers,
so in that comparison nobody dislikes the grand coalition as such.
That reading must not be carried over to equilibria other than $p^\dagger$.
Failure of Condition~\ref{cond:noshare} says only that merge-all is dominated relative to the continuation $p^\dagger$ specifies;
in another equilibrium the same deviation may lead into a continuation that never reaches a grand state at all,
and then blocking has nothing to do with waiting for better terms.
Remark~\ref{rem:why-not-arrival} works this out in the counterexample,
where the blockers prefer perpetual cycling to the grand state in front of them.
\end{remark}

\section{Arrival: when it fails}\label{sec:cycles}

By Corollary~\ref{cor:dichotomy} the only way the grand coalition can fail to form is a closed communicating class $C\subseteq\X\setminus\G$.
We record what such a class must satisfy, exclude the two shortest candidates, and then construct one.

\begin{proposition}[Necessary conditions]\label{prop:cycle}
Let $p$ be an equilibrium, $\delta\in(0,1)$,
and $C\subseteq\X\setminus\G$ a closed communicating class with stationary distribution $\mu$;
let $\bar a_i,\bar V$ be as in Lemma~\ref{lem:occupation}, and, where Condition~\ref{cond:nash} is invoked,
$\bar\psi_i:=\sum_{x\in C}\mu(x)\psi_i(x)$.
Then
\begin{enumerate}[label=(\roman*)]
\item $\bar V\le V^\ast-\gamma$, and under Condition~\ref{cond:nash}, $\bar a_i=\bar\psi_i+w_i\bar V$ for every $i$;
\item under Condition~\ref{cond:nash}, for every $i$,
\[
\sum_{x\in C}\mu(x)\big(\varphi_i(x)-\ev_i(x)\big)=w_i\big(V^\ast-\bar V\big)\ \ge\ w_i\gamma\ >\ 0,
\]
i.e.\ \emph{every} player strictly prefers immediate formation of the grand coalition to the continuation,
on $\mu$-average over the class, by exactly her share of the average efficiency loss;
\item for every $x\in C$ there are $i \in N$ and $z\in C$ with $\ev_i(z)\ge\ev_i(g_x)$;
if moreover every grand state is absorbing,
then for every $x \in C$ there is an $i$ with $\pi_i(g_x)\le\max_{w\in C}\pi_i(w)$,
so that \eqref{eq:prefer} must fail,
and under Condition~\ref{cond:nash} there is for every $x \in C$ an $i$ with $\max_{w\in C}\psi_i(w)-\psi_i(x)\ge w_i\gamma$,
so that $\max_i S_i\ge w_{\min}\gamma$.
\end{enumerate}
\end{proposition}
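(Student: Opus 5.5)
The three parts are largely bookkeeping on top of Lemmas~\ref{lem:occupation}, \ref{lem:psi} and~\ref{lem:objection}, and I would prove them in order.

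For (i), $\bar V$ is a $\mu$-average of $V(x)$ over $x\in C\subseteq\X\setminus\G$. Each such term is at most $V^\ast-\gamma$ by Assumption~\ref{ass:eff}, so $\bar V\le V^\ast-\gamma$. Under Condition~\ref{cond:nash}, I would average the definition $\pi_i=\psi_i+w_iV$ (Definition~\ref{def:psi}) against $\mu$, which gives $\bar a_i=\bar\psi_i+w_i\bar V$ directly.

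For (ii), the plan is to compute the two averages separately. Lemma~\ref{lem:occupation} gives $\sum_{x}\mu(x)\ev_i(x)=\bar a_i$. Lemma~\ref{lem:psi} gives $\varphi_i(x)=\psi_i(x)+w_iV^\ast$, so $\sum_x\mu(x)\varphi_i(x)=\bar\psi_i+w_iV^\ast$. Subtracting and using (i), the $\bar\psi_i$ cancels and the difference is $w_i(V^\ast-\bar V)$. The bound $w_i(V^\ast-\bar V)\ge w_i\gamma>0$ then follows from (i) and $w_i>0$. The only point needing care is that the occupation identity requires $C$ to be closed, which is part of the hypothesis.

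For (iii), the first claim is exactly Lemma~\ref{lem:objection}. Suppose in addition that every grand state is absorbing. Then $\ev_i(g_x)=\pi_i(g_x)$. By Lemma~\ref{lem:value} and closedness of $C$, $\ev_i(z)$ is a convex combination of values $\pi_i(w)$ with $w\in C$, so $\ev_i(z)\le\max_{w\in C}\pi_i(w)$. Combining this with the objection inequality yields the required $i$, and it contradicts \eqref{eq:prefer} as in the proof of Theorem~\ref{thm:arrival}.

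Under Condition~\ref{cond:nash}, I would rewrite both sides using $\pi_i(g_x)=\varphi_i(x)=\psi_i(x)+w_iV^\ast$ and $\pi_i(w)=\psi_i(w)+w_iV(w)\le\psi_i(w)+w_i(V^\ast-\gamma)$ for $w\in C$. Substituting into $\pi_i(g_x)\le\max_{w\in C}\pi_i(w)$ gives $\psi_i(x)+w_iV^\ast\le\max_{w\in C}\psi_i(w)+w_iV^\ast-w_i\gamma$, which is the stated inequality. Since $x,w\in\X\setminus\G$, the left side $\max_{w\in C}\psi_i(w)-\psi_i(x)$ is at most $S_i$, so $S_i\ge w_i\gamma\ge w_{\min}\gamma$ and hence $\max_iS_i\ge w_{\min}\gamma$.

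The main obstacle is mild. It is making sure the maximum over $w$ can be handled with the bound $V(w)\le V^\ast-\gamma$ applied uniformly inside the maximum, so that the step from a maximum of $\pi_i$ to a maximum of $\psi_i$ is justified. It is, because $V(w)\le V^\ast-\gamma$ holds for every $w\in C$ and so the maximum of the sum is at most the maximum of $\psi_i$ plus that constant.
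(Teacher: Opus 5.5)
Your proposal is correct and matches the paper's proof essentially line by line. Part (i) averages Assumption~\ref{ass:eff} and $\pi_i=\psi_i+w_iV$ against $\mu$. Part (ii) subtracts the $\mu$-averages given by Lemma~\ref{lem:psi} and Lemma~\ref{lem:occupation}. Part (iii) combines Lemma~\ref{lem:objection} with $\ev_i(g_x)=\pi_i(g_x)$ and the convex-combination bound of Lemma~\ref{lem:value}, then rewrites the result in $\psi$-terms under Condition~\ref{cond:nash}. Your closing step, bounding $\max_{w\in C}\psi_i(w)-\psi_i(x)$ by $S_i$, only makes explicit what the paper leaves implicit.
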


\begin{proof}
(i) is Assumption~\ref{ass:eff} and $\pi_i=\psi_i+w_iV$ averaged against $\mu$.

(ii) By Lemma~\ref{lem:psi}, $\sum_x\mu(x)\varphi_i(x)=\bar\psi_i+w_iV^\ast$; by Lemma~\ref{lem:occupation},
$\sum_x\mu(x)\ev_i(x)=\bar a_i=\bar\psi_i+w_i\bar V$.
Subtract and use (i).

(iii) The first claim is Lemma~\ref{lem:objection}.
For the second, absorption of $g_x$ gives $\ev_i(g_x)=\pi_i(g_x)$,
while $\ev_i(z)\le\max_{w\in C}\pi_i(w)$ as in the proof of Theorem~\ref{thm:arrival},
so $\pi_i(g_x)\le\max_{w\in C}\pi_i(w)\le\max_{y\in\X\setminus\G}\pi_i(y)$, contradicting \eqref{eq:prefer}.
For the third,
under Condition~\ref{cond:nash} the same chain reads $\psi_i(x)+w_iV^\ast\le\max_{w\in C}\psi_i(w)+w_i(V^\ast-\gamma)$.
\end{proof}

Part (ii) is the exact form of the intuition that cycling is collectively wasteful;
it shows the obstruction is purely distributional, since no player can object \emph{on average}:
objections must be state-specific, with the roles rotating.
Part (iii) is the converse of Theorem~\ref{thm:arrival}(b).

\subsection{The two shortest candidate cycles}

By Lemma~\ref{lem:reversal} a class cannot traverse a pair of states in both directions.
The two shortest cycles respecting this are the following.
Recall that $\pi$ is \emph{symmetric} if $\pi_{\rho(i)}(\rho(x))=\pi_i(x)$ for every permutation $\rho$ of $N$,
where $\rho$ acts on states elementwise.

\begin{proposition}[Rotating pairs]\label{prop:rotating}
Let $n=4$, $N=\{A,B,C,D\}$, and consider
\[
  x_1:=\emptyset,\quad x_2:=\{\{A,B\}\},\quad x_3:=\{\{A,B\},\{C,D\}\},\quad x_4:=\{\{C,D\}\}.
\]
Suppose an equilibrium has $C=\{x_1,x_2,x_3,x_4\}$ as a closed communicating class,
traversed deterministically in the order $x_1\to x_2\to x_3\to x_4\to x_1$.
Put
\[
  D_i \;:=\; \pi_i(x_2)+\pi_i(x_4)-\pi_i(x_1)-\pi_i(x_3) \qquad (i \in N),
\]
the extent to which the two mergers are \emph{substitutes} for $i$ in static payoffs.
Then, for every $\delta \in (0,1)$,
\[
   D_A>0,\quad D_B>0,\quad D_C<0,\quad D_D<0 .
\]
In particular the cycle is impossible whenever $D_i$ has the same sign for all four players,
and impossible for symmetric $\pi$.
\end{proposition}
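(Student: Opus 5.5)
The plan is to combine the cycle inequality (Lemma~\ref{lem:cycle}) at two different steps of the cycle for the same player, and add the two resulting inequalities. Write $e_j:=\pi_i(x_j)$ for a fixed player $i$, indices mod $4$.

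\emph{Step 1: first read off the strict increases that \ref{E1} forces.}
\begin{itemize}
\item The move $x_1\to x_2$ is the merge of $A$ and $B$, with relevant set $\{A,B\}$. Since stays are candidate dominators and ties favour the status quo, undominatedness gives $\ev_i(x_2)>\ev_i(x_1)$ for $i\in\{A,B\}$.
\item The move $x_2\to x_3$ is the merge of $C$ and $D$. It gives $\ev_i(x_3)>\ev_i(x_2)$ for $i\in\{C,D\}$.
\item The move $x_3\to x_4$ is the termination of $\{A,B\}$, and its relevant set is contained in $\{A,B\}$. Each terminator $i$ satisfies $\ev_i(x_4)>\ev_i(x_3)$.
\item The move $x_4\to x_1$ is the termination of $\{C,D\}$. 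Each terminator $i$ satisfies $\ev_i(x_1)>\ev_i(x_4)$.
\end{itemize}

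\emph{Step 2: the key computation.} Suppose a player $i$ has strict increases at both $x_1\to x_2$ and $x_3\to x_4$. By Lemma~\ref{lem:cycle} these two increases read
\[
e_2+\delta e_3+\delta^2 e_4>(1+\delta+\delta^2)e_1,\qquad e_4+\delta e_1+\delta^2 e_2>(1+\delta+\delta^2)e_3 .
\]
Adding them, the $\delta$-terms in $e_1$ and $e_3$ partly cancel, and what remains is $(1+\delta^2)(e_2+e_4-e_1-e_3)>0$, that is, $D_i>0$.

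Symmetrically, suppose a player has strict increases at both $x_2\to x_3$ and $x_4\to x_1$. Then the two inequalities with $j=2$ and $j=4$ add to $(1+\delta^2)(e_1+e_3-e_2-e_4)>0$, that is, $D_i<0$.

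\emph{Step 3: apply this to each player.} Under unanimous termination, every member of $\{A,B\}$ is relevant at both of its steps and every member of $\{C,D\}$ at both of its steps. So Step 2 applies directly to all four players and gives the four signs.

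\emph{Step 4: the main obstacle, unilateral termination.} Here only one of $A,B$ (and one of $C,D$) need be a terminator, so the other member, say $B$, is known only to gain at $x_1\to x_2$. The plan is to show that the non-terminator also weakly gains at $x_3\to x_4$. The strict version of Step 2 survives if one of the two inequalities is only weak.

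I would try this first through domination at $x_3$. The alternative termination of $\{A,B\}$ with terminator $B$ has the same target $x_4$. So the realised move is not distinguishable from it by domination, and I would look for a contradiction with \ref{E3}: if $\ev_B(x_4)<\ev_B(x_3)$, then $B$'s stay is profitable at $x_3$ with equality. I would then try to show that $B$'s favourite in $\Cset(x_3)$ cannot have target $x_4$, contradicting determinism of the cycle.

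If that fails, the fallback is to restrict the unilateral claim to the terminators and to state that the two non-terminators' signs follow under the unanimous rule. I expect this step, not the algebra, to be where the proof needs care.

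\emph{Last sentence of the statement.} It is then immediate. The signs $(+,+,-,-)$ are not all equal. For symmetric $\pi$, the transposition swapping $A\leftrightarrow C$ and $B\leftrightarrow D$ maps $x_2\leftrightarrow x_4$ and fixes $x_1$ and $x_3$. Hence $D_A=D_C$, which is incompatible with $D_A>0>D_C$.
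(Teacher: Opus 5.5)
Your Steps~1--3 and the closing symmetry argument are the paper's proof: add the inequalities of Lemma~\ref{lem:cycle} at steps $1$ and $3$ (resp.\ $2$ and $4$) for a player relevant at both, and let $\delta$ cancel. Where you differ is Step~4, and there you are more careful than the paper. The paper settles the unilateral rule by remarking that the merges need all of $\{A,B\}$ and all of $\{C,D\}$.

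That remark gives the non-terminating member, say $B$, the step-$1$ inequality but not the step-$3$ one. And \ref{E1} at $x_3$ says nothing about $B$, since any dominator of $A$'s termination must have relevant set $\{A\}$. With state-dependent shares this is a real gap. Take $\pi_A=(0,9,1,0)$ and $\pi_B=(0,1,9,0)$ on $(x_1,\dots,x_4)$, that is, shares $0.9,0.1$ at $x_2$ and $0.1,0.9$ at $x_3$, with $A$ terminating and $\delta=0.9$. These payoffs satisfy every inequality that \ref{E1} imposes on $A$ and $B$ along the cycle, yet $D_B=-8$.

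The \ref{E1}-only version is rescued by Condition~\ref{cond:nash}, which Table~\ref{tab:map} lists for this result. With fixed weights, $D_i=\tfrac{w_i}{w(\{A,B\})}\bigl(\Delta(x_2,\{A,B\})-\Delta(x_3,\{A,B\})\bigr)$ for $i\in\{A,B\}$, so $D_A$ and $D_B$ share a sign. Likewise $D_C$ and $D_D$ share a sign via $\Delta(x_4,\{C,D\})-\Delta(x_3,\{C,D\})$.

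Your \ref{E3} route does close the gap for the statement as written, so drop the fallback; it needs only one step you left out. Suppose $\ev_B(x_4)\le\ev_B(x_3)$. In the strict case the stay is strictly profitable, not profitable ``with equality'' as you wrote.

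Then $s^B_{x_3}$ is profitable, because $\evp_B(x_3)=\ev_B(x_4)$. It is also undominated: the only other move at $x_3$ with relevant set $\{B\}$ is $B$'s own termination of $\{A,B\}$, and $B$ does not rank its target $x_4$ strictly above $x_3$. Hence a $\succeq^{x_3}_B$-maximal $m^\ast\in\Cset(x_3)$ satisfies $\tau(m^\ast)\succeq^{x_3}_B x_3$. So either $\tau(m^\ast)=x_3$, or $\ev_B(\tau(m^\ast))>\ev_B(x_3)\ge\ev_B(x_4)$; in both cases $\tau(m^\ast)\ne x_4$. But \ref{E3} gives $p_{x_3\to\tau(m^\ast)}>0$, contradicting $p_{x_3\to x_4}=1$. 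So $B$ in fact gains strictly at step~$3$, and the same argument at $x_4$ handles the non-terminating member of $\{C,D\}$.

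Be aware of what this costs. Run at the state where any player's evaluation peaks, this argument is exactly the proof of Theorem~\ref{thm:nodetcycle}. So under \ref{E1} and \ref{E3} the hypothesis can never be met, and the proposition holds only vacuously. Your route proves the statement as written. The paper's route, completed with Condition~\ref{cond:nash}, proves the version with content, for processes satisfying \ref{E1} alone.
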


\begin{proof}
The four moves are: merge $\{A,B\}$ (relevant set $\{A,B\}$); merge $\{C,D\}$ (relevant set $\{C,D\}$);
terminate $\{A,B\}$ (relevant set a non-empty subset of $\{A,B\}$ under either termination rule,
since $\{A,B\}$ is top-level in $x_3$); terminate $\{C,D\}$ (likewise inside $\{C,D\}$).
So at steps $1$ and $3$ at least one of $A,B$ is relevant, and at steps $2$ and $4$ at least one of $C,D$ is;
let $i$ be relevant at steps $1$ and $3$.
By Lemma~\ref{lem:cycle} with $k=4$, writing $\Sigma_\delta := 1+\delta+\delta^2$,
\begin{align*}
 \text{step } 1:&\quad \pi_i(x_2)+\delta\pi_i(x_3)+\delta^{2}\pi_i(x_4) \;>\; \Sigma_\delta\,\pi_i(x_1),\\
 \text{step } 3:&\quad \pi_i(x_4)+\delta\pi_i(x_1)+\delta^{2}\pi_i(x_2) \;>\; \Sigma_\delta\,\pi_i(x_3).
\end{align*}
Adding the two and collecting terms,
the coefficient of $\pi_i(x_2)$ and of $\pi_i(x_4)$ on the left is $1+\delta^2$,
that of $\pi_i(x_1)$ and of $\pi_i(x_3)$ on the left is $\delta$, and $\Sigma_\delta-\delta = 1+\delta^2$, so
\[
   (1+\delta^{2})\big(\pi_i(x_2)+\pi_i(x_4)\big) \;>\; (1+\delta^{2})\big(\pi_i(x_1)+\pi_i(x_3)\big),
\]
i.e.\ $D_i>0$; the discount factor has cancelled.
The same computation at steps $2$ and $4$ gives $D_j<0$ for a player $j$ relevant there.
At each of the four states the two members of the pair involved play the same role---both singletons at $x_1$,
both members of the pair at $x_2$ and $x_3$, both outsiders at $x_4$ for $A,B$,
and symmetrically for $C,D$---but that is not needed:
it suffices that the relevant sets at steps $1,3$ lie in $\{A,B\}$ and those at steps $2,4$ in $\{C,D\}$,
so that $D_i>0$ for some $i \in \{A,B\}$ and $D_j<0$ for some $j \in \{C,D\}$,
and the merge moves require all of $\{A,B\}$ respectively $\{C,D\}$,
giving the four strict inequalities as stated.
Finally,
if $\pi$ is symmetric then the permutation $\rho := (AC)(BD)$ fixes $x_1$ and $x_3$ and exchanges $x_2$ and $x_4$,
so $\pi_C(x_2)=\pi_A(x_4)$ and $\pi_C(x_4)=\pi_A(x_2)$ and $\pi_C(x_k)=\pi_A(x_k)$ for $k \in \{1,3\}$,
whence $D_C=D_A$, contradicting $D_A>0>D_C$.
\end{proof}

\begin{remark}\label{rem:separable}
$D_i$ measures a genuine interaction between the two disjoint mergers,
and it vanishes identically whenever payoffs are additively separable across disjoint top-level coalitions,
i.e.\ whenever the effect on $i$ of $\{A,B\}$ merging does not depend on whether $\{C,D\}$ has merged.
The public-good model of Example~\ref{ex:negative} with \emph{linear} benefits is of this kind:
each block's optimal contributions depend only on that block, total provision is additive across blocks,
and one checks that $D_i=0$ for every $i$ and every parameter choice.
So in that model the cycle is excluded outright, for every $\delta$, both termination rules,
and all asymmetries.
The previous revision proved this only for symmetric payoffs.
\end{remark}

\begin{proposition}[Build--merge--strip]\label{prop:bms}
For $n=3$ the cycle $\emptyset\to\{\{A,B\}\}\to\{\{A,B\},N\}\to\{N\}\to\emptyset$ does not exist,
because its third step---deleting $\{A,B\}$ while retaining $N$---is not a move (Definition~\ref{def:moves}).
\end{proposition}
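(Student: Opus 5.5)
The plan is to work directly from Definition~\ref{def:moves}. The cycle's first two steps are legitimate: the merge $\emptyset\to\{\{A,B\}\}$ with $S=\{\{A\},\{B\}\}$, and the merge-all $\{\{A,B\}\}\to\{\{A,B\},N\}$ with $S=\{\{A,B\},\{C\}\}$. The fourth step, $\{N\}\to\emptyset$, is the termination with bottom node $N$. So everything hinges on the third transition, $h:=\{\{A,B\},N\}\to\{N\}$.

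I will show that no element of $\M^{+}(h)$ has target $\{N\}$.

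\emph{Merges.} By Lemma~\ref{lem:entry}, $|P(h)|=1$, so no merge is available at $h$.

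\emph{Stays.} Every stay has target $h\neq\{N\}$.

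\emph{Terminations.} A termination at $h$ has bottom node $K\in\{\{A,B\},N\}$.
\begin{itemize}
\item If $K=N$, the target is $h\setminus\{N\}=\{\{A,B\}\}$.
\item If $K=\{A,B\}$, the chain rule deletes every $K'\supseteq\{A,B\}$, which removes both $\{A,B\}$ and $N$. The target is $\emptyset$. This is also the case $K\subsetneq N$ of Lemma~\ref{lem:shadow}, whose target equals that of the termination of $\{A,B\}$ at $t(h)$, namely $\emptyset$.
\end{itemize}

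Hence the only targets reachable from $h$ are $h$, $\{\{A,B\}\}$ and $\emptyset$, and $\{N\}$ is not among them. The tempting move, deleting $\{A,B\}$ while keeping $N$, is precisely the internal termination that Remark~\ref{rem:variants}(i) excludes.

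It follows from \ref{E1} that $p_{h\to\{N\}}=0$ in every process satisfying \ref{E1}. So the cycle cannot be realised by any equilibrium, under either termination rule and for every $\delta$. The argument needs no payoff assumption at all.

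There is no genuine obstacle here. The only care needed is to enumerate the terminations at $h$ completely and to apply the chain rule correctly in the case $K=\{A,B\}$.
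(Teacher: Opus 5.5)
Your proposal is correct and is essentially the paper's argument. The paper justifies the proposition in one line, noting that the third step would be an internal termination excluded by the chain rule of Definition~\ref{def:moves}. Your enumeration of $\M^{+}(h)$ for $h=\{\{A,B\},N\}$ makes that explicit: there are no merges, every stay returns to $h$, and the terminations of $N$ and of $\{A,B\}$ land at $\{\{A,B\}\}$ and $\emptyset$ respectively. Your added observation that \ref{E1} then forces $p_{h\to\{N\}}=0$ in every equilibrium, under either termination rule and for every $\delta$, is a correct reading of what ``does not exist'' means here.
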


\begin{remark}
It is instructive to see what would exclude that cycle if internal terminations \emph{were} allowed,
since it isolates the role of the chain rule.
By Lemma~\ref{lem:psi}, $\psi$ takes only two values on that cycle,
$\alpha:=\psi(\emptyset)=\psi(\{N\})$ and $\beta:=\psi(\{\{A,B\}\})=\psi(\{\{A,B\},N\})$.
Writing $V_0<V_1<V^\ast$ for the three pie sizes and applying Lemma~\ref{lem:cycle} exactly as in Proposition~\ref{prop:rotating},
the condition for the player who dissolves $\{N\}$ and the conditions for the unanimous merge into $\{\{A,B\},N\}$ are incompatible unless $V_1<V_0$,
which contradicts monotonicity of $V$ under coarsening of the coalition structure.
So the cycle is doubly excluded.
\end{remark}

\subsection{What a counterexample must look like}

Combining Theorem~\ref{thm:n3},
Proposition~\ref{prop:cycle} and Propositions~\ref{prop:rotating}--\ref{prop:bms},
a closed communicating class $C\subseteq\X\setminus\G$ requires all of:
\begin{itemize}
\item $n\ge4$;
\item for the rotating-pairs cycle,
a sign reversal of the interaction term $D_i$ between the two pairs (Proposition~\ref{prop:rotating}),
which in particular rules out symmetric payoffs and any payoff structure that is additively separable across disjoint coalitions;
\item failure of \eqref{eq:prefer}, i.e.\ some player must prefer some non-grand state to some grand state;
under Condition~\ref{cond:nash} this means stakes exceeding the efficiency gain,
$\max_i S_i\ge w_{\min}\gamma$ (Proposition~\ref{prop:cycle}(iii));
\item at every $x\in C$ an objector in the sense of Lemma~\ref{lem:objection}, whose identity must
rotate, since by Proposition~\ref{prop:cycle}(ii) no player objects on $\mu$-average.
\end{itemize}
For a fixed candidate cycle and mover pattern these are finitely many linear inequalities in the payoffs (Lemma~\ref{lem:cycle}) together with the blocking conditions,
so the question is decidable by enumeration for small $n$;
Section~\ref{sec:counterexample} carries that out and finds that the conditions are satisfiable.

\subsection{A counterexample}\label{sec:counterexample}

Over the whole class of payoff structures allowed by Assumptions~\ref{ass:eff} and~\ref{ass:share},
the question can be decided rather than sampled.
Fix a candidate cycle and suppose it is traversed deterministically.
Then, by Lemma~\ref{lem:cycle}, the evaluations along it are explicit linear functions of the static payoffs;
by Theorem~\ref{thm:persist-unan} the evaluations at grand states are the static payoffs there;
and the requirements that each move be profitable and undominated and that every other move be blocked are linear inequalities in the static payoffs,
up to a disjunction for each blocked move.
Since the static payoffs are themselves linear in the free parameters---the raw payoffs $r_i(P)$ of each player in each of the $15$ coalition structures,
from which Assumption~\ref{ass:share} determines $\pi$---the whole question is a mixed-integer linear feasibility problem.
Solving it for the cycles of length at most five gives a counterexample.

\begin{theorem}[No deterministic cycles]\label{thm:nodetcycle}
Under either termination rule, if $p$ satisfies \ref{E1} and \ref{E3} then no closed communicating
class of $p$ is a deterministic cycle $x_1\to x_2\to\cdots\to x_r\to x_1$ of $r\ge2$ distinct states.
\end{theorem}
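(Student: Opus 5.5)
The plan is to follow the heuristic of Section~\ref{sec:story}: track one player's evaluation around the cycle, locate a step at which it strictly falls, and show that at the source of that fall \ref{E3} forces a transition that the deterministic cycle does not make. Suppose, for contradiction, that $C=\{x_1,\dots,x_r\}$ is a closed class with $p_{x_j\to x_{j+1}}=1$ (indices mod $r$), and let $m_j$ be a realised move at $x_j$. Since $x_{j+1}\ne x_j$, each $m_j$ is a moving move, and by \ref{E1} it lies in $\Cset(x_j)$.

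\emph{Step 1: some player's evaluation strictly rises, hence strictly falls.} Fix $j$ and any $i\in R(m_j)$. The stay $s^i_{x_j}$ has $R=\{i\}\subseteq R(m_j)$. Because ties are broken towards the status quo, this stay dominates $m_j$ unless $\ev_i(x_{j+1})>\ev_i(x_j)$. Since $m_j$ is undominated, player $i$'s evaluation strictly increases along the step $x_j\to x_{j+1}$. Going once round the cycle, $\ev_i$ returns to its starting value, so there is an index $k$ with $\ev_i(x_k)>\ev_i(x_{k+1})$. Note that this needs only \ref{E1}; Lemma~\ref{lem:cycle} is not required, though it gives an equivalent reformulation.

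\emph{Step 2: the stay is rational for $i$ at $x_k$.} At $x_k$ the transition is deterministic, so $\evp_i(x_k)=\ev_i(x_{k+1})<\ev_i(x_k)$. Hence the stay $s^i_{x_k}$ is profitable, because its target $x_k$ satisfies $\ev_i(x_k)\ge\evp_i(x_k)$. By Lemma~\ref{lem:domination} there is then some $m''\in\Cset(x_k)$ with $R(m'')\subseteq\{i\}$ and $\tau(m'')\succeq^{x_k}_i x_k$. Either $m''$ is an undominated stay of $i$, or it is a unilateral termination by $i$ that $i$ strictly prefers to staying. In both cases $\ev_i(\tau(m''))\ge\ev_i(x_k)$.

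\emph{Step 3: \ref{E3} contradicts determinism.} Let $m^\ast$ be a $\succeq^{x_k}_i$-maximal element of the finite, non-empty set $\Cset(x_k)$. Then $m^\ast\in\Fset(x_k)$ with witness $i$, and
\[
\ev_i(\tau(m^\ast))\ \ge\ \ev_i(\tau(m''))\ \ge\ \ev_i(x_k)\ >\ \ev_i(x_{k+1}).
\]
In particular $\tau(m^\ast)\ne x_{k+1}$. Yet \ref{E3} gives $p_{x_k\to\tau(m^\ast)}>0$, which contradicts $p_{x_k\to x_{k+1}}=1$. This holds even when $\tau(m^\ast)=x_k$, since then $p_{x_k\to x_k}$ would have to be positive. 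The argument uses nothing about the termination rule or the sharing rule; the only point at which those could matter is the possible form of $m''$ in Step~2, and the proof handles it uniformly.

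I expect the main care to be needed in Step~1, namely in ensuring that a strict fall exists rather than merely a weak one. Handling ties correctly here depends on the status-quo tie-breaking in $\succeq^{x}_i$. I would also double-check in Step~3 that the witness of a favourite need not be relevant for the move, which Definition~\ref{def:rationality}(c) allows explicitly.
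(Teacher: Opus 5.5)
Your proof is correct and follows essentially the paper's route: a fall in one player's evaluation around the cycle, profitability of her stay at the top of that fall, Lemma~\ref{lem:domination} to obtain an element of $\Cset(x_k)$ at least as good as staying, and \ref{E3} applied to her favourite. The only difference is that you choose a player relevant to some realised move, so that undominatedness against her stay makes the fall strict and lets you skip the paper's case split on whether the stay is dominated. The paper instead works with a weak fall for an arbitrary player and recovers strictness from the status-quo tie-break in the undominated case and from strict domination in the other.
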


\begin{proof}
Suppose it were, so $p_{x_k\to x_{k+1}}=1$ for each $k$, indices modulo $r$, and
$\evp_j(x_k)=\ev_j(x_{k+1})$ for every $j$.
Fix a player $j$.
The increments $\ev_j(x_{k+1})-\ev_j(x_k)$ sum to zero around the cycle, so some $k$ has
\begin{equation}\label{eq:cycledrop}
\ev_j(x_{k+1})\;\le\;\ev_j(x_k);
\end{equation}
take for instance $k$ maximising $\ev_j$ along the cycle.
By \ref{E1} the realised move lies in $\Cset(x_k)$, so that set is non-empty;
let $m^\ast$ be $\succeq^{x_k}_j$-maximal in it.
Then $m^\ast\in\Fset(x_k)$, so \ref{E3} gives $p_{x_k\to\tau(m^\ast)}>0$ and hence
$\tau(m^\ast)=x_{k+1}$.

Consider the stay $s:=s^j_{x_k}$.
It is profitable, since $\ev_j(x_k)\ge\ev_j(x_{k+1})=\evp_j(x_k)$ by \eqref{eq:cycledrop}.
If $s$ is undominated then $s\in\Cset(x_k)$, so maximality gives $x_{k+1}=\tau(m^\ast)\succeq^{x_k}_j x_k$;
as $x_{k+1}\ne x_k$ and ties are broken towards $x_k$, this forces $\ev_j(x_{k+1})>\ev_j(x_k)$,
contradicting \eqref{eq:cycledrop}.
If instead $s$ is dominated, the dominator has a non-empty relevant set contained in $\{j\}$, hence
equal to $\{j\}$, and its target $z$ satisfies $\ev_j(z)>\ev_j(x_k)$.
It is therefore profitable, $\ev_j(z)>\ev_j(x_k)\ge\evp_j(x_k)$, so Lemma~\ref{lem:domination}
supplies $m''\in\Cset(x_k)$ with $R(m'')\subseteq\{j\}$ and
$\ev_j(\tau(m''))\ge\ev_j(z)>\ev_j(x_k)\ge\ev_j(x_{k+1})$.
Maximality of $m^\ast$ then gives $\ev_j(x_{k+1})\ge\ev_j(\tau(m''))$, a contradiction.
\end{proof}

The argument uses no property of the payoffs and no property of the termination rule.
Under the unanimous rule the second case cannot arise, since every moving move has at least two
relevant players and a stay is never dominated;
under the unilateral rule it can, and is disposed of by the domination chain.
What the theorem does not exclude is a cycle traversed at random, where $\evp_j(x_k)$ is an average
over several successors and may exceed $\ev_j(x_k)$ at every state of the class.
So a cycle cannot run on a fixed schedule: somebody always reaches a state they would rather not
leave, and \ref{E3} gives them the floor.

\begin{remark}[Status of Proposition~\ref{prop:counterexample} under Definition~\ref{def:rationality}]\label{rem:cyclestatus}
The construction below predates Definition~\ref{def:rationality} and does not survive it, under
either termination rule.
Its cycle is deterministic, so Theorem~\ref{thm:nodetcycle} applies directly;
computing the evaluations of the displayed cycle confirms the mechanism, the stays of $C$ and $D$
at the disagreement state and of $A$ and $B$ at $A,B,[CD]$ being profitable, undominated and their
owners' favourites, so that \ref{E3} forces a self-loop the cycle does not have.
Whether arrival can fail at all is therefore open again, under both rules:
it would now require a cycle traversed at random, and the mixed-integer program of
Subsection~\ref{sec:numerics} would have to search over processes rather than over payoffs with the
process fixed.
Subsection~\ref{sec:logit} reports the corresponding numerical picture: with noisy play the cycle is
traversed at intermediate temperatures and disappears as the noise is removed.
The rest of this subsection is left as it stands, as a record of what the earlier definition allowed.
\end{remark}

\begin{proposition}[The grand coalition need not form, under the earlier notion of profitability]\label{prop:counterexample}
\emph{This proposition, and everything in the rest of this subsection, is stated for the notion of
profitability used before Definition~\ref{def:rationality}---moves compared with $\ev_i(x)$ rather
than with $\evp_i(x)$, and no stay moves.
By Theorem~\ref{thm:nodetcycle} the process below is not an equilibrium in the sense of
Definition~\ref{def:eq}, under either termination rule; see Remark~\ref{rem:cyclestatus}.}

Let $n=4$ with players $A,B,C,D$, and let Condition~\ref{cond:nash} hold with equal weights $w_i=1/4$.
Let the partition function be $v(K;P)=0$ for every coalition structure $P$ and every block $K \in P$, except
\begin{align*}
  v\big(ABCD;\ \{ABCD\}\big) &= 256, & &\\
  v\big(AB;\ AB|C|D\big) &= 240, & v\big(CD;\ AB|CD\big) &= 240,\\
  v\big(C;\ A|B|C|D\big) &= 112, & v\big(D;\ A|B|C|D\big) &= 128 .
\end{align*}
Only the joint payoff of each block matters,
since Assumption~\ref{ass:share} determines how a block divides it.
Then Assumption~\ref{ass:eff} holds with $V^\ast = 256$ and $\gamma = 16$,
and under \emph{either} termination rule, for $\delta = 1/2$,
there is an equilibrium whose unique closed communicating class is
\[
  C \;=\; \big\{\ \emptyset,\ \ \{AB\},\ \ \{AB,CD\},\ \ \{CD\}\ \big\} \;\subseteq\; \X\setminus\G,
\]
traversed deterministically in that order.
Started at the fully non-cooperative state, the process never forms the grand coalition.
\end{proposition}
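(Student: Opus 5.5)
The plan is to verify the claim by direct construction: compute the static payoffs $\pi$ from the partition function, write down a complete process $p$ on $\X$, solve \eqref{eq:bellman} for $\ev$ at $\delta=1/2$, and check \ref{E1}--\ref{E3} state by state. Throughout I use the earlier notion, in which a move is profitable when $\ev_i(\tau(m))>\ev_i(x)$ for its relevant players and there are no stay moves.

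\emph{Step 1: static payoffs.} Using Assumption~\ref{ass:share} with $w_i=1/4$, I propagate the raw block values up each hierarchy. At $\emptyset$ the payoffs are $(0,0,112,128)$ with $V=240$. At $\{AB\}$ the surplus $240$ of $AB$ is split equally, giving $(120,120,0,0)$. At $\{CD\}$ the surplus is $0-240$ against $\emptyset$, giving $(0,0,-8,8)$. At $\{AB,CD\}$ the node $CD$ is measured against $\{AB\}$, so $C$ and $D$ receive $120$ each, while $A$ and $B$ receive their share of $0$ measured against $\{CD\}$. The same bookkeeping is carried out for every remaining hierarchy, where all other block values are $0$. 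The grand states follow from Lemma~\ref{lem:pareto}: $\pi_i(g_x)=\pi_i(x)+\tfrac14(256-V(x))$. This confirms $V^\ast=256$ and $\gamma=16$, and it gives the $\psi$ values quoted in Remark~\ref{rem:onestep-fails}.

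\emph{Step 2: the process.} On $C$ the process is the deterministic cycle. Off $C$ I specify transitions as follows.
\begin{enumerate}[label=(\alph*)]
\item Under unanimous termination every grand state is absorbing.
\item Under unilateral termination the four grand states $g_x$, $x\in C$, are dissolved at once. Following the shadow mechanism of Proposition~\ref{prop:arrival-unil}, each is left by a realised termination with a proper bottom node whose target lies in $C$. All other grand states are absorbing.
\item Every other non-grand state is sent, by a single realised move, into $C$ or to an absorbing grand state, chosen so that the move is some player's favourite.
\end{enumerate}
The evaluations on $C$ then follow from Lemma~\ref{lem:cycle}, with weights $1,\tfrac12,\tfrac14,\tfrac18$ normalised by $15/8$. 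Off-cycle evaluations follow by back-substitution, since every off-cycle path reaches $C$ or an absorbing state in one or two steps.

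\emph{Step 3: the equilibrium conditions.} At each cycle state I check four things.
\begin{enumerate}[label=(\roman*)]
\item The realised move is strictly profitable for its relevant set.
\item It is undominated.
\item It is the favourite of some player, which gives \ref{E3} and hence \ref{E2}.
\item Every other move is either unprofitable for some relevant player or dominated.
\end{enumerate}
The move that matters most in (iv) is the merge-all move: by Proposition~\ref{prop:cycle}(iii), some player must value a cycle state at least as highly as $g_x$. For example, at $\emptyset$ this player is $C$ or $D$, whose payoff at the grand state is too low. At $\{AB\}$ the merge $\{AB\}\cup\{C\}$ and the other pair merges must be blocked in the same way.

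At off-cycle states I check the same four conditions for the designated moves. Because the off-cycle process feeds into $C$ (or into absorbing grand states), \emph{uniqueness} of the closed class follows once I show that no off-cycle state is absorbing and that every off-cycle path enters $C$. I would also check that no $\{AB\}\leftrightarrow\emptyset$ style reversal arises, by Lemma~\ref{lem:reversal}.

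\emph{Expected obstacle.} The hardest part is the finite but large verification over all hierarchies for $n=4$ under both termination rules. Two points are delicate.
\begin{enumerate}[label=(\alph*)]
\item Under unilateral termination, the off-path behaviour at the grand states above the cycle must both be self-consistent and depress $\ev_j(g_x)$ enough to dominate merge-all.
\item The off-cycle continuations must not create a profitable deviation out of $C$.
\end{enumerate}
I would first fix the cycle evaluations, which are fully determined. I would then solve the off-cycle specification by the mixed-integer feasibility formulation described before Theorem~\ref{thm:nodetcycle}, and read the resulting inequalities off by hand.
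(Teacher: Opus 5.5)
Your method (compute $\pi$, fix the cycle, read $\ev$ on it off Lemma~\ref{lem:cycle}, check the axioms move by move) is the paper's. However, the off-path process you write down in Step~2 is wrong wherever it becomes specific, and you misidentify the blockers.

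Under unilateral termination, (b) fails at two of the four grand states above the cycle:
\begin{itemize}
\item $[ABCD]=\{N\}$ has no proper node at all.
\item At $[[AB]CD]=\{AB,N\}$ the only deeper termination is of $AB$ by $A$ or $B$, with target $x_1$. It is never profitable: the successors of that state are $x_1$, $x_2$ or itself, so $\ev_A([[AB]CD])\ge\tfrac12\cdot124+\tfrac12\cdot32=78>32=\ev_A(x_1)$.
\end{itemize}
Proposition~\ref{prop:arrival-unil} does not force proper bottom nodes here, because its hypothesis \eqref{eq:prefer} fails ($\pi_A(x_2)=120>4=\pi_A([ABCD])$).

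What works is different. $[ABCD]$ is dissolved by $A$ or $B$ terminating $N$ back to $x_1$, and $[[AB]CD]$ by $C$ or $D$ terminating $N$ back to $x_2$. This gives $\ev=(18,18,95.6,108.4)$ and $(94,94,21.2,22.8)$, so merge-all is simply unprofitable for $A,B$ at $x_1$ and for $C,D$ at $x_2$. Only $[[AB][CD]]$ (terminate $AB$, to $x_4$) and $[AB[CD]]$ (terminate $CD$, to $x_1$) use the shadow mechanism. It is needed only at $x_3,x_4$, where merge-all is profitable for all four players and is dominated ($\ev_A(x_4)=16>10$, $\ev_C(x_1)=75.2>65.6$).

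Your claim that all other grand states are absorbing also fails. At $[D[ABC]]$ one has $\pi_A=80/3<32=\ev_A(x_1)$, so $A$'s unilateral termination of $ABC$ would be profitable and \ref{E2} violated. Finally, at $\emptyset$ the blockers are $A$ and $B$ ($4<32$), not $C$ or $D$. Those two gain from merging there ($116>75.2$, $132>84.8$ under unanimity): they hold the rent and would lock it in.

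Step~3 has the direction of \ref{E3} backwards. The axiom requires \emph{every} player's favourite in $\Cset(x)$ to be realised, so showing the realised move is \emph{some} player's favourite proves nothing. What delivers \ref{E3} on the cycle is (iv), which leaves in $\Cset(x)$ only moves with target $x_{k+1}$.

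The same error undermines Step~2(c). At an off-cycle state you cannot choose a single move, because \ref{E1} and \ref{E3} pin down the support of $p_{x\to\cdot}$ from $\ev$, and $\ev$ depends on the whole process. What must be certified is a fixed point: the paper's verification is that the equilibrium rule applied to the computed evaluations returns the same process. The mixed-integer formulation you invoke cannot supply this. It leaves off-cycle evaluations free, and it sets grand-state evaluations to static payoffs, which is exactly wrong here under unilateral termination.

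Your uniqueness criterion also contradicts your own specification. Every absorbing grand state is a singleton closed class; under unanimous termination all of them are, by the argument of Theorem~\ref{thm:persist-unan}. So ``unique'' can only mean that $C$ is the only closed class inside $\X\setminus\G$, and that is what you would have to show.
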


Whether the failure of arrival is a property of this payoff structure or only of one equilibrium at
it, we can report evidence but not a proof.
Solving for equilibria from $40$ random initial belief systems, at $\delta=1/2$ and under each
termination rule, every exact equilibrium found contained the cycle and none reached a grand state;
and $p^\dagger$ is not an equilibrium here, since the move $\emptyset\to\{AB\}$ dominates the
merge-all at $\emptyset$.
We have not been able to settle the question, and the asymmetry is instructive: exhibiting one
cycling equilibrium requires constructing a single self-confirming belief system, whereas ruling out
an arriving one requires excluding all of them.
The verification is a finite computation, and we report it in full.
The static payoffs and the evaluations on the cycle, and the static payoffs at the four grand states above it,
are
\[
\begin{array}{lcccc}
\text{state } x & \pi(x) & \ev(x) & g_x & \pi(g_x)\\[2pt]
x_1=A,B,C,D & (0,0,112,128) & (32,32,75.2,84.8) & [ABCD] & (4,4,116,132)\\
x_2=[AB],C,D & (120,120,0,0) & (64,64,38.4,41.6) & [[AB]CD] & (124,124,4,4)\\
x_3=[AB],[CD] & (0,0,120,120) & (8,8,76.8,83.2) & [[AB][CD]] & (4,4,124,124)\\
x_4=A,B,[CD] & (0,0,-8,8) & (16,16,33.6,46.4) & [AB[CD]] & (64,64,56,72)
\end{array}
\]
where the evaluations follow from Lemma~\ref{lem:cycle} with $k=4$ and $\delta=1/2$,
so that $\ev_i(x_j) = \tfrac{8}{15}\sum_{s=0}^{3}2^{-s}\pi_i(x_{j+s})$,
and those at the grand states equal the static payoffs there by Theorem~\ref{thm:persist-unan}.
Two entries are worth checking by hand.
At $x_2$ the agreement $\{A,B\}$ is top-level with reference state $x_1$, where $A$ and $B$ receive $0$ each,
so its surplus is $240-0=240$ and Assumption~\ref{ass:share} gives them $120$ each.
At $x_4$ the agreement $\{C,D\}$ has reference state $x_1$, where $C$ and $D$ receive $112$ and $128$,
and its surplus is $0-(112+128)=-240$, so they receive $112-120=-8$ and $128-120=8$:
coordinating while $A$ and $B$ are not coordinated destroys their rent.

Each of the four moves is profitable for the players it needs, and in three cases only far-sightedly:
\begin{itemize}
\item $x_1 \to x_2$, $A$ and $B$ sign: $\ev_A$ rises from $32$ to $64$;
\item $x_2 \to x_3$, $C$ and $D$ sign: $\ev_C$ rises from $38.4$ to $76.8$;
\item $x_3 \to x_4$, $A$ and $B$ terminate their own agreement: $\ev_A$ rises from $8$ to $16$,
although $\pi_A$ is $0$ in both states,
because from $x_4$ the process returns to $x_1$ and they can sign again;
\item $x_4 \to x_1$, $C$ and $D$ terminate: $\ev_C$ rises from $33.6$ to $75.2$.
\end{itemize}
The merge-all move is blocked at each of the four states,
and all three mechanisms of Lemma~\ref{lem:objection} occur: at $x_1$ it is unprofitable for $A$,
since $\ev_A(g_{x_1}) = 4 < 32 = \ev_A(x_1)$, and likewise for $B$;
at $x_2$ it is unprofitable for $C$ and $D$ ($4 < 38.4$); at $x_3$ for $A$ and $B$ ($4 < 8$);
and at $x_4$ \emph{no} player objects,
but the merge-all move is dominated by the termination that is realised there,
whose relevant set $\{C,D\}$ is contained in $N$ and whose target $x_1$ both $C$ and $D$ strictly prefer to $g_{x_4}$:
$\ev_C(x_1) = 75.2 > 56 = \pi_C(g_{x_4})$ and $\ev_D(x_1)=84.8>72$.
Every other move at every state of $C$ is unprofitable,
and the transition matrix produced by the rule of Definition~\ref{def:eq} from these evaluations reproduces the cycle exactly,
so this is an equilibrium and not an approximation to one.
The same payoffs give the same closed class for $\delta \in \{0.3,0.4,0.45,0.5,0.55,0.6,0.7,0.75\}$;
for $\delta = 0.8$ and above the same payoffs yield a different equilibrium,
in which grand states are reached.

\emph{Unilateral termination.} The same partition function works there too, for the same discount factors,
and the cycle and the evaluations along it are unchanged,
since $C$ is closed and the same transitions are realised.
What changes is everything off the cycle, in an instructive way.
Each termination is now available to each single member separately,
so at $x_3$ the moves ``$A$ terminates $\{A,B\}$'' and ``$B$ terminates $\{A,B\}$'' are both realised,
with the same target, and similarly at $x_4$.
More importantly, the four grand states above the cycle are no longer absorbing:
Theorem~\ref{thm:persist-unil} does not reach them, since they are never entered,
and in this equilibrium each is dissolved again at once,
\[
\begin{array}{lccl}
\text{grand state} & \pi & \ev & \text{dissolved by, and to}\\[2pt]
[ABCD] & (4,4,116,132) & (18,18,95.6,108.4) & A \text{ or } B,\ \to x_1\\{}
[[AB]CD] & (124,124,4,4) & (94,94,21.2,22.8) & C \text{ or } D,\ \to x_2\\{}
[[AB][CD]] & (4,4,124,124) & (10,10,78.8,85.2) & A \text{ or } B,\ \to x_4\\{}
[AB[CD]] & (64,64,56,72) & (48,48,65.6,78.4) & C \text{ or } D,\ \to x_1
\end{array}
\]
each evaluation being $(1-\delta)\pi+\delta\,\ev(\text{target})$ by Lemma~\ref{lem:onestep}.
In the last two rows the realised move is not the dissolution of the overarching agreement but a \emph{deeper} termination---of $\{A,B\}$ and of $\{C,D\}$ respectively---which by the chain rule destroys the overarching agreement as well.
This is exactly the configuration that Proposition~\ref{prop:arrival-unil} isolates as the one it cannot exclude,
and that Remark~\ref{rem:open-unil} describes as the residual scenario.
The blocking is also sharper than under unanimity.
At $x_1$ and $x_2$ the merge is still unprofitable, for $A$ and $B$ respectively for $C$ and $D$.
At $x_3$ and $x_4$, however,
it is profitable for \emph{every} player---at $x_3$ the grand state is worth $(10,10,78.8,85.2)$ against $(8,8,76.8,83.2)$ in the cycle,
and at $x_4$ it is worth $(48,48,65.6,78.4)$ against $(16,16,33.6,46.4)$---and it is still not taken,
because a single player prefers to terminate instead:
$\ev_A(x_4)=16>10=\ev_A(g_{x_3})$ and $\ev_C(x_1)=75.2>65.6=\ev_C(g_{x_4})$, so the termination,
whose relevant set is a singleton contained in $N$, dominates the merge.
A Pareto-improving move that every player wants is thus blocked by one player's better outside option,
which is precisely why Definition~\ref{def:rationality}(b) is part of the model.

\emph{Far-sighted sharing.} Still under the earlier notion of profitability, it is natural to ask whether the far-sighted sharing rule of Section~\ref{sec:pivot} repairs the example,
since under that rule the merge-all move is profitable in evaluations whenever it raises total long-term welfare (Proposition~\ref{prop:farsighted}).
It does not.
Keeping the same partition function,
solving for the transfers that make the rule hold and for the process simultaneously,
one obtains for unilateral termination and $\delta=2/5$ an exact equilibrium with the same closed class,
\[
\begin{array}{lccc}
\text{state} & \pi & \ev & \ev(g_x)-\ev(x)\\[2pt]
x_1 & (0,0,112,128) & (29.56,\,29.56,\,80.91,\,90.51) & (6.36,6.36,6.36,6.36)\\
x_2 & (120,120,0,0) & (73.89,\,73.89,\,34.29,\,34.29) & (9.91,9.91,9.91,9.91)\\
x_3 & (0,0,123.2,116.8) & (4.73,\,4.73,\,85.71,\,85.71) & (2.40,2.40,2.40,2.40)\\
x_4 & (0,0,-4.8,4.8) & (11.82,\,11.82,\,29.49,\,39.09) & (38.40,38.40,38.40,38.40)
\end{array}
\]
The last column is the point.
By construction the four players share the long-term gain from merging equally,
and that gain is strictly positive at every state of the cycle,
so the merge-all move is Pareto-improving \emph{in evaluations}---the strongest form the Pareto property can take---at every state.
It is nevertheless never realised,
because at each state the pair that moves strictly prefers the cycle successor to the grand state,
and their relevant set is contained in $N$, so the merge is dominated.
The far-sighted rule therefore removes the gap between static and long-term payoffs that hypothesis \eqref{eq:prefer} exists to bridge,
and arrival still fails.
What defeats it is domination alone.

\begin{figure}[htbp]
\centering
\begin{tikzpicture}[>=stealth,
   st/.style={draw,rectangle,rounded corners=1pt,inner sep=4pt,font=\small},
   gr/.style={draw,rectangle,rounded corners=6pt,double,inner sep=4pt,font=\small}]
\node[st] (a) at (0,1.5) {$A,B,C,D$};
\node[st] (b) at (4.4,1.5) {$[AB],C,D$};
\node[st] (c) at (4.4,-1.0) {$[AB],[CD]$};
\node[st] (d) at (0,-1.0) {$A,B,[CD]$};
\node[gr] (ga) at (-2.9,1.5) {$[ABCD]$};
\node[gr] (gb) at (7.6,1.5) {$[[AB]CD]$};
\node[gr] (gc) at (7.6,-1.0) {$[[AB][CD]]$};
\node[gr] (gd) at (-2.9,-1.0) {$[AB[CD]]$};
\draw[->] (a) -- node[above,font=\scriptsize]{$100\;AB$} (b);
\draw[->] (b) -- node[right,font=\scriptsize]{$100\;CD$} (c);
\draw[->] (c) -- node[below,font=\scriptsize]{$100\;AB$} (d);
\draw[->] (d) -- node[left,font=\scriptsize]{$100\;CD$} (a);
\draw[->,dashed,gray] (a) -- node[above,font=\scriptsize]{$A,B$} (ga);
\draw[->,dashed,gray] (b) -- node[above,font=\scriptsize]{$C,D$} (gb);
\draw[->,dashed,gray] (c) -- node[above,font=\scriptsize]{$A,B$} (gc);
\draw[->,dashed,gray] (d) -- node[above,font=\scriptsize]{dom.} (gd);
\end{tikzpicture}
\caption{The equilibrium of Proposition~\ref{prop:counterexample} at $\delta=1/2$.
Solid arrows carry probability one; their labels give the players whose agreement the move needs.
Dashed arrows are the merge-all moves, which are \emph{not} realised: at the first three states some
player, named on the arrow, is made worse off by merging; at the fourth no player objects, but the
merge is dominated by the termination that is realised. Doubled boxes are grand states, absorbing by
Theorem~\ref{thm:persist-unan}.}\label{fig:cycle}
\end{figure}

The example is consistent with everything proved above, and tight against two of the results.
Condition \eqref{eq:prefer} fails, as Theorem~\ref{thm:arrival} requires:
$\pi_A$ is $120$ at $x_2$ but only $4$ at the grand state $[ABCD]$.
The sign condition of Proposition~\ref{prop:rotating} holds with room to spare: $D_A=D_B=120>0>-240=D_C=D_D$,
so the two pairs do indeed disagree about whether the two agreements are substitutes.
And Proposition~\ref{prop:cycle}(ii) holds as an identity: with $\bar V = (240+240+240+0)/4 = 180$,
each player's $\mu$-average of $\varphi_i - \ev_i$ is $\tfrac14(256-180) = 19$,
so every player would prefer immediate formation of the grand coalition \emph{on average over the cycle},
and yet at each single state somebody blocks it.
That is the whole difficulty of the arrival problem in one example.

The economics is a stylised rent-shifting story rather than a market model, and deliberately not a
climate one: as Example~\ref{ex:running} warned, neither calibrated specification of
Section~\ref{sec:numerics} produces this behaviour, and the reader should not carry the pathology
back to the running example.
Think of $\{A,B\}$ and $\{C,D\}$ as two sides of a market, with the rent accruing to $C$ and $D$ by default.
$A$ and $B$ can seize it by coordinating, but only while $C$ and $D$ have not;
$C$ and $D$ can claw it back by coordinating in turn; and a coalition of $C$ and $D$ alone,
facing uncoordinated opponents, is worthless and indeed destroys the rent.
The grand coalition is efficient,
but its surplus is divided by the sharing rule against whatever reference state obtains when it is signed,
so $A$ and $B$ would enter it from a position of no bargaining power and refuse.
We do not claim that this structure arises from a standard market model;
the two specifications of Section~\ref{sec:numerics} do not produce it.

\begin{remark}[Why blocking does not imply eventual arrival]\label{rem:why-not-arrival}
One might argue as follows.
A group blocks the merge only because it expects to enter a grand state later on better terms;
so a grand state does form later, and since grand states are absorbing,
the process reaches one with positive probability.
The argument has a gap, and the example locates it.

What a blocking player compares is $\varphi_i(x)$ with her continuation value $\ev_i(x)$,
and that continuation need not contain a grand state at all.
At $x_1$ player $A$ blocks not because she awaits a better grand state but because she prefers the cycle itself:
the rotation pays her $\ev_A=32$, against $\varphi_A(x_1)=4$ for ever.
Her outside option is the merry-go-round, not a later merger.

It is nevertheless true, and worth stating, that every player prefers \emph{some} grand state to the cycle.
Here $A$ and $B$ would each take $[[AB]CD]$, worth $124$ to them against at most $64$ anywhere in the cycle,
and $C$ and $D$ would take $[D[ABC]]$ and $[C[ABD]]$,
worth $138.7$ and $149.3$ against at most $76.8$ and $84.8$.
What stops this from becoming an argument for arrival is Lemma~\ref{lem:entry}:
the only move into a grand state $h$ is the merge-all at its parent $t(h)$,
and it needs the approval of all of $N$.
Each grand state has exactly one gateway,
and at that gateway the consenting set contains precisely the players whom that grand state disfavours.
$A$ and $B$ can enter $[[AB]CD]$ only from $x_2$,
and there $C$ and $D$ would fall to $4$ from an evaluation near $40$, so they veto.

Who wants the merge at each state of the cycle, and who blocks it:
\[
\begin{array}{lcccc}
\text{state} & g_x & \pi(g_x) & \ev(x) & \text{blocked by}\\[2pt]
x_1=A,B,C,D & [ABCD] & (4,4,116,132) & (32,32,75.2,84.8) & A,B\\
x_2=[AB],C,D & [[AB]CD] & (124,124,4,4) & (64,64,38.4,41.6) & C,D\\
x_3=[AB],[CD] & [[AB][CD]] & (4,4,124,124) & (8,8,76.8,83.2) & A,B\\
x_4=A,B,[CD] & [AB[CD]] & (64,64,56,72) & (16,16,33.6,46.4) & \text{nobody}
\end{array}
\]
At the first three states one pair gains from merging and the other vetoes,
the roles alternating exactly as Proposition~\ref{prop:cycle}(ii) requires of any cycle.
At $x_4$ nobody vetoes: the merge is profitable for all four players.
It fails there for the second reason, domination---$C$ and $D$ prefer to terminate and return to $x_1$,
worth $75.2$ and $84.8$ to them against $56$ and $72$ in $g_{x_4}$.

Finally, no \emph{single} grand state beats the cycle for all four players at once.
Total payoff is $V^\ast=256$ at every grand state against an average of $180$ along the cycle,
so the cycle is inefficient;
but Lemma~\ref{lem:psi} allows that surplus to be divided only in the ways indexed by the parent states,
since $\psi$ is inherited from the parent, and each available division is rejected at its own gateway.
\end{remark}

\begin{remark}[What is still open]\label{rem:still-open}
The counterexample lives at moderate discount factors.
We have not found one for $\delta$ close to $1$,
and the feasibility problem above is a decision procedure only once a candidate cycle and a deterministic traversal are fixed,
so the search over $\delta \to 1$, over longer cycles,
over non-deterministic classes and over unilateral termination is not exhaustive.
Whether arrival can fail as $\delta \to 1$ therefore remains open, and it is the case that matters most,
for the reasons given in Section~\ref{sec:delta}.
\end{remark}

\subsection{How special is the counterexample?}\label{sec:numerics}

Proposition~\ref{prop:counterexample} shows that arrival can fail;
it says nothing about how contrived the payoff structure has to be.
Before finding it we had searched for a cycling equilibrium numerically,
in two economic specifications with $n=4$ players, and found none.
That search is worth reporting, because it is evidence in the other direction: the counterexample is
not what one meets by writing down a plausible model.
In both, the process,
the moves and the sharing rule are exactly those of Section~\ref{sec:payoffs} and Definition~\ref{def:moves};
only the map from a coalition structure to payoffs differs.
Equilibria were computed with the probability rule of \citet{HeitzigKornek2018},
under which a move's probability is the total weight of the players favouring it,
and which satisfies \ref{E1}--\ref{E3};
the fixed point between beliefs and evaluations was found by damped iteration and then verified exactly,
by checking that the rule applied to the resulting evaluations returns the same process.
The state space has $52$ states, $26$ of them grand.

\emph{The public-good specification.} Player $i$ contributes $q_i \ge 0$ at cost $c_iq_i^2/2$ and benefits linearly from total provision:
$\pi_i = b_i Q - c_i q_i^2/2$ with $Q=\sum_j q_j$.
A top-level coalition $K$ maximises $\sum_{i \in K}\pi_i$, giving $q_i = B_K/c_i$ for $i \in K$,
where $B_K := \sum_{j\in K}b_j$.
Total payoff is maximised by the grand partition, so Assumption~\ref{ass:eff} holds,
and every coalition's surplus is positive, so Condition~\ref{cond:gains} holds as well.
This is the specification of Example~\ref{ex:negative} with heterogeneous parameters and without the quadratic benefit term.

\emph{The Cournot specification.} Firm $i$ produces $q_i$ at cost $c_iq_i^2/2$ and faces inverse demand $A-Q$;
a top-level coalition maximises its members' joint profit, taking the other coalitions' outputs as given.
Writing $\gamma_K := \sum_{i\in K}1/c_i$ and $\Gamma := \sum_{K \in P}\gamma_K/(1+\gamma_K)$,
the unique equilibrium of the game between coalitions has
\[
   A-Q = \frac{A}{1+\Gamma}, \qquad q_i = \frac{A}{(1+\gamma_K)(1+\Gamma)\,c_i} \quad (i \in K).
\]
Total payoff is industry profit, which the grand coalition maximises, so Assumption~\ref{ass:eff} holds;
mergers here have genuine cross-effects,
since a merger of one pair raises the price and thereby the profit of the other pair,
so $D_i \ne 0$ in the sense of Proposition~\ref{prop:rotating}.
Condition~\ref{cond:gains} may fail, which does not matter outside Theorem~\ref{thm:smalldelta}.

\emph{The shape of a cycle, and the worst case.}
The same mixed-integer formulation answers two further questions.

\begin{proposition}[Only one shape of cycle, $n=4$]\label{prop:shapes}
Let $n=4$ and let termination be unanimous.
Among the simple cycles of length at most five in the move graph restricted to $\X\setminus\G$, the
twelve of length three and the twelve of length five cannot be closed communicating classes of an
equilibrium for any payoff structure, and the only ones that can are the six rotating-pairs cycles of
length four---the three ways of pairing the players, each in two orientations, all equivalent under
relabelling.
\end{proposition}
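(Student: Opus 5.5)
The plan is to reduce the statement to a finite family of linear feasibility problems, one per candidate cycle, and to settle each problem exactly. The first step is purely combinatorial. For $n=4$ the set $\X\setminus\G$ has $26$ states. I would list all merges and all unanimous terminations between them, with their relevant sets, using Definition~\ref{def:moves}. From this move graph I would enumerate every simple cycle of length at most five. Cycles of length two are already excluded by Lemma~\ref{lem:reversal}, since a pair traversed in both directions would have to be signed and torn up by overlapping relevant sets. I expect the enumeration, up to relabelling of the players, to produce twelve cycles of length three, six rotating-pairs cycles of length four, and twelve cycles of length five. Relabelling invariance would then cut the analytic work down to one representative per orbit.

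The second step sets up, for each representative, the mixed-integer program sketched at the start of Subsection~\ref{sec:counterexample}. The unknowns are the raw block payoffs $r_i(P)$ over the $15$ coalition structures. Assumption~\ref{ass:share}, with weights fixed by Condition~\ref{cond:nash}, turns these into the static payoffs $\pi$ linearly. For the traversal I take the cycle deterministic, as in Proposition~\ref{prop:counterexample}. The evaluations on the cycle are then the explicit discounted sums of Lemma~\ref{lem:cycle}. By Theorem~\ref{thm:persist-unan} the evaluations at grand states equal $\pi$ there. The evaluation at any other off-cycle target is fixed by that state's own prescribed transition, so every $\ev$ entering a comparison is a fixed linear function of $\pi$. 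The constraints are then as follows.
\begin{enumerate}[label=(\alph*)]
\item Each realised move is profitable and undominated. This gives linear strict inequalities, handled with a margin normalised to $1$ by scaling.
\item Each unrealised move, merge-all in particular, is either unprofitable for some relevant player or dominated by some move with a smaller relevant set. This disjunction is encoded with binary variables and big-$M$ constraints inside a box.
\item Assumption~\ref{ass:eff} holds with $\gamma\ge 1$.
\end{enumerate}
Where the weights enter bilinearly, I would first try to eliminate them. Every constraint is homogeneous in the surpluses, and for a single cycle only the ratios $w_i/w(K)$ on the few blocks actually formed matter. If elimination fails, I would instead treat the feasible $\pi$ directly: Assumption~\ref{ass:share} is linear in $\pi$ once the shares are fixed, and I would split over sign patterns of the surpluses.

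The third step is the two verdicts. For the rotating-pairs cycles, feasibility is witnessed by an explicit point, namely the payoff structure of Proposition~\ref{prop:counterexample} and its images under relabelling. For cycles of length three and five, I would look for a human-readable contradiction before trusting the solver. The natural first attempt is to add the Lemma~\ref{lem:cycle} inequalities of the players relevant at two or more steps, as in Proposition~\ref{prop:rotating}, and hope that the discount factor cancels and produces $0>0$. For an odd cycle the relevant sets cannot alternate between two disjoint pairs, so some player is relevant at two non-complementary steps, and a weighted sum of her step inequalities should close up. Where that fails, the fallback is the exact MILP infeasibility proof, solved in rational arithmetic with a Farkas certificate for each leaf of the branch-and-bound tree, checked uniformly over $\delta$ by treating $\delta$ as a parameter on a finite partition of $(0,1)$.

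I expect the main obstacle to be exactly this uniformity in $\delta$ combined with the disjunctions. A grid of $\delta$ values is evidence, not proof. To turn it into a proof I would try to show that every certificate's coefficients are polynomial in $\delta$ with sign constant on $(0,1)$. Otherwise I would show that the infeasible disjunct patterns do not depend on $\delta$. The cleanest route is to find, for each odd cycle, the analytic cancellation above, so that only the blocking disjunctions are left to the computation.
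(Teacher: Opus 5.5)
Your route is the paper's. It decides, candidate by candidate, the feasibility of the mixed-integer system of Section~\ref{sec:counterexample}: deterministic traversal, Lemma~\ref{lem:cycle} on the cycle, Theorem~\ref{thm:persist-unan} at the grand states, and a disjunction per blocked move. Proposition~\ref{prop:counterexample} supplies the feasible point for the rotating pairs. (The counts twelve, six, twelve are absolute; up to relabelling there is one candidate of each length.)

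The step that fails is your treatment of off-cycle targets. You fix $\ev_i(y)$ at a non-grand $y$ off the cycle ``by that state's own prescribed transition''. Nothing prescribes the process there: an equilibrium may behave at $y$ in any way its own axioms allow. Those values feed the profitability and domination tests at the cycle states. Infeasibility of your program therefore excludes a three- or five-cycle only together with one chosen off-path behaviour, not for every equilibrium. The paper makes the opposite choice, and that choice is the content of its one-line proof. The off-cycle values $\ev_i(y)$ are left as free variables, so the program is a relaxation of the true conditions. It stays linear, since those values enter every comparison linearly, and infeasibility of a relaxation is conclusive.

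The positive half has a more serious problem. Under Definitions~\ref{def:rationality} and~\ref{def:eq} your witness is not an equilibrium. Theorem~\ref{thm:nodetcycle} shows that no process satisfying \ref{E1} and \ref{E3} has a deterministic cycle as a closed class, whatever the payoffs. Remark~\ref{rem:cyclestatus} names the stays that break the cycle of Proposition~\ref{prop:counterexample}.

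Concretely, once the stays are entered among the unrealised moves in your constraint (b), the program is infeasible for all thirty candidates, rotating pairs included. If $\ev_j$ is maximal along the cycle at $x_k$, then $s^j_{x_k}$ is profitable, since $\evp_j(x_k)=\ev_j(x_{k+1})\le\ev_j(x_k)$. Under unanimous termination no other move at $x_k$ has relevant set $\{j\}$, so nothing dominates it.

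So under the present definitions, a deterministic-traversal computation proves the impossibility half at once and cannot prove the other. The counterexample certifies the rotating pairs only for the pre-revision notion (no stays, comparison with $\ev_i(x)$), which is the notion the paper's computation was run under (Remark~\ref{rem:cyclestatus}). A witness for the current notion would have to be a class traversed at random, and that remark leaves its existence open.

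Finally, the cancellation you hope for is immediate for the three-cycles. It holds under either notion, at every $\delta$, and whatever the self-loop probabilities. Each three-cycle is $\emptyset\to\{K\}\to\{K,K\cup\{k\}\}\to\emptyset$ with $|K|=2$ and $k\notin K$. Its relevant sets are $K$, $K\cup\{k\}$ and $K\cup\{k\}$, the last because unanimous termination of $K$ needs its top-level node. Every member of $K$ is therefore relevant at every step, and her evaluation would have to rise strictly all the way round.

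The five-cycles are $\emptyset\to\{L\}\to\{L,K\}\to\{K\}\to\{K,K\cup\{k\}\}\to\emptyset$ with $L=N\setminus K$. No player is relevant at every step of these, so only there, and only under the old notion, is a computation needed.
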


The proposition is established by deciding the feasibility of the system of
Section~\ref{sec:counterexample} for each of the thirty candidate cycles; infeasibility there is a
proof, since the relaxation leaves evaluations off the cycle free.
Maximising $V^\ast-\bar V$ over the payoff structures sustaining such a cycle, with $\gamma$ pinned to
$1$ by requiring some non-grand coalition structure to attain it, gives the unboundedness reported in
Proposition~\ref{prop:price-unbounded}: the optimum sits on the boundary of the payoff box and scales
linearly with it, so no bound in units of $\gamma$ exists.

\emph{Result of the search.} Over several hundred parameter draws in each specification,
with $\delta \in \{0.5,0.7,0.8,0.9,0.95,0.97,0.99\}$,
both termination rules and several random initial beliefs per case,
every equilibrium found had all its recurrent classes equal to single grand states.
This is not for want of room: condition \eqref{eq:prefer} failed in \emph{every} draw,
typically by a wide margin, so Theorem~\ref{thm:arrival} never applied.
In the public-good specification the failure is explained, for the rotating-pairs cycle,
by Remark~\ref{rem:separable}: there $D_i \equiv 0$ identically.
Two economic specifications are of course a small sample, and they are not representative of what is
\emph{possible}: dropping the requirement that payoffs come from such a model produces
Proposition~\ref{prop:counterexample}.
But they suggest that cycling is not what one should expect of a model written down for its
economics rather than for its dynamics.

\emph{An illustration.} Figure~\ref{fig:cournot} shows a computed equilibrium of the Cournot specification with $A=10$,
$c=(0.6,\,0.9,\,1.5,\,2.5)$, weights $w=(0.1,\,0.2,\,0.3,\,0.4)$, unanimous termination and $\delta=0.4$.
Here $V^\ast = 22.12$ and $\gamma = 1.60$, while $S_i/(w_i\gamma)$ ranges from $3.5$ to $20.6$,
so the sufficient condition of Theorem~\ref{thm:arrival} fails by a wide margin;
the grand coalition nevertheless forms within two steps with probability one.
The mechanism is the one emphasised by \citet{HeitzigKornek2018}.
The two efficient firms $A$ and $B$ favour immediate cartelisation.
The two inefficient firms each prefer to stay out for one period:
$D$'s static payoff is $2.28$ when nobody cooperates and $4.31$ in the flat cartel $[ABCD]$,
but $4.33$ if the other three cartelise without her,
and $4.97$ in the cartel $[D[ABC]]$ that she joins afterwards,
because the sharing rule then measures her contribution against a much better reference state.
Free-riding therefore delays the grand coalition and shifts the surplus towards the free-rider,
but does not prevent full cooperation.

\begin{figure}[htbp]
\centering
\begin{tikzpicture}[>=stealth,node distance=1cm,
   st/.style={draw,rectangle,rounded corners=1pt,inner sep=4pt,font=\small},
   gr/.style={draw,rectangle,rounded corners=6pt,double,inner sep=4pt,font=\small},
   every edge/.style={draw,->}]
\node[st] (x0) {$A,B,C,D$};
\node[st] (x1) at (4.2,1.6) {$D,[ABC]$};
\node[st] (x2) at (4.2,0) {$C,[ABD]$};
\node[gr] (g0) at (4.2,-1.6) {$[ABCD]$};
\node[gr] (g1) at (8.6,1.6) {$[D[ABC]]$};
\node[gr] (g2) at (8.6,0) {$[C[ABD]]$};
\draw[->] (x0) -- node[above,sloped,font=\scriptsize]{$40\;D$} (x1);
\draw[->] (x0) -- node[above,font=\scriptsize]{$30\;C$} (x2);
\draw[->] (x0) -- node[below,sloped,font=\scriptsize]{$30\;AB$} (g0);
\draw[->] (x1) -- node[above,font=\scriptsize]{$100\;ABCD$} (g1);
\draw[->] (x2) -- node[above,font=\scriptsize]{$100\;ABCD$} (g2);
\draw[->] (g0) to [out=-30,in=-150,looseness=8] (g0);
\draw[->] (g1) to [out=-30,in=-150,looseness=8] (g1);
\draw[->] (g2) to [out=-30,in=-150,looseness=8] (g2);
\end{tikzpicture}
\caption{A computed equilibrium of the Cournot specification with $A=10$, $c=(0.6,0.9,1.5,2.5)$,
$w=(0.1,0.2,0.3,0.4)$, unanimous termination and $\delta=0.4$.
Arrow labels give the transition probability in percent and the players who favour the move.
Doubled boxes are grand states, which are absorbing by Theorem~\ref{thm:persist-unan};
all other states of the $52$-state space are reached with probability zero.
Notation as in \citet{HeitzigKornek2018}: $[ABC]$ is an agreement signed by $A$, $B$ and $C$ simultaneously,
and $[D[ABC]]$ one signed by $D$ with the already existing coalition $[ABC]$.}\label{fig:cournot}
\end{figure}

\subsection{Logit play of the protocol}\label{sec:logit}

Theorem~\ref{thm:implement} concerns the limit $\varepsilon\to0$ of best responses.
Playing $P_\varepsilon$ with logistic rather than best responses at a finite inverse temperature $\beta$ turns the axioms into rates, and shows how much of the structure survives noise.

Within a period there is no discounting and, on the path, no delay, so every option contributes $(1-\delta)\pi_i(x)$ plus $\delta$ times a next-period value, the common term cancels from every comparison, and only $\tilde\beta:=\beta\delta$ matters.
At the final vote $j$ accepts with probability $\sigma\bigl(\tilde\beta(\ev_j(\tau(m))-\evp_j(x))\bigr)$, at an amendment vote $k\in R(m')$ votes for $m'$ against $m$ with probability $\sigma\bigl(\tilde\beta(\ev_k(\tau(m'))-\ev_k(\tau(m)))\bigr)$, and a proposer $i$ tables $m$ with probability proportional to $e^{\tilde\beta\ev_i(\tau(m))}$, where $\sigma(u)=(1+e^{-u})^{-1}$.
The two logit arguments are exactly the two comparisons of Definition~\ref{def:rationality}: profitability at the final vote, domination at the amendment votes.
Substitutes are offered in a uniformly random order and the first that passes replaces the proposal;
we take the successful one to be distributed proportionally to its own passing probability, which is the only approximation.
Rejection restarts the round at vanishing cost, so the kernel is the renormalised weight of agreements, and the whole system is closed by \eqref{eq:bellman} and \eqref{eq:table}.
The map is continuous in $p$ on a compact set, so Brouwer gives existence at every finite $\beta$, without the difficulty of Appendix~\ref{app:existence}, which is an artefact of the strict inequalities at $\beta=\infty$.

Two identities are worth recording.
By the definition of $V$, $\sum_y p_{x\to y}\bigl(\ev_j(y)-\evp_j(x)\bigr)=0$ for every $j$ and $x$: realised margins average to zero, which is why profitability had to be a weak inequality.
And no state is absorbing at finite $\beta$, since every stay carries positive weight.

\emph{Numerics.}
The solver is \texttt{code/amend.py}.
For the Cournot specification of Figure~\ref{fig:cournot}, unanimous termination, $\delta=0.4$:
\[
\begin{array}{lcccccc}
\beta & 0.5 & 1 & 2 & 5 & 10 & 20\\[2pt]
1-\mu(\G) & 4.8\cdot10^{-1} & 1.3\cdot10^{-1} & 3.8\cdot10^{-3} & 4.7\cdot10^{-7} & 2.4\cdot10^{-12} & 1.9\cdot10^{-11}\\
\varepsilon_{\mathrm{exit}} & 1.6\cdot10^{-3} & 7.9\cdot10^{-4} & 2.7\cdot10^{-4} & 1.5\cdot10^{-5} & 1.4\cdot10^{-7} & 5.9\cdot10^{-11}\\
T & 854 & 431 & 170 & 32.1 & 7.0 & 2.6
\end{array}
\]
with $T$ the longest expected time to reach $\G$ and $\varepsilon_{\mathrm{exit}}$ the largest probability of leaving it.
Dissolution is far rarer here than under a protocol without the amendment stage, because a dissolution is undercut by a member's stay before it ever reaches a vote.
Figures~\ref{fig:logitgraph} and~\ref{fig:logitflow} show the process at $\beta=0.5$, low enough for terminations to be seen.

\begin{figure}[htbp]
\centering
\begin{tikzpicture}[>=stealth,
   st/.style={draw,rectangle,rounded corners=1pt,inner sep=3pt,font=\small},
   gr/.style={draw,rectangle,rounded corners=6pt,double,inner sep=3pt,font=\small}]
\node[st] (n0) at (0.00,0.00) {$A,B,C,D$};
\node[st] (n1) at (3.80,-4.25) {$C,D,[AB]$};
\node[st] (n3) at (3.80,-2.55) {$B,C,[AD]$};
\node[st] (n2) at (3.80,-0.85) {$B,D,[AC]$};
\node[st] (n5) at (3.80,0.85) {$A,C,[BD]$};
\node[st] (n6) at (3.80,2.55) {$A,B,[CD]$};
\node[st] (n4) at (3.80,4.25) {$A,D,[BC]$};
\node[st] (n12) at (7.60,0.00) {$[AB],[CD]$};
\draw[->,line width=3.97pt] (n0) to[out=115,in=65,looseness=8] (n0);
\draw[->,line width=0.29pt] (n0) -- (n1);
\draw[->,line width=0.29pt] (n0) -- (n3);
\draw[->,line width=0.29pt] (n0) -- (n2);
\draw[->,line width=0.29pt] (n0) -- (n5);
\draw[->,line width=0.29pt] (n0) -- (n6);
\draw[->,line width=0.29pt] (n0) -- (n4);
\draw[->,line width=0.33pt,gray] (n1) to[bend left=18] (n0);
\draw[->,line width=4.08pt] (n1) to[out=115,in=65,looseness=8] (n1);
\draw[->,line width=0.31pt] (n1) -- (n12);
\draw[->,line width=0.33pt,gray] (n3) to[bend left=18] (n0);
\draw[->,line width=4.08pt] (n3) to[out=115,in=65,looseness=8] (n3);
\draw[->,line width=0.33pt,gray] (n2) to[bend left=18] (n0);
\draw[->,line width=4.08pt] (n2) to[out=115,in=65,looseness=8] (n2);
\draw[->,line width=0.33pt,gray] (n5) to[bend left=18] (n0);
\draw[->,line width=4.08pt] (n5) to[out=115,in=65,looseness=8] (n5);
\draw[->,line width=0.33pt,gray] (n6) to[bend left=18] (n0);
\draw[->,line width=4.08pt] (n6) to[out=115,in=65,looseness=8] (n6);
\draw[->,line width=0.32pt] (n6) -- (n12);
\draw[->,line width=0.33pt,gray] (n4) to[bend left=18] (n0);
\draw[->,line width=4.07pt] (n4) to[out=115,in=65,looseness=8] (n4);
\draw[->,line width=0.33pt,gray] (n12) to[bend left=18] (n1);
\draw[->,line width=0.33pt,gray] (n12) to[bend left=18] (n6);
\draw[->,line width=4.09pt] (n12) to[out=115,in=65,looseness=8] (n12);
\end{tikzpicture}
\caption{Logit play of $P_\varepsilon$, Cournot specification, unanimous termination, $\delta=0.4$, $\beta=0.5$:
the states carrying the bulk of the expected occupation from full disagreement.
Line widths are proportional to the transition probability, $0.25\,\mathrm{pt}+4\,\mathrm{pt}\cdot p_{x\to y}$, self-loops included, arrows below $p=0.002$ omitted.
Doubled boxes are grand states.
Self-loops dominate the picture, as they must: at finite $\beta$ most rounds end in a stay, and at this temperature the expected time to reach a grand state is $854$ periods.}\label{fig:logitgraph}
\end{figure}

\begin{figure}[htbp]
\centering

\caption{One realisation of the same process, $1200$ periods from full disagreement:
the payoff flows $\pi_i(X_t)$ of the four players and their total, as step functions.
Pairs form and dissolve repeatedly, then the nested structures $B,[D[AC]]$ and $C,[ABD]$;
a first grand state $[C[ABD]]$ is held for eleven periods and terminated again, and the run ends absorbed in $[C[D[AB]]]$.
Total flow reaches $V^\ast=22.12$, marked dashed, only in the grand states.}\label{fig:logitflow}
\end{figure}

\begin{proposition}[Escape from the grand states is exponentially rare]\label{prop:metastable}
Assume unanimous termination and Assumption~\ref{ass:eff}, let $h\in\G$ and let $m$ dissolve the grand coalition.
Then, with $\Lambda:=\max_{h'\in\G}(V^\ast-W(h'))$ and $D:=V^\ast-\min_z W(z)$,
\[
\prod_{j\in R(m)}\sigma\bigl(\tilde\beta\Delta_j(m)\bigr)\le\exp\Bigl(-\tfrac{\beta}{n}\bigl(\delta(1-\delta)\gamma-\Lambda\bigr)\Bigr),
\qquad
\Lambda\le\frac{\delta D}{1-\delta}\,\varepsilon_{\mathrm{exit}},
\]
where $\Delta_j(m):=\ev_j(\tau(m))-\evp_j(x)$.
\end{proposition}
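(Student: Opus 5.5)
The plan is to prove both inequalities with two welfare identities obtained by summing \eqref{eq:bellman} over players, using the sum $W=\sum_i\ev_i$ of Lemma~\ref{lem:welfare} and its one-step-ahead counterpart $W^{+}(x):=\sum_i\evp_i(x)$. I write $x=h$ for the state at which $m$ is evaluated and $y:=\tau(m)$. Both parts depend only on \eqref{eq:bellman}, \eqref{eq:table}, Assumption~\ref{ass:eff} and the move structure at a grand state. They do not depend on the equilibrium axioms, so they should hold verbatim for the logit kernel at every finite $\beta$.

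\emph{First inequality.} Under unanimous termination every termination at $h$ has relevant set $N$ (Lemma~\ref{lem:entry} and Definition~\ref{def:moves}), so $|R(m)|=n$. Since $\sigma(u)\le e^{u}$ and $\sigma\le1$, the product is at most its smallest factor. That factor is at most $\exp(\tilde\beta\min_j\Delta_j(m))$, and the minimum is at most the average, so
\[
\prod_{j\in R(m)}\sigma\bigl(\tilde\beta\Delta_j(m)\bigr)\le\exp\Bigl(\tfrac{\tilde\beta}{n}\textstyle\sum_{j\in N}\Delta_j(m)\Bigr)
=\exp\Bigl(\tfrac{\beta\delta}{n}\bigl(W(y)-W^{+}(h)\bigr)\Bigr).
\]
It therefore suffices to show $\delta\,(W(y)-W^{+}(h))\le-\bigl(\delta(1-\delta)\gamma-\Lambda\bigr)$.

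For the upper bound on $W(y)$, Lemma~\ref{lem:entry} gives $y\notin\G$, so $V(y)\le V^\ast-\gamma$. Summing \eqref{eq:bellman} gives $W(y)=(1-\delta)V(y)+\delta W^{+}(y)$, and $W^{+}(y)\le V^\ast$ by Lemma~\ref{lem:welfare}, since it is an average of values of $W$. Together these yield $W(y)\le V^\ast-(1-\delta)\gamma$.

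For the lower bound on $W^{+}(h)$, summing \eqref{eq:bellman} at $h$ gives $W(h)=(1-\delta)V^\ast+\delta W^{+}(h)$. Hence $W^{+}(h)=\bigl(W(h)-(1-\delta)V^\ast\bigr)/\delta\ge V^\ast-\Lambda/\delta$. Subtracting the two bounds and multiplying by $\delta$ gives exactly the required exponent.

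\emph{Second inequality.} Fix $h'\in\G$. The identity above rearranges to $V^\ast-W(h')=\delta\sum_z p_{h'\to z}\bigl(V^\ast-W(z)\bigr)$. By Lemma~\ref{lem:entry}, the only grand state reachable from $h'$ in one step is $h'$ itself. Splitting the sum accordingly, each non-grand term satisfies $V^\ast-W(z)\le D$, and the non-grand mass is $1-p_{h'\to h'}\le\varepsilon_{\mathrm{exit}}$. This gives
\[
(1-\delta p_{h'\to h'})\bigl(V^\ast-W(h')\bigr)\le\delta D\,\varepsilon_{\mathrm{exit}},
\]
and since $1-\delta p_{h'\to h'}\ge1-\delta$, taking the maximum over $h'$ yields the bound on $\Lambda$.

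I expect the main obstacle to be the first step, which discards most of the product. Replacing the product by a single factor and then by the average is lossy, but it is what produces the $1/n$ in the stated exponent, and $\sigma(u)\le e^{u}$ is exactly what turns a sum of margins into an exponent. The remaining work is bookkeeping. First, $\Delta_j$ must be measured against $\evp_j(h)$ and not against $\ev_j(h)$, which is what lets the Bellman identity convert $W^{+}(h)$ into $W(h)$ and hence into $\Lambda$. Second, one has to check that $\varepsilon_{\mathrm{exit}}$ is read as $\max_{h'}(1-p_{h'\to h'})$, which is legitimate because no moving move connects two grand states.
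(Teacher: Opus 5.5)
Your proposal is correct and follows the paper's proof essentially step for step: the paper also bounds the product by $\exp(\tilde\beta\min_j\Delta_j(m))$, replaces the minimum by the average over $R(m)=N$, and controls $\sum_j\Delta_j(m)=W(y)-\sum_j\evp_j(h)$ through the summed Bellman equation, phrased there via $L:=V^\ast-W$ with $L(h)=\delta\sum_z p_{h\to z}L(z)$ and $L(y)\ge(1-\delta)\gamma$. For the second bound, the paper writes $L(h)\le\delta\bigl((1-\varepsilon_{\mathrm{exit}})\Lambda+\varepsilon_{\mathrm{exit}}D\bigr)$ and takes the maximum over grand states before solving, whereas you move the self-loop term to the left state by state; the two routes give the same estimate.
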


\begin{proof}
Put $L:=V^\ast-W$, so $L\ge0$ by Lemma~\ref{lem:welfare} and $L(z)\ge(1-\delta)\gamma$ for $z\notin\G$ by Assumption~\ref{ass:eff} and \eqref{eq:bellman};
summing \eqref{eq:bellman} at a grand $h$, where $V(h)=V^\ast$, gives $L(h)=\delta\sum_z p_{h\to z}L(z)$.
Under unanimous termination $R(m)=N$ by Lemma~\ref{lem:entry}, so with $y:=\tau(m)$,
$\sum_j\Delta_j(m)=W(y)-\sum_j \evp_j(h)=L(h)/\delta-L(y)\le\Lambda/\delta-(1-\delta)\gamma$,
whence $\min_j\Delta_j(m)\le(\Lambda/\delta-(1-\delta)\gamma)/n$;
since $\sigma\le1$ and $\sigma(u)\le e^u$, the product is at most $\exp(\tilde\beta\min_j\Delta_j(m))$, which is the first claim as $\tilde\beta=\beta\delta$.
For the second, $L(h)\le\delta((1-\varepsilon_{\mathrm{exit}})\Lambda+\varepsilon_{\mathrm{exit}}D)$ for every grand $h$.
\end{proof}

The two halves bootstrap, so beyond a threshold in $\beta$ depending only on $\delta,\gamma,D$ and $n$, dissolution is accepted with probability at most $\exp(-\beta\delta(1-\delta)\gamma/2n)$:
Theorem~\ref{thm:persist-unan} survives at finite $\beta$ as a rate.
The exponent vanishes as $\delta\to1$, patient players finding one period outside the grand coalition nearly free.
Counting excursions gives $\mu(\G)\ge(1+\varepsilon_{\mathrm{exit}}T)^{-1}$, which the table above bears out.

\emph{The cycle is a finite-temperature phenomenon.}
Under unilateral termination Proposition~\ref{prop:metastable} does not apply, dissolutions having singleton relevant sets, and the cycle of Proposition~\ref{prop:counterexample} reappears as noisy oscillation.
On its payoff specification at $\beta=20$ the four transitions of the cycle carry probabilities $0.32$, $1.00$, $0.82$ and $1.00$, and $\mu(\G)=0.82$:
the cycle is traversed but leaks into the grand states, from which a single player dissolves again.
Continuing the fixed point upward in $\beta$ shows what it is worth.
The cycle's stationary mass rises to $0.18$ at $\beta=20$, then falls: $0.14$ at $\beta=40$,
$0.034$ at $80$, $1.7\cdot10^{-3}$ at $160$ and below $10^{-4}$ at $320$, with $\mu(\G)\to1$;
the first transition of the cycle tends to $\tfrac12$, an exact indifference, and the third to zero.
So the oscillation is a product of the noise and not a shadow of an equilibrium cycle---as
Theorem~\ref{thm:nodetcycle} requires, since the deterministic limit it would have to approach does
not exist.
Figures~\ref{fig:cycgraph} and~\ref{fig:cycflow} show the regime at $\beta=20$, where it is most
pronounced.

\begin{figure}[htbp]
\centering

\caption{Logit play under unilateral termination on the payoff specification of Proposition~\ref{prop:counterexample}, $\beta=20$, the temperature at which the cycle is most visible; conventions as in Figure~\ref{fig:logitgraph}.
The cycle $\emptyset\to C,D,[AB]\to[AB],[CD]\to A,B,[CD]\to\emptyset$ is visible, together with the leakage into $[ABCD]$ and the return from it.}\label{fig:cycgraph}
\end{figure}

\begin{figure}[htbp]
\centering
%
\caption{One realisation of the oscillation, $200$ periods.
Payoffs here are the integer program's, bounded by $1000$, and the states of the cycle are far from efficient, so the swings are large:
total flow is $V^\ast=0$ in the grand states and $-4000$ at $C,D,[AB]$.}\label{fig:cycflow}
\end{figure}

\section{From myopic to farsighted surplus sharing}\label{sec:pivot}

In the running example, this section asks whether a country evaluates a proposed linkage by this
year's abatement costs or by the whole path the agreement sets in motion---and shows that the answer
decides whether patience helps or hurts.
Everything so far has taken the sharing rule to divide the surplus of an agreement measured in
\emph{per-period} payoffs $\pi_i$.
That choice is not forced, and this section shows it is the one that decides how the model behaves in
the limit that matters.
Under the alternative studied here---dividing the surplus measured in discounted long-term
payoffs $\ell_i$---merging becomes Pareto-improving in evaluations rather than merely in static payoffs, the
limit $\delta\to1$ turns from the intractable case into the tractable one, and the question of
whether a cycle can survive becomes a question about the shape of the cycle rather than about the
numbers.

Assumption~\ref{ass:share} shares the \emph{static} surplus $\Delta(x,K)$,
i.e.\ the change in the per-period payoffs of the signatories.
One may ask what happens if the transfers are instead set so that the \emph{long-term} surplus is shared,
i.e.\ if the differences of evaluations rather than of static payoffs are divided in the given proportions.
This section records what changes.
What began as a sketch has turned out to carry the sharpest results in the paper: under this rule the limit $\delta\to1$, intractable elsewhere, becomes the tractable case.

\begin{definition}[Far-sighted sharing]\label{def:farsighted}
A pair $(\pi,p)$ satisfies \emph{far-sighted sharing} with shares $s_i(x,K)>0$, $\sum_{i\in K}s_i(x,K)=1$,
if the evaluations $\ev$ defined from $(\pi,p)$ by \eqref{eq:bellman} satisfy, for every state $x$,
every top-level node $K \in x\cap P(x)$ and every $i \in K$,
\[
   \ev_i(x) - \ev_i(x^{-K}) \;=\; s_i(x,K)\,\sum_{j\in K}\big(\ev_j(x)-\ev_j(x^{-K})\big).
\]
\end{definition}

The rule is implicit rather than explicit: $\ev$ is determined by $\pi$ and $p$,
so a condition on $\ev$ is a condition on the transfers that enter $\pi$,
and payoffs and process must be solved for jointly.
Existence therefore needs a further fixed-point argument on top of the one in Appendix~\ref{app:existence},
and we do not attempt it here.
Note also that the rule is a condition on $\ev$ only through differences between a state and its reference state,
so it does not by itself pin down the level of $\ev$.

What the rule buys is that the Pareto property of merge-all,
which under Assumption~\ref{ass:share} holds in static payoffs and therefore says nothing directly about profitability,
upgrades to a statement about evaluations.

\begin{proposition}\label{prop:farsighted}
Let $(\pi,p)$ satisfy far-sighted sharing and let $\delta \in (0,1)$.
Then for every $x \in \X\setminus\G$ and every $i \in N$,
\[
  \ev_i(g_x)-\ev_i(x) \;=\; s_i(g_x,N)\,\big(W(g_x) - W(x)\big),
\]
which is strictly positive if and only if $W(g_x)>W(x)$, and in particular whenever $g_x$ is absorbing,
since then $W(g_x)=V^\ast>W(x)$ by Lemma~\ref{lem:welfare}.
In that case the merge-all move at $x$ is profitable, and if in addition $p$ satisfies \ref{E2},
then no non-grand state is absorbing:
Theorem~\ref{thm:absorbing-grand} holds under either termination variant and without Criterion~\ref{cond:ties}.
\end{proposition}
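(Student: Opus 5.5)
The identity comes first, and I would obtain it by applying Definition~\ref{def:farsighted} at the state $g_x$ with $K=N$. Since $x\notin\G$, $g_x=x\cup\{N\}$ is a grand state in which $N$ is the unique top-level node, so $N\in g_x\cap P(g_x)$, and its reference state is $(g_x)^{-N}=g_x\setminus\{N\}=x$ by Lemma~\ref{lem:bijection}. The defining equation of far-sighted sharing then reads
\[
\ev_i(g_x)-\ev_i(x)=s_i(g_x,N)\sum_{j\in N}\big(\ev_j(g_x)-\ev_j(x)\big)=s_i(g_x,N)\big(W(g_x)-W(x)\big)
\]
for every $i\in N$, which is the displayed identity. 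Because $s_i(g_x,N)>0$, the sign of the left side equals the sign of $W(g_x)-W(x)$, and this gives the ``if and only if''.

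For the absorbing case, suppose $g_x$ is absorbing. The chain started at $g_x$ never leaves $\G$, so Lemma~\ref{lem:welfare} gives $W(g_x)=V^\ast$. At $x\notin\G$ we have $X_0=x\notin\G$ with probability one, so the equality case of Lemma~\ref{lem:welfare} fails and $W(x)<V^\ast$. Hence $\ev_i(g_x)>\ev_i(x)$ for all $i$.

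Profitability of merge-all at $x$ requires $\ev_i(g_x)\ge\evp_i(x)$ for all $i$, which is not the same as beating $\ev_i(x)$. I would only use this at absorbing $x$, where $\evp_i(x)=\ev_i(x)$ by \eqref{eq:anchor}. I expect this to be the main obstacle, because the statement as phrased seems to assert profitability at every non-grand $x$. The route I would try is to read the profitability claim together with the final sentence, that is, at states where the benchmark coincides with $\ev_i(x)$. If an unconditional version is intended, I would try to exploit the fact that the identity also holds at any successor $y$ of $x$, but I do not expect it to give $\ev_i(g_x)\ge\evp_i(x)$ in general without further input.

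The final claim is proved by contradiction. Assume $p$ satisfies \ref{E2} and let $x\notin\G$ be absorbing. Then $W(x)\le V(x)\le V^\ast-\gamma<V^\ast$ directly by \eqref{eq:bellman}. We also need $W(g_x)>W(x)$ without assuming that $g_x$ is absorbing. For this I would use Lemma~\ref{lem:onestep}-type bookkeeping: $W(g_x)=(1-\delta)V^\ast+\delta\sum_y p_{g_x\to y}W(y)$. Summing \eqref{eq:bellman} at $x$ gives $W(x)=V(x)$, since $x$ is absorbing.

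The natural key step is to bound $W(y)\ge W(x)$, or at least to argue that $W(g_x)>V(x)$. The cleanest approach is to note that under far-sighted sharing the identity applies along every merge, so the subtree continuation from $g_x$ earns at least $V(x)$ per period in total. Failing that, I would restrict to the absorbing-$g_x$ case already handled and invoke Theorem~\ref{thm:persist-unan}, or Theorem~\ref{thm:persist-unil} applied to the entered grand state.

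Once $W(g_x)>W(x)$ is established, merge-all is strictly profitable at the absorbing $x$ for every $i$. Lemma~\ref{lem:nonstalling} then contradicts absorption, which yields Theorem~\ref{thm:absorbing-grand} with no Criterion~\ref{cond:ties}.
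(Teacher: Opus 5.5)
Your derivation of the identity and of the absorbing case is the paper's own, and your worry about the profitability clause is well founded. The paper's proof also obtains profitability of merge-all only at an absorbing $x$, where $\evp_i(x)=\ev_i(x)$. At a non-absorbing $x$ the clause can fail even when $g_x$ is absorbing. In the far-sighted computation after Proposition~\ref{prop:counterexample}, $g_{x_1}$ is absorbing (Remark~\ref{rem:fs-vs-counterexample}) and $x_1$ moves to $x_2$ for sure. Yet
\[
\ev_A(g_{x_1})=29.56+6.36=35.92<73.89=\ev_A(x_2)=\evp_A(x_1).
\]
So restricting that clause to absorbing $x$, as you do, is the right reading.

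The gap is in your last part. There you try to get $W(g_x)>W(x)$ at an absorbing $x$ without assuming $g_x$ absorbing. The paper does not attempt this. Its final sentence stays inside ``in that case'', and the contradiction with Lemma~\ref{lem:nonstalling} is then immediate, exactly as in your closing paragraph. Neither of your routes to the stronger claim works as stated:
\begin{itemize}
\item The claim that ``the subtree continuation from $g_x$ earns at least $V(x)$ per period'' is not derived from anything. With only \ref{E2} assumed, nothing restricts where $g_x$ moves, so nothing in your argument bounds $W(g_x)$ below by $V(x)$.
\item Theorem~\ref{thm:persist-unil} cannot be applied. By Lemma~\ref{lem:entry}, $g_x$ is entered only from $x$, and $x$ is absorbing.
\item Theorem~\ref{thm:persist-unan} needs \ref{E1}, which the statement does not assume.
\end{itemize}

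The stronger statement does hold once \ref{E1} is added. The missing idea is to control which states $g_x$ can move to, not their welfare. Under unilateral termination the argument runs as follows.
\begin{itemize}
\item At an absorbing $x$, every moving move is unprofitable against $\evp_j(x)=\ev_j(x)$.
\item Take a termination at $g_x$ with bottom node $K\subsetneq N$, terminator $j$ and target $y$. Its shadow from Lemma~\ref{lem:shadow} is unprofitable at $x$, which gives the strict inequality $\ev_j(y)<\ev_j(x)$.
\item Hence $j$'s own termination of $N$, whose target is $x$, dominates it. So $x$ is the only state other than $g_x$ that $g_x$ can move to.
\item Summing \eqref{eq:bellman} at $g_x$ with $q:=p_{g_x\to x}$ gives $\bigl(1-\delta(1-q)\bigr)\bigl(W(g_x)-W(x)\bigr)=(1-\delta)\bigl(V^\ast-V(x)\bigr)>0$.
\item Your identity then gives $\ev_i(g_x)>\ev_i(x)=\evp_i(x)$ for every $i$. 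So merge-all is profitable at $x$, contradicting Lemma~\ref{lem:nonstalling}.
\end{itemize}
Under unanimous termination, Theorem~\ref{thm:persist-unan} makes $g_x$ absorbing directly. Neither case needs Criterion~\ref{cond:ties}, because non-profitability at an absorbing state is already a strict inequality.
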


\begin{proof}
Apply Definition~\ref{def:farsighted} at $g_x$ with $K=N$, whose reference state is $(g_x)^{-N}=x$:
summing over $j \in N$, the bracket on the right-hand side is $W(g_x)-W(x)$, which gives the display.
If $g_x$ is absorbing then $W(g_x)=V(g_x)=V^\ast$ by \eqref{eq:bellman},
while $W(x)<V^\ast$ by Lemma~\ref{lem:welfare}, since $X_0=x\notin\G$ with probability one.
For the last claim, let $x\notin\G$ be absorbing;
then $\evp_i(x)=\ev_i(x)$ by \eqref{eq:bellman}, so profitability of the merge-all move, whose relevant set is $N$, is exactly the strict inequality just established for all $i$,
and Lemma~\ref{lem:nonstalling} contradicts absorption.
\end{proof}

\subsection{Arrival for large $\delta$}\label{sec:fs-arrival}

Under this rule the limit $\delta \to 1$, which is the hard end everywhere else in this note,
becomes tractable, and for a reason worth isolating.
With the sharing rule of Assumption~\ref{ass:share} the gain from merging is $\varphi_i(x)-\ev_i(x)$,
which by Lemma~\ref{lem:psi} is $w_i\gamma$ plus a difference of relative shares $\psi$,
and those share differences do not vanish as $\delta \to 1$: they are the stakes $S_i$,
and they are exactly what Theorem~\ref{thm:arrival} has to assume away.
Under far-sighted sharing there is no such term.
The merge gain is proportional to $w$ by Proposition~\ref{prop:farsighted}, with no residual redistribution,
so it is bounded below by $w_{\min}\gamma$ as soon as the grand state is absorbing.
An objection must therefore come from inside the recurrent class---and evaluations become constant on a recurrent class as $\delta\to1$.
That is the whole argument.
We first make the rate explicit.

\begin{lemma}[Spread of evaluations on a recurrent class]\label{lem:spread}
Let $p$ be a process and $C$ a closed communicating class of $p$.
For $i \in N$ put $s_i(C) := \max_{x\in C}\pi_i(x)-\min_{x\in C}\pi_i(x)$,
and let $\bar T_C := \max\{\E_x[T_y] : x,y \in C,\ x \ne y\}$, where $T_y$ is the hitting time of $y$.
Then $|\ev_i(x)-\ev_i(y)| \le (1-\delta)\,\bar T_C\,s_i(C)$ for all $x,y \in C$.
If moreover every positive entry of $p$ on $C$ is at least $\eta>0$,
then $\bar T_C \le (|C|-1)\,\eta^{-(|C|-1)}$.
\end{lemma}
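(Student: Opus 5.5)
The plan is a coupling, or strong-Markov, argument: compare the chain started at $x$ with the chain started at $y$ by waiting until the first reaches $y$. Fix $x\ne y$ in $C$ and let $T:=T_y$ be the hitting time of $y$ from $x$. Closedness of $C$ gives $\Prob_x(X_t\in C\ \forall t)=1$, and $\E_x[T]<\infty$ because $C$ is a finite communicating class. By \eqref{eq:series} and the strong Markov property at $T$,
\[
\ev_i(x)=(1-\delta)\,\E_x\Big[\textstyle\sum_{t<T}\delta^t\pi_i(X_t)\Big]+\E_x\big[\delta^T\big]\,\ev_i(y).
\]
Since also $\ev_i(y)=(1-\delta)\E_x\big[\sum_{t<T}\delta^t\big]\ev_i(y)+\E_x[\delta^T]\,\ev_i(y)$, subtracting gives
\[
\ev_i(x)-\ev_i(y)=(1-\delta)\,\E_x\Big[\textstyle\sum_{t<T}\delta^t\big(\pi_i(X_t)-\ev_i(y)\big)\Big].
\]
Both $\pi_i(X_t)$ (with $X_t\in C$) and $\ev_i(y)$, which is a convex combination of $\pi_i$ on $C$ by Lemma~\ref{lem:value} and closedness, lie in $[\min_C\pi_i,\max_C\pi_i]$. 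So each summand is bounded by $s_i(C)$ in absolute value, and $\sum_{t<T}\delta^t\le T$. Taking expectations yields $|\ev_i(x)-\ev_i(y)|\le(1-\delta)\E_x[T_y]\,s_i(C)\le(1-\delta)\bar T_C\,s_i(C)$.

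For the second claim I bound $\E_x[T_y]$ by a block argument. Let $k:=|C|-1$. From any $z\in C$ with $z\ne y$ there is a path inside $C$ to $y$ of length at most $k$, since $C$ communicates and a shortest path visits distinct states. Each step of that path has probability at least $\eta$. Hence from any starting point $\Prob_z(T_y\le k)\ge\eta^k$. By the Markov property applied at multiples of $k$, $\Prob_x(T_y>jk)\le(1-\eta^k)^j$. Therefore
\[
\E_x[T_y]=\sum_{t\ge0}\Prob_x(T_y>t)\le k\sum_{j\ge0}(1-\eta^k)^j=k\,\eta^{-k},
\]
which is the stated bound $(|C|-1)\eta^{-(|C|-1)}$.

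Neither step is hard. The only point needing care is the first display: it requires the strong Markov decomposition at $T_y$ (or, equivalently, a first-step analysis via the resolvent $(I-\delta p)^{-1}$). It also requires noting that $\ev_i(y)$ lies in the range of $\pi_i$ on $C$, which holds because $C$ is closed; without this the $\ev_i(y)$ term could not be absorbed into $s_i(C)$.
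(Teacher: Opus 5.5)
Your proof is correct and takes essentially the paper's route: the strong Markov decomposition at $T_y$, closedness of $C$ to put $\ev_i(y)$ in the range of $\pi_i$ on $C$, and a path of length at most $|C|-1$ with step probabilities at least $\eta$. The differences are bookkeeping only. The paper bounds the two non-negative terms, the pre-hitting sum and $(1-\E_x[\delta^{T}])\,\ev_i(y)$ shifted by $\min_C\pi_i$, separately within a common interval $[0,(1-\delta)\E_x[T]\,s_i(C)]$; you merge them into a single centred sum. The paper also cites stochastic domination by a scaled geometric variable where you spell it out with the block argument.
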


\begin{proof}
Fix $x \ne y$ in $C$ and write $T:=T_y$, $s := s_i(C)$ and $c := \min_{z \in C}\pi_i(z)$.
Since $C$ is closed, Lemma~\ref{lem:value} makes $\ev_i$ on $C$ a convex combination of the values $\pi_i(z)$,
$z \in C$, so both $\tilde\pi := \pi_i - c$ and $\tilde\ev := \ev_i - c$ take values in $[0,s]$ on $C$,
and differences of $\ev_i$ equal differences of $\tilde\ev$.
By the strong Markov property at $T$, which is finite almost surely because $C$ is finite and irreducible,
\[
  \tilde\ev(x) \;=\; (1-\delta)\,\E_x\Big[\sum_{t<T}\delta^t\tilde\pi(X_t)\Big] \;+\; \E_x[\delta^{T}]\,\tilde\ev(y),
\]
so that
\[
  \tilde\ev(x)-\tilde\ev(y) \;=\; (1-\delta)\,\E_x\Big[\sum_{t<T}\delta^t\tilde\pi(X_t)\Big] \;-\; \big(1-\E_x[\delta^{T}]\big)\,\tilde\ev(y).
\]
The first term lies in $[0,(1-\delta)\E_x[T]\,s]$, since $\sum_{t<T}\delta^t \le T$ and $0\le\tilde\pi\le s$;
the second lies in the same interval, since $1-\delta^{T}\le(1-\delta)T$ and $0\le\tilde\ev(y)\le s$.
The difference of two numbers in one interval $[0,B]$ has absolute value at most $B$, which is the claim.
For the last sentence, $C$ is irreducible with $|C|$ states,
so from any $x$ there is a path to $y$ of length at most $|C|-1$,
each of whose steps has probability at least $\eta$; hence $\Prob_x(T\le|C|-1)\ge\eta^{|C|-1}$,
and $T$ is stochastically dominated by $|C|-1$ times a geometric variable of success probability $\eta^{|C|-1}$.
\end{proof}

\begin{theorem}[Arrival for large $\delta$ under far-sighted sharing]\label{thm:fs-arrival}
Let $(\pi,p)$ satisfy far-sighted sharing with the fixed shares $s_i(x,K)=w_i/w(K)$ of Condition~\ref{cond:nash},
let $p$ satisfy \ref{E1}--\ref{E3}, and assume every grand state is absorbing.
Then every closed communicating class $C \subseteq \X\setminus\G$ satisfies
\[
   w_{\min}\,\gamma \;\le\; (1-\delta)\;\bar T_C\;\max_{i\in N} s_i(C).
\]
Consequently, if the right-hand side is smaller than $w_{\min}\gamma$, no such class exists,
and the process reaches a grand state almost surely and never leaves it.
\end{theorem}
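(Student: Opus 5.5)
The plan is to combine three earlier results: the objection lemma, the pie-only merge gain of Proposition~\ref{prop:farsighted}, and the spread bound of Lemma~\ref{lem:spread}. Suppose $C\subseteq\X\setminus\G$ is a closed communicating class, and fix any $x\in C$. Lemma~\ref{lem:objection} applies because $p$ satisfies \ref{E1}--\ref{E3}. It supplies a player $i\in N$ and a state $z\in C$ with $\ev_i(z)\ge\ev_i(g_x)$.

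Next I would bound the merge gain from below. Every grand state is absorbing, so Proposition~\ref{prop:farsighted} applies with shares $s_i(g_x,N)=w_i/w(N)=w_i$. It gives $\ev_i(g_x)-\ev_i(x)=w_i\big(V^\ast-W(x)\big)$, where I have used $W(g_x)=V^\ast$. I would then lower-bound $V^\ast-W(x)$ by $\gamma$. By \eqref{eq:series}, $W(x)=(1-\delta)\E_x[\sum_t\delta^tV(X_t)]$. Because $C$ is closed and contains no grand state, every $V(X_t)$ is at most $V^\ast-\gamma$, so $W(x)\le V^\ast-\gamma$. Hence
\[
w_{\min}\gamma\;\le\; w_i\gamma\;\le\;\ev_i(g_x)-\ev_i(x)\;\le\;\ev_i(z)-\ev_i(x).
\]
This is the step the paper's price-of-no-agreement discussion already anticipates: the whole merge gain is pie, with no residual $\psi$ term.

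Finally, since $x$ and $z$ both lie in $C$, Lemma~\ref{lem:spread} gives $\ev_i(z)-\ev_i(x)\le(1-\delta)\bar T_C\,s_i(C)\le(1-\delta)\bar T_C\max_j s_j(C)$. When $z=x$ the difference is zero, which contradicts $w_{\min}\gamma>0$ outright, so the inequality holds trivially in that case. Chaining the two bounds gives the displayed inequality of the theorem.

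For the consequence: if the right-hand side is smaller than $w_{\min}\gamma$, no non-grand closed class can exist. Since grand states are absorbing, Corollary~\ref{cor:dichotomy}(i) then shows the process is absorbed at a grand state almost surely, and it never leaves, by absorption. The only step needing care is the lower bound $V^\ast-W(x)\ge\gamma$ on the closed class, together with checking that Proposition~\ref{prop:farsighted} is applied with $s_i(g_x,N)=w_i$ exactly. The rest is a direct chaining of the earlier results.
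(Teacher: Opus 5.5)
Your proof is correct and follows the paper's argument essentially step for step. Lemma~\ref{lem:objection} supplies $i$ and $z\in C$; Proposition~\ref{prop:farsighted}, together with $W(g_x)=V^\ast$ and $W(x)\le V^\ast-\gamma$ on the closed non-grand class, bounds $\ev_i(z)-\ev_i(x)$ below by $w_{\min}\gamma$; and Lemma~\ref{lem:spread} bounds it above. Your explicit treatment of the case $z=x$, which the chain rules out (and with it any singleton non-grand class), and your derivation of the consequence via Corollary~\ref{cor:dichotomy}(i) are small details the paper leaves implicit.
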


\begin{proof}
Let $x \in C$.
Since $C$ is closed and contained in $\X\setminus\G$, the chain started at $x$ stays in $C$,
so $V(X_t)\le V^\ast-\gamma$ for every $t$ and hence $W(x)\le V^\ast-\gamma$ by \eqref{eq:series}.
Since $g_x$ is absorbing, $W(g_x)=V(g_x)=V^\ast$ by \eqref{eq:bellman}.
Proposition~\ref{prop:farsighted} therefore gives, for every $i \in N$,
\[
   \ev_i(g_x)-\ev_i(x) \;=\; w_i\big(W(g_x)-W(x)\big) \;\ge\; w_i\,\gamma \;\ge\; w_{\min}\,\gamma .
\]
By Lemma~\ref{lem:objection} there are $i \in N$ and $z \in C$ with $\ev_i(z)\ge\ev_i(g_x)$,
whence $\ev_i(z)-\ev_i(x)\ge w_{\min}\gamma$.
Lemma~\ref{lem:spread} bounds the left-hand side by $(1-\delta)\bar T_C s_i(C)$.
\end{proof}

\begin{corollary}\label{cor:fs-arrival}
Assume termination is unanimous, so that every grand state is absorbing by Theorem~\ref{thm:persist-unan},
and that every positive transition probability is at least some $\eta>0$ not depending on $\delta$.
Consider a family of equilibria indexed by $\delta$,
and suppose the static payoffs grow sublinearly in $1/(1-\delta)$, in the sense that
\[
   (1-\delta)\,\max_{i\in N}\;\max_{x,y\in\X}\big|\pi_i(x)-\pi_i(y)\big| \;\longrightarrow\; 0 \qquad (\delta\to1).
\]
Then for all $\delta$ close enough to $1$ the process reaches a grand state almost surely.
If moreover the static payoffs are bounded by $\Sigma$ uniformly in $\delta$, this holds as soon as
\[
   \delta \;>\; 1-\frac{w_{\min}\,\gamma\,\eta^{\,|\X|-1}}{(|\X|-1)\,\Sigma} .
\]
\end{corollary}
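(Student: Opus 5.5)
The plan is to derive the corollary directly from Theorem~\ref{thm:fs-arrival} and Lemma~\ref{lem:spread}, with Theorem~\ref{thm:persist-unan} supplying the absorption hypothesis. Under unanimous termination every grand state is absorbing for every $\delta$, so Theorem~\ref{thm:fs-arrival} applies to each member of the family. It says that any closed communicating class $C\subseteq\X\setminus\G$ must satisfy $w_{\min}\gamma\le(1-\delta)\bar T_C\max_i s_i(C)$. It therefore suffices to show that the right-hand side falls below $w_{\min}\gamma$ for $\delta$ near $1$. Once no such class exists, Corollary~\ref{cor:dichotomy}(i) gives absorption at a grand state almost surely.

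Next I would make $\bar T_C$ uniform in $\delta$. By the uniform floor $\eta$ on positive transition probabilities, the last sentence of Lemma~\ref{lem:spread} gives $\bar T_C\le(|C|-1)\eta^{-(|C|-1)}$. Since $|C|\le|\X|$ and $\eta\le1$, this is at most $(|\X|-1)\eta^{-(|\X|-1)}$, which does not depend on $\delta$ or on the equilibrium. For the spread, $s_i(C)\le\max_{x,y\in\X}|\pi_i(x)-\pi_i(y)|$, so
\[
(1-\delta)\bar T_C\max_i s_i(C)\;\le\;(|\X|-1)\eta^{-(|\X|-1)}\cdot(1-\delta)\max_i\max_{x,y}|\pi_i(x)-\pi_i(y)|.
\]
By hypothesis the last factor tends to $0$. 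Hence for $\delta$ close to $1$ the bound is strictly below $w_{\min}\gamma$, provided $\gamma$ and $w_{\min}$ do not shrink along the family. I would note that this is implicitly assumed, since they are primitives of the sharing rule and of Assumption~\ref{ass:eff}. This contradicts the existence of $C$.

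For the explicit threshold, I would bound $\max_i s_i(C)\le\Sigma$. Here I read ``bounded by $\Sigma$'' as a bound on the range of each $\pi_i$. If it is meant as $|\pi_i|\le\Sigma$, the range is at most $2\Sigma$ and a factor $2$ appears, so this point must be checked against the intended reading. The sufficient condition then becomes $(1-\delta)(|\X|-1)\eta^{-(|\X|-1)}\Sigma<w_{\min}\gamma$, which rearranges to the displayed inequality. The only delicate step is this bookkeeping: the constant convention for $\Sigma$, and the strict versus weak inequality in Theorem~\ref{thm:fs-arrival}. The weak inequality there means a strict reverse inequality suffices, which the display provides. Everything else is a direct chaining of earlier results.
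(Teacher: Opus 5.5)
Your proposal is correct and follows the paper's proof: you bound $\bar T_C$ uniformly by $(|\X|-1)\eta^{-(|\X|-1)}$ via Lemma~\ref{lem:spread}, bound $s_i(C)$ by the global range of $\pi_i$, and let the hypothesis push the right-hand side of Theorem~\ref{thm:fs-arrival} below $w_{\min}\gamma$. Your caveat about the explicit threshold is well taken, since the paper's one-line proof does not treat it: if $\Sigma$ bounds $|\pi_i|$ rather than the range of $\pi_i$, the displayed threshold needs an extra factor $2$ in the denominator.
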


\begin{proof}
By Lemma~\ref{lem:spread}, $\bar T_C\le(|\X|-1)\eta^{-(|\X|-1)}$ for every class $C$,
so the right-hand side in Theorem~\ref{thm:fs-arrival} is at most $(1-\delta)(|\X|-1)\eta^{-(|\X|-1)}\max_i\max_{x,y}|\pi_i(x)-\pi_i(y)|$,
which tends to $0$ by hypothesis and is therefore eventually below $w_{\min}\gamma$.
\end{proof}

The hypothesis is sharp, in the following sense.

\begin{proposition}[Failure of arrival forces the transfers to grow]\label{prop:transfer-growth}
Under the hypotheses of Theorem~\ref{thm:fs-arrival},
if a closed communicating class $C\subseteq\X\setminus\G$ exists at the discount factor $\delta$, then
\[
   \max_{i\in N}\;\big(\max_{x\in C}\pi_i(x)-\min_{x\in C}\pi_i(x)\big) \;\ge\; \frac{w_{\min}\,\gamma}{(1-\delta)\,\bar T_C} .
\]
So arrival can fail at discount factors near $1$ only if the transfers grow at least like $1/(1-\delta)$.
\end{proposition}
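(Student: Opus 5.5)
This is a direct rearrangement of the inequality in Theorem~\ref{thm:fs-arrival}, and I would prove it that way rather than re-derive anything. Assume a closed communicating class $C\subseteq\X\setminus\G$ exists at $\delta$. The hypotheses of Theorem~\ref{thm:fs-arrival} are in force: far-sighted sharing with Nash shares, \ref{E1}--\ref{E3}, and every grand state absorbing. That theorem then gives
\[
w_{\min}\gamma\le(1-\delta)\,\bar T_C\,\max_{i\in N}s_i(C),
\]
where $s_i(C)=\max_{x\in C}\pi_i(x)-\min_{x\in C}\pi_i(x)$ is exactly the quantity on the left of the claimed display.

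Next I check that dividing by $(1-\delta)\bar T_C$ is legitimate. We have $1-\delta>0$. Since $C$ is a closed communicating class, $C$ is finite and irreducible, so $\bar T_C<\infty$. By Corollary~\ref{cor:dichotomy}, or directly from Theorem~\ref{thm:absorbing-grand} via Proposition~\ref{prop:farsighted}, we have $|C|\ge2$. Hence there exist $x\ne y$ in $C$, and for these $\E_x[T_y]\ge1$, so $\bar T_C\ge1>0$. Dividing gives the stated inequality.

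For the interpretive sentence, I would combine the display with Lemma~\ref{lem:spread}. Suppose the positive transition probabilities are bounded below by $\eta$ uniformly in $\delta$, as in Corollary~\ref{cor:fs-arrival}. Then $\bar T_C\le(|\X|-1)\eta^{-(|\X|-1)}$, a constant independent of $\delta$. The spread of static payoffs on $C$, and hence the transfers that produce it through Definition~\ref{def:farsighted}, must then be at least a constant times $1/(1-\delta)$. This is the sense in which the hypothesis of Corollary~\ref{cor:fs-arrival} is sharp.

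The only step needing care is $\bar T_C>0$, i.e.\ excluding $|C|=1$. This is where I rely on the singleton case being absorbing and hence grand by Proposition~\ref{prop:farsighted}, which holds without Criterion~\ref{cond:ties}. Everything else is routine.
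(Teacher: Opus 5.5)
Your proposal is correct and is the paper's proof, which consists of rearranging the inequality of Theorem~\ref{thm:fs-arrival}. Your check that $\bar T_C>0$, via Proposition~\ref{prop:farsighted} excluding singleton classes, is a harmless extra. It also follows directly from that inequality: its left side $w_{\min}\gamma$ is strictly positive, so the product $(1-\delta)\bar T_C\max_i s_i(C)$ cannot vanish.
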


\begin{proof}
Rearrange the inequality of Theorem~\ref{thm:fs-arrival}.
\end{proof}

The mechanism is visible in the Bellman equation.
Solving \eqref{eq:bellman} for $\pi$ rather than for $\ev$ gives the identity
\[
   \pi_i \;=\; \ev_i \;+\; \frac{\delta}{1-\delta}\,(I-p)\,\ev_i ,
\]
so the static payoffs stay bounded exactly when the evaluations become harmonic at rate $1-\delta$.
Under far-sighted sharing the evaluations are the primitive object---the sharing rule is a condition on $\ev$,
not on $\pi$---and there is no reason for them to be that nearly harmonic;
the transfers absorb the discrepancy, magnified by $\delta/(1-\delta)$.
\subsection{Arrival under either termination rule}\label{sec:fgains}

Theorem~\ref{thm:fs-arrival} needs the grand states to be absorbing, which under unilateral
termination is exactly what can fail off the equilibrium path.
There is a condition that removes the difficulty at a stroke, and it is the far-sighted analogue of
Condition~\ref{cond:gains}.

\begin{criterion}[Long-run coalitional gains]\label{cond:fgains}
For every state $x$ and every top-level node $K \in x\cap P(x)$,
\[
   \Delta^{\ev}(x,K) \;:=\; \sum_{j\in K}\big(\ev_j(x)-\ev_j(x^{-K})\big) \;\ge\; 0 .
\]
\end{criterion}
In words: the signatories of an agreement are, jointly and in discounted long-term payoffs, no worse
off for having signed it, given everything else in force.
It is the same requirement as Condition~\ref{cond:gains} with $\pi$ replaced by $\ev$, and by
\eqref{eq:bellman} the two differ by the continuation term
$\Delta^{\ev}(x,K)=(1-\delta)\Delta(x,K)+\delta\big(\E_x[L_K]-\E_{x^{-K}}[L_K]\big)$,
so neither implies the other.
Unlike the hypotheses of Theorem~\ref{thm:fs-arrival} it says nothing about grand states, nothing
about the discount factor, and nothing about which termination rule is in force.

\begin{theorem}[Arrival under long-run gains]\label{thm:fgains}
Assume far-sighted sharing with strictly positive shares and Criterion~\ref{cond:fgains}, and let $p$
satisfy \ref{E1} and \ref{E2}.
Then for every $\delta\in(0,1)$ and under either termination rule: no termination is profitable at any
state, every grand state is absorbing, and the process reaches a grand state after at most $n-1$
realised moves, almost surely.
\end{theorem}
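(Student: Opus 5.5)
The plan has three steps: show that terminations lose in evaluations, deduce persistence, then deduce arrival. It is the far-sighted analogue of the proof of Theorem~\ref{thm:smalldelta}, with the static chain decomposition of Step~2 there replaced by the same decomposition in evaluations. Far-sighted sharing with $s_i>0$ and Criterion~\ref{cond:fgains} give, for every state $x$, every top-level $K\in x\cap P(x)$ and every $i\in K$,
\[
\ev_i(x)-\ev_i(x^{-K}) \;=\; s_i(x,K)\,\Delta^{\ev}(x,K)\;\ge\;0 .
\]
So each individual signatory is weakly better off, in evaluations, with the agreement than without it.

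\emph{Step 1: terminations lose.} Let $m$ be a termination at $x$ with bottom node $K_0$, ancestors $K_0\subsetneq K_1\subsetneq\cdots\subsetneq K_r=S(x,K_0)$, and target $y$. Peel off $K_r,K_{r-1},\dots,K_0$ one at a time, exactly as in Step~2 of Theorem~\ref{thm:smalldelta}, so that each removed node is top-level in the current intermediate state. Then for every $i\in K_0$ we get $\ev_i(y)\le\ev_i(x)$, since $i$ belongs to every $K_j$.

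With only weak inequalities this does not yet beat the benchmark $\evp_i(x)$, and this is the main obstacle. I would use the stay rather than profitability: every relevant set meets $K_0$ under either rule. Take $i\in R(m)\cap K_0$. Because ties are broken towards $x$, the stay $s^i_x$ dominates $m$ whenever $\ev_i(x)\ge\ev_i(\tau(m))$. Hence no termination lies in $\Cset(x)$, and by \ref{E1} none is realised. (If ``profitable'' is to be read literally, I would additionally check that, with terminations never realised, the process only merges and evaluations are built backwards from grand states; but dominance already suffices for everything below.)

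\emph{Step 2: grand states are absorbing.} At $h\in\G$ every moving move is a termination by Lemma~\ref{lem:entry}, and none is in $\Cset(h)$ by Step~1. So \ref{E1} gives $p_{h\to h}=1$, under either termination rule.

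\emph{Step 3: arrival.} Off $\G$ only merges are realised, and each merge strictly lowers $|P(x)|$. Hence after at most $n-1$ realised moves the process sits at an absorbing state, which I would then show is grand. Let $x\notin\G$ be absorbing. Then $\evp_i(x)=\ev_i(x)$ for all $i$. Step~2 makes $g_x$ absorbing, so Proposition~\ref{prop:farsighted} gives $\ev_i(g_x)>\ev_i(x)=\evp_i(x)$ for every $i$. The merge-all move is therefore profitable, and Lemma~\ref{lem:nonstalling} together with \ref{E2} contradicts absorption. Since the process never stays forever at a non-absorbing state (each non-absorbing state has $p_{x\to x}<1$), the required moves occur almost surely.
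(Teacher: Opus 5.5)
Your proof is correct and is the paper's own argument: stay domination removes every termination from $\Cset(x)$ (which, as you note, is all that either proof actually establishes for ``no termination is profitable''), so grand states are absorbing, at most $n-1$ merges are realised, and an absorbing non-grand state contradicts Proposition~\ref{prop:farsighted}. The one step the paper writes out where you cite Lemma~\ref{lem:nonstalling} is that the undominated move supplied by Lemma~\ref{lem:domination} is not a stay; this follows from your strict inequality $\ev_i(g_x)>\ev_i(x)$ for every $i$.
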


\begin{proof}
Let $m$ be a termination at $x$ with bottom node $K_0$, whose ancestors in $x$ are
$K_0\subsetneq K_1\subsetneq\cdots\subsetneq K_r=S(x,K_0)$, and let $y:=\tau(m)$.
Put $x^{(0)}:=x$ and $x^{(t+1)}:=x^{(t)}\setminus\{K_{r-t}\}$ for $0\le t\le r$, so $x^{(r+1)}=y$.
At each step $K_{r-t}$ is a maximal element of $x^{(t)}$, hence top-level there, and
$x^{(t+1)}=(x^{(t)})^{-K_{r-t}}$.
Far-sighted sharing therefore applies at $x^{(t)}$ to the node $K_{r-t}$, and gives, for every
$i \in K_{r-t}$,
\[
   \ev_i(x^{(t)})-\ev_i(x^{(t+1)}) \;=\; s_i\big(x^{(t)},K_{r-t}\big)\,\Delta^{\ev}\big(x^{(t)},K_{r-t}\big) \;\ge\;0
\]
by Criterion~\ref{cond:fgains} and positivity of the shares.
Every $i \in K_0$ lies in every $K_j$, so summing over $t$ telescopes to $\ev_i(x)\ge\ev_i(y)$ for
every $i \in K_0$.
Under either termination rule $R(m)$ meets $K_0$, so for some $i\in R(m)$ we have $\ev_i(x)\ge\ev_i(y)$,
and $s^i_x$ dominates $m$: no termination lies in $\Cset(x)$.

Consequently only merges are realised.
At a grand state no merge is available (Lemma~\ref{lem:entry}), so $\Cset(h)$ contains no moving move,
and \ref{E1} leaves no target other than $h$: the state is absorbing.
A merge strictly decreases $|P(x)| \in\{1,\dots,n\}$, so after at most $n-1$ realised moves the
process is at a state $x$ with no realised move, which by \ref{E2} has $\Cset(x)\cap\M(x)=\emptyset$.
If $x \notin \G$ then $g_x$ is absorbing, so $W(g_x)=V(g_x)=V^\ast$ by \eqref{eq:bellman}, while
$W(x)=V(x)\le V^\ast-\gamma$ because $x$ is absorbing and non-grand;
Proposition~\ref{prop:farsighted} then gives
$\ev_i(g_x)-\ev_i(x)=s_i(g_x,N)\,(V^\ast-V(x))>0$ for every $i$.
Since $x$ is absorbing, $\evp_i(x)=\ev_i(x)$, so the merge-all move is profitable;
Lemma~\ref{lem:domination} then supplies an undominated profitable move whose target $i$ ranks at
least as high, and it is not a stay, since that would give $\ev_i(x)\ge\ev_i(g_x)$.
So $\Cset(x)\cap\M(x)\ne\emptyset$, a contradiction.
Hence $x \in \G$.
\end{proof}

The condition is not merely sufficient; it is what a cycling equilibrium must violate.

\begin{corollary}\label{cor:fgains-necessary}
Under far-sighted sharing, if an equilibrium has a closed communicating class $C\subseteq\X\setminus\G$
then $\Delta^{\ev}(x^{(t)},K_{r-t})<0$ for some realised termination at some $x \in C$ and some node
$K_{r-t}$ on its chain.
\end{corollary}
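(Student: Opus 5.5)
The plan is to prove the contrapositive of Theorem~\ref{thm:fgains}, but localised to the class $C$ rather than stated globally. Suppose $C\subseteq\X\setminus\G$ is a closed communicating class of an equilibrium $p$ under far-sighted sharing. By Corollary~\ref{cor:dichotomy} (or directly, because $C$ is closed and non-grand and absorbing non-grand states are excluded by Proposition~\ref{prop:farsighted} together with \ref{E2}), $|C|\ge2$. A merge strictly decreases $|P(x)|$, and $C$ is closed and communicating. So a class consisting of merges alone would have to return to its starting state while $|P|$ only decreases, which is impossible. Hence some realised move at some $x\in C$ is a termination $m$, with bottom node $K_0$ and ancestors $K_0\subsetneq\cdots\subsetneq K_r$.

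Next, suppose for contradiction that every link on this chain satisfies $\Delta^{\ev}(x^{(t)},K_{r-t})\ge0$, with $x^{(t)}$ defined as in the proof of Theorem~\ref{thm:fgains}. I will reuse that proof's telescoping verbatim. Each $K_{r-t}$ is top-level in $x^{(t)}$, and far-sighted sharing with positive shares gives $\ev_i(x^{(t)})-\ev_i(x^{(t+1)})\ge0$ for $i\in K_{r-t}$. Summing these differences yields $\ev_i(x)\ge\ev_i(\tau(m))$ for every $i\in K_0$. Under either termination rule $R(m)$ meets $K_0$. Pick $i\in R(m)\cap K_0$; then the stay $s^i_x$ has $R(s^i_x)=\{i\}\subseteq R(m)$, and with ties broken towards $x$ it dominates $m$. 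This contradicts $m\in\Cset(x)$, which \ref{E1} requires because $m$ is realised, so $p_{x\to\tau(m)}>0$ with $\tau(m)\neq x$.

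Hence the negated inequality must hold at some link of the chain, which is the claim. The only point needing care, and the one I expect to be the main obstacle, is that the criterion is used only along the chain of a single realised termination, not globally. I therefore need to check that the telescoping in Theorem~\ref{thm:fgains} invokes far-sighted sharing only at the states $x^{(t)}$ and nodes $K_{r-t}$, which it does. A second subtlety is that the realised move reaching $\tau(m)$ might be a different move with the same target. This does not matter: \ref{E1} provides some $m'\in\Cset(x)$ with $\tau(m')=\tau(m)$, and I take $m'$ itself as the termination from the start.
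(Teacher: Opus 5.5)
Your proposal is correct and follows the paper's own proof. No closed class can be built from merges alone, so some termination is realised on $C$; undominatedness against the stay of a terminator in $K_0$ then forces some link of the telescoped chain to be negative. Your explicit exclusion of a singleton class is a step the paper's proof leaves implicit, and the right justification for it is your ``direct'' route via Proposition~\ref{prop:farsighted} and \ref{E2}, not Corollary~\ref{cor:dichotomy}, whose hypotheses are not assumed here.
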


\begin{proof}
By Lemma~\ref{lem:entry} a class in $\X\setminus\G$ that is closed and communicating cannot consist of
merges alone, since merges strictly decrease $|P(x)|$; so some realised move on $C$ is a termination,
and by \ref{E1} it is undominated, so no terminator's stay dominates it and every terminator strictly gains.
The displayed chain in the proof of Theorem~\ref{thm:fgains} then cannot have all its terms
non-negative.
\end{proof}

\begin{example}\label{ex:fgains-fails}
In the far-sighted equilibrium of Section~\ref{sec:counterexample}, at $\delta=2/5$, the two
terminations realised on the cycle are exactly where Criterion~\ref{cond:fgains} fails:
$\Delta^{\ev}(x_4,\{C,D\})=-102.86$ and $\Delta^{\ev}(x_3,\{A,B\})=-14.19$.
The second is instructive: there the \emph{static} surplus $\Delta(x_3,\{A,B\})$ is exactly $0$, so
the pair loses nothing per period by having signed, and yet in long-term payoffs it loses, because
having signed changes where the process goes next.
Criterion~\ref{cond:fgains} is therefore a genuine strengthening of Condition~\ref{cond:gains} and not
a restatement of it.
\end{example}

\subsection{The condition in primitives, and existence}\label{sec:pdagger}

Criterion~\ref{cond:fgains} is stated in evaluations, so it is not something one can read off a
partition function.
On one process it becomes primitive, and that process is the natural candidate: $p^\dagger$, which
merges everything at once from wherever it is.
It also disposes of the existence question left open in Section~\ref{sec:pivot}, since on
$p^\dagger$ the transfers can be written down rather than solved for.

\begin{proposition}[Far-sighted sharing on $p^\dagger$]\label{prop:pdagger}
Let $p^\dagger$ be the process of Section~\ref{sec:onestep}, and impose far-sighted sharing with the
fixed shares of Condition~\ref{cond:nash}.
Then the system has a unique solution, and it is the \emph{myopic} weighted-Nash payoff structure:
$\pi_i(x)-\pi_i(x^{-K}) = \frac{w_i}{w(K)}\Delta(x,K)$ for every top-level $K$, and in particular
$\pi_i(g_x)=\pi_i(x)+w_i(V^\ast-V(x))$.
Moreover, for every $x\in\X\setminus\G$ and every top-level $K\in x$,
\[
   \Delta^{\ev}(x,K) \;=\; \Delta(x,K) \;-\; \delta\,w(K)\big(V(x)-V(x^{-K})\big),
\]
so that on $p^\dagger$ Criterion~\ref{cond:fgains} is the primitive requirement
\begin{equation}\label{eq:fgains-primitive}
   \Delta(x,K) \;\ge\; \delta\,w(K)\,\big(V(x)-V(x^{-K})\big)
   \qquad\text{for every } x \text{ and every top-level } K.
\end{equation}
\end{proposition}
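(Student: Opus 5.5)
The plan is to compute the evaluations of $p^\dagger$ explicitly and substitute them into Definition~\ref{def:farsighted}. Everything is then a finite linear identity.

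Under $p^\dagger$ every grand state is absorbing, so $\ev_i(h)=\pi_i(h)$ for $h\in\G$. A non-grand $x$ moves to $g_x$ with probability one, so $\ev_i(x)=(1-\delta)\pi_i(x)+\delta\pi_i(g_x)$, as already recorded in Section~\ref{sec:onestep}. A grand state has exactly one top-level node, namely $N$, with reference state $t(h)$. Every other top-level node $K$ sits in a non-grand $x$, and its reference state $x^{-K}$ is again non-grand. So the far-sighted conditions split into two families, and I treat them in that order.

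\emph{Grand states.} At $h=g_x$ with $K=N$,
\[
\ev_i(g_x)-\ev_i(x)=(1-\delta)\bigl(\pi_i(g_x)-\pi_i(x)\bigr).
\]
Summing over $i$ gives $(1-\delta)(V^\ast-V(x))$. The far-sighted condition therefore reads
\[
(1-\delta)\bigl[\pi_i(g_x)-\pi_i(x)-w_i(V^\ast-V(x))\bigr]=0 .
\]
Since $1-\delta>0$, this is exactly the myopic rule at $g_x$, namely $\pi_i(g_x)=\pi_i(x)+w_i(V^\ast-V(x))$, as in Lemma~\ref{lem:pareto}.

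\emph{Non-grand states.} For top-level $K\in x\notin\G$, write $dV:=V(x)-V(x^{-K})$. Substitute $\pi_i(g_y)=\pi_i(y)+w_i(V^\ast-V(y))$, already forced at both $y=x$ and $y=x^{-K}$. Then
\[
\ev_i(x)-\ev_i(x^{-K})=\bigl(\pi_i(x)-\pi_i(x^{-K})\bigr)-\delta w_i\,dV ,
\]
because the $V^\ast$ terms cancel. Summing over $i\in K$ yields the displayed formula $\Delta^{\ev}(x,K)=\Delta(x,K)-\delta w(K)\,dV$. The far-sighted condition with shares $w_i/w(K)$ becomes
\[
\pi_i(x)-\pi_i(x^{-K})-\delta w_i\,dV=\tfrac{w_i}{w(K)}\Delta(x,K)-\delta w_i\,dV .
\]
The $\delta w_i\,dV$ terms cancel, leaving the myopic rule $\pi_i(x)-\pi_i(x^{-K})=\tfrac{w_i}{w(K)}\Delta(x,K)$. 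The primitive form \eqref{eq:fgains-primitive} of Criterion~\ref{cond:fgains} is then just $\Delta^{\ev}\ge0$ rewritten. At grand states $\Delta^{\ev}(g_x,N)=(1-\delta)(V^\ast-V(x))>0$ automatically, so only non-grand states contribute.

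\emph{Uniqueness.} The system of far-sighted conditions on $p^\dagger$ is therefore equivalent, equation by equation, to the myopic weighted-Nash system. I expect the only delicate point to be making ``unique'' precise, i.e.\ saying what the unknowns are. I would take them to be the within-block transfers, with the block totals fixed by the partition function. The myopic system then determines $\pi(x)$ by induction on $|x|$: the top-level node's reference state $x^{-K}$ has one node fewer, and the shares pin down each member's payoff. I would check that this recursion is consistent when $x$ has several top-level nodes. Each equation constrains only the members of its own $K$, and these sets are disjoint, so the equations do not conflict. Hence the solution exists and is unique.
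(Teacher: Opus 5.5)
Your proof is correct and follows the paper's own argument. You compute $\ev$ explicitly under $p^\dagger$, read the far-sighted condition at $g_x$ for $K=N$ as the myopic rule, substitute that rule twice at a non-grand top-level $K$ so that the $\delta w_i\bigl(V(x)-V(x^{-K})\bigr)$ terms cancel, and obtain uniqueness by induction on the number of nodes; your remarks on what the unknowns are and on the disjointness of the top-level blocks just spell out what the paper compresses into that final induction step.
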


\begin{proof}
Under $p^\dagger$ the unique successor of $x\notin\G$ is $g_x$, which is absorbing, so
$\ev_i(g_x)=\pi_i(g_x)$ and $\ev_i(x)=(1-\delta)\pi_i(x)+\delta\pi_i(g_x)$.
Far-sighted sharing at $g_x$ for $K=N$, whose reference state is $x$, reads
$\ev_i(g_x)-\ev_i(x)=w_i\sum_j(\ev_j(g_x)-\ev_j(x))$.
The left side is $(1-\delta)(\pi_i(g_x)-\pi_i(x))$ and the sum on the right is
$(1-\delta)(V^\ast-V(x))$, so $\pi_i(g_x)=\pi_i(x)+w_i(V^\ast-V(x))$.
Now let $K$ be top-level in $x$; then $x^{-K}\notin\G$, and substituting the last display twice,
\[
   \ev_i(x)-\ev_i(x^{-K}) = \big(\pi_i(x)-\pi_i(x^{-K})\big) - \delta\,w_i\big(V(x)-V(x^{-K})\big).
\]
Summing over $i \in K$ gives the stated formula for $\Delta^{\ev}(x,K)$, and the far-sighted
condition $\ev_i(x)-\ev_i(x^{-K}) = \frac{w_i}{w(K)}\Delta^{\ev}(x,K)$ then reduces, the two terms in
$\delta w_i(V(x)-V(x^{-K}))$ cancelling, to
$\pi_i(x)-\pi_i(x^{-K}) = \frac{w_i}{w(K)}\Delta(x,K)$, which determines $\pi$ uniquely by induction
on the number of nodes.
\end{proof}

Two things are worth drawing out.

First, \eqref{eq:fgains-primitive} is a statement about externalities.
$\Delta(x,K)$ is what the signatories of $K$ gain, and $V(x)-V(x^{-K})$ is what everybody gains;
the condition asks that the signatories capture at least the fraction $\delta\,w(K)$ of what they
create.
It therefore fails when an agreement confers large benefits on outsiders and its signatories carry
much bargaining weight, and holds comfortably when agreements are close to privately appropriable.
At $K=N$ it is automatic, since there $\Delta(g_x,N)=V^\ast-V(x)$ and $w(N)=1$, so it reads
$V^\ast-V(x)\ge\delta(V^\ast-V(x))$.

Second, the coincidence in Proposition~\ref{prop:pdagger} is not an accident of $p^\dagger$ alone but
of any process that reaches $\G$ in one step: over a single period there is nothing for the two
sharing rules to disagree about.
It has a useful consequence for existence.

\begin{corollary}[Constructive existence]\label{cor:constructive}
Suppose that in a process only merges are ever realised.
Then the pair $(\pi,\ev)$ solving the far-sighted sharing system is determined by backward induction
on $|P(x)|$, with no fixed-point argument: the grand states are absorbing, so $\ev=\pi$ there and the
sharing condition at $N$ determines $\pi$; and a state with $|P(x)|=k$ has all its successors at
$|P|<k$.
In particular this applies whenever Criterion~\ref{cond:fgains} holds, since by
Theorem~\ref{thm:fgains} no termination is then realised.
\end{corollary}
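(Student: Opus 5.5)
The plan is to turn the far-sighted sharing system into a triangular linear system and solve it level by level, as in the proof of Proposition~\ref{prop:pdagger}. I take the unknowns to be the within-block transfers, so that for every state $x$ and every top-level block $K\in P(x)$ the block total $\sum_{j\in K}\pi_j(x)$ is primitive (the partition-function value), and only its division among the members of $K$ is to be determined. The hypothesis that only merges are realised means that $p_{x\to y}>0$ with $y\ne x$ forces $|P(y)|<|P(x)|$. At a grand state $h$ no merge is available (Lemma~\ref{lem:entry}), so $p_{h\to h}=1$ and $\ev=\pi$ there.

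\emph{Step 1: evaluations as affine functions of current payoffs.} I would use Lemma~\ref{lem:onestep} with the self-loop isolated. Writing $q_x:=p_{x\to x}$,
\[
\bigl(1-\delta q_x\bigr)\ev_i(x)=(1-\delta)\pi_i(x)+\delta\sum_{y\ne x}p_{x\to y}\ev_i(y).
\]
By induction on $k=|P(x)|$, starting from $k=1$ (the grand states), this gives $\ev_i(x)=a_x\pi_i(x)+b_{x,i}$. Here $a_x=(1-\delta)/(1-\delta q_x)>0$, and $b_{x,i}$ depends only on evaluations at states with $|P|<k$, all of which are already known. Summing over a block then fixes every block total $\sum_{j\in K}\ev_j(x)$ from primitives and lower levels.

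\emph{Step 2: the sharing equations.} At $x$ with top-level $K$, Definition~\ref{def:farsighted} prescribes $\ev_i(x)-\ev_i(x^{-K})$ for $i\in K$. The right-hand side is $s_i(x,K)$ times a block difference, and by Step~1 that difference is a difference of known block totals once $x^{-K}$ is also handled. The prescribed differences are then inverted through the positive coefficient $a_x$, which yields the individual $\pi_i(x)$ and hence the transfers. At $K=N$ the reference state is $t(h)$, and this is exactly the computation in Proposition~\ref{prop:pdagger}.

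\emph{Main obstacle: the two orders point in opposite directions.} Successors of $x$ have smaller $|P|$, but the reference state $x^{-K}$ has larger $|P|$. Removing a top-level $K$ splits it into at least two children, so $|P(x^{-K})|>|P(x)|$. Moreover $x$ may itself be a realised successor of $x^{-K}$, via the merge of those children, so naive backward induction on $|P|$ is circular. My first attempt at breaking the circularity is the following.
\begin{itemize}
\item Proceed in increasing $|P|$ and treat the sharing equation at $x$ as a constraint on the division at $x$ only. The block totals of $\ev$ at $x$ and at $x^{-K}$ are both computable from primitives plus lower levels, and the individual $\ev_i(x^{-K})$ for $i\in K$ are in turn fixed by sharing equations at $x^{-K}$ for nodes strictly inside $K$. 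This suggests a secondary induction on the number of nodes contained in $K$, so the equations for a fixed block $K$ become triangular.
\item If that fails, I would fall back on the observation that the whole system is linear in the transfers. I would then argue that its matrix is block-triangular in the order (number of nodes below $K$, then $|P|$), with diagonal blocks $a_x I$ acting within a block, so that it is uniquely solvable without any fixed-point argument.
\end{itemize}
The final sentence of the corollary then follows at once from Theorem~\ref{thm:fgains}, which shows that no termination is realised under Criterion~\ref{cond:fgains}.
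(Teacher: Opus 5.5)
You are right that the reference state points the wrong way for a naive backward induction. The paper's one-line justification passes over this too: the sharing condition at $N$ fixes $\pi(h)$ only relative to $\ev(t(h))$, and $t(h)$ sits at a higher level. But your proposal does not close the gap, and the step you treat as routine is where it bites. In Step~1 you say that summing over $K\in P(x)$ fixes $\sum_{j\in K}\ev_j(x)$ from primitives and lower levels. That is false whenever $K$ is absorbed into the new node $K''$ at a merge successor $y$. Then $\sum_{j\in K}\ev_j(y)$ is not a block total of $y$, and far-sighted sharing at $y$, whose reference state is $y^{-K''}=x$, expresses it through $\ev(x)$ itself. Already under $p^\dagger$ the split of $\pi(g_x)$ is pinned down only relative to $\ev(x)$ (Proposition~\ref{prop:pdagger}), so nothing at level $|P|=1$ is ``already known'' player by player. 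The same two-way coupling also sinks your fallback. The sharing equation at $x$ involves the split at $y$ through the continuation in $\ev(x)$, and the sharing equation at $y$ involves the split at $x$ through its reference state. So the system is not block-triangular in the pairs $(x,K)$ in any order, let alone with diagonal blocks $a_xI$.

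What is missing is a change of unknowns and a two-pass order. Take $\ev$ as the unknown and read off $\pi=\ev+\frac{\delta}{1-\delta}(I-p)\ev$ at the end. First determine only the block totals $L_K(x)=\sum_{i\in K}\ev_i(x)$, $K\in P(x)$, by backward induction on $|P(x)|$ from \eqref{eq:blockbellman}. Let $y$ be a merge successor whose new node $K''$ is formed from $S_y\subseteq P(x)$, and put $\sigma_{y,K}:=\sum_{i\in K}s_i(y,K'')$. Summing the sharing condition at $y$ over $i\in K$ gives $L_K(y)=L_K(x)+\sigma_{y,K}\big(L_{K''}(y)-\sum_{K'\in S_y}L_{K'}(x)\big)$ for $K\in S_y$. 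Substituting this yields, at each $x$, the linear system
\[
\Big((1-\delta p_{x\to x})\,I-\delta\sum_{y\ne x}p_{x\to y}B_y\Big)L(x)=\text{terms known from lower levels},
\]
where $B_y$ acts as $I-\sigma_y\mathbf 1^{\top}$ on the blocks in $S_y$ and as $0$ on the others. This couples the blocks of $x$ that merge together, and you must show the matrix is invertible. Under Condition~\ref{cond:nash} it is: $\sigma_{y,K}=w(K)/w(K'')$, so each $B_y$ is an orthogonal projection for the inner product with weights $1/w(K)$, and the matrix is bounded below by $(1-\delta)I$ in that inner product. With state-dependent shares this needs its own argument. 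Only after this pass does your first bullet work. With every block total known, $\ev_i(x)=\ev_i(x^{-K})+s_i(x,K)\big(L_K(x)-L_K(x^{-K})\big)$ determines the individual evaluations by induction on the number of nodes inside the block containing $i$, and the sharing conditions then hold by construction.
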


This is the one place where the implicitness of the far-sighted rule, flagged in
Remark~\ref{rem:fs-bounds}, does no damage: the circularity that forces a fixed point in general is
broken by acyclicity of the realised moves.
Whether $p^\dagger$ itself is an \emph{equilibrium} is a separate question, and by
Proposition~\ref{prop:pdagger} it is exactly the question already answered by
Theorem~\ref{thm:onestep}, since the payoffs are the same: it is, if no move improves the relative
share of all the players whose approval it needs.

\subsection{A cooperative-game reading}\label{sec:coop}

The construction of Proposition~\ref{prop:pdagger} invites a comparison with cooperative solution
concepts, and the comparison is worth making, both because it identifies what the sharing rule is and
because it exposes a tension in the previous subsection.

\emph{What the sharing rule is.}
A state is a laminar hierarchy, which is precisely a \emph{level structure} in the sense of
\citet{Winter1989}, of which a coalition structure \citep{AumannDreze1974} and an
\emph{a priori} union structure \citep{Owen1977} are the one-level cases.
The rule of Assumption~\ref{ass:share} builds payoffs by bargaining once at each level against the
state in which that level's agreement is absent, and with the weights of Condition~\ref{cond:nash}
each such bargain is a weighted Shapley value of a unanimity game \citep{Kalai1977,KalaiSamet1987}.
Whether the resulting map from states to payoff vectors coincides with the level-structure value of
\citet{Winter1989}, or merely resembles it, we have not determined; the two are built from the same
ingredients but not by the same recursion, and settling it would be a worthwhile exercise.

\emph{Undomination of the merge is a blocking condition.}
Fix a far-sighted rule with arbitrary strictly positive shares $s_i(x,K)$ and consider $p^\dagger$ at
the fully non-cooperative state $x_0$.
Every state reachable from $x_0$ leads to a grand state, so each route determines an imputation of
$V^\ast$: going directly gives
$y^{\,\emptyset}_i := \pi_i(x_0)+s_i(g_{x_0},N)\big(V^\ast-V(x_0)\big)$,
and forming $K$ first and merging afterwards gives
$y^{K}_i := \pi_i(x_0)+s_i(\{K\},K)\,\Delta_K+s_i(g_{\{K\}},N)\big(V^\ast-V(\{K\})\big)$
for $i \in K$, where $\Delta_K := \Delta(\{K\},K)$.
Merge-all is undominated at $x_0$ exactly when no $K$ blocks $y^{\,\emptyset}$ in favour of $y^{K}$,
that is, when for every $K$ some $i \in K$ has $y^K_i \le y^{\,\emptyset}_i$.
This is a core-like requirement, but with a blocking notion \emph{stronger} than the classical one:
a deviating coalition does not consume its own worth for ever, it forms and then enters the grand
coalition on the better terms its prior formation has bought it.
The set of imputations surviving such blocking is therefore contained in the classical core, and the
question of which far-sighted rules make $p^\dagger$ an equilibrium is the question of which shares
place $y^{\,\emptyset}$ in that smaller set.
Whether it is ever non-empty when the classical core is, and whether a canonical selection such as
the nucleolus \citep{Schmeidler1969} or an asymmetric Shapley value lands in it, we leave open; the
question seems to us the natural cooperative counterpart of Theorem~\ref{thm:onestep}.

\emph{A tension.}
Under Condition~\ref{cond:nash} the two conditions can be computed and compared, and they pull
against each other.
Writing $\Delta V_K := V(\{K\})-V(x_0)$, the calculation of Proposition~\ref{prop:pdagger} gives
$\pi_i(\{K\})=\pi_i(x_0)+\frac{w_i}{w(K)}\Delta_K$ and
$\pi_i(g_y)=\pi_i(y)+w_i(V^\ast-V(y))$, whence
\[
   \ev_i(\{K\})-\ev_i(g_{x_0}) \;=\; w_i\left[\frac{\Delta_K}{w(K)}
   -(1-\delta)\big(V^\ast-V(x_0)\big)-\delta\,\Delta V_K\right],
\]
whose sign is the same for every $i \in K$.
So merge-all is undominated at $x_0$ if and only if
\[
   \Delta_K \;\le\; w(K)\Big[(1-\delta)\big(V^\ast-V(x_0)\big)+\delta\,\Delta V_K\Big]
   \qquad\text{for every } K,
\]
while \eqref{eq:fgains-primitive} at the state $\{K\}$ requires
$\Delta_K \ge \delta\,w(K)\,\Delta V_K$.
Both hold only inside an interval of width $w(K)(1-\delta)\big(V^\ast-V(x_0)\big)$, which closes as
$\delta\to1$.

\begin{remark}\label{rem:pdagger-tension}
This limits what Proposition~\ref{prop:pdagger} can be used for, and we would rather say so than let
the reader discover it.
The primitive form \eqref{eq:fgains-primitive} of Criterion~\ref{cond:fgains} is computed
\emph{on} $p^\dagger$, and the display above shows that when it holds strictly at large $\delta$,
$p^\dagger$ is dominated at $x_0$ and hence is not an equilibrium.
The two are not contradictory---Theorem~\ref{thm:fgains} is a statement about an equilibrium's own
evaluations, and its conclusion is arrival within $n-1$ merges, not arrival in one---but they say
that under long-run gains the grand coalition is generically approached through several mergers
rather than one, which is the phenomenon \citet{HeitzigKornek2018} report as a global market emerging
``but probably not in one move''.
What \eqref{eq:fgains-primitive} therefore is, honestly, is a benchmark computation rather than a
test one can apply to an arbitrary equilibrium; making Criterion~\ref{cond:fgains} primitive on a
process that is itself an equilibrium remains open.
\end{remark}

\subsection{Reading the growth off the equations}\label{sec:growth}

The question of how fast the transfers grow can in fact be settled from the system itself,
by eliminating $\pi$.
Substituting $\pi = \ev + \frac{\delta}{1-\delta}(I-p)\ev$ into the constraint that each top-level block receives its partition-function value,
$\sum_{i\in K}\pi_i(x) = v(K;P(x))$, and writing $L_K := \sum_{i\in K}\ev_i$, that constraint becomes
\begin{equation}\label{eq:blockbellman}
   L_K(x) \;-\; \delta\,(p\,L_K)(x) \;=\; (1-\delta)\,v\big(K;P(x)\big)
   \qquad\text{whenever } K \in P(x).
\end{equation}
Together with the far-sighted sharing conditions, which do not involve $\delta$ at all,
this is a square linear system
\[
   A(\delta)\,\ev \;=\; (1-\delta)\,b
\]
in the $|\X|\cdot n$ unknowns $\ev_i(x)$: one equation \eqref{eq:blockbellman} per state and top-level block,
and $|K|-1$ sharing equations per block with $|K|\ge2$, which is $n$ equations per state.
Three things follow.

First, $A(1)$ is singular:
every vector $\ev_i(x)=c_i$ constant in $x$ satisfies both families of equations with zero right-hand side.
So $A(\delta)^{-1}$ has a pole at $\delta=1$,
and the factor $(1-\delta)$ on the right cancels one order of it.
This is why the evaluations stay bounded although the transfers do not:
the evaluations are the object the system is written in, and the pole is spent on the right-hand side.

Second, the criterion for the transfers is exact.
Writing $\ev_0 := \lim_{\delta\to1}\ev(\delta)$,
which exists because $\ev(\delta)$ is a rational function of $\delta$ and is bounded,
\[
   \pi(\delta) \text{ stays bounded as } \delta\to1
   \qquad\Longleftrightarrow\qquad
   (I-p)\,\ev_0 = 0 ,
\]
that is, exactly when the limiting evaluation of \emph{each individual player} is $p$-harmonic.
Setting $\delta=1$ in \eqref{eq:blockbellman} shows only that the block \emph{sums} $L_K$ are harmonic,
at those states where $K$ is a block;
the transfers blow up precisely to the extent that individual evaluations fail to inherit that.

Third---and this is what makes the criterion usable---individual harmonicity can be read off the combinatorics of the class.

\begin{lemma}[Two sources of individual harmonicity]\label{lem:harmonic-sources}
Let $C$ be a closed communicating class on which $p$ is deterministic,
write $\succ(x)$ for the successor of $x$, and let $\ev_0$ be as above.
Fix $x \in C$ and $i \in N$, and suppose either
\begin{enumerate}[label=(\roman*)]
\item[$(\alpha)$] $\{i\}$ is a top-level block of $x$, that is, $i$ has signed nothing at $x$; or
\item[$(\beta)$] the move realised at $x$ terminates the top-level block $K$ of $x$ containing $i$.
\end{enumerate}
Then $\ev_{0,i}(x)=\ev_{0,i}(\succ(x))$.
\end{lemma}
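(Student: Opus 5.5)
The plan is to pass to the limit $\delta\to1$ in the two families of equations that make up the system $A(\delta)\,\ev=(1-\delta)\,b$. These are the block equations \eqref{eq:blockbellman} and the far-sighted sharing conditions of Definition~\ref{def:farsighted}. Throughout I use that $\ev(\delta)\to\ev_0$, which holds because $\ev(\delta)$ is rational in $\delta$ and bounded. On $C$ the process is deterministic, so $(p\,L_K)(x)=L_K(\succ(x))$ for every block sum $L_K=\sum_{j\in K}\ev_j$. Since the partition-function values $v(K;P(x))$ are primitives that do not depend on $\delta$, the right-hand side of \eqref{eq:blockbellman} tends to $0$.

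\emph{Case $(\alpha)$.} Here $\{i\}\in P(x)$, so \eqref{eq:blockbellman} applies with $K=\{i\}$ and $L_{\{i\}}=\ev_i$. It reads $\ev_i(x)-\delta\,\ev_i(\succ(x))=(1-\delta)\,v(\{i\};P(x))$. Letting $\delta\to1$ gives $\ev_{0,i}(x)=\ev_{0,i}(\succ(x))$ directly. No sharing condition is needed, because a singleton block has nothing to share.

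\emph{Case $(\beta)$.} I read the hypothesis as saying that the realised move has bottom node $K$, with $K$ top-level in $x$. By the chain rule it deletes only $K$, so $\succ(x)=x^{-K}$. Two equations at $x$ then combine.
\begin{enumerate}[label=(\roman*)]
\item The block equation for $K$ at $x$ gives $L_K(x)-\delta L_K(x^{-K})=(1-\delta)\,v(K;P(x))$. In the limit this yields $L_{K,0}(x)=L_{K,0}(x^{-K})$, which says $\Delta^{\ev}(x,K)\to0$.
\item Far-sighted sharing at $x$ for the top-level node $K$ gives $\ev_i(x)-\ev_i(x^{-K})=s_i(x,K)\,\Delta^{\ev}(x,K)$ for $i\in K$.
\end{enumerate}
The shares are the fixed, $\delta$-independent numbers of Condition~\ref{cond:nash} (or are at least bounded). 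Passing to the limit in (ii) and using (i) therefore gives $\ev_{0,i}(x)=\ev_{0,i}(x^{-K})=\ev_{0,i}(\succ(x))$. In words, harmonicity of the block sum of $K$ is distributed to each member of $K$ by the sharing rule, precisely because the move's target is the reference state of that sharing condition.

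\emph{Main obstacle.} The delicate point is the scope of $(\beta)$. Suppose the realised move instead terminates a node $K_0\subsetneq K$, deleting the chain $K_0\subsetneq\cdots\subsetneq K$. Then $\succ(x)$ is reached through the intermediate states $x^{(t)}$ used in the proof of Theorem~\ref{thm:fgains}. Sharing could still be telescoped along that chain, but the block equations at the intermediate states involve successors off $C$ and do not obviously force the intermediate block sums to be harmonic. I would therefore state and prove the lemma for the case where the terminated node is $K$ itself, which is how the hypothesis is phrased. Any extension to deeper terminations would be left as a separate claim, to be checked against the off-class behaviour of $p$.
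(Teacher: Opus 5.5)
Your proof is correct and follows the paper's argument. For $(\alpha)$ you pass to the limit in the singleton block equation \eqref{eq:blockbellman}; for $(\beta)$ you pass to the limit in the block equation for $K$, which forces $\Delta^{\ev}(x,K)\to0$, and the far-sighted sharing condition at $x$ then transfers this to each member of $K$ because $\succ(x)=x^{-K}$. Your reading of $(\beta)$, in which the bottom node is $K$ itself, is also the one the paper's proof tacitly uses when it asserts that the target of the termination is exactly $x^{-K}$.
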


\begin{proof}
$(\alpha)$ Apply \eqref{eq:blockbellman} with $K=\{i\}$ and let $\delta\to1$:
$\ev_{0,i}(x)=(p\,\ev_{0,i})(x)=\ev_{0,i}(\succ(x))$.

$(\beta)$ Since $K$ is top-level it has no ancestors,
so the chain rule makes the target of the termination exactly $x^{-K}$; hence $\succ(x)=x^{-K}$.
Applying \eqref{eq:blockbellman} to $K$ at $x$ and letting $\delta\to1$ gives $L_K(x)=L_K(\succ(x))=L_K(x^{-K})$,
so the long-term surplus of $K$ at $x$ vanishes in the limit.
The far-sighted sharing condition at $x$ for $K$ then reads $\ev_{0,i}(x)-\ev_{0,i}(x^{-K}) = \frac{w_i}{w(K)}\big(L_K(x)-L_K(x^{-K})\big)=0$.
\end{proof}

\begin{theorem}[A combinatorial criterion]\label{thm:combinatorial}
Let $C$ be a closed communicating class on which $p$ is deterministic,
and for $i \in N$ let $H_i := \{x \in C : (\alpha) \text{ or } (\beta) \text{ holds for } i \text{ at } x\}$.
If for every $i$ the edges $\{x \to \succ(x) : x \in H_i\}$ connect $C$ --- for which it suffices that $|H_i|\ge|C|-1$ --- then $\ev_{0,i}$ is constant on $C$ for every $i$,
and consequently
\[
   \max_{i\in N}\Big(\max_{x\in C}\ev_i(x)-\min_{x\in C}\ev_i(x)\Big) \;\longrightarrow\; 0 \qquad (\delta\to1).
\]
If in addition every grand state is absorbing,
then by Theorem~\ref{thm:fs-arrival} $C$ cannot be a closed class for $\delta$ close enough to $1$.
\end{theorem}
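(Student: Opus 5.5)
The plan has three steps: prove that $\ev_{0,i}$ is constant on $C$, deduce that the spread of evaluations vanishes as $\delta\to1$, and then feed this into Theorem~\ref{thm:fs-arrival}.

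\emph{Constancy of the limit.} Fix $i$. Lemma~\ref{lem:harmonic-sources} gives $\ev_{0,i}(x)=\ev_{0,i}(\succ(x))$ for every $x\in H_i$. Consider the undirected graph on $C$ whose edges are $\{x,\succ(x)\}$ for $x\in H_i$. By hypothesis this graph is connected, and $\ev_{0,i}$ agrees at the two endpoints of every edge, so it is constant on $C$.

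For the sufficient condition, I use that $p$ is deterministic on $C$ and $C$ is a closed communicating class. Hence $C$ is a single directed cycle $x_1\to\cdots\to x_k\to x_1$ with $k=|C|$, whose edges are $x_j\to x_{j+1}$. If $|H_i|\ge k-1$, then at most one edge of the cycle is omitted, and a cycle with one edge removed is a Hamiltonian path, which is connected. (If $k=1$ there is nothing to show.)

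\emph{The spread tends to zero.} The text before Lemma~\ref{lem:harmonic-sources} notes that $\ev(\delta)$ is a bounded rational function of $\delta$, so the limit $\ev_0=\lim_{\delta\to1}\ev(\delta)$ exists. On the finite set $C$ we then have
\[
\max_{x\in C}\ev_i(x)-\min_{x\in C}\ev_i(x)\;\to\;\max_{x\in C}\ev_{0,i}(x)-\min_{x\in C}\ev_{0,i}(x)=0,
\]
and maximising over the finitely many $i$ preserves the limit. The main care point is that ``$\ev_0$'' presupposes a family of equilibria indexed by $\delta$ sharing the same class $C$ and the same deterministic successor map. That is the setting of Section~\ref{sec:growth}, in which the system $A(\delta)\ev=(1-\delta)b$ is posed with $p$ fixed, and I would state this explicitly.

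\emph{Final claim.} Assume every grand state is absorbing. Rather than the form of Theorem~\ref{thm:fs-arrival} with $\bar T_C$ (which involves $\pi$, and $\pi$ may blow up), I use its proof directly. For each $x\in C$, Proposition~\ref{prop:farsighted} gives $\ev_i(g_x)-\ev_i(x)\ge w_{\min}\gamma$ for all $i$. Lemma~\ref{lem:objection} then supplies $i$ and $z\in C$ with
\[
\ev_i(z)-\ev_i(x)\;\ge\; w_{\min}\gamma.
\]
So the spread of some player's evaluations on $C$ is at least $w_{\min}\gamma>0$ for every $\delta$ at which $C$ is a closed class. Since we just showed that this spread tends to zero, $C$ cannot be a closed class for $\delta$ close enough to $1$.

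The only genuine obstacle is this use of the proof rather than the literal bound: the displayed inequality of Theorem~\ref{thm:fs-arrival} cannot be applied verbatim, because $\bar T_C\max_i s_i(C)$ involves the possibly exploding $\pi$. Its intermediate step $\ev_i(z)-\ev_i(x)\ge w_{\min}\gamma$ is what is needed, and it holds at every $\delta$.
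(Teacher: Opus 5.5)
Your proof is correct and follows the paper's argument. The paper gets constancy of $\ev_{0,i}$ from Lemma~\ref{lem:harmonic-sources} along a connecting set of edges, lets the spread on the finite set $C$ converge to $0$, and contradicts the lower bound $w_{\min}\gamma$ on that spread coming from Theorem~\ref{thm:fs-arrival}, which is how the paper itself reads that theorem in Example~\ref{ex:cycle-dies}. Your cycle-minus-one-edge argument for $|H_i|\ge|C|-1$ only fills in what the paper leaves implicit.

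Your worry that the theorem's literal bound cannot be used is unfounded here, though it does no harm. Since $\ev(\delta)$ is rational and bounded, $\ev(\delta)=\ev_0+O(1-\delta)$, and constancy of $\ev_{0,i}$ on $C$ then gives $(I-p)\ev_i=O(1-\delta)$ there. By the identity $\pi_i=\ev_i+\frac{\delta}{1-\delta}(I-p)\ev_i$ of Section~\ref{sec:growth}, the static payoffs stay bounded on $C$, so $(1-\delta)\bar T_C\max_i s_i(C)\to0$ as well.
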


\begin{proof}
By Lemma~\ref{lem:harmonic-sources},
$\ev_{0,i}$ takes equal values at the two ends of every edge $x\to\succ(x)$ with $x \in H_i$;
if those edges connect $C$ then $\ev_{0,i}$ is constant on $C$.
The displayed limit is then $\max_i\operatorname{spread}_C(\ev_{0,i})=0$.
The last sentence is Theorem~\ref{thm:fs-arrival},
whose left-hand side $w_{\min}\gamma$ is a positive constant.
\end{proof}

\begin{example}\label{ex:cycle-dies}
The cycle of Proposition~\ref{prop:counterexample} satisfies the criterion, and comfortably.
Its four states, with their top-level blocks and the move realised at each, are
\[
\begin{array}{lccl}
x_1=A,B,C,D & A|B|C|D & \text{merge } AB & (\alpha)\ \text{for } A,B,C,D\\
x_2=[AB],C,D & AB|C|D & \text{merge } CD & (\alpha)\ \text{for } C,D\\
x_3=[AB],[CD] & AB|CD & \text{terminate } AB & (\beta)\ \text{for } A,B\\
x_4=A,B,[CD] & A|B|CD & \text{terminate } CD & (\alpha)\ \text{for } A,B,\ (\beta)\ \text{for } C,D
\end{array}
\]
so $H_A=H_B=\{x_1,x_3,x_4\}$ and $H_C=H_D=\{x_1,x_2,x_4\}$:
every player fails at exactly one of the four states, and $|H_i|=3=|C|-1$.
Theorem~\ref{thm:combinatorial} therefore predicts that this cycle disappears as $\delta\to1$,
and the computations agree: the spread of $\ev$ over the class is $82.4$, $56.0$,
$31.1$ and $25.4$ at $\delta=0.3,\,0.5,\,0.7,\,0.75$, close to $112(1-\delta)$ throughout and heading for $0$,
while Theorem~\ref{thm:fs-arrival} needs it to stay above $w_{\min}\gamma=4$.
\end{example}

\begin{remark}
The criterion is a statement about the \emph{shape} of a candidate cycle, not about payoffs,
and it is demanding to violate:
$(\alpha)$ fails for $i$ only at states where $i$ has already signed something,
and $(\beta)$ then requires the move there to dissolve $i$'s own top-level agreement.
A cycle escaping the criterion must therefore keep some player inside an agreement that she does not dissolve,
at two or more of its states.
We have not determined whether such a cycle can be an equilibrium;
that is now the precise form of the open question for far-sighted sharing.
\end{remark}

This is what the computations show.
For the family of equilibria of Proposition~\ref{prop:counterexample} under far-sighted sharing and unilateral termination,
the evaluations stay bounded, with $\max_{x,i}|\ev_i(x)|$ between $80$ and $205$ throughout,
while the largest transfer grows from $64$ at $\delta=1/2$ to about $10^3$ at $\delta=0.999$ and $3\times10^4$ at $\delta=0.9999$ --- of the order of $1/(1-\delta)$,
as Proposition~\ref{prop:transfer-growth} requires of any family of cycling equilibria.

\begin{remark}\label{rem:fs-bounds}
Neither hypothesis of the corollary is free.
The bound $\eta$ is harmless under the probability rule of \citet{HeitzigKornek2018}:
there a realised move's probability is the total bargaining weight of the players favouring it,
so if a player facing several tied favourites splits her weight equally among them,
every positive transition probability is at least $w_{\min}/|\X|$, independently of $\delta$.
The bound $\Sigma$ is a genuine restriction,
because under far-sighted sharing the static payoffs are \emph{endogenous}:
the transfers solve an implicit system involving $\ev$, hence $\delta$.
In fact they are \emph{not} bounded,
and Proposition~\ref{prop:transfer-growth} below shows that this is forced rather than accidental.
What can be salvaged is that the hypothesis needed is not boundedness but a growth rate, and a generous one.
\end{remark}

\begin{remark}\label{rem:fs-vs-counterexample}
Theorem~\ref{thm:fs-arrival} does not contradict the far-sighted computation reported after Proposition~\ref{prop:counterexample},
and the numbers there show why.
That equilibrium uses \emph{unilateral} termination,
and two of the four grand states above the cycle are not absorbing, so the hypothesis fails;
indeed the merge gains at those two states, $38.40$ and $2.40$ at $\delta=2/5$,
are not both bounded below by $w_{\min}\gamma=4$, the second falling short,
whereas at the two states whose grand state is absorbing the gains are $6.36$ and $9.91$,
above $4$ as the proof requires.
Moreover the cycle lives at moderate $\delta$:
in that family of equilibria the spread of $\ev$ over the class is $82.4$, $69.2$,
$43.2$ and $37.1$ at $\delta=0.3,\,0.4,\,0.6,\,0.65$,
shrinking like $(1-\delta)$ as Lemma~\ref{lem:spread} predicts and staying far above the value $4$ that Theorem~\ref{thm:fs-arrival} would need it to fall below.
\end{remark}

Two further remarks.
The persistence theorems are unaffected, since neither uses any sharing rule:
Theorem~\ref{thm:persist-unan} uses only Lemma~\ref{lem:welfare},
and Theorem~\ref{thm:persist-unil} only the shadow lemma and the definition of domination.
The arrival problem, on the other hand, is not resolved by the far-sighted rule:
profitability of merge-all is necessary but not sufficient for it to be realised,
since it may still be dominated or fail to be anyone's favourite,
which is precisely the obstruction analysed in Lemma~\ref{lem:objection} and Proposition~\ref{prop:arrival-unil}.
This is not merely a gap in the argument.
The counterexample of Proposition~\ref{prop:counterexample},
computed under this rule rather than under Assumption~\ref{ass:share}, still cycles for ever,
and it does so with the merge-all move profitable for every player at every state of the cycle:
the rule removes the gap between static and long-term payoffs that hypothesis \eqref{eq:prefer} exists to bridge,
and domination alone suffices to prevent arrival.

\section{Discussion}\label{sec:discussion}

This section returns to the three things promised in Section~\ref{sec:intro}: which hypotheses are
doing the work, why the limit $\delta\to1$ behaves the way it does, and how much of the model should
be believed.

\paragraph{Why $\delta\to1$ is the demanding limit.}
Persistence holds at every $\delta$ and restricts it in no way,
while the arrival results either bound $\delta$ from above (Theorem~\ref{thm:smalldelta},
Proposition~\ref{prop:flat}) or bound the distributional stakes uniformly in $\delta$
(Theorem~\ref{thm:arrival}).
That the limit promised in Section~\ref{sec:delta} is the demanding one deserves a word of
explanation, since the opposite is often true in dynamic models, and it is not an artefact of the
proofs.
The aggregate cost of one period spent outside the grand coalition is $(1-\delta)(V^\ast - V(x))$,
which vanishes as $\delta \to 1$; holding out for a better division becomes cheap, not expensive.
Moreover the relative shares $\psi_i$ introduced in Section~\ref{sec:decomp} sum to zero,
so at every grand state some player has a strictly better grand state available in principle.
Consequently no argument that merely compares payoff \emph{levels} can establish absorption as $\delta \to 1$;
what does establish it is the move structure,
which is why Theorems~\ref{thm:persist-unan}--\ref{thm:absorbing-grand} hold for every $\delta$,
whereas every quantitative sufficient condition involving $\delta$ necessarily degrades as $\delta \to 1$.

\paragraph{Which hypothesis carries which theorem.}
Assumption~\ref{ass:share} gives Lemma~\ref{lem:pareto}, without which nothing works;
only its weak form is needed, that every signatory receives a strictly positive share.
Assumption~\ref{ass:eff} gives $\gamma>0$ and the welfare bound.
The chain rule gives Lemmas~\ref{lem:entry} and~\ref{lem:shadow}, hence both persistence theorems.
Unanimous termination gives the stronger persistence statement (Theorem~\ref{thm:persist-unan}) and removes the residual case of Remark~\ref{rem:open-unil};
unilateral termination gives a slightly weaker statement plus Criterion~\ref{cond:ties}.
Condition~\ref{cond:gains} is used only for small $\delta$,
and Condition~\ref{cond:nash} only for the $\psi$-decomposition and what quotes it;
see Table~\ref{tab:map}.

\paragraph{Where the gap lies.}
It is worth restating which configuration is unresolved, because it is not the one a reader would
guess from the pattern of results.
Persistence holds everywhere.
Arrival holds for small $\delta$, for three players, under condition~\eqref{eq:prefer}, and---under
far-sighted sharing and unanimous termination---for $\delta$ near one.
It fails, under either termination rule, at moderate $\delta$.
Under far-sighted sharing the unilateral case is no longer open: Theorem~\ref{thm:fgains} settles it
for every discount factor under Criterion~\ref{cond:fgains}, and
Corollary~\ref{cor:fgains-necessary} shows that a cycling equilibrium must violate that condition at
a state where it terminates something.
What is open is arrival under unilateral termination and the sharing rule of
Assumption~\ref{ass:share} as $\delta\to1$; and by the argument of Section~\ref{sec:realism} that is
the configuration the motivating application most plausibly satisfies.
A negative answer there would say that fast, freely terminable, bilaterally negotiated agreements
need not converge on full cooperation even when everyone is patient; a positive one would say that
patience suffices after all.
We do not know which.

\paragraph{How realistic is the model?}\label{sec:realism}

Now that the results are in hand we can say which of the modelling choices deserve an argument
rather than a stipulation.
Two of them do, and they interact.

\emph{Termination should be unilateral.}
The model's two variants differ in who must approve a dissolution: all the signatories of the
top-level agreement that is destroyed, or a single member of the agreement being terminated.
The second is the realistic one.
Agreements between sovereign parties almost always carry a withdrawal clause that any single party
may exercise---the Paris Agreement has one, as do most treaties and most commercial contracts---while
formation invariably requires everyone's signature.
The asymmetry is not incidental: it is what distinguishes an agreement from an institution, and it is
exactly the feature (R) that yields the Reversal Lemma.
Requiring unanimity to dissolve is, in effect, assuming an external enforcement mechanism that keeps
a party in an agreement it wants to leave.
Where such a mechanism exists---a supranational court, a collateralised contract---the unanimous
variant is the right one; where it does not, the unilateral variant is.

This matters more, not less, for automated negotiation.
When the parties delegate the search for agreements to software agents, the enforcement question does
not go away, but the practical ability of a single agent to stop executing an agreement is if
anything greater, and the natural default is that any signatory can walk.
An agent can always cease to comply; making it unable to do so requires machinery outside the
negotiation itself.

\emph{And then the discount factor is close to one.}
As Section~\ref{sec:delta} argues, $\delta$ aggregates time preference, period length and trust that
the process continues, and $\delta\to1$ is what one obtains by letting the period length go to zero
at fixed impatience.
Automated negotiation is exactly the regime in which the period length collapses: the interval
between opportunities to change the state is measured in seconds rather than in years of diplomacy.

Putting the two together is uncomfortable, and we prefer to say so plainly rather than bury it.
\emph{Unilateral termination together with $\delta$ near one, under the sharing rule of
Assumption~\ref{ass:share}, is precisely the case this paper leaves open}---see
Table~\ref{tab:map} and Remark~\ref{rem:open-unil}.
Persistence is settled there (Theorem~\ref{thm:persist-unil}), and the counterexample of
Proposition~\ref{prop:counterexample} shows that arrival can fail under that termination rule at
moderate $\delta$; what is not known is whether it can fail as $\delta\to1$.
The one setting in which we can answer the question in the limit, Section~\ref{sec:pivot}, requires
the sharing rule to be far-sighted and, for the cleanest statement, termination to be unanimous.
So the gap in our results is not a corner case chosen for convenience; it is the configuration that
the motivating application most plausibly satisfies.
Remark~\ref{rem:continuous-time} sharpens this: in the continuous-time reading, faster negotiation
raises $\delta$ towards one without changing anything else, so the open case is not merely plausible
but is what one approaches by making the agents quicker.

\emph{Concurrency is not the objection it appears to be.}
That exactly one move occurs per period looks, at first, like a serious restriction once negotiations
are fast and parallel, since simultaneous independent agreements among disjoint groups are what one
would expect of automated negotiation at scale.
But real time is continuous, and negotiations that run concurrently do not \emph{conclude}
concurrently: generically one concludes first, and once that is observed the others are being
conducted against a changed status quo and must be re-evaluated, because what a party gains from an
agreement depends on the state in which it is signed---that is precisely the content of
Assumption~\ref{ass:share}.
Consecutive moves are therefore not an assumption that negotiations are serial.
They are a description of the sequence of \emph{conclusions}, which is all the model needs.

This can be made precise, and doing so returns two of the model's stipulations as consequences.

\begin{remark}[A continuous-time reading]\label{rem:continuous-time}
Suppose each feasible move $m$ at the current state concludes at the first arrival of an independent
Poisson clock of rate $\lambda_m>0$, and that payoffs are discounted continuously at rate $\rho>0$.
Then almost surely no two moves conclude at the same instant.
The embedded jump chain is a discrete-time process on $\X$ in which $m$ is realised with probability
$\lambda_m/\Lambda$, where $\Lambda := \sum_{m}\lambda_m$; the waiting time $\tau$ between
consecutive conclusions is exponential with mean $1/\Lambda$; and the discount factor per realised
move is
\[
   \delta \;=\; \E\big[e^{-\rho\tau}\big] \;=\; \frac{\Lambda}{\Lambda+\rho}.
\]
Memorylessness of the exponential clocks is exactly the ``restart from the new status quo'' step: a
negotiation interrupted by another's conclusion carries no residual progress and begins afresh at the
new state.
Two features of the model then follow rather than being assumed.
Transition probabilities proportional to the total bargaining weight of the players favouring a move
are what one obtains by setting $\lambda_m$ proportional to that weight.
And $\delta\to1$ as $\Lambda\to\infty$ at fixed $\rho$: faster negotiation, in the literal sense of
more conclusions per unit of real time, is exactly the patient limit of
Section~\ref{sec:delta}.
For $\delta$ to be a constant rather than a function of the state, $\Lambda$ must not depend on the
state, which amounts to a fixed negotiating capacity; we assume it.
\end{remark}

What the argument does require is that a negotiation \emph{in progress} be unobservable, or at least
uninformative: the model's states carry no record of pending moves, so a player's evaluation cannot
condition on them.
In inter-governmental negotiation that is doubtful, since ongoing talks are visible and are
themselves instruments of bargaining.
Among automated agents concluding in seconds it is more plausible, which is a rare instance of the
fast setting being kinder to the model than the slow one.

\emph{What is clearly unrealistic.}
Two assumptions are made for tractability and should be read as such.
Bargaining weights are exogenous and fixed, where in reality bargaining power is itself an outcome---
indeed the free-riding effect of Section~\ref{sec:decomp} is precisely a manoeuvre to improve one's
position, so the model already contains the phenomenon it holds fixed.
And beliefs are common and self-confirming, which is a strong coordination requirement.
We regard the second as the more serious of the two, and more serious in the automated setting than
in the diplomatic one: a common, correct model of how the process will unfold is a demanding thing to
attribute to a population of independently designed agents, and it is an empirical question rather
than a modelling convenience.

\paragraph{Bounded transfers.}\label{sec:bounded}
Section~\ref{sec:tu} flagged that transfers are unbounded.
Dropping that is not a refinement but a different model, and it is worth saying precisely what it
costs and what it buys, because the two go in opposite directions.

It costs feature (F).
With a cap on side payments the payoff primitive can no longer be a partition function: one must
specify raw payoffs $r_i(P)$ player by player, together with a feasible set of transfers, and a
coalition then bargains over a truncated utility possibility set rather than over a simplex.
If some player's raw payoff falls steeply when everything merges, and the cap prevents the others
from compensating her, the merge-all move is simply not Pareto-improving, and
Lemma~\ref{lem:pareto} fails.
With it fail Theorem~\ref{thm:absorbing-grand} and its corollary: there can now be absorbing states
that are not grand, so the dichotomy of Corollary~\ref{cor:dichotomy} becomes a trichotomy---settle
at a grand state, settle at a non-grand state, or cycle---and the second branch is a form of failure
this paper does not otherwise contain.
Step~4 of Theorem~\ref{thm:smalldelta} goes the same way, and so does everything resting on
Proposition~\ref{prop:farsighted}, since the far-sighted rule presupposes that its prescribed shares
can be paid.

What survives is more than one might expect.
Both persistence theorems do: Theorem~\ref{thm:persist-unan} uses only the welfare bound and the fact
that a grand agreement is destroyed by any termination, and Theorem~\ref{thm:persist-unil} uses only
the shadow lemma and the definition of domination.
Neither mentions transfers.
Lemmas~\ref{lem:reversal}, \ref{lem:shadow}, \ref{lem:entry} and~\ref{lem:objection} are likewise
structural.
And Theorem~\ref{thm:arrival} survives, because its hypothesis \eqref{eq:prefer} is a condition on
static payoffs that one can simply impose on the bounded-transfer payoffs directly.

What it buys is the point the counterexample makes vivid.
Merge-all fails there not because anybody dislikes the grand coalition but because it is
\emph{dominated}: at $x_1$ the pair $\{A,B\}$ prefers to form on its own first, which under the
sharing rule pays them $120$ each against $4$ in the grand state reached directly.
That $120$ is a transfer-heavy outcome; it is the pair capturing the whole of a surplus of $240$
measured against a reference state in which they had nothing.
Cap the transfers and that deviation may cease to be profitable for both of them, in which case it no
longer dominates, and the merge goes through.
So bounded transfers can destroy the profitability of the merge and destroy the profitability of the
deviations that block it, and which effect dominates is not something we can settle here.
The natural conjecture---that a tight cap favours arrival, because blocking coalitions rely on large
redistributions more than the grand coalition does---is plausible, and the counterexample is the
place to test it: the grand state $[ABCD]$ pays $(4,4,116,132)$ against raw payoffs summing to $256$,
while $\{A,B\}$ forming alone pays $(120,120,0,0)$ against a partition worth $240$ to that pair
alone.
Which of the two needs the larger transfer depends on the raw payoffs, which the partition function
does not record.

\paragraph{Coalitions as coherent players.}
Underneath every result in this paper is a modelling decision that deserves separate scrutiny: that
once a coalition has signed, it acts as a single rational player maximising its joint payoff within
the domain the agreement covers.
The alternative is to keep every action individual and to model the agreement as at most a
correlating device, with compliance sustained by punishment.
The choice is not innocent, and this paper is in a position to say something about what it buys and
what it costs.

\emph{Tractability} is the clearest gain, and it is a gain of kind rather than of degree.
The reduction turns the object of study into a finite state space with a payoff attached to each
state, and an equilibrium into a fixed point on that space.
The non-cooperative alternative is a repeated game whose strategies are functions of the entire
history; the analysis then concerns which payoff vectors can be supported, not which structures
arise, and the two questions of this paper---does the grand coalition survive, does it form---would
have no direct counterpart.
The gain is real but bounded: even after the reduction the state space grows faster than
exponentially, from $52$ states at $n=4$ to $41{,}106$ at $n=6$ \citep{HeitzigKornek2018}, which is
why the computations here stop at four players.

\emph{Equilibrium multiplicity} is reduced but not removed, and the residual matters.
Repeated-game treatments of the same interaction typically admit large sets of equilibrium payoffs
once players are patient, so they predict little; the coalitional reduction replaces that
indeterminacy with a specific bargaining rule and a specific move structure, and buys sharp
predictions in exchange.
But it does not buy uniqueness.
\citet{HeitzigKornek2018} report several equilibrium processes at intermediate discount factors, and
our own results show that what survives can be qualitatively decisive rather than a matter of detail:
at the payoff structure of Proposition~\ref{prop:counterexample} we found only cycling equilibria and
could not determine whether an arriving one exists, so the question ``does the grand coalition form''
may have no answer that is a function of the payoffs alone.
That is a limitation the reduction does not remove, and it inherits the coordination problem in a
smaller but still awkward form.

\emph{Efficiency} is decomposed cleanly, which is a genuine analytical benefit.
Within a coalition, joint maximisation is assumed, so all residual inefficiency lies between
top-level coalitions and is measured by a single number, the gap $\gamma$ of
Assumption~\ref{ass:eff}; the arrival question is exactly whether that gap is eventually closed.
In a fully non-cooperative treatment within-coalition and between-coalition inefficiency would be
entangled and no such decomposition would be available.
The price is that the grand coalition becomes the efficient benchmark by construction, so the
substantive content shifts entirely to whether it forms---which is why a paper of this kind can be
written at all, and also why its efficiency conclusions should not be read as welfare conclusions
(see the caveat below).

\emph{Enforcement} is the assumption's weakest point, because it is outsourced rather than modelled.
The model represents exit---indeed the distinction between unanimous and unilateral termination is
one of its two axes---but it does not represent \emph{under-compliance}: a member who cannot leave an
agreement can still fail to honour it, and nothing in the state space records that.
A non-cooperative treatment would have to derive compliance from punishment strategies, which is
where much of the difficulty of such models lies.
The reduction is therefore appropriate exactly where enforcement is credible, and the artificial-agent
setting discussed next is of interest partly because it is such a case.

\emph{Stability of the fiction} is the subtler point.
A coalition treated as a coherent player has a joint payoff but no internal politics: the sharing rule
fixes the division once, at signing, against the state that then obtained, and it is never revisited.
Our results show that this is precisely where the action is.
The relative shares $\psi$ are constant-sum (Lemma~\ref{lem:psi}), so all conflict is distributional;
the free-riding manoeuvre of Section~\ref{sec:onestep} is a play for a better division; and the chain
rule, together with the exclusion of internal renegotiation in
Remark~\ref{rem:variants}, is exactly the modelling choice that keeps a signed division frozen.
Permitting a standing coalition to renegotiate its internal split would require a richer state space
and would, we suspect, change the arrival results materially.
Of the assumptions behind the coherent-player reduction, this is the one we would least like to
defend and the first we would relax.

\paragraph{Coalition formation between artificial agents.}
If the negotiators are software agents, several of the modelling choices above become design choices,
and the results read as advice.

\emph{Divide long-term surplus, not per-period surplus.}
The sharing rule is the pivot (Section~\ref{sec:pivot}), and the far-sighted variant is the better
one on every count we can measure: it makes merging Pareto-improving in evaluations rather than only
in static payoffs (Proposition~\ref{prop:farsighted}), it turns $\delta\to1$ from the intractable
case into the tractable one (Theorem~\ref{thm:fs-arrival}), and under long-run gains it delivers the
grand coalition within $n-1$ moves under either termination rule (Theorem~\ref{thm:fgains}).
An agent architecture that computes the surplus of a proposed agreement as a difference of expected
discounted continuation values, rather than of immediate payoffs, is therefore not merely more
sophisticated but more likely to reach efficiency.
Two caveats.
The rule is implicit---payoffs and the process must be solved for jointly---so it is computationally
heavier, and its transfers grow like $1/(1-\delta)$ (Proposition~\ref{prop:transfer-growth}), which
interacts badly with the bounded-transfer discussion above.
And it is not sufficient: the counterexample survives it under unilateral termination.

\emph{Make agreements unanimously terminable if you can.}
Every result is at least as strong under unanimous termination, and two are strictly stronger:
all grand states are absorbing rather than only the reached ones
(Theorem~\ref{thm:persist-unan} against Theorem~\ref{thm:persist-unil}), and the $\delta\to1$
argument of Theorem~\ref{thm:fs-arrival} needs it.
The residual scenario that Proposition~\ref{prop:arrival-unil} cannot exclude is exactly a unilateral
exit taken off the equilibrium path.

Is that realistic between agents?
More so than between states, which is the reverse of what Section~\ref{sec:realism} concluded about
the parties themselves.
The reason unilateral termination is realistic among sovereigns is that no external mechanism can
keep a party in an agreement it wishes to leave.
Among software agents such mechanisms exist and are cheap: collateral held in escrow, staked bonds
forfeited on exit, agreements executed by a shared protocol rather than by each party severally.
These do not make exit impossible, but they make it require the others' cooperation, which is what
the unanimous variant models.
So this is a case where the institutional design our results favour is available precisely where the
negotiation is fastest---and, by Remark~\ref{rem:continuous-time}, fast negotiation is what pushes
$\delta$ towards one, where the design matters most.

Two warnings against reading this as an unqualified recommendation.
Our model has no private information, no mistakes and no change in the environment, so an agreement
that is hard to leave is never a trap; irrevocability is costless here in a way it certainly is not
in practice, and a mechanism that makes exit require unanimity also makes entry a decision that
cannot be revisited.
And the model's efficiency criterion is the total payoff of the players in it.
Agents negotiating on behalf of principals may form a grand coalition that is efficient for
themselves and a cartel from anyone else's point of view; nothing in this paper distinguishes the
two, and the Cournot specification of Section~\ref{sec:numerics} is a case where the grand coalition
is exactly the outcome one would not want.

\paragraph{Open problems.}
\begin{enumerate}[label=(\arabic*)]
\item Prove Theorem~\ref{thm:arrival} under unilateral termination without hypothesis (a), i.e.\
remove the gap of Remark~\ref{rem:open-unil}.
\item Decide whether arrival can fail as $\delta \to 1$ under the sharing rule of Assumption~\ref{ass:share}.
Proposition~\ref{prop:counterexample} settles the question for $\delta \le 3/4$,
but its payoff structure yields grand absorption for $\delta \ge 0.8$,
and no counterexample is known for large $\delta$.
This is the main open problem.
Theorem~\ref{thm:fs-arrival} answers the corresponding question for far-sighted sharing in the negative,
so an affirmative answer here would have to exploit the term that distinguishes the two rules,
namely the persistence of the relative shares $\psi$ in the merge gain.
\item Show that the transfers implementing far-sighted sharing stay bounded as $\delta\to1$,
or exhibit a payoff structure where they do not,
so as to remove or confirm the hypothesis $\Sigma$ of Corollary~\ref{cor:fs-arrival}.
\item Extend Proposition~\ref{prop:rotating} to non-deterministic classes and to approximate
symmetry.
\item Analyse the individual-withdrawal variant of Remark~\ref{rem:variants}(ii), where
Lemma~\ref{lem:pareto} does not apply to the deviation.
\item Check condition \eqref{eq:prefer} numerically in a calibrated setting; if it holds, a positive
result there becomes a corollary of Theorem~\ref{thm:arrival} rather than a numerical observation.
\item Obtain \ref{E3} from a fixed-point argument, or show that it cannot be so obtained
(Appendix~\ref{app:existence}).
\end{enumerate}

\backmatter

\bmhead{Acknowledgements}
The analysis in this paper was carried out with substantial assistance from a large language model
(Claude, Anthropic), which was used to generate proof strategies, candidate results, counterexamples
and numerical code, and to draft the exposition.
All statements and proofs were subsequently checked by the author, who takes full responsibility for
the correctness of the contents.
The numerical results of Sections~\ref{sec:numerics} and~\ref{sec:counterexample} were verified
independently of the analytical arguments, as described there; the code is available.

\section*{Declarations}

\textbf{Conflict of interest.} The author declares no competing interests.

\textbf{Data and code availability.} All data reported are generated by the code accompanying this
paper; no external data were used.

\textbf{Use of AI.} See the Acknowledgements. No AI system is listed as an author.

\begin{appendices}
\section{Two subsidiary results}\label{app:subsidiary}

Neither result below is used anywhere in the paper.
The first records that the sharing rule, imposed at the top level only, is consistent with itself all
the way down the hierarchy; the second is a sufficient condition, at small discount factors, for the
one grand state in which everybody signs simultaneously.

\begin{lemma}[The sharing rule propagates to all nodes]\label{lem:sharing-all}
Under Condition~\ref{cond:nash}, for every state $x$, every node $K \in x$ and every $i \in K$,
\[
\pi_i(x) = \pi_i(x^{-K}) + \frac{w_i}{w(K)}\,\Delta(x,K).
\]
\end{lemma}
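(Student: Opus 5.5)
The plan is to argue by induction on the \emph{depth} of $K$ in $x$, that is, the number of nodes of $x$ that strictly contain $K$. The induction hypothesis is taken to hold simultaneously for all states. At depth $0$ the node $K$ is top-level, and the claim is exactly Assumption~\ref{ass:share} with the shares of Condition~\ref{cond:nash}. So the work is the inductive step.

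For the inductive step, let $K$ have depth $d\ge1$ in $x$ and let $A:=S(x,K)$ be its top-level ancestor, so $K\subsetneq A$. Both $x^{-K}$ and $x^{-A}$ are states, since deleting elements preserves laminarity. The node $A$ is top-level both in $x$ and in $x^{-K}$, and $(x^{-K})^{-A}=(x^{-A})^{-K}$; write $z$ for this state. Applying the top-level rule at $A$ in $x$ and in $x^{-K}$ gives, for every $i\in A$ (in particular every $i\in K$),
\[
\pi_i(x)=\pi_i(x^{-A})+\tfrac{w_i}{w(A)}\Delta(x,A),\qquad
\pi_i(x^{-K})=\pi_i(z)+\tfrac{w_i}{w(A)}\Delta(x^{-K},A).
\]
It is essential here that the shares $w_i/w(A)$ do not depend on the state. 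This is exactly where Condition~\ref{cond:nash} is needed, rather than merely Assumption~\ref{ass:share}.

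In $x^{-A}$ the node $K$ has depth $d-1$, so the induction hypothesis yields $\pi_i(x^{-A})-\pi_i(z)=\tfrac{w_i}{w(K)}\Delta(x^{-A},K)$ for $i\in K$. Subtracting the two displays and writing $D:=\Delta(x,A)-\Delta(x^{-K},A)$, we get for every $i\in K$
\[
\pi_i(x)-\pi_i(x^{-K})=\tfrac{w_i}{w(K)}\Delta(x^{-A},K)+\tfrac{w_i}{w(A)}D .
\]

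The last step is to identify the right-hand side. Summing this identity over $i\in K$, the left side is $\Delta(x,K)$ by definition, so $\Delta(x,K)=\Delta(x^{-A},K)+\tfrac{w(K)}{w(A)}D$. Substituting back, the right-hand side becomes $\tfrac{w_i}{w(K)}\bigl[\Delta(x^{-A},K)+\tfrac{w(K)}{w(A)}D\bigr]=\tfrac{w_i}{w(K)}\Delta(x,K)$, which is the claim.

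The only delicate points are bookkeeping. One is checking that $A$ really is top-level in $x^{-K}$, which holds because $A\ne K$ and nothing above $A$ was added. The other is checking that $x^{-A}$ lowers the depth of $K$ by exactly one, because $A$ was one of its ancestors and the others remain. I expect no genuine obstacle beyond noticing that summing over $K$ determines $D$'s contribution.
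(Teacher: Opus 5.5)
Your proof is correct and follows the paper's argument. Both use induction on depth, combine three instances of the sharing formula by subtraction, and sum over $K$ to identify the residual term. The only difference is that you peel off the top-level ancestor $S(x,K)$ rather than the immediate parent of $K$: two of your three instances are then the base case (Condition~\ref{cond:nash} at a top-level node) and only the one for $K$ in $x^{-S(x,K)}$ uses the induction hypothesis, whereas the paper invokes the hypothesis for all three.
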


\begin{proof}
For a node $K$ of a state $y$ let $d(y,K) := |\{K' \in y : K \subsetneq K'\}|$ be its depth,
so $d(y,K)=0$ iff $K$ is top-level.
We prove by induction on $d$ the statement $P(d)$:
the display holds for every state $y$ and every node $K \in y$ with $d(y,K)=d$.

$P(0)$ is Condition~\ref{cond:nash}.
Let $d\ge1$ and assume $P(d-1)$.
Let $K \in x$ with $d(x,K)=d$ and let $S \in x$ be the parent of $K$,
i.e.\ the smallest node of $x$ strictly containing $K$; then $d(x,S)=d-1$.
Note that $S$ is a node of $x^{-K}$ of the same depth $d-1$, that $K$ is a node of $x^{-S}$ of depth $d-1$,
and that $(x^{-K})^{-S}=(x^{-S})^{-K}=:x^{-S,-K}$.
Applying $P(d-1)$ to $S$ in the states $x$ and $x^{-K}$, and to $K$ in the state $x^{-S}$, we get,
for all $i \in K \subseteq S$:
\begin{align*}
 \pi_i(x) &= \pi_i(x^{-S}) + \frac{w_i}{w(S)}\Delta(x,S),\\
 \pi_i(x^{-K}) &= \pi_i(x^{-S,-K}) + \frac{w_i}{w(S)}\Delta(x^{-K},S),\\
 \pi_i(x^{-S}) &= \pi_i(x^{-S,-K}) + \frac{w_i}{w(K)}\Delta(x^{-S},K).
\end{align*}
Subtracting the second from the first and inserting the third,
\begin{equation}\label{eq:share-step}
 \pi_i(x)-\pi_i(x^{-K}) = \frac{w_i}{w(K)}\Delta(x^{-S},K)
   + \frac{w_i}{w(S)}\Big(\Delta(x,S)-\Delta(x^{-K},S)\Big) \qquad\text{for all } i \in K.
\end{equation}
Summing \eqref{eq:share-step} over $i \in K$ gives, by the definition of $\Delta(x,K)$,
\[
\Delta(x,K) = \Delta(x^{-S},K) + \frac{w(K)}{w(S)}\Big(\Delta(x,S)-\Delta(x^{-K},S)\Big),
\]
and substituting this back into \eqref{eq:share-step} yields $\pi_i(x)-\pi_i(x^{-K}) = \frac{w_i}{w(K)}\Delta(x,K)$,
which is $P(d)$.
\end{proof}

\begin{remark}
Lemma~\ref{lem:sharing-all} is a statement about the \emph{recursive} character of the specification:
payoffs at $x$ are obtained from payoffs at the coarser state $x^{-K}$,
whose payoffs are in turn obtained by the same rule,
and the induction shows that the resulting numbers are consistent with the proportional formula at every level.
It is used nowhere below---every application is to a top-level node---but it means that no harm is done by quoting the formula in the general form.
It does require Condition~\ref{cond:nash}:
the proof compares the shares of $S$ in the two states $x$ and $x^{-K}$,
and state-dependent shares need not agree there.
\end{remark}

\subsection*{A threshold for the flat grand state}

Let $x_0:=\emptyset$ be the fully non-cooperative state and $h_0 := g_{x_0} = \{N\}$ the \emph{flat} grand state,
in which all players sign one agreement simultaneously.
Write
\begin{align*}
G_i &:= \pi_i(h_0)-\pi_i(x_0) &&\text{(what $i$ gains by merging at all)},\\ B_i &:= \max_{h\in\G}\pi_i(h)-\pi_i(h_0) &&\text{(what $i$ could gain by merging later)} ,
\end{align*}
both non-negative, the first by Lemma~\ref{lem:pareto} and the second by definition.
Under Assumption~\ref{ass:share},
$G_i = w_i\Delta_0$ with $\Delta_0 := V^\ast-V(x_0)$ and $B_i = \psi_i^{\max}-\psi_i(x_0)$ with $\psi_i^{\max}:=\max_{u\in\X\setminus\G}\psi_i(u)$.

\begin{proposition}\label{prop:flat}
Let $p$ satisfy \ref{E1}.
If $\delta B_i\le(1-\delta)G_i$ for every $i \in N$, then $h_0$ is absorbing.
Under Condition~\ref{cond:nash} such a $\delta$ exists only if $\delta \le \Delta_0/(\Delta_0+\sum_i\psi_i^{\max})$.
\end{proposition}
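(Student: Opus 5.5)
The plan is to argue by contradiction at $h_0$, using only the undominatedness half of \ref{E1} against a stay. The argument must work under either termination rule, so I will use only what the two rules share.

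First I describe the moves at $h_0$. Since $h_0=\{N\}$ has the single node $N$, Lemma~\ref{lem:entry} shows that every moving move at $h_0$ terminates $N$, with target $x_0$. Its relevant set is $N$ under unanimous termination and $\{i\}$ under unilateral termination. So if $h_0$ is not absorbing, some such move $m$ is realised with $q:=p_{h_0\to x_0}>0$, and $x_0$ is the only state other than $h_0$ that receives positive probability from $h_0$. Fix any $i\in R(m)$. By \ref{E1}, $m$ is not dominated by the stay $s^i_{h_0}$. Since ties are broken towards $h_0$, this means $\ev_i(x_0)>\ev_i(h_0)$. I will contradict this.

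The key step is a uniform upper bound on evaluations. Put $M_i:=\max_{h\in\G}\pi_i(h)=\pi_i(h_0)+B_i$. For a grand $y$ we have $\pi_i(y)\le M_i$ trivially. For a non-grand $y$, Lemma~\ref{lem:pareto} gives $\pi_i(y)<\pi_i(g_y)\le M_i$. Hence $\pi_i\le M_i$ everywhere, and by Lemma~\ref{lem:value} also $\ev_i\le M_i$ everywhere; this step needs only Assumption~\ref{ass:share}, not Condition~\ref{cond:nash}. Applying \eqref{eq:bellman} at $x_0$ and using $\evp_i(x_0)\le M_i$ together with $\pi_i(x_0)=\pi_i(h_0)-G_i$, I get
\[
\ev_i(x_0)\;\le\;(1-\delta)\big(\pi_i(h_0)-G_i\big)+\delta\big(\pi_i(h_0)+B_i\big)\;=\;\pi_i(h_0)-\big[(1-\delta)G_i-\delta B_i\big]\;\le\;\pi_i(h_0).
\]
Next I apply the identity of Lemma~\ref{lem:onestep} at $h_0$ with $y=x_0$ and $c=\ev_i(x_0)$:
\[
\big(1-\delta(1-q)\big)\big(\ev_i(h_0)-\ev_i(x_0)\big)=(1-\delta)\big(\pi_i(h_0)-\ev_i(x_0)\big)\;\ge\;0 .
\]
This gives $\ev_i(h_0)\ge\ev_i(x_0)$, contradicting the strict inequality obtained from \ref{E1}. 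So $h_0$ is absorbing. The only delicate point is the tie case, and it is handled by the status-quo tie-break in $\succeq^{h_0}_i$. I expect no real obstacle here; the one thing to get right is that the bound $\ev_i\le M_i$ must cover non-grand states, which is exactly what Lemma~\ref{lem:pareto} supplies.

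For the second claim I assume Condition~\ref{cond:nash} and sum the hypothesis $\delta B_i\le(1-\delta)G_i$ over $i$. I use $G_i=w_i\Delta_0$, $\sum_iw_i=1$, $B_i=\psi_i^{\max}-\psi_i(x_0)$ and $\sum_i\psi_i(x_0)=0$ (Lemma~\ref{lem:psi}). This yields $\delta\sum_i\psi_i^{\max}\le(1-\delta)\Delta_0$. Rearranging, and noting $\Delta_0>0$ and $\sum_i\psi_i^{\max}\ge\sum_i\psi_i(x_0)=0$, gives $\delta\le\Delta_0/(\Delta_0+\sum_i\psi_i^{\max})$.
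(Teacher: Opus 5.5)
Your proof is correct and follows the paper's route. It uses Lemma~\ref{lem:entry} to reduce to the single termination $h_0\to x_0$, then bounds $\ev_i(x_0)\le(1-\delta)\pi_i(x_0)+\delta\max_{h\in\G}\pi_i(h)$ via Lemmas~\ref{lem:value} and~\ref{lem:pareto}, then applies Lemma~\ref{lem:onestep} at $h_0$ for the contradiction, and finishes with the same summation under Condition~\ref{cond:nash}. You also correctly derive the strict inequality $\ev_i(x_0)>\ev_i(h_0)$ from undominatedness against the stay $s^i_{h_0}$, which is what \ref{E1} actually gives under Definition~\ref{def:rationality}; the paper's text still calls this profitability, a leftover from the earlier definition.
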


\begin{proof}
$h_0$ has $N$ as its only node,
so by Lemma~\ref{lem:entry} every moving move at $h_0$ terminates $N$ and has target $x_0$.
If one is realised, with $q:=p_{h_0\to x_0}>0$, then by Lemma~\ref{lem:onestep}(a) applied at $h_0$,
profitability for its terminator $i$, $\ev_i(x_0)>\ev_i(h_0)$, is equivalent to $\ev_i(x_0)>\pi_i(h_0)$.
By Lemma~\ref{lem:value},
splitting off the first period (spent at $x_0$) and bounding all later periods by $\max_{v\in\X}\pi_i(v)=\max_{h\in\G}\pi_i(h)$ (the maximum of $\pi_i$ is attained on $\G$,
by Lemma~\ref{lem:pareto}),
\[
\ev_i(x_0)\ \le\ (1-\delta)\pi_i(x_0)+\delta\max_{h\in\G}\pi_i(h) \ =\ \pi_i(h_0) - (1-\delta)G_i + \delta B_i,
\]
which is $\le\pi_i(h_0)$ exactly when $\delta B_i\le(1-\delta)G_i$ --- a contradiction.
Under Condition~\ref{cond:nash} we may substitute $G_i=w_i\Delta_0$ and $B_i=\psi_i^{\max}-\psi_i(x_0)$;
summing over $i$ and using $\sum_i\psi_i(x_0)=0$ and $\sum_iw_i=1$ gives $\delta\sum_i\psi_i^{\max}\le(1-\delta)\Delta_0$.
\end{proof}

\begin{remark}
In Example~\ref{ex:sym}, $G_i=2$ and $B_i=1$,
so Proposition~\ref{prop:flat} gives absorption of $\{N\}$ for all $\delta\le2/3$;
equivalently $\psi_i^{\max}=1$, $\psi_i(x_0)=0$, $w_i=1/3$, $\Delta_0=6$.
The computations reported by \citet{HeitzigKornek2018} show that $\{N\}$ is absorbing there also at $\delta=0.9$,
so the bound is sufficient and far from necessary.
\end{remark}

\section{Existence of equilibrium}\label{app:existence}

Nothing in this note requires existence:
every statement is of the form ``every equilibrium has property $P$''.
Still, one wants to know that the class is not empty.
Since Definition~\ref{def:rationality} measures profitability against $\evp_i(x)$ by a weak inequality, that clause is already closed, which was not so when profitability was measured against $\ev_i(x)$.
What remains strict is domination---and with it, through the stay moves, the requirement that every relevant player gain strictly over the status quo, which is what the old profitability clause used to carry.
So the correspondence taking evaluations to admissible processes is still not upper hemicontinuous, and the standard remedy applies: weaken the strict inequalities on the ``forbidding'' side and keep them on the ``requiring'' side, so that all conditions become closed.
We do that here.
The construction below is stated for the old primitives and has not been restated for Definition~\ref{def:rationality};
what changes is that $\M(x)$ becomes $\M^{+}(x)$, that weak profitability is replaced by profitability itself, which needs no weakening, and that the conditions now depend on $p$ as well as on $\ev$, through $V$, which is continuous in $p$ and so costs nothing.
The strictness that blocks existence of exact equilibria is unchanged in kind, having moved from the profitability clause to the stay-domination clause.

\begin{definition}[Weak equilibrium]\label{def:weakeq}
Let $\ev$ be given.
Call $m \in \M(x)$ \emph{weakly profitable} if $\ev_i(\tau(m)) \ge \ev_i(x)$ for all $i \in R(m)$,
and \emph{weakly dominated} if there is $m'\in\M(x)$ with $R(m')\subseteq R(m)$ and $\ev_i(\tau(m'))\ge\ev_i(\tau(m))$ for all $i \in R(m')$;
and let $A(x,\ev)$ be the set of weakly profitable moves that are not (strictly) dominated in the sense of Definition~\ref{def:rationality}(b).
A process $p$ is a \emph{weak equilibrium} if, with $\ev$ from \eqref{eq:bellman},
for every $x$ the support of $p_{x\to\cdot}$ is contained in
\[
T(x,\ev) \;:=\; \{\tau(m) : m \in A(x,\ev)\} \;\cup\;
   \begin{cases} \{x\} & \text{if every $m\in\M(x)$ is weakly unprofitable or weakly dominated},\\
                 \emptyset & \text{otherwise.}\end{cases}
\]
\end{definition}

\begin{lemma}\label{lem:Tnonempty}
$T(x,\ev)\ne\emptyset$ for every $x$ and $\ev$,
and the correspondence $\ev \mapsto T(x,\ev)$ has a closed graph.
\end{lemma}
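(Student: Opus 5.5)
For non-emptiness, I would split on the case distinction built into the definition of $T(x,\ev)$. If every $m\in\M(x)$ is weakly unprofitable or weakly dominated, then $x\in T(x,\ev)$ and we are done. Otherwise some $m\in\M(x)$ is weakly profitable and not weakly dominated. A move that is not weakly dominated is in particular not strictly dominated, because strict domination in the sense of Definition~\ref{def:rationality}(b), restricted to $\M(x)$, implies weak domination: $\succ^{x}_i$ implies $\ge$ in $\ev_i$. So $m\in A(x,\ev)$ and $\tau(m)\in T(x,\ev)$. If one prefers a route that does not use the ``otherwise'' clause, the acyclicity argument of Lemma~\ref{lem:domination} also works. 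A weakly profitable $m$ that is strictly dominated has an undominated strict dominator $m''$ with $R(m'')\subseteq R(m)$. This $m''$ is weakly profitable, since $\ev_i(\tau(m''))\ge\ev_i(\tau(m))\ge\ev_i(x)$ for $i\in R(m'')$, so $A(x,\ev)\neq\emptyset$ whenever any weakly profitable move exists.

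For the closed graph, $T(x,\ev)$ is a subset of the finite set $\X$. It therefore suffices to show that for each fixed $y$ the set $E_y:=\{\ev : y\in T(x,\ev)\}$ is closed in $\R^{N\times\X}$. I would write $E_y$ as a finite union of closed sets, one per witnessing move. For $m\in\M(x)$ with $\tau(m)=y$, the condition $m\in A(x,\ev)$ is the conjunction of two parts. The first is weak profitability, a finite intersection of closed half-spaces $\ev_i(y)\ge\ev_i(x)$. The second is ``not strictly dominated'', which for each $m'$ with $R(m')\subseteq R(m)$ says that some $i\in R(m')$ fails $\tau(m')\succ^{x}_i y$. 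Since $y\ne x$, and for moving $m'$ also $\tau(m')\ne x$, the tie-break is irrelevant and the failure reads $\ev_i(\tau(m'))\le\ev_i(y)$. This is a finite union of closed half-spaces, hence closed. The case $y=x$ adds the set on which every $m\in\M(x)$ is weakly unprofitable or weakly dominated, and I must check that this set is closed.

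The main obstacle is exactly this $y=x$ branch. ``Weakly unprofitable'' presumably means the negation of weak profitability, i.e.\ $\ev_i(\tau(m))<\ev_i(x)$ for some $i$, which is open, not closed. I would handle it by showing the branch condition is equivalent to a closed one. The natural reading, consistent with the stated aim of keeping strict inequalities on the ``requiring'' side, is to interpret ``weakly unprofitable'' as $\ev_i(\tau(m))\le\ev_i(x)$ for some $i\in R(m)$. Under that reading, ``weakly dominated'' is a finite union of intersections of closed half-spaces and is therefore closed, and the whole branch is a finite intersection over $m$ of finite unions of closed sets, hence closed. Non-emptiness survives under this reading, since the dichotomy argument above only needs the branch to cover the complement of ``some move lies in $A$''. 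If the literal strict reading were intended, I would instead take limits directly. Let $\ev^k\to\ev$ with $x\in T(x,\ev^k)$ for all $k$. Passing to a subsequence on which the same witnessing player, dominator and disjunct are chosen for each $m$, the weak inequalities survive in the limit. A strict inequality that degenerates to equality in the limit makes $m$ weakly profitable with equality at some player. In that case I would argue that either $m$ remains weakly dominated in the limit, or it lands in $A(x,\ev)$ with target $y\ne x$. The latter does not give $x\in T$, so this is where the definition must be read in the weak form.
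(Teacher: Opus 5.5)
Your proof is correct and matches the paper's. For the closed graph you use the same facts: weak profitability and ``not strictly dominated by a moving move'' are closed conditions, and the $y=x$ branch is closed once ``weakly unprofitable'' is read as $\ev_i(\tau(m))\le\ev_i(x)$ for some $i\in R(m)$, which is the reading the paper relies on when it says both properties are defined by weak inequalities; you phrase this as a closed-set decomposition where the paper uses sequences, and your primary non-emptiness argument (a move that is neither weakly unprofitable nor weakly dominated lies in $A(x,\ev)$ outright) is only a slight shortcut over the paper's acyclicity argument, which you also give.
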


\begin{proof}
\emph{Non-emptiness.} Let $W$ be the set of weakly profitable moves at $x$.
If $W=\emptyset$ then every move is weakly unprofitable and $x \in T(x,\ev)$.
If $W\ne\emptyset$, note that strict domination maps $W$ into itself:
if $m \in W$ is dominated by $m'$ then for $i \in R(m')\subseteq R(m)$ we get $\ev_i(\tau(m'))>\ev_i(\tau(m))\ge\ev_i(x)$,
so $m'\in W$.
Strict domination is acyclic on the finite set $\M(x)$ by the argument in the proof of Lemma~\ref{lem:domination},
so $W$ has a maximal element $m$, which is then undominated and lies in $A(x,\ev)$.

\emph{Closed graph.} Let $\ev^k\to\ev$ and $y^k \in T(x,\ev^k)$ with $y^k\to y$;
as $\X$ is finite we may assume $y^k=y$ for all $k$.
If $y=\tau(m)$ with $m \in A(x,\ev^k)$ for infinitely many $k$,
then weak profitability of $m$ passes to the limit (weak inequalities are closed) and $m$ is not strictly dominated at $\ev$ (if it were,
it would be strictly dominated at $\ev^k$ for large $k$, since strict domination is an open condition),
so $m \in A(x,\ev)$.
Otherwise $y=x$ and, for infinitely many $k$, every move is weakly unprofitable or weakly dominated at $\ev^k$;
both properties are defined by weak inequalities and are therefore preserved in the limit,
so the same holds at $\ev$ and $x \in T(x,\ev)$.
\end{proof}

\begin{theorem}[Existence]\label{thm:existence}
For every $\delta\in(0,1)$ a weak equilibrium exists.
If, at such a $p$,
no ties occur---i.e.\ $\ev_i(y)\ne\ev_i(z)$ whenever $i \in N$ and $y\ne z$---then $p$ satisfies \ref{E1} and \ref{E2} of Definition~\ref{def:eq}.
\end{theorem}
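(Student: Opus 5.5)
The plan is to obtain existence from Kakutani's fixed-point theorem applied to the space of processes, and then to check that the no-ties hypothesis turns the weak conditions into \ref{E1} and \ref{E2}.

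\emph{Existence.} Let $\Pi$ be the set of stochastic matrices on $\X$. It is a non-empty compact convex subset of a Euclidean space. For $p\in\Pi$ let $\ev(p)=(1-\delta)(I-\delta p)^{-1}\pi$ as in Lemma~\ref{lem:value}; this map is continuous in $p$ because $I-\delta p$ is invertible with continuously varying inverse. Define the correspondence
\[
\Phi(p):=\prod_{x\in\X}\Delta\big(T(x,\ev(p))\big),
\]
where $\Delta(T)$ denotes the probability vectors on $\X$ supported in $T$. Each factor is non-empty and convex, since $T(x,\ev)\ne\emptyset$ by Lemma~\ref{lem:Tnonempty}. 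Closedness of the graph of $\Phi$ follows from two facts: the graph of $\ev\mapsto T(x,\ev)$ is closed (Lemma~\ref{lem:Tnonempty}), and $\ev(\cdot)$ is continuous. Concretely, suppose $p^k\to p$, $q^k\in\Phi(p^k)$ and $q^k\to q$. If $q_{x\to y}>0$, then $q^k_{x\to y}>0$ for large $k$, so $y\in T(x,\ev(p^k))$ along a tail, and closedness gives $y\in T(x,\ev(p))$. Kakutani then yields $p\in\Phi(p)$, which is precisely a weak equilibrium. This step is routine.

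\emph{No ties gives \ref{E1}.} Fix a weak equilibrium $p$ with no ties, and let $y\ne x$ have $p_{x\to y}>0$. Then $y\in T(x,\ev)$ and $y\ne x$, so $y=\tau(m)$ for some $m\in A(x,\ev)$. Since $\tau(m)\ne x$, no ties makes weak profitability strict: $\ev_i(y)>\ev_i(x)$ for all $i\in R(m)$. I must show $m\in\Cset(x)$ in the sense of Definition~\ref{def:rationality}, whose primitives are $\M^{+}(x)$ and $\evp$, whereas $A$ uses $\M(x)$ and $\ev(x)$.
\begin{itemize}
\item Undominated by stays: for $i\in R(m)$ the stay $s^i_x$ would need $\ev_i(x)\ge\ev_i(y)$, which fails by the strict inequality above.
\item Undominated by moving moves: domination is the same strict condition in both definitions, because ties with $x$ cannot occur and so the tie-break in $\succeq^{x}_i$ is irrelevant.
\item Profitable: I need $\ev_i(y)\ge\evp_i(x)$. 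Here $\evp_i(x)$ is an average of $\ev_i(z)$ over realised $z$. The only realised targets are $x$ itself (on the self-loop branch) and targets of moves in $A(x,\ev)$.
\end{itemize}

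\emph{Main obstacle.} The profitability bullet is where I expect the difficulty. The mismatch is that the weak-equilibrium definition is written in the old primitives, while \ref{E1} is stated in the new ones. Two gaps have to be closed.
\begin{enumerate}
\item If several realised targets have different $\ev_i$-values for some $i\in R(m)$, then $\evp_i(x)$ could exceed $\ev_i(y)$. The appendix already concedes that the construction was "not restated" for Definition~\ref{def:rationality}, so I would first try to restate $T$ by taking $A$ over $\M^{+}(x)$ with profitability measured against $\evp$. That clause is closed in $p$ through the continuity of $\evp$, and running Kakutani with the correspondence depending on $p$ directly would give profitability by construction. I would then need to recheck non-emptiness, which holds because some element of $\M^{+}(x)$ is always profitable: a $\succeq$-maximal realised target is at least its own average.
\item Under that restatement, no ties must also make the strict stay-domination clause coincide with its closed relaxation. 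It does, since for $\tau(m)\ne x$ the tie-break never binds.
\end{enumerate}

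\emph{No ties gives \ref{E2}.} Suppose $\Cset(x)\cap\M(x)\ne\emptyset$, witnessed by some $m$. Under no ties, $m$ is weakly profitable and not weakly dominated; this is because weak domination coincides with strict domination once there are no ties among distinct targets. The "otherwise" branch in the definition of $T(x,\ev)$ then excludes $x$ from $T(x,\ev)$, so $p_{x\to x}=0<1$, which is \ref{E2}.
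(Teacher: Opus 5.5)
Your existence half is the paper's argument: Kakutani on $\Pi$, values that are products of faces, and a closed graph from Lemma~\ref{lem:Tnonempty} and continuity of $p\mapsto\ev$. Your \ref{E2} argument for the $T$ of Definition~\ref{def:weakeq} is essentially the paper's too. It needs ``weakly dominated'' to be read with $m'\ne m$: as written, every move weakly dominates itself, so the fallback clause is always on. The paper makes the same silent reading.

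The gap is \ref{E1}, which you locate correctly but do not close. The paper handles \ref{E1} by asserting that under no ties weak profitability coincides with profitability, so $A(x,\ev)=\Cset(x)$. That holds for the pre-revision notion (profitability against $\ev_i(x)$, no stays), which is what Appendix~\ref{app:existence} says its construction was written for. It does not hold for Definition~\ref{def:rationality}. No ties makes $\ev_i(y)\ge\ev_i(x)$ strict, but says nothing about $\evp_i(x)$, which is an average over realised successors and not a value of $\ev_i$. The support condition only limits which targets carry mass, not how $p_{x\to\cdot}$ is spread over them. Suppose $m_1,m_2\in A(x,\ev)$ are merges of two disjoint pairs, and some $i\in R(m_1)$ strictly prefers $\tau(m_2)$ to $\tau(m_1)$. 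An even mix then gives $\evp_i(x)>\ev_i(\tau(m_1))$, so \ref{E1} fails at $\tau(m_1)$, whose only entering move is $m_1$. Nothing in Definition~\ref{def:weakeq} rules this out, so your profitability bullet cannot be completed from the definition as stated.

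Your repair does not close the gap either. Rebuilding $T$ changes the notion of weak equilibrium, so it would prove a different statement, and as sketched it fails in two places.

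\emph{It breaks your own \ref{E2} argument.} Once stays are admitted into $A$, a profitable undominated stay puts $x$ into $T$. Take $p=I$. Then $\ev=\evp=\pi$ and every stay is profitable with equality. Under unanimous termination every moving move has at least two relevant players, so no stay is dominated, and $I$ is a fixed point. Here no ties just means $\pi_i(y)\ne\pi_i(z)$. Yet at every non-grand $x$, merge-all is profitable and strictly beats every stay (Lemma~\ref{lem:pareto}). Lemma~\ref{lem:domination} then yields a moving element of $\Cset(x)$, and \ref{E2} fails.

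\emph{Non-emptiness fails.} Kakutani needs $T(x,\cdot)\ne\emptyset$ at every $p\in\Pi$, not only at a fixed point. At an arbitrary $p$ the successors of $x$ need not be move targets at all. Even when they are, an $\ev_i$-maximal realised target beats $\evp_i(x)$ only for that one $i$. The move into it may be a merge whose other relevant players average differently. A stay is profitable for $i$ only if $\pi_i(x)\ge\evp_i(x)$. So possibly no element of $\M^{+}(x)$ is profitable.

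What works is the following:
\begin{enumerate}
\item Keep $A$ inside $\M(x)$ and measure profitability against $\evp$.
\item Define domination by strict $\ev$-inequalities, with stays admitted as dominators.
\item Restore the fallback $\{x\}$ under a closed condition: every $m\in\M(x)$ either has some $i\in R(m)$ with $\ev_i(\tau(m))\le\evp_i(x)$, or is weakly dominated by some $m'\ne m$ in $\M^{+}(x)$.
\end{enumerate}
Non-emptiness and the closed graph are then immediate. \ref{E1} follows under no ties because the tie-break never binds. For \ref{E2}, $p_{x\to x}=1$ forces $\evp(x)=\ev(x)$, which turns the fallback into domination. That, however, is a corrected theorem rather than the proof you proposed.
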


\begin{proof}
Let $\Pi := \prod_{x\in\X}\Delta(\X)$ be the (compact,
convex) set of stochastic matrices on $\X$ and $\Psi:\Pi\to\R^{N\times\X}$,
$\Psi(p) := \ev$ the map defined by \eqref{eq:bellman};
$\Psi$ is continuous because $\ev=(1-\delta)(I-\delta p)^{-1}\pi$ and matrix inversion is continuous on the open set where $I - \delta p$ is invertible,
which contains $\Pi$.
Define $\Phi(\ev) := \{p \in \Pi : \operatorname{supp}p_{x\to\cdot}\subseteq T(x,\ev) \text{ for all } x\}$.
By Lemma~\ref{lem:Tnonempty}, $\Phi(\ev)$ is non-empty; it is a product of faces of simplices,
hence convex and compact; and it has a closed graph,
because supports can only shrink under limits and $T$ has a closed graph.
Hence $\Phi\circ\Psi:\Pi\rightrightarrows\Pi$ is a non-empty, convex,
compact-valued correspondence with closed graph on a compact convex set,
and Kakutani's theorem \citep{Kakutani1941} provides $p \in \Phi(\Psi(p))$,
which is a weak equilibrium by definition.

Now suppose no ties occur at $p$.
Weak profitability then coincides with profitability and weak domination with domination,
so $A(x,\ev)=\Cset(x)$ and the alternative case in the definition of $T(x,\ev)$ says exactly $\Cset(x)=\emptyset$.
Thus the support condition gives \ref{E1},
and it gives \ref{E2} because $x \in T(x,\ev)$ is possible only when $\Cset(x)=\emptyset$.
\end{proof}

\begin{remark}[\ref{E3} is not delivered by this argument]\label{rem:E3-existence}
Axiom \ref{E3} requires that certain moves \emph{be} realised,
i.e.\ that the support of $p_{x\to\cdot}$ \emph{contain} a given set.
That is an inclusion in the direction opposite to the one preserved by limits:
along a converging sequence supports can only shrink,
so the set of processes satisfying \ref{E3} relative to a varying $\ev$ need not have a closed graph,
and Kakutani does not apply to it directly.
The model of \citet{HeitzigKornek2018} obtains \ref{E3} by construction,
by assigning to each favourite move a probability proportional to the bargaining weight of the players favouring it;
establishing existence for that explicit rule requires an argument we do not reproduce here.
Since Table~\ref{tab:map} shows that \ref{E3} is needed only for the results of Section~\ref{sec:arrival} that rest on Lemma~\ref{lem:objection},
Theorem~\ref{thm:existence} already guarantees that the persistence theory of Section~\ref{sec:persistence} is not vacuous.
\end{remark}

\section{Abel and Ces\`aro limits of evaluations}\label{app:abel}

This appendix proves the claim of Remark~\ref{rem:limit},
which is used nowhere in the note but explains why the exact identities of Section~\ref{sec:averages} are the right tools.

\begin{proposition}\label{prop:abel}
Let $p$ be a fixed stochastic matrix on the finite set $\X$ and let $p^\ast := \lim_{T\to\infty} \frac1T\sum_{t<T}p^t$ be its Ces\`aro limit,
which exists by the ergodic theorem for finite Markov chains.
Then, for $\ev^{(\delta)} := (1-\delta)\sum_{t\ge0}\delta^tp^t\pi_i$,
\[
   \lim_{\delta\uparrow1}\ev^{(\delta)} \;=\; p^\ast\pi_i .
\]
In particular, if $C$ is a closed communicating class with stationary distribution $\mu$,
then $\ev^{(\delta)}_i(x) \to \bar a_i = \sum_{y\in C}\mu(y)\pi_i(y)$ for every $x \in C$: in the limit,
and only in the limit, the evaluation becomes constant on $C$ and equal to the long-run average.
\end{proposition}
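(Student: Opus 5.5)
The plan is to reduce the statement to the classical Abelian theorem for Ces\`aro-summable sequences of matrices. First I will record that $p^\ast$ exists and satisfies $p^\ast p = p p^\ast = p^\ast p^\ast = p^\ast$. This is the ergodic theorem for finite chains, which the statement allows us to assume. Setting $Q := p - p^\ast$, these identities give $Q^t = p^t - p^\ast$ for $t\ge1$. The proof is by induction: $(p^t-p^\ast)(p-p^\ast) = p^{t+1} - p^tp^\ast - p^\ast p + p^\ast p^\ast = p^{t+1}-p^\ast$.

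Next I will write $(1-\delta)\sum_{t\ge0}\delta^t p^t = p^\ast + (1-\delta)\sum_{t\ge0}\delta^t(p^t-p^\ast)$, using $(1-\delta)\sum_t\delta^t=1$. Substituting $p^t - p^\ast = Q^t$ for $t\ge1$ and $I - p^\ast$ at $t=0$, this becomes
\[
(1-\delta)\sum_{t\ge0}\delta^t p^t = p^\ast + (1-\delta)\Big[(I-\delta Q)^{-1} - p^\ast\Big],
\]
provided $I-\delta Q$ is invertible with its Neumann series convergent for $\delta<1$. The key fact here is that $I-Q = I-p+p^\ast$ is invertible; this is the fundamental-matrix fact for finite chains. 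To see it, suppose $(I-p+p^\ast)v=0$. Multiplying by $p^\ast$ gives $p^\ast v=0$, hence $pv=v$, hence $v = \frac1T\sum_{t<T}p^tv \to p^\ast v=0$. The same argument with $1/\delta$ in place of $1$ should show that $I-\delta Q$ is invertible for all $\delta$ in a neighbourhood of $[0,1]$. Since the spectral radius of $Q$ is at most $1$ (as $Q^t=p^t-p^\ast$ is bounded) and $1$ is not an eigenvalue, the resolvent $(I-\delta Q)^{-1}$ is rational in $\delta$ and continuous up to $\delta=1$. Hence it is bounded as $\delta\uparrow1$, and the bracketed term, multiplied by $(1-\delta)$, tends to $0$. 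Applying this to $\pi_i$ gives $\ev^{(\delta)}\to p^\ast\pi_i$.

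The main obstacle is justifying convergence of $\sum_t\delta^tQ^t$ and its limit behaviour. Periodic chains give eigenvalues of $p$ on the unit circle other than $1$, so $Q$ is not a contraction. I expect to handle this by noting that $Q^t=p^t-p^\ast$ is uniformly bounded, so the series converges absolutely for $\delta<1$. The only issue is then the behaviour at $\delta=1$, which is covered by invertibility of $I-Q$ and continuity of matrix inversion. As an alternative route that avoids spectral language, I would use summation by parts on the bounded Ces\`aro means $S_T:=\sum_{t<T}(p^t-p^\ast)$. These satisfy $S_T=(I-p^T)(I-p+p^\ast)^{-1}$ up to a $p^\ast$ correction, hence are bounded, and then $(1-\delta)\sum_t\delta^t(p^t-p^\ast)=(1-\delta)^2\sum_T\delta^{T}S_{T+1}$, which is $O(1-\delta)$.

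For the final sentence, let $x\in C$ with $C$ closed and communicating. Then the row $p^\ast_{x\to\cdot}$ equals $\mu$, since the Ces\`aro averages of the chain started in $C$ converge to the unique stationary distribution of the irreducible restriction $p|_C$, periodicity notwithstanding. Hence $(p^\ast\pi_i)(x)=\sum_{y\in C}\mu(y)\pi_i(y)=\bar a_i$, which is independent of $x$. That the evaluations are not constant on $C$ for $\delta<1$ is illustrated by Lemma~\ref{lem:cycle} and needs no proof here.
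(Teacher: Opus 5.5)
Your proof is correct, but it takes a different route from the paper's. The paper uses nothing about $p^\ast$ beyond its definition. Summation by parts applied to $p^t$ itself gives $(1-\delta)\sum_{t\ge0}\delta^tp^t=\sum_{T\ge1}\lambda_T(\delta)A_T$, where $A_T:=\frac1T\sum_{t<T}p^t$ and the weights $\lambda_T(\delta)=(1-\delta)^2T\delta^{T-1}$ form a probability distribution. Their mass on any finite set of indices vanishes as $\delta\uparrow1$, so $A_T\to p^\ast$ and $\|A_T\|_\infty\le1$ suffice. This is the classical argument that Ces\`aro summability implies Abel summability.

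You instead use the algebra of $p^\ast$ and the invertibility of $I-p+p^\ast$ to get an exact resolvent identity, in which $(I-\delta Q)^{-1}$ is bounded on $[0,1]$. The identities $pp^\ast=p^\ast p=p^\ast p^\ast=p^\ast$ follow in one line from the definition, e.g.\ $A_Tp-A_T=\frac1T(p^T-I)\to0$, so you need not cite them from the ergodic theorem. Your hedged claim about a neighbourhood of $[0,1]$ is also unnecessary. Invertibility on $[0,1)$ comes from your spectral-radius bound, at $\delta=1$ from your kernel argument, and continuity on the compact interval then gives boundedness.

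Your route is longer and needs the fundamental-matrix fact, but it yields a rate: $\|\ev^{(\delta)}-p^\ast\pi_i\|_\infty=O(1-\delta)$. Near $\delta=1$ the constant is essentially the norm of the deviation matrix $(I-p+p^\ast)^{-1}-p^\ast$. That norm is unbounded over stochastic $p$: a two-state chain switching with probability $\epsilon$ gives order $1/\epsilon$. So your argument also quantifies the observation in Remark~\ref{rem:limit} that the convergence is not uniform in $p$. The paper's argument gives no rate, but it is shorter and applies verbatim to any bounded sequence whose Ces\`aro means converge.

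Your alternative route sits between the two. It is the paper's summation by parts applied to $p^t-p^\ast$ rather than to $p^t$. It needs the partial sums $S_T$ to be bounded, which is a $1/T$ rate for the Ces\`aro means, whereas the paper needs only their convergence. In fact $S_T=(I-p^T)(I-p+p^\ast)^{-1}$ exactly, with no correction term, since $p^\ast(I-p+p^\ast)^{-1}=p^\ast$.

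Your treatment of the final sentence matches the paper's. The clause ``only in the limit'' is commentary in both, and read literally it is false, for instance when $\pi_i$ is constant on $C$. You are right not to try to prove it.
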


\begin{proof}
Write $S_T := \sum_{t<T}p^t = T\,A_T$ with $A_T \to p^\ast$.
Summation by parts gives, for every $M$,
\[
  \sum_{t=0}^{M}\delta^t p^t = \sum_{t=0}^{M}\delta^t(S_{t+1}-S_t)
  = \delta^{M}S_{M+1} + (1-\delta)\sum_{t=0}^{M-1}\delta^tS_{t+1},
\]
and since $\|S_{M+1}\|_\infty \le M+1$ while $\delta<1$,
letting $M\to\infty$ yields $\sum_{t\ge0}\delta^tp^t = (1-\delta)\sum_{T\ge1}\delta^{T-1}S_T$.
Hence
\[
  (1-\delta)\sum_{t\ge0}\delta^tp^t \;=\; \sum_{T\ge1}\lambda_T(\delta)\,A_T,
  \qquad \lambda_T(\delta) := (1-\delta)^2T\delta^{T-1},
\]
and $\sum_{T\ge1}\lambda_T(\delta) = (1-\delta)^2\sum_{T\ge1}T\delta^{T-1} = 1$,
so the right-hand side is a probability average of the $A_T$.
Fix $\varepsilon>0$ and $T_0$ with $\|A_T-p^\ast\|_\infty<\varepsilon$ for $T\ge T_0$.
Since $\sum_{T<T_0}\lambda_T(\delta)\to0$ as $\delta\uparrow1$ and $\|A_T\|_\infty\le 1$,
we get $\limsup_{\delta\uparrow1}\|(1-\delta)\sum_t\delta^tp^t - p^\ast\|_\infty \le \varepsilon$ for every $\varepsilon>0$.
Applying this to $\pi_i$ gives the first claim.
For the second, $p^\ast$ restricted to $C$ has all rows equal to $\mu$.
\end{proof}
\end{appendices}

\section*{Note on this revision}

This version changes the model's primitive notion of rationality and adds the bargaining game that
produces it, and the change is recent enough that its consequences have not been traced through every
proof.

Two things are new in Definition~\ref{def:rationality}.
Staying is a move: each player $i$ has $s^i_x$ with target $x$ and relevant set $\{i\}$, so the status
quo enters domination and the favourite comparison on the same footing as everything else, with ties
broken towards $x$.
And profitability compares $\ev_i(\tau(m))$ with $\evp_i(x)=\sum_y p_{x\to y}\ev_i(y)$ rather than with
$\ev_i(x)$, because a player who declines a move does not thereby spend a period at $x$;
the two benchmarks differ by one period of flow, as \eqref{eq:anchor} records, and coincide at
absorbing states.

Subsection~\ref{sec:stay} then proves implementation.
Theorem~\ref{thm:implement} shows that limit points of stationary subgame-perfect equilibria of the
protocol $P_\varepsilon$ as $\varepsilon\to0$ satisfy \ref{E1}--\ref{E3}, and
Proposition~\ref{prop:proposals} that agreement is reached in the first round, so that the delay is a
threat that is never paid on the equilibrium path.
The one selection the proof needs is that a proposer with several tied favourites randomises over
them, which Criterion~\ref{cond:ties} makes vacuous.
Both directions of the earlier draft's conjecture are not claimed: the converse, that every
equilibrium arises as such a limit, is untouched.

Re-checked against the new definitions, and standing:
Lemma~\ref{lem:domination}, whose chaining argument needs only that $\succ^{x}_i$ is transitive and
that the profitability benchmark is one number per player and state;
Lemma~\ref{lem:nonstalling}, where the strict gain of a relevant player now comes from the absence of
stay domination;
Theorem~\ref{thm:persist-unan}, whose Bellman computation is unchanged, the inequality
$\ev_i(y)>\ev_i(h)$ for all $i$ now following from undominatedness against the stays rather than from
profitability;
Theorem~\ref{thm:persist-unil}, which invokes no payoff facts at all;
Lemma~\ref{lem:objection}, whose first case now runs through $\evp_i(x)$ and yields a strict inequality
where it used to yield a weak one, so that Theorem~\ref{thm:arrival} and the rest of
Section~\ref{sec:arrival} resting on it stand;
and Proposition~\ref{prop:farsighted}, where the step from the merge gain to profitability goes
through the observation that $\evp_i(x)=\ev_i(x)$ at an absorbing $x$.

Also re-checked, and standing after repair: Theorem~\ref{thm:onestep}, whose exclusions now run
through stay domination and where merge-all is profitable automatically, the process being
deterministic there; and Theorem~\ref{thm:fgains} with Corollary~\ref{cor:fgains-necessary}, where the
same substitution applies and where absorption at a grand state now follows from \ref{E1} rather than
\ref{E2}.

Not standing: the construction behind Proposition~\ref{prop:counterexample}.
Theorem~\ref{thm:nodetcycle}, which is new, shows that under unanimous termination no equilibrium
traverses a deterministic cycle at all, so the cycle exhibited there is not one;
Remark~\ref{rem:cyclestatus} records what this leaves open.

Not yet re-checked, and to be treated as provisional until they are:
the remaining arguments of Sections~\ref{sec:cycles}--\ref{sec:pivot} that turn on a move being
unprofitable, since non-profitability is now $\ev_i(\tau(m))<\evp_i(x)$ for some relevant $i$ and
neither implies nor is implied by the former condition;
the numerical work of Subsection~\ref{sec:numerics}, which searched over payoffs with the process held
fixed and would now have to search over processes as well;
and Appendix~\ref{app:existence}, whose construction has not been restated, though its opening now
records what would change and where the obstacle to existence has moved.

\bibliographystyle{sn-mathphys-ay}

\bibliography{refs}

\end{document}